\documentclass{article}
\pdfpagewidth=8.5in
\pdfpageheight=11in
\usepackage{ijcai20}

\usepackage{times}

\usepackage{soul}
\usepackage{url}
\usepackage[utf8]{inputenc}
\usepackage[small]{caption}
\usepackage{graphicx}
\usepackage{amsmath}
\usepackage{booktabs}
\urlstyle{same}

\usepackage{latexsym} 

\usepackage{amssymb}
\usepackage[inline]{enumitem}
\usepackage{thm-restate} 
\usepackage{microtype}
\usepackage{xspace}
\usepackage{amsthm}
\usepackage[hidelinks]{hyperref}
\newtheorem{theorem}{Theorem}
\newtheorem{definition}[theorem]{Definition}
\newtheorem{example}[theorem]{Example}
\newtheorem{lemma}[theorem]{Lemma}
\newtheorem{proposition}[theorem]{Proposition}
\newtheorem{corollary}[theorem]{Corollary}
\newtheorem{claim}[theorem]{Claim}
\newtheorem*{remark}{Remark}

\newcommand{\mypar}[1]{\medskip\noindent\textbf{#1.}}

\newcommand{\mn}[1]{\ensuremath{\mathsf{#1}}}

\newcommand{\p}[2]{\ensuremath{{\sf prov}_{#1}({#2})}\xspace}

\newcommand{\map}{\ensuremath{{\pi}}\xspace}
\newcommand{\pair}[2]{\ensuremath{({#1},{#2})}}
\newcommand{\polynomial}{\ensuremath{p}\xspace}

\newcommand{\semiringVariables}{\ensuremath{\NV}\xspace}

\renewcommand{\oplus}{\ensuremath{+}\xspace}
\renewcommand{\otimes}{\ensuremath{\times}\xspace}

\newcommand{\NF}{\mn{NF}}
\newcommand{\CR}{\mn{CR}}

\newcommand{\dlNames}[1]{\ensuremath{\mathsf{N_{#1}}}\xspace}

\newcommand{\NPr}{\dlNames{P}}
\newcommand{\DLLite}{\text{DL-Lite}\xspace}

\newcommand{\individuals}[1]{\mathsf{ind}(#1)}
\newcommand{\monomials}[1]{\mathsf{{\sf mon}}(#1)}
\newcommand{\auxs}[1]{\ensuremath{\mathsf{{\sf aux}}(#1)}\xspace} 
\newcommand{\roles}[1]{\mathsf{rol}(#1)}

\newcommand{\cupdot}{\mathbin{\mathaccent\cdot\cup}}
\newcommand{\variable}{\ensuremath{v}\xspace}
\newcommand{\monomial}{\ensuremath{m}\xspace}
\newcommand{\nonomial}{\ensuremath{n}\xspace}
\newcommand{\onomial}{\ensuremath{o}\xspace}
\newcommand{\query}{\ensuremath{q}\xspace}
\newcommand{\representative}[1]{\ensuremath{[{#1}]}\xspace}

\newcommand{\queryrewriting}{\ensuremath{q^\ast}\xspace}

\newcommand{\equivalentclass}[1]{\ensuremath{[[{#1}]]}\xspace}

\newcommand{\unraveling}{\ensuremath{{\Umc_\Omc}}\xspace}
\newcommand{\upath}{\ensuremath{{\sf p}}\xspace}
\newcommand{\dep}[1]{\ensuremath{{\sf dep}({#1})}\xspace}

\newcommand{\umapping}{\ensuremath{\delta}\xspace}
\newcommand{\aux}[3]{\ensuremath{d^{{#1}}_{{#2}{#3}}}\xspace}

\newcommand{\imonomial}{\ensuremath{\mu}\xspace}

\newcommand{\NC}{\ensuremath{{\sf N_C}}\xspace}
\newcommand{\NI}{\ensuremath{{\sf N_I}}\xspace}
\newcommand{\NM}{\ensuremath{{\sf N_M}}\xspace}
\newcommand{\NV}{\ensuremath{{\sf N_V}}\xspace}
\newcommand{\NMrep}{\ensuremath{{\sf N_{[M]}}}\xspace}

\newcommand{\NR}{\ensuremath{{\sf N_R}}\xspace}

%
%

\newcommand{\EL}{\ensuremath{{\cal E\!L}}\xspace}
\newcommand{\ELHr}{\ensuremath{{\cal E\!LH}^r}\xspace}
\newcommand{\ALC}{\ensuremath{{\cal ALC}}\xspace}

%
%

\newcommand{\NP}{\textsc{NP}\xspace}

%

%
\newcommand{\Amc}{\ensuremath{\mathcal{A}}\xspace}

\newcommand{\Imc}{\ensuremath{\mathcal{I}}\xspace}
\newcommand{\Jmc}{\ensuremath{\mathcal{J}}\xspace}

\newcommand{\Omc}{\ensuremath{\mathcal{O}}\xspace}

\newcommand{\Rmc}{\ensuremath{\mathcal{R}}\xspace}
\newcommand{\Smc}{\ensuremath{\mathcal{S}}\xspace}
\newcommand{\Tmc}{\ensuremath{\mathcal{T}}\xspace}
\newcommand{\Umc}{\ensuremath{\mathcal{U}}\xspace}


\usepackage[textsize=tiny,colorinlistoftodos,obeyFinal]{todonotes}

\usepackage[absolute,showboxes]{textpos}

\setlength{\TPHorizModule}{\paperwidth}\setlength{\TPVertModule}{\paperheight}
\TPMargin{5pt}


\title{Provenance for the Description Logic \ELHr}

\author{
Camille Bourgaux$^1$\and
Ana Ozaki$^{2,4}$\and
Rafael Pe\~naloza$^3$\And
Livia Predoiu$^4$\\
\affiliations
$^1$DI ENS, ENS, CNRS, PSL University \& Inria, Paris, France\\
$^2$University of Bergen, Norway\\
$^3$University of Milano-Bicocca, Italy\\
$^4$Free University of Bozen-Bolzano, Italy\\
\emails
camille.bourgaux@ens.fr, ana.ozaki@uib.no, rafael.penaloza@unimib.it, livia.predoiu@unibz.it
}

\begin{document}

\maketitle

\begin{abstract}
We address the problem of handling provenance information in \ELHr ontologies. We consider a setting recently introduced for ontology-based data access, based on semirings and  extending classical data provenance, in which ontology axioms are annotated with provenance tokens. A consequence inherits the provenance of the axioms involved in deriving it, yielding a provenance polynomial as annotation. We analyse the semantics for the \ELHr case and show that the presence of conjunctions poses various difficulties for handling provenance, some of which are mitigated by assuming multiplicative idempotency of the semiring.  Under this assumption, we study three problems: ontology completion with provenance, computing the set of relevant axioms for a consequence, and query answering. 
\end{abstract}

\section{Introduction}

Description logics (DLs) are a well-known family of first-order logic fragments in which 
conceptual knowledge about a particular domain 
and facts about specific individuals are expressed in an ontology, using unary and binary predicates called \emph{concepts} and 
\emph{roles} \cite{dlhandbook}. 
Important reasoning tasks performed over 
DL ontologies are axiom entailment, i.e. deciding whether a given DL axiom follows from the ontology; and query answering. 
Since scalability is crucial when using large 
ontologies, 
DLs with favorable computational properties have been investigated. 
In particular, the \EL language and some of its extensions allow for axiom entailment in polynomial time, and conjunctive query entailment in \NP
 \cite{BBL-EL,BBL-EL08}. Many real-world 
 ontologies, including SNOMED CT, use languages from the \EL family, which underlies the OWL 2 EL profile of the Semantic Web standard ontology language. 

In many settings it is crucial to know \emph{how} a consequence---e.g.\ an axiom or a query---has been derived from the ontology. 
In the database community, provenance has been studied for nearly 30 years \cite{Bun2013} and gained traction when the connection to semirings, so called \emph{provenance semirings} \cite{Green07-provenance-seminal,GreenT17} was discovered. \emph{Provenance semirings} serve as an abstract algebraic tool to record and track provenance information; that is, to keep track of the specific database tuples used for deriving the query, and of the way they have been processed in the derivation. 
Besides explaining a query answer, provenance has many applications like: computing the probability or the degree of confidence of an answer, counting the different ways of producing an answer, 
handling authorship, 
data clearance, or 
user preferences 
\cite{Senellart17,Suciuetal2011,LMMPS2014,Ivesetal2008}. 
Semiring provenance has drawn interest beyond relational databases (e.g. \cite{BunemanK10,DBLP:journals/ws/ZimmermannLPS12,DeutchMRT14,RamusatMS18,DannertG19}), and in particular has recently been considered for ontology-based data access, a setting where a database is enriched with a 
$\DLLite_\Rmc$ ontology and mappings between them  \cite{CalvaneseLantiOzakiPenalozaXiao19}.  
In the latter, the ontology axioms are annotated with \emph{provenance variables}. 
Queries are then annotated with \emph{provenance polynomials} that express their provenance information. 

\begin{example}
Consider the facts $\mn{mayor}(\mn{Venice},\mn{Brugnaro})$ and $\mn{mayor}(\mn{Venice},\mn{Orsoni})$, stating that Venice has 
mayors Brugnaro and Orsoni, annotated respectively with provenance information $v_1$ and $v_2$, and the DL axiom ${\sf ran}(\mn{mayor})\sqsubseteq \mn{Mayor}$, expressing that the range of the role $\mn{mayor}$ is the concept $\mn{Mayor}$, annotated with $v_3$. 
The query $\exists x. \mn{Mayor}(x)$ asks if 
	there is someone who is a mayor. The answer is \emph{yes} and it can be derived using 
${\sf ran}(\mn{mayor})\sqsubseteq \mn{Mayor}$ together with any of the two facts, interpreting $x$ by $\mn{Brugnaro}$ or $\mn{Orsoni}$. 
This is expressed by the provenance polynomial $v_1\times v_3 + v_2\times v_3$. 
Intuitively, $\times$ expresses the joint use of axioms in a derivation path of the query, and $+$ the alternative derivations.
\end{example}

We adapt the provenance semantics of  \citeauthor{CalvaneseLantiOzakiPenalozaXiao19} 
for the \ELHr variant of \EL, 
extending it to those \ELHr axioms that do not occur in $\DLLite_\Rmc$. 
It turns out that handling the conjunction allowed in \ELHr axioms is not trivial. 
To obtain models from which we can derive meaningful provenance-annotated consequences, we adopt 
$\times$-idempotent semirings 
and a syntactic restriction on \ELHr (preserving the expressivity of full \ELHr when annotations are not considered). 
After introducing the basic definitions and the semantics for DL ontologies and queries annotated with provenance information, we present a completion algorithm and show that it solves annotated axiom entailment and instance queries in \ELHr in polynomial time in the size of the ontology and polynomial space in the size of the provenance polynomial. We then show that we can compute the set of relevant provenance variables for an entailment in polynomial time. Finally, we investigate conjunctive query answering. 
Note that the query answering methods developed by \citeauthor{CalvaneseLantiOzakiPenalozaXiao19} cannot be extended to \ELHr since they rely on the FO-rewritability of conjunctive queries in $\DLLite_\Rmc$, a property that does not hold for \ELHr 
\cite{BienetalFO2013}. Therefore, we adapt the combined approach for query answering in \EL~\cite{LTW:elcqrewriting09} to 
provenance-annotated \ELHr ontologies. 

Detailed proofs are available in the appendix.

 \section{Provenance for \ELHr}
\label{sec:prelim}

We first introduce our framework for provenance for \ELHr ontology and discuss our design choices.
\subsection{Basic Notions}

In the database setting, commutative semirings have proven to be convenient for representing various kinds of provenance information  \cite{Green07-provenance-seminal,GreenT17}.
In a commutative semiring $(K, \oplus ,\otimes,0,1)$, the product $\otimes$ and the addition $\oplus$ are commutative and associative binary operators over $K$, and~$\otimes$ distributes over~$\oplus$. 
Given a countably infinite set $\semiringVariables$ of 
\emph{variables} 
that are used to annotate the database tuples, a \emph{provenance semiring} 
is a semiring over a space of \emph{annotations}, or provenance expressions, 
with variables from~$\semiringVariables$. 
\citeauthor{GreenT17} present a hierarchy of expressiveness for provenance annotations \shortcite{GreenT17}. The most expressive form of annotations is provided by 
the \emph{provenance polynomials} semiring $\mathbb{N}[\semiringVariables] = (\mathbb{N}[\semiringVariables], \oplus ,\otimes,0,1)$ of polynomials with  
coefficients from $\mathbb{N}$ and variables from $\semiringVariables$, 
and the usual operations.
The semiring $\mathbb{N}[\semiringVariables]$ is universal, i.e., 
for any other commutative semiring $K = (K, \oplus ,\otimes,0,1)$, 
any function 
$\nu: \semiringVariables \rightarrow K$ can be extended to a semiring homomorphism $h: \mathbb{N}[\semiringVariables] \rightarrow K$, 
allowing the computations for $K$ to factor through the computations for $\mathbb{N}[\semiringVariables] $ 
\cite{GreenTheoCompSyst2011}. 
Hence, provenance polynomials  
provide 
the most informative provenance annotations and correspond to  
so-called \emph{how-provenance} \cite{DBLP:journals/ftdb/CheneyCT09}. 
Less general provenance semirings are obtained by restricting the operations $\oplus$ and $\otimes$ to idempotence and/or absorption 
\cite{GreenT17}. 
In this work, we 
focus on $\times$-idempotent semirings, i.e., 
for every $v\in K$, $v\times v=v$. 
This corresponds to 
the \emph{Trio semiring} $\mn{Trio}(\semiringVariables)$, defined in \cite{GreenTheoCompSyst2011} 
as 
the quotient semiring of $\mathbb{N}[\semiringVariables]$ by the equivalence kernel $\approx_\mn{trio}$ of the function $\mn{trio}:\mathbb{N}[\semiringVariables]\rightarrow \mathbb{N}[\semiringVariables]$ that ``drops exponents.'' 
An annotation is a polynomial $p$ that is understood to represent its equivalence class $p{\slash}{\approx}_\mn{trio}$. 
$\mn{Trio}(\semiringVariables)$ encompasses in the hierarchy  
the well-known \emph{why-provenance semiring} $\mn{Why}(\semiringVariables)$ 
obtained by restricting $\oplus$ to be idempotent as well, 
where an annotation corresponds to the set of sets of tuples used to derive the result \cite{DBLP:journals/ftdb/CheneyCT09}. 

We use the following notation. 
A \emph{monomial} is a finite product of variables in $\semiringVariables$.
Let \NM be the set of monomials, and \NPr the set of all finite sums of monomials,
i.e., \NPr contains polynomials of the form $\sum_{1\leq i \leq n}\prod_{1\leq j_i \leq m_i}v_{i,j_i}$,
with $v_{i,j_i}\in \semiringVariables$; $n,m_i>0$. 
By distributivity, every
polynomial can be 
written into 
this form. 
The \emph{representative} $\representative{\monomial}$ of 
a monomial $\monomial$ is the product of the variables occuring in $\monomial$, in
lexicographic order. 
Two monomials which are equivalent w.r.t.\ $\approx_\mn{trio}$ (i.e. are syntactically equal modulo commutativity, associativity and $\times$-idempotency)
have the same representative, e.g., $v\times u$ and $u\times v\times u$ have representative $u\times v$.  
$\NMrep$ denotes the set $\{\representative{\monomial}\mid\monomial\in\NM\}$. 
  
As ontology language we use a syntactic restriction of \ELHr. Consider three mutually disjoint countable sets of
\emph{concept-} \NC, \emph{role-} \NR, and \emph{individual names} \NI,
disjoint from \NV. 
\ELHr \emph{general concept inclusions} (GCIs) are expressions of the form $C\sqsubseteq D$,  
built according to 
the grammar rules
\[
  C::= A\mid \exists R.C \mid C\sqcap C\mid \top \quad\quad\quad D::= A\mid\exists R,
\]
where $R\in \NR$,~$A\in \NC$.
\emph{Role inclusions} (RIs) and 
\emph{range restrictions} (RRs) are expressions of the form 
$R\sqsubseteq S$ and ${\sf ran}(R)\sqsubseteq A$, respectively,
 with $R,S\in\NR$ and $A\in\NC$.
An \emph{ assertion}  
is an expression
 of the form $A(a)$
 or $R(a,b)$, with $A\in \NC$, $R\in \NR$, and $a,b\in \NI$. 
 An \emph{axiom} is a GCI, RI, RR, or assertion. 
An \ELHr ontology is a finite set of \ELHr axioms. 
\ELHr usually allows GCIs of the form $C\sqsubseteq C$, but
these can be translated into our format 
by exhaustively applying the 
rules: (i) replace $C\sqsubseteq C_1\sqcap C_2$ 
by $C\sqsubseteq C_1$ and $C\sqsubseteq C_2$, (ii)~replace $C_1\sqsubseteq \exists R.C_2$ 
by $C_1\sqsubseteq \exists S$, \mbox{$S\sqsubseteq R$} and ${\sf ran} (S)\sqsubseteq C_2$ 
where $S$ is a fresh role name.  
The reason for syntactically restricting \ELHr 
is that conjunctions or qualified restrictions of a role on the right-hand side 
of GCIs lead to counter-intuitive behavior when adding provenance annotations. 
We discuss this later in this section.

\subsection{Annotated Ontologies} 
Provenance information is stored as annotations. 
An \emph{annotated axiom} has the 
form 
$(\alpha,\monomial)$ with $\alpha$ an axiom and $\monomial\in\NM$.
An \emph{annotated \ELHr ontology} \Omc is a finite set of annotated \ELHr axioms of the form $(\alpha,\variable)$ with $\variable\in\NV\cup\{1\}$. We denote by  
 $\individuals{\Omc}$ the set of individual names occurring in~\Omc. 

The semantics of annotated 
ontologies extends the classical notion of
interpretations to track provenance.
An \emph{annotated interpretation} is a triple
$\Imc=(\Delta^\Imc,\Delta^\Imc_{\sf m},\cdot^\Imc)$
where $\Delta^\Imc,\Delta^\Imc_{\sf m}$ are non-empty disjoint sets (the
\emph{domain} 
and 
\emph{domain of monomials} of \Imc, respectively),
and $\cdot^\Imc$ maps
\begin{itemize}[noitemsep]
\item every $a\in\NI$ to $a^\Imc\in\Delta^\Imc$;
\item   every $A\in\NC$ to $A^\Imc\subseteq \Delta^\Imc\times \Delta^\Imc_{\sf m}$;
\item every   $R\in\NR$ to
$R^\Imc\subseteq \Delta^\Imc\times\Delta^\Imc\times \Delta^\Imc_{\sf m}$; and
\item  every $\monomial,\nonomial\in \NM$
to $\monomial^\Imc,\nonomial^\Imc\in\Delta^\Imc_{\sf m}$
s.t.\ $\monomial^\Imc=\nonomial^\Imc$ iff $m \approx_\mn{trio} n$.\footnote{or iff $\monomial=\nonomial$ if we consider $\mathbb{N}[\semiringVariables]$ instead of  $\mn{Trio}(\semiringVariables)$ 
}
\end{itemize}
We extend
$\cdot^\Imc$ to complex \ELHr expressions as usual: 
  \begin{align*} 
	(\top)^\Imc = {} & \Delta^\Imc\times \{1^\Imc\}; \\ 
	(\exists R)^\Imc = {} & \{(d,\monomial^\Imc)\mid \exists e\in\Delta^\Imc 
	 \text{ s.t. }(d,e,\monomial^\Imc)\in R^\Imc\};\\
    (C\sqcap D)^\Imc = {} & \{(d,(\monomial\otimes \nonomial)^\Imc)\mid (d,\monomial^\Imc)\in C^\Imc, (d,\nonomial^\Imc)\in D^\Imc\}; \\
    ({\sf ran} (R))^\Imc = {} & \{(e,\monomial^\Imc)\mid \exists d\in\Delta^\Imc
    \text{ s.t. }(d,e,\monomial^\Imc)\in R^\Imc\};   \\ 
    (\exists R.C)^\Imc = {} & \{(d,(\monomial\times \nonomial)^\Imc)\mid \exists e\in\Delta^\Imc
    \text{ s.t. }\\
	    & \ \ (d,e,\monomial^\Imc)\in R^\Imc, (e, \nonomial^\Imc)\in C^\Imc\}. 
  \end{align*} 
The annotated interpretation \Imc \emph{satisfies}: \\
\centerline{$
  \begin{array}{ll}
    \pair{R\sqsubseteq S}{\monomial}, &\text{if, for all } \nonomial\in\NM,
    (d,e,\nonomial^\Imc)\in R^\Imc \\
   	&\text{ implies  } (d,e,(\monomial\otimes \nonomial)^\Imc)\in S^\Imc\!; \\
    \pair{C\sqsubseteq D}{\monomial}, &\text{if, for all } \nonomial\in\NM,
    (d,\nonomial^\Imc)\in C^\Imc \\
   	&\text{ implies  } (d,(\monomial\otimes \nonomial)^\Imc)\in D^\Imc\!;
   	\\
\pair{A(a)}{\monomial}, & \text{if } (a^\Imc,\monomial^\Imc)\in A^\Imc\!; \quad \text{ and } \\ 
   \pair{R(a,b)}{\monomial}, & \text{if } (a^\Imc,b^\Imc,\monomial^\Imc)\in R^\Imc\!.  	
  \end{array}
$}
\smallskip

\noindent \Imc is a model of the annotated ontology $\Omc$, denoted $\Imc\models\Omc$, if
it satisfies all annotated axioms in \Omc.   
$\Omc$ \emph{entails} 
$(\alpha,\monomial)$, denoted $\Omc\models (\alpha,\monomial)$,
if 
$\Imc\models (\alpha,\monomial)$ for every model $\Imc$ of $\Omc$.  
\begin{remark}
While it may appear counter-intuitive at first sight that $C^\Imc$ differs from $(C\sqcap C)^\Imc$, 
this is 
 in line with the intuition behind the provenance of a conjunction. In the database setting, the Trio-provenance of tuple $(a)$ being an answer to query $\exists yz.R(x,y)\wedge R(x,z)$ over $\{R(a,b), R(a,c)\}$ is also different from that of $(a)$ being an answer to $\exists y.R(x,y)$. 
\end{remark}
Example~\ref{ex:semantic} illustrates the semantics  and some differences with the $\DLLite_\Rmc$ case from
\cite{CalvaneseLantiOzakiPenalozaXiao19}. 

\begin{example}\label{ex:semantic}
Consider the following annotated ontology.
\begin{align*}
\Omc=\{&
(\mn{mayor}(\mn{Venice},\mn{Orsoni}),v_1),\\&(\mn{predecessor}(\mn{Brugnaro},\mn{Orsoni}),v_2),\\&
(\exists\mn{predecessor}.\mn{Mayor}\sqsubseteq \mn{Mayor}, v_3),\\&({\sf ran}(\mn{mayor})\sqsubseteq \mn{Mayor},v_4)\}.
\end{align*}
Let $\Imc$ be s.t.\  $\Delta^\Imc=\{\mn{Brugnaro},\mn{Orsoni},\mn{Venice}\}$, $\Delta^\Imc_{\sf m}=\NMrep$, individual names are interpreted by themselves, monomials by their representatives and 
\begin{align*}
\mn{mayor}^\Imc=\{&(\mn{Venice},\mn{Orsoni},v_1)\},\\
\mn{predecessor}^\Imc=\{&(\mn{Brugnaro},\mn{Orsoni},v_2)\},\\
\mn{Mayor}^\Imc=\{&(\mn{Orsoni},v_1\times v_4),\\&(\mn{Brugnaro},v_1\times v_2\times v_3\times v_4)\}.
\end{align*}
$\Imc\models\Omc$ by the semantics of annotated \ELHr. Moreover, it can be verified that if $\Imc\models (\alpha,\monomial)$, then $\Omc\models (\alpha,\monomial)$. Note 
that  $\Omc$ entails $ (\mn{Mayor}(\mn{Brugnaro}),v_1\times v_2\times v_3\times v_4)$ whose provenance monomial contains $v_1$ and $v_2$, witnessing that the two assertions of $\Omc$ have been used to derive $\mn{Mayor}(\mn{Brugnaro})$. Combining two assertions to derive another one is not possible in $\DLLite_\Rmc$. 
The rewriting-based approach by \citeauthor{CalvaneseLantiOzakiPenalozaXiao19} 
 cannot be applied here as $\exists\mn{predecessor}.\mn{Mayor}\sqsubseteq \mn{Mayor}$  leads to infinitely many rewritings.
\end{example}

\subsection{Discussion on Framework Restrictions} 
Example \ref{ex:semantic} shows that conjunction and qualified role restriction 
lead to a behavior different from $\DLLite_\Rmc$. 
They are also the reason for some features of our setting. 
First, the next example illustrates the 
$\times$-idempotency impact 
for~the~\EL~family. 

\begin{example}\label{ex:idempotent}
Let $\Omc=\{(A\sqsubseteq B_1,v_1), (A\sqsubseteq B_2, v_2), (B_1\sqcap B_2\sqsubseteq C, v_3)\}$. 
If $\Imc$ is a model of $\Omc$ and 
$(e,\nonomial^\Imc)\in A^\Imc$, then $(e,(\nonomial\times v_1)^\Imc)\in B_1^\Imc$ and $(e,(\nonomial\times v_2)^\Imc)\in B_2^\Imc$ so $(e,(\nonomial\times v_1\times \nonomial\times v_2)^\Imc)\in (B_1\sqcap B_2)^\Imc$, i.e. $(e, (\nonomial\times v_1\times v_2)^\Imc)\in (B_1\sqcap B_2)^\Imc$ by $\times$-idempotency, which implies $(e, (\nonomial\times v_1\times v_2\times v_3)^\Imc)\in C^\Imc$. 
Thus $\Omc\models (A\sqsubseteq C, v_1\times v_2\times v_3)$. 
This intuitive entailment is lost if $\times$ is not idempotent. 
Indeed, 
assume that $\times$ is not idempotent and let 
$\Imc$ be the interpretation 
defined as follows (where $\Delta^\Imc=\{e\}$ and $\Delta_{\sf m}^\Imc$ contains all monomials with variables in lexicographic order).
\begin{align*}
A^{\Imc}=&\{(e,u)\} \quad B_1^{\Imc}=\{(e,u\times v_1)\} \quad B_2^{\Imc}=\{(e,u\times v_2)\}\\
C^{\Imc}=&\{(e,u\times u\times v_1\times v_2\times v_3)\}.
\end{align*}
$\Imc$ is a model of $\Omc$ such that $\Imc\not\models (A\sqsubseteq C, v_1\times v_2\times v_3)$.
\end{example}
A downside of 
$\times$-idempotency is a loss of the 
expressive power of provenance, 
neglecting the number of times an axiom is used in a derivation. 
Let $\Omc=\{(A\sqsubseteq B, v_1), (B\sqsubseteq A, v_2)\}$. 
With $\times$-idempotency, $\Omc\models (A\sqsubseteq B, v_1^k\times v_2^l)$ for $k\geq 1$ and $l\geq 0$ because for $k,l\geq 1$, $v_1^k\times v_2^l$ is  interpreted by $(v_1\times v_2)^\Imc$ in any interpretation $\Imc$. 
In contrast, if $\times$ is not idempotent, we only 
obtain $\Omc\models (A\sqsubseteq B, v_1^{k+1}\times v_2^k)$ for $k\geq 0$ (in particular $\Omc\not\models (A\sqsubseteq B, v_1\times v_2)$), which is a more informative result. 
Some useful 
semirings are not $\times$-idempotent; e.g.\ the Viterbi semiring $(\left[0,1\right],\mn{max},\times, 0,1)$, where $\times$ is the usual product over real numbers, which is applied
for representing
confidence scores. %
We limit ourselves to $\times$-idempotent semirings  because we are interested in computing provenance not only for 
assertions or 
queries, 
but also for 
GCIs. 
In particular, when a non-annotated ontology entails the GCI $C\sqsubseteq D$, we want the annotated version of the ontology to entail $(C\sqsubseteq D,\monomial)$ for some monomial $\monomial$. 
The non-idempotent case could be relevant when one is not concerned 
with 
 provenance for GCI entailment, and is left as future work. 
 
Many useful semirings are $\times$-idempotent. Examples of these are: the Boolean semiring, used  
for probabilistic query answering in databases; the security semiring, used to determine the minimal level of clearance required to get 
the consequence; and the fuzzy semiring which allows to determine the truth degree of the consequence (see e.g.~\cite{Senellart17} 
for details on these semirings and more examples).
\smallskip

Second, let us explain the restrictions on the form of the right-hand side of the GCIs. 
Example~\ref{ex:conj-right} illustrates the case of conjunctions. Qualified role restrictions lead to the same kind of behavior (they can be seen as implicit conjunctions). 
 \begin{example}\label{ex:conj-right}
Let $\Omc=\{(A\sqsubseteq B\sqcap C,v), (A(a),u)\}$. All the following interpretations 
which interpret  $a$ by itself and monomials by their representatives are models of $\Omc$:
\begin{equation*}
\begin{array}{l@{\ }l@{\ }l}
A^{\Imc_1}=\{(a,u)\}, & B^{\Imc_1}=\{(a,u\times v)\}, & C^{\Imc_1}=\{(a,u\times v)\}\\
A^{\Imc_2}=\{(a,u)\}, & B^{\Imc_2}=\{(a,u)\}, & C^{\Imc_2}=\{(a,v)\}
\\
A^{\Imc_3}=\{(a,u)\}, & B^{\Imc_3}=\{(a,1)\}, & C^{\Imc_3}=\{(a,u\times v)\}
\end{array}
\end{equation*}
Since the semantics does not provide a unique way to ``split'' the monomial $u\times v$ between the two elements of the conjunction, $\Omc\not\models (B(a), \monomial)$ for any $\monomial\in\NM$, and in particular, $\Omc\not\models (B(a), u\times v)$. 
It is arguably counter-intuitive since we intuitively know that $a$ is in $A$ with provenance $u$ and that $A$ is a subclass of the intersection of $B$ and $C$ with provenance $v$. 
\end{example}
Partially normalizing the ontology before annotating it, or more specifically, replacing e.g.\ annotated GCIs of the form 
$(C\sqsubseteq C_1\sqcap C_2,\variable)$ by $(C\sqsubseteq C_1,\variable)$ and $(C\sqsubseteq C_2,\variable)$, 
may be 
acceptable in most cases, even if the rewritten ontology leads to additional---arguably natural---consequences compared to 
the original one. For instance, even if  $\Omc\not\models(A\sqsubseteq B, v)$ 
in Example~ \ref{ex:conj-right}, 
in many cases a user would accept to change the GCI of $\Omc$ to $(A\sqsubseteq B, v)$ and $(A\sqsubseteq C, v)$ as it may reflect the original intention of the GCI since $\{A\sqsubseteq B,A\sqsubseteq C\}$ and $A\sqsubseteq B\sqcap C$ are semantically equivalent. 

One could 
argue that it would be better to 
define the semantics so that only $\Imc_1$ 
was a model of $\Omc$ in Example~\ref{ex:conj-right}, instead of restricting the language as we 
do. We explain next why this is not so simple.

One possibility is to change the definition of satisfaction of a GCI by an interpretation such that 
$\Imc\models (A\sqsubseteq B\sqcap C,\monomial)$ iff for every $(d,\nonomial^\Imc)\in A^\Imc$, then 
$(d,(\monomial\times\nonomial)^\Imc)\in B^\Imc$ and $(d,(\monomial\times\nonomial)^\Imc)\in C^\Imc$, and similarly for qualified role restrictions. This approach leads to a
counter-intuitive behavior. For instance if $\Omc=\{(A(a),u), (B(a),v)\}$, then $\Omc\not\models (A\sqcap B\sqsubseteq A\sqcap B, 1)$, since $(a, (u\times v)^\Imc)\in (A\sqcap B)^\Imc$ for every model $\Imc$ of $\Omc$, but there is a model $\Imc$ of $\Omc$ such that $(a, (u\times v)^\Imc)\notin A^\Imc$ (and $(a, (u\times v)^\Imc)\notin B^\Imc$). 
In contrast,
our definition of satisfaction ensures that for every interpretation $\Imc$ and concept $C$, $\Imc\models (C\sqsubseteq C, 1)$.

Another possibility is to modify the interpretation of conjunctions and qualified role restrictions such that
$(C\sqcap D)^\Imc=\{(d,\monomial^\Imc)\mid (d,\monomial^\Imc)\in C^\Imc, (d,\monomial^\Imc)\in D^\Imc\}$ and 
$(\exists R.C)^\Imc = \{(d,\monomial^\Imc)\mid \exists e\in\Delta^\Imc    \text{ s.t. }(d,e,\monomial^\Imc)\in R^\Imc, (e, \monomial^\Imc)\in C^\Imc\}$. 
In this case, we 
lose even basic entailments from annotated ABoxes; e.g.,
$\{(A(a),u), (B(a),v)\}\not\models ((A\sqcap B)(a), u\times v)$. 
We 
also lose the entailment of the 
GCI from Example~\ref{ex:idempotent}.

Hence, restricting the syntax to prevent conjunctions on the right and defining the semantics as usual in DLs 
seems to be the most natural way of handling provenance in DL languages with conjunction. 
Since 
\EL ontologies are often already 
expressed in normal form, 
the main restriction in our language is the avoidance of qualified existential restrictions
on the right-hand side.

\subsection{Annotated Queries}
Following 
\citeauthor{CalvaneseLantiOzakiPenalozaXiao19} \shortcite{CalvaneseLantiOzakiPenalozaXiao19},  
we extend 
DL conjunctive queries with 
binary and ternary predicates,
where the last term of the tuple is used for provenance information. 
Recall that by the semantics of annotated ontologies, tuples can only contain monomials. 
A \emph{Boolean conjunctive query~(BCQ)} $q$ is a sentence  
$\exists \vec{x}.\varphi(\vec{x},\vec{a})$,
where $\varphi$ is a conjunction of (unique) atoms of the form $A(t_1,t)$,
$R(t_1,t_2,t)$; $t_i$ is an individual name from $\vec{a}$,
or a variable from $\vec{x}$; and $t$ (the last term of the tuple) 
is 
a variable from $\vec{x}$ that does not occur anywhere else in~$q$  (\citeauthor{CalvaneseLantiOzakiPenalozaXiao19} call such a query \emph{standard}). 
We use $P(\vec{t},t)$ to refer to an atom which is either $A(t_1,t)$
or $R(t_1,t_2,t)$, and $P(\vec{t},t)\in q$ if 
$P(\vec{t},t)$ occurs in $q$.

A \emph{match} of the BCQ $q=\exists \vec{x}.\varphi(\vec{x},\vec{a})$ in the
annotated interpretation \Imc
is a function  $\map:\vec{x}\cup\vec{a}\to\Delta^\Imc\cup\Delta^\Imc_{\sf m}$,
such that $\map(b)=b^\Imc$ for all $b\in \vec{a}$,
and $\map(\vec{t},t)\in P^\Imc$ for every $P(\vec{t},t)\in q$, where $\map(\vec{t},t)$ is a shorthand for $(\pi(t_1),\pi(t))$ or $(\pi(t_1),\pi(t_2),\pi(t))$ depending on the arity of $P$.
\Imc satisfies the BCQ $q$, written $\Imc\models q$,
if there is a match of $q$ in \Imc.
A BCQ $q$ is \emph{entailed by} an annotated ontology $\Omc$, denoted $\Omc\models q$,
if every model of $\Omc$ satisfies~$q$. 
For a BCQ $q$ and an interpretation \Imc, $\nu_\Imc(q)$ denotes
the set of all matches   of $q$ in \Imc. 
The \emph{provenance} of $q$ on \Imc is the expression
\begin{align*}
\p{\Imc}{q} := \textstyle \sum_{\pi\in\nu_\Imc(q)}\representative{\prod_{P(\vec{t},t)\in q} \pi^-(t) }
\end{align*}
 where 
 $\pi(t)$
 is the last element of
 the tuple $\map(\vec{t},t)\in P^\Imc$; and
 $\pi^-(t)$ is the only $\monomial\in\NMrep$ s.t.\ $\monomial^\Imc=\pi(t)$.
 For $p\in\NPr$, we write $p\subseteq \p{\Imc}{q}$ if
 $p$ is a sum of monomials and for each occurrence of a monomial in
 $p$ we find an occurrence of 
its representative 
 in $\p{\Imc}{q}$.
\Imc \emph{satisfies} $q$ with provenance $p\in\NPr$, denoted 
$\Imc\models (q,p)$, if
    $\Imc\models q$ and $p\subseteq\p{\Imc}{q}$.
$\Omc\models (q,p)$,
if $\Omc\models q$ and $p\subseteq\p{\Imc}{q}$,
for all 
 $\Imc\models\Omc$. 
 We call $(q,p)$ an \emph{annotated query}. 
\begin{remark}
When $\Omc$ contains only assertions (no GCIs, RIs, and RRs), we can compare the provenance annotations we obtain to the database case. 
Similarly to Trio-provenance, 
the sums of monomials distinguish different ways the query atoms can be mapped to annotated interpretations. 
For example, 
given $\Omc=\{(R(a,b), v_1), (R(b,a), v_2)\}$ and query  
$q= \exists xytt'. R(x,y,t)\wedge R(y,x,t')$, it holds that $\Omc\models (q, v_1\times v_2 + v_1\times v_2)$. The provenance annotation $v_1\times v_2 + v_1\times v_2$ distinguishes among two derivations using the same axioms, contrary to the why-provenance $v_1\times v_2$. 
Note that given an axiom $\alpha$ and $\Omc$ that may contain GCIs, RIs, and RRs, the sum over all monomials $\monomial$ such that $\Omc\models (\alpha,\monomial)$ represents all possible derivations of $\alpha$, in the why-provenance spirit. 
\end{remark}
The \emph{size} $|X|$   of
  an annotated ontology, a polynomial or a BCQ $X$
is the length of the string representing $X$, where
elements of \NC, \NR, \NI and \NV in $X$ are of length one.
We often omit `annotated' and refer to  `ontologies,'
`queries,' `assertions,' etc.
 when it is clear from the context.

 \section{Reasoning with Annotated \ELHr Ontologies}\label{sec:completion}
 \begin{table*}[tb]
\resizebox{0.98\textwidth}{!}{
\begin{tabular}{@{}lll@{}}
\toprule
& if    &   then $\quad\quad\quad$ (if $\Phi\notin\Omc$) \\ 
\hline
\midrule
$\CR_0$ &  $X\in\NC\cup\NR\cup\{\top\}$ occurs in $\Omc$  
& 
					 add $\Phi=(X\sqsubseteq X,1)$ to $\Omc$ \\
$\CR_1$ &  $(R_1\sqsubseteq R_2,\monomial_1),(R_2\sqsubseteq R_3,\monomial_2)\in\Omc$ 
 & 
					 add $\Phi=(R_1\sqsubseteq R_3,\representative{\monomial_1\times \monomial_2})$ to $\Omc$ \\
$\CR_2$ &  $(R\sqsubseteq S,\monomial_1),({\sf ran}(S)\sqsubseteq A,\monomial_2)\in\Omc$ 
& 
					 add $\Phi=({\sf ran}(R)\sqsubseteq A,\representative{\monomial_1\times \monomial_2})$ to $\Omc$ \\

$\CR_3$ &  $(A\sqsubseteq \exists R,\monomial_1),(R\sqsubseteq S,\monomial_2)\in\Omc$  
&
					 add $\Phi=(A\sqsubseteq \exists S,\representative{\monomial_1\times \monomial_2})$ to $\Omc$  \\

$\CR_4$ &  $(A\sqsubseteq B,\monomial_1),(B\sqsubseteq C,\monomial_2)\in\Omc$ 
& 
					 add $\Phi=(A\sqsubseteq C,\representative{\monomial_1\times \monomial_2})$ to $\Omc$ \\
$\CR_5$ &  $(A\sqsubseteq B,\monomial_1),(B\sqsubseteq \exists R,\monomial_2)\in\Omc$ 
& 
					 add $\Phi=(A\sqsubseteq \exists R,\representative{\monomial_1\times \monomial_2})$ to $\Omc$ \\

$\CR_{6}$ &  $(A\sqsubseteq B_1,\monomial_1),(A\sqsubseteq B_2,\monomial_2),(B_1\sqcap B_2\sqsubseteq C,\monomial_3)\in\Omc$ 
& 
					 add $\Phi=(A\sqsubseteq C,\representative{\monomial_1\times \monomial_2\times \monomial_3})$ to $\Omc$  \\	
					
$\CR_{7}$ &  $({\sf ran}(R)\sqsubseteq B_1,\monomial_1),({\sf ran}(R)\sqsubseteq B_2,\monomial_2),(B_1\sqsubseteq C_1,\monomial_3),
(B_2\sqsubseteq C_2,\monomial_4),(C_1\sqcap C_2\sqsubseteq C,\monomial_5)\in\Omc$ 
& 
					 add $\Phi=({\sf ran}(R)\sqsubseteq C,\representative{\monomial_1\times \monomial_2\times \monomial_3\times \monomial_4\times \monomial_5})$ to $\Omc$  \\					
$\CR_{8}$ &  $(A\sqcap B\sqsubseteq C,\monomial_1), (\top\sqsubseteq B,\monomial_2)\in\Omc$ 
& 
					 add $\Phi=(A\sqsubseteq C,\representative{ \monomial_1\times\monomial_2})$ to $\Omc$  \\										
					
$\CR_{9}$ &  $(A\sqsubseteq \exists S,\monomial_1),({\sf ran}(S)\sqsubseteq B, \monomial_2), 
(B\sqsubseteq C,\monomial_3), (S\sqsubseteq R, \monomial_4),(\exists R.C\sqsubseteq D,\monomial_5)\in\Omc$  
&
					 add $\Phi=(A\sqsubseteq D,\representative{\monomial_1\times \monomial_2\times \monomial_3\times\monomial_4\times\monomial_5})$  to $\Omc$ \\
$\CR_{10}$ &  $(A\sqsubseteq \exists R,\monomial_1),(\top\sqsubseteq B,\monomial_2), 
(\exists R.B\sqsubseteq C,\monomial_3)\in\Omc$ 
& 
					 add $\Phi=(A\sqsubseteq C,\representative{\monomial_1\times \monomial_2\times \monomial_3})$ to $\Omc$ \\						
					
\midrule
$\CR_{11}$ &  $a\in\individuals{\Omc}$  
& 
					add $\Phi=(\top(a),1)$ to $\Omc$ \\
$\CR_{12}$ &  $(R(a,b),\monomial_1),(R\sqsubseteq S,\monomial_2)\in\Omc$  
&
					 add $\Phi=(S(a,b),\representative{\monomial_1\times \monomial_2})$ to $\Omc$  \\

$\CR_{13}$ &  $(A(a),\monomial_1),(A\sqsubseteq B,\monomial_2)\in\Omc$  
&
					 add $\Phi=(B(a),\representative{\monomial_1\times \monomial_2})$ to $\Omc$  \\
	$\CR_{14}$ &  $(A_1(a),\monomial_1),(A_2(a),\monomial_2),(A_1\sqcap A_2\sqsubseteq B,\monomial_3)\in\Omc$  
	&
					 add $\Phi=(B(a),\representative{\monomial_1\times \monomial_2\times \monomial_3})$ to $\Omc$  \\				
					
$\CR_{15}$ &  $(R(a,b),\monomial_1),(A(b),\monomial_2),(\exists R.A\sqsubseteq B,\monomial_3)\in\Omc$  
&
					 add $\Phi=(B(a),\representative{\monomial_1\times \monomial_2\times \monomial_3})$ to $\Omc$  \\
					
$\CR_{16}$ &  $(R(a,b),\monomial_1),({\sf ran}(R)\sqsubseteq A,\monomial_2)\in\Omc$  
&
					 add $\Phi=(A(b),\representative{\monomial_1\times \monomial_2})$ to $\Omc$  \\
\bottomrule
\end{tabular}
}
\caption{Completion rules. $A,\ldots,D\in\NC\cup\{\top\}$, $R,S,R_i\in\NR$, $\monomial,\monomial_i\in\NM$.}
\label{tab:completionRules}
\end{table*}

We present a completion algorithm for deriving basic entailments from an \ELHr ontology. As usual with completion
algorithms, we restrict to ontologies in normal form.
The annotated \ELHr ontology \Omc is in \emph{normal form} if for every GCI $(\alpha,v)\in \Omc$, $\alpha$ is of the form 
$A\sqsubseteq B$, $A\sqcap A'\sqsubseteq B$, 
$A\sqsubseteq \exists R$, or $\exists R.A\sqsubseteq B$, with $A,A'\in\NC\cup\{\top\}$, $B\in\NC$. Every annotated \ELHr ontology 
can be transformed, in polynomial time, into an ontology in normal form which entails the same axioms over the ontology signature, using 
the following rules where $\widehat{C},\widehat{D}\notin\NC\cup\{\top\}$ and $A$ is a fresh concept name:
\begin{equation*}
\begin{array}{l@{\ }r@{\ }l@{\ }l}
 \NF_1: &  (C\sqcap\widehat{D}\sqsubseteq E,\, v) &\longrightarrow&(\widehat{D}\sqsubseteq A,\, 1), (C\sqcap A\sqsubseteq E, \, v)\\
 \NF_2: &   (\exists R.\widehat{C}\sqsubseteq D,\, v) &\longrightarrow&(\widehat{C}\sqsubseteq A,\, 1), (\exists R. A\sqsubseteq D, \, v)\\
 \NF_3: &   (\widehat{C}\sqsubseteq\exists R,\, v) &\longrightarrow&(\widehat{C}\sqsubseteq A,\, 1), (A\sqsubseteq \exists R, \, v). 
\end{array}
\end{equation*}

\begin{restatable}{theorem}{Normalisation}
Let $\Omc$ be an annotated \ELHr  ontology, $\alpha$ an axiom, and $\monomial$ a monomial. 
Let $\Omc'$ be obtained by applying exhaustively  Rules $\NF_1$-$\NF_3$ 
to~$\Omc$.
\begin{itemize}
\item If $\Omc\models (\alpha,\monomial)$, then $\Omc'\models (\alpha,\monomial)$.
\item If $\Omc'\models (\alpha,\monomial)$ and every concept name occurring in $\alpha$ occurs in $\Omc$, then $\Omc\models (\alpha,\monomial)$.
\end{itemize}
\end{restatable}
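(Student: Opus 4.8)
The plan is to reduce the statement to a single application of a normalisation rule and induct on the number of applications. Write $\Omc=\Omc_0,\Omc_1,\dots,\Omc_k=\Omc'$ for the successive ontologies of an exhaustive run, where $\Omc_{i+1}$ is obtained from $\Omc_i$ by replacing one axiom with the two axioms prescribed by $\NF_1$, $\NF_2$ or $\NF_3$, using a concept name $A$ that does not occur in $\Omc_i$ (we may assume all these fresh names are pairwise distinct). (That this run terminates, so that $\Omc'$ is well defined, is standard: each step strictly decreases a well-founded measure on the multiset of axioms; I omit the details.) It then suffices to prove, for a single step $\Omc_i\to\Omc_{i+1}$, two claims: \textbf{(L1)} every model of $\Omc_{i+1}$ is a model of $\Omc_i$; and \textbf{(L2)} every model $\Imc$ of $\Omc_i$ can be turned into a model $\Imc'$ of $\Omc_{i+1}$ by choosing an interpretation for $A$ and keeping $\Delta^\Imc$, $\Delta^\Imc_{\sf m}$, the interpretation of monomials, and the interpretation of every other symbol unchanged. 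Chaining (L1) along the run shows every model of $\Omc'$ is a model of $\Omc$, hence $\Omc\models(\alpha,\monomial)$ implies $\Omc'\models(\alpha,\monomial)$: the first bullet. Chaining (L2) produces, for each model $\Imc$ of $\Omc$, a model $\Imc'$ of $\Omc'$ agreeing with $\Imc$ on every symbol of $\Omc$; since by hypothesis all concept names of $\alpha$ occur in $\Omc$, and $\alpha$ mentions none of the fresh names (and the rules never introduce roles or individuals either), $\Imc\models(\alpha,\monomial)$ iff $\Imc'\models(\alpha,\monomial)$; as $\Omc'\models(\alpha,\monomial)$, this gives $\Imc\models(\alpha,\monomial)$: the second bullet.

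For \textbf{(L1)}, fix $\Imc\models\Omc_{i+1}$. The only axiom of $\Omc_i$ not literally in $\Omc_{i+1}$ is the replaced one, so I only check that the two new axioms entail it. For $\NF_1$, take $(d,\nonomial^\Imc)\in(C\sqcap\widehat D)^\Imc$; by the semantics of $\sqcap$ there are monomials $\monomial_1,\monomial_2$ with $(d,\monomial_1^\Imc)\in C^\Imc$, $(d,\monomial_2^\Imc)\in\widehat D^\Imc$ and $\nonomial^\Imc=(\monomial_1\times\monomial_2)^\Imc$. From $(\widehat D\sqsubseteq A,1)$ (using $1\times\monomial_2\approx_{\mn{trio}}\monomial_2$) we get $(d,\monomial_2^\Imc)\in A^\Imc$, hence $(d,(\monomial_1\times\monomial_2)^\Imc)\in(C\sqcap A)^\Imc$, and then $(C\sqcap A\sqsubseteq E,v)$ gives $(d,(v\times\monomial_1\times\monomial_2)^\Imc)\in E^\Imc$, which equals $(d,(v\times\nonomial)^\Imc)$ because $\approx_{\mn{trio}}$ is a congruence on monomials; so $\Imc$ satisfies the replaced axiom $(C\sqcap\widehat D\sqsubseteq E,v)$. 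The cases $\NF_2$ and $\NF_3$ have the same shape, using the semantics of $\exists R.A$ and of $\exists R$; in all three, the annotation $1$ serves only to read $(\widehat D\sqsubseteq A,1)$ or $(\widehat C\sqsubseteq A,1)$ as the pointwise inclusion $\widehat D^\Imc\subseteq A^\Imc$ (resp.\ $\widehat C^\Imc\subseteq A^\Imc$).

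For \textbf{(L2)}, given $\Imc\models\Omc_i$, put $A^{\Imc'}:=\widehat D^\Imc$ for $\NF_1$ and $A^{\Imc'}:=\widehat C^\Imc$ for $\NF_2$ and $\NF_3$, and let $\Imc'$ coincide with $\Imc$ on everything else. Since $A$ does not occur in $\Omc_i$, it occurs neither in the replaced axiom's complex subconcept ($\widehat C$ or $\widehat D$) nor in any other concept or role of $\Omc_i$, so by induction on concept structure $\widehat C^{\Imc'}=\widehat C^\Imc$ (and $\widehat D^{\Imc'}=\widehat D^\Imc$) and every axiom of $\Omc_i$ carried over unchanged remains satisfied by $\Imc'$. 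The new axiom $(\widehat D\sqsubseteq A,1)$ (resp.\ $(\widehat C\sqsubseteq A,1)$) holds by the choice of $A^{\Imc'}$. For the other new axiom, this choice makes its satisfaction condition coincide verbatim with that of the replaced axiom in $\Imc$: for $\NF_3$, $(A\sqsubseteq\exists R,v)$ asks that for all $\nonomial$, $(d,\nonomial^{\Imc'})\in A^{\Imc'}$ implies $(d,(v\times\nonomial)^{\Imc'})\in(\exists R)^{\Imc'}$, which after substituting $A^{\Imc'}=\widehat C^\Imc$ and $(\exists R)^{\Imc'}=(\exists R)^\Imc$ is exactly $\Imc\models(\widehat C\sqsubseteq\exists R,v)$; for $\NF_1$ and $\NF_2$ one argues identically after expanding the semantics of $\sqcap$ and of $\exists R.A$ on the left-hand side.

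I expect no serious obstacle. The only slightly delicate point is the bookkeeping with monomials in (L1) and (L2)---justifying the passage between $\monomial_1\times\monomial_2$ and $\nonomial$, and between $v\times\monomial_1\times\monomial_2$ and $v\times\nonomial$, by congruence of $\approx_{\mn{trio}}$, and checking that $\times$-idempotency never interferes. The conceptual crux is simply the freshness of $A$: it is what makes the extension in (L2) unconstrained (hence possible) and composable along the $k$ steps, and what guarantees in the second bullet that replacing $\Imc$ by $\Imc'$ does not affect the satisfaction of $(\alpha,\monomial)$. I would take care to phrase (L2) as an expansion that alters neither the domains nor the monomial interpretation nor any old symbol, precisely so that the $k$ successive expansions compose without interference.
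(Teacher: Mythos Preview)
Your proposal is correct and essentially matches the paper's proof. Both arguments reduce to a single normalisation step via induction on the number of steps, establish (L1) by checking that the two new axioms together imply the replaced one in any model, and establish (L2) by extending a model of $\Omc_i$ to one of $\Omc_{i+1}$ via $A^{\Imc'}:=\widehat D^\Imc$ (resp.\ $\widehat C^\Imc$); the only cosmetic difference is that the paper phrases the induction for the second bullet as a chain of entailments $\Omc_{i+1}\models(\alpha,\monomial)\Rightarrow\Omc_i\models(\alpha,\monomial)$ rather than as a chain of model expansions, but the underlying single-step argument is identical.
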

Before describing the reasoning algorithm in detail, we present an important property of entailment; namely, that all entailment
problems can be polynomially reduced to each other. 
This allows us to focus on only one problem. In particular,
we focus on entailment of annotated assertions. 

\begin{theorem}
\label{prop:reductions}
Let \Omc be an annotated ontology, and $(\alpha,\monomial)$ an annotated GCI, RR, or RI. One can construct in polynomial
time an ontology $\Omc'$ and an annotated assertion $(\beta,\nonomial)$ such that $\Omc\models(\alpha,\monomial)$ iff
$\Omc'\models(\beta,\nonomial)$.
Conversely, if $(\alpha,\monomial)$ is an annotated concept (resp. role) assertion, one can construct in polynomial time an ontology $\Omc'$ and two annotated concept (resp. role) inclusions $(\beta,\nonomial)$, $(\gamma,\nonomial)$ such that $\Omc\models(\alpha,\monomial)$ iff
$\Omc'\models(\beta,\nonomial)$ or 
$\Omc'\models(\gamma,\nonomial)$.
\end{theorem}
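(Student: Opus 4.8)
The plan is to exhibit explicit polynomial-time reductions in both directions, using fresh concept/role/individual names to simulate left-hand sides of inclusions by assertions and vice versa, and then to argue that the provenance annotations are preserved. The key design principle is that a GCI $(C \sqsubseteq D, \monomial)$ "fires" in a model exactly when its left-hand side is populated, so we can test it by forcing a witness element into $C^\Imc$ with a trivial annotation and reading off the annotation that reaches $D$.

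First I would handle the forward direction, where $(\alpha,\monomial)$ is a GCI, RR, or RI. For a GCI $(A \sqsubseteq D, \monomial)$ (after normalization via the previous theorem we may assume $A \in \NC \cup \{\top\}$; but in fact we can work with the original syntax too), introduce a fresh individual $a$ and set $\Omc' := \Omc \cup \{(A(a), 1)\}$ — or, if the left-hand side is a conjunction or an $\exists R.C$, add the corresponding assertions asserting that $a$ satisfies each conjunct / has an $R$-successor in $C$, each annotated with $1$. Then $\beta := B(a)$ if $D = B \in \NC$, and for $D = \exists R$ we need a bit more care: introduce a fresh concept name $B_R$, add $(\exists R.\top \sqsubseteq B_R, 1)$ — wait, $\exists R.\top$ is not in our right-hand-side grammar; instead add $({\sf ran}(R) \sqsubseteq B_R, 1)$ together with a fresh back-edge, or more simply add a fresh concept $B_R$ with $(\exists R.B'_R \sqsubseteq B_R, 1)$ where $B'_R$ is forced to hold everywhere via $(\top \sqsubseteq B'_R,1)$; then $\Omc \models (A \sqsubseteq \exists R,\monomial)$ iff $\Omc' \models (B_R(a),\monomial)$. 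For RIs $(R \sqsubseteq S, \monomial)$, add $(R(a,b),1)$ for fresh $a,b$ and a fresh concept $B$ with $({\sf ran}(S) \sqsubseteq B,1)$, and test $\Omc' \models (B(b),\monomial)$ — but one must check that the only way to derive $B(b)$ with annotation containing the right variables routes through $S$; to be safe, use a fresh role $S'$ and $(S \sqsubseteq S',1)$, $({\sf ran}(S') \sqsubseteq B,1)$, so $B$ is "downstream" of $S$ only. For RRs $({\sf ran}(R) \sqsubseteq A,\monomial)$, proceed analogously with $(R(a,b),1)$ and test $(A(b),\monomial)$. In each case one argues: a model of $\Omc'$ restricted appropriately (or extended) is a model of $\Omc$, and conversely any model of $\Omc$ can be expanded to a model of $\Omc'$; the annotation reaching the test atom is exactly the $\monomial$ that the GCI contributes, using that the fresh symbols occur nowhere in $\Omc$ so they cannot pick up spurious provenance.

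For the converse direction, let $(\alpha,\monomial)$ be a concept assertion $A(a)$. The standard trick is to replace the individual $a$ by a fresh concept name $\widehat{A}_a$: let $\Omc'$ be $\Omc$ with every assertion $C(a)$ replaced by $(\widehat{A}_a \sqsubseteq C, \cdot)$, every $R(a,b)$ with $b \neq a$ replaced by a GCI using a fresh role-typing gadget, and every $R(a,a)$ and $R(b,a)$ handled symmetrically — but since role assertions involving other individuals $b$ remain, we cannot purely eliminate $a$. Instead the cleaner route suggested by the statement ("two annotated concept inclusions") is: $\Omc \models (A(a),\monomial)$ iff either the assertion is derived "within the ABox part" — captured by $\Omc' \models (\top \sqsubseteq A, \monomial)$ after relativizing to $a$ via a fresh top-like concept $\widehat{\top}_a$ forced to hold at $a$ and a copy of $\Omc$'s ABox turned into GCIs over $\widehat{\top}_a$ — or it comes via a range restriction, captured by a second inclusion $\Omc' \models ({\sf ran}(R_a) \sqsubseteq A,\monomial)$ for a fresh role $R_a$ encoding "some asserted role points to $a$". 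The two cases correspond to the two possible "entry points" of provenance into the concept $A$ at the individual $a$: through a concept assertion chain, or through the range of an incoming role. For a role assertion $R(a,b)$, the two inclusions are similar, testing whether $R(a,b)$'s provenance comes via an RI from a more specific asserted role, or directly.

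The main obstacle I expect is the bookkeeping around range restrictions and incoming roles in the assertion-to-inclusion direction: because provenance for a concept assertion $A(a)$ can flow in either via a chain $A'(a) \rightsquigarrow A(a)$ of GCIs/conjunction rules (rules $\CR_{13}, \CR_{14}, \CR_{15}$) or via ${\sf ran}(R) \sqsubseteq A$ from an $R$-edge into $a$ (rule $\CR_{16}$), a single inclusion entailment is not enough — hence the disjunction of two in the statement. Making the gadget faithful means ensuring (i) completeness: every monomial $\monomial$ with $\Omc \models (A(a),\monomial)$ is witnessed by one of the two inclusion entailments, which one should check against the completion rules of Table~\ref{tab:completionRules}; and (ii) soundness: the fresh symbols do not create new derivations — straightforward since they are fresh. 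I would prove faithfulness semantically (model transformations) rather than via the completion rules, to keep the argument independent of the algorithm's correctness, which is established later; the model transformations are routine but must be stated carefully because of the $\times$-idempotency and the monomial-domain component $\Delta^\Imc_{\sf m}$, which is simply carried along unchanged in every construction.
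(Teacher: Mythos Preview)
There is a genuine gap in your forward reduction for GCIs with complex left-hand sides. You propose encoding $C$ at a fresh individual $a$ by a set of assertions each annotated with $1$, but this is unsound: a derivation of $D(a)$ may bypass part of the asserted structure of $C$ and still reach the target monomial. Concretely, take
\[
\Omc=\{(\ran(R)\sqsubseteq A',v_1),\ (\exists R.A'\sqsubseteq B,v_2)\}
\]
and test $(\exists R.A\sqsubseteq B,\ v_1\times v_2)$ with $A\neq A'$. One checks that $\Omc\not\models(\exists R.A\sqsubseteq B,\ v_1\times v_2)$ (there is a model in which the $A$-component of the left-hand side carries a variable that never reaches $B$). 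Yet your $\Omc'=\Omc\cup\{(R(a,b),1),(A(b),1)\}$ gives $\Omc'\models(B(a),v_1\times v_2)$: the range restriction puts $(b,v_1)\in A'^{\Imc}$, hence $(a,v_1)\in(\exists R.A')^{\Imc}$ and $(a,v_1\times v_2)\in B^{\Imc}$, without ever touching the assertion $(A(b),1)$. The paper fixes precisely this by annotating each encoding assertion with a \emph{fresh} provenance variable $v^r_\alpha$ and testing the assertion $(E(a_0),\ \monomial_0\times\prod_{(\alpha,v^r_\alpha)\in\Amc_C} v^r_\alpha)$; requiring every $v^r_\alpha$ to occur in the target monomial forces any witnessing derivation to actually consume each piece of~$C$. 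Note also that the normalisation theorem you invoke rewrites the ontology, not the axiom being tested, so you cannot assume the queried GCI has an atomic left-hand side.

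For the converse direction, your diagnosis of why two inclusions are needed does not match the paper's construction. The disjunction is not ``via a concept-assertion chain'' versus ``via the range of an incoming role'': both of those are absorbed into a \emph{single} inclusion $(C_{a_0}\sqsubseteq B,\monomial_0)$, where $C_{a_0}$ is a fresh concept name standing for the individual $a_0$, and the whole ABox (including incoming role edges, via auxiliary concepts $C_{\ran(R)}$ and auxiliary roles $R_{ab}$) is internalised as GCIs over the fresh concepts $C_a$. The second inclusion is simply $(\top\sqsubseteq B,\monomial_0)$, covering the degenerate case where $B$ is derivable everywhere. Your sketched gadget with a single $\widehat{\top}_a$ plus a separate range inclusion would not be faithful: to track provenance through role chains between named individuals one needs per-individual concepts $C_a$, per-role concepts $C_{\ran(R)}$, and per-role-assertion roles $R_{ab}$, together with GCIs mimicking each ABox fact and each RI at the level of these fresh symbols.
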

We adapt the classical \EL completion rules to handle annotated \ELHr ontologies in normal form. 
The algorithm starts with the original ontology \Omc, and extends it through an iterative application of the rules from
Table~\ref{tab:completionRules} until \Omc becomes \emph{saturated}; i.e., no more rules are applicable. 
We cannot use the 
rules of \cite{BBL-EL08} which eliminate range restrictions by adding GCIs with qualified role restrictions on the right so we designed rules for \ELHr.

A rule application may add axioms annotated with monomials, and other assertions $(\top(a),1)$, which are not
foreseen in the definition of annotated ontologies. Still, $\times$\mbox{-}idempotency ensures that all monomials 
have at most $|\Omc|$ factors. To show that the completion algorithm is sound and complete for deciding assertion entailment, 
we prove a stronger result. 
The \emph{$k$-saturation} of \Omc is the saturated ontology $\Omc_k$ obtained from \Omc through the completion algorithm restricted 
to monomials of length at most $k$. We show that $\Omc_k$ suffices for deciding entailment of annotated assertions $(\alpha,\monomial)$ 
where $\monomial$ is a monomial of length at most $k$.

\begin{restatable}{theorem}{Correctness}
\label{prop:completionalgorithm}
If $\Omc_k$ is the $k$-saturation of \Omc, then
\begin{enumerate}
\item $\Omc_k$ is computable in polynomial time w.r.t.\ the size of $\Omc$, and in exponential time w.r.t.\ $k$,
\item for every assertion $\alpha$ and monomial $\monomial_k$ with at most $k$ variables,  
$\Omc\models (\alpha,\monomial_k)$ iff $(\alpha, \representative{\monomial_k})\in\Omc_k$. 
\end{enumerate}
\end{restatable}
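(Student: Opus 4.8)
The plan is to prove the two items essentially independently, with soundness and completeness of the rules as the technical core of item 2. For item 1, I would first observe that every monomial produced by a completion rule is a representative, hence (by $\times$-idempotency) has length at most $k$ in $\Omc_k$; thus the set of annotated axioms that can ever be added is bounded. Concretely, the number of representatives of length $\le k$ over the variables of $\Omc$ is at most $\sum_{i\le k}\binom{|\Omc|}{i}\le (|\Omc|+1)^k$, and the number of syntactically distinct axiom shapes over the signature of $\Omc$ is polynomial in $|\Omc|$ (normal-form GCIs, RIs, RRs, and assertions only mention names from $\Omc$). So $|\Omc_k|$ is polynomial in $|\Omc|$ and exponential in $k$, and since each rule application either halts the process or strictly enlarges $\Omc_k$, and rule applicability and the computation of representatives $\representative{\monomial_1\times\cdots}$ are polynomial, the whole saturation runs in the claimed time bound. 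I would state this as a counting lemma and then a straightforward fixpoint-termination argument.

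For the ``if'' direction of item 2 (soundness), I would show by induction on the order in which annotated axioms are added to $\Omc_k$ that $\Omc\models(\alpha,\representative{\monomial})$ whenever $(\alpha,\representative{\monomial})\in\Omc_k$. The base case is $\Omc\subseteq\Omc_k$, which is immediate, together with the trivial axioms $(X\sqsubseteq X,1)$ and $(\top(a),1)$ added by $\CR_0$, $\CR_{11}$, which hold in every annotated interpretation. For the inductive step I would go through each rule $\CR_1$--$\CR_{16}$ and verify that if all premises are entailed by $\Omc$ then so is the conclusion, using the semantics of satisfaction given in the excerpt: e.g. for $\CR_6$, if $\Imc\models\Omc$ and $(d,\nonomial^\Imc)\in A^\Imc$, then $(d,(\nonomial\times\monomial_1)^\Imc)\in B_1^\Imc$ and $(d,(\nonomial\times\monomial_2)^\Imc)\in B_2^\Imc$, hence $(d,(\nonomial\times\monomial_1\times\nonomial\times\monomial_2)^\Imc)\in(B_1\sqcap B_2)^\Imc$, which equals $(d,(\nonomial\times\monomial_1\times\monomial_2)^\Imc)$ by $\times$-idempotency, and then applying $(B_1\sqcap B_2\sqsubseteq C,\monomial_3)$ gives $(d,(\nonomial\times\monomial_1\times\monomial_2\times\monomial_3)^\Imc)\in C^\Imc$; since $\Imc$ interprets $\monomial$ and $\representative{\monomial}$ identically, we get $\Imc\models(A\sqsubseteq C,\representative{\monomial_1\times\monomial_2\times\monomial_3})$. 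The rules involving range restrictions ($\CR_7$, $\CR_9$, $\CR_{16}$) and the ones combining an assertion with a GCI ($\CR_{12}$--$\CR_{15}$) are analogous; $\CR_9$ is the most intricate because one must chain $A\sqsubseteq\exists S$, ${\sf ran}(S)\sqsubseteq B$, $B\sqsubseteq C$, $S\sqsubseteq R$ and then the qualified restriction $\exists R.C\sqsubseteq D$ at the anonymous $S$-successor, again absorbing repeated factors by idempotency.

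For the ``only if'' direction (completeness) I would build, from a saturated $\Omc_k$, a canonical annotated model $\Imc_{\Omc_k}$ and show that it validates only entailed annotated assertions of length $\le k$. The domain consists of the individuals of $\Omc$ together with one anonymous witness $x_{R}$ (or $x_{A,R}$) for each derived $(A\sqsubseteq\exists R,\cdot)$, in the style of the \EL canonical model; $\Delta^\Imc_{\sf m}=\NMrep$, monomials interpreted by their representatives; and membership in $A^\Imc$, $R^\Imc$ is defined so that $(a^\Imc,\representative{\monomial}^\Imc)\in A^\Imc$ iff $(A(a),\representative{\monomial})\in\Omc_k$, with the anonymous part defined using the derived GCIs and range restrictions, annotated by the accumulated product (truncated via representatives). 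I would prove (a) $\Imc_{\Omc_k}\models\Omc$ — this is exactly where saturation is used: each rule being inapplicable means the corresponding closure condition holds in $\Imc_{\Omc_k}$, so every annotated axiom of $\Omc$ is satisfied; and (b) for an assertion $\alpha$ and monomial $\monomial_k$ of length $\le k$, if $\Imc_{\Omc_k}\models(\alpha,\representative{\monomial_k})$ then $(\alpha,\representative{\monomial_k})\in\Omc_k$ — by construction of $\Imc_{\Omc_k}$ and a side induction on the structure of the anonymous part showing that every membership carries an annotation that is the representative of a product of monomials of axioms actually combined by the rules. Then $\Omc\models(\alpha,\monomial_k)$ implies $\Imc_{\Omc_k}\models(\alpha,\representative{\monomial_k})$, hence $(\alpha,\representative{\monomial_k})\in\Omc_k$.

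The main obstacle I anticipate is getting the canonical model right in the presence of range restrictions and qualified existential restrictions on the left: the \ELHr-specific rules $\CR_7$--$\CR_{10}$, $\CR_{16}$ replace the usual $\exists R.C$-on-the-right machinery, so I must check that the anonymous successors created for $A\sqsubseteq\exists R$ correctly inherit all concepts forced by ${\sf ran}(R)$ and by conjunctions thereof ($\CR_7$), and that no spurious membership with a ``too short'' or otherwise wrong monomial sneaks in — otherwise $\Imc_{\Omc_k}$ would validate a non-entailed annotated assertion and break direction (b). A secondary delicate point is the length-$k$ bookkeeping: one must argue that truncating products to representatives in $\Imc_{\Omc_k}$ never loses an entailment of length $\le k$, which again rests on $\times$-idempotency ensuring that any derivation of $(\alpha,\monomial_k)$ with $|\monomial_k|\le k$ can be carried out using only monomials of length $\le k$, exactly the restriction defining the $k$-saturation.
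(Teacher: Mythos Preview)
Your proposal is correct and follows essentially the same three-part strategy as the paper: a counting argument for item~1, a rule-by-rule induction for soundness, and a canonical-model construction for completeness.

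The one noteworthy difference is in the shape of the countermodel for completeness. You sketch a \emph{folded} canonical model with one anonymous witness $x_R$ (or $x_{A,R}$) per derived existential, in the style of the classical \EL canonical model. The paper instead builds a \emph{tree-like} model by a chase: starting from $\Imc_0$ defined directly from the assertions in $\Omc_k$, it repeatedly applies rules that satisfy each GCI/RI/RR of $\Omc_k$ locally, and every application of an $(A\sqsubseteq\exists R,\nonomial)$ rule introduces a \emph{fresh} element. This choice pays off in the key technical lemma (their Claim~\ref{claimcompl}): because each anonymous element has a \emph{unique} incoming edge with a unique role and monomial, one can cleanly reconstruct, from any membership $(y,[\ldots])\in A^{\Imc_i}$ in the anonymous part, either an $({\sf ran}(S)\sqsubseteq D,\cdot)$ or a $(\top\sqsubseteq D,\cdot)$ in $\Omc_k$ that explains it, and then close via $\CR_7$--$\CR_{10}$. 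With your folded witnesses $x_{A,R}$, an anonymous element can have several incoming edges with different monomials, and concept memberships accumulated from different origins can mix; to avoid spurious short-monomial memberships you would need to parametrize the witness by the monomial as well (as the paper does later, with $d^{m}_{R}$, for the combined approach), or switch to the tree construction. Your ``main obstacle'' paragraph shows you already anticipate exactly this issue; the paper resolves it by going tree-shaped rather than by a finer witness scheme.
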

This theorem states that to decide whether an assertion $(\alpha,\monomial)$ is entailed by \Omc, one just needs to find the 
$k$\mbox{-}saturation of \Omc, where $k$ is the number of variables in \monomial, and then check whether $(\alpha,\representative{\monomial})\in\Omc_k$.
Due to the first point of Theorem~\ref{prop:completionalgorithm} and Theorem~\ref{prop:reductions}, 
we obtain the following corollary.
\begin{corollary}
\label{cor:complexity:provmonomial}
For every axiom $\alpha$, $\Omc\models (\alpha,\monomial)$ is decidable in polynomial time in   $|\Omc|$ and in exponential time in  
$|\monomial|$.
\end{corollary}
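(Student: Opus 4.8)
The plan is to obtain the statement by chaining the two preceding results, with a case split on the syntactic type of $\alpha$ and one bookkeeping check on sizes. Throughout, write $k$ for the number of distinct variables occurring in $\monomial$; then $k\leq|\monomial|$, and by $\times$-idempotency $k$ is the only feature of $\monomial$ that influences entailment.

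The base case is $\alpha$ an assertion. First I would normalise $\Omc$ by exhaustively applying Rules $\NF_1$--$\NF_3$, which takes polynomial time and, by the normalisation theorem, preserves entailment of $(\alpha,\monomial)$ provided every concept name of $\alpha$ occurs in $\Omc$; we may assume this without loss of generality, since otherwise $\Omc\not\models(\alpha,\monomial)$ except in the trivially detectable case where $\alpha$ is a tautology such as $X\sqsubseteq X$ (entailed only with annotation $1$). Then, by the first point of Theorem~\ref{prop:completionalgorithm}, the $k$-saturation $\Omc_k$ is computable in time polynomial in $|\Omc|$ and exponential in $k$; computing $\representative{\monomial}$ is polynomial, and deciding whether $(\alpha,\representative{\monomial})\in\Omc_k$ is polynomial in $|\Omc_k|$. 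By the second point of Theorem~\ref{prop:completionalgorithm}, this membership test decides $\Omc\models(\alpha,\monomial)$, and the whole procedure runs in time polynomial in $|\Omc|$ and exponential in $k\leq|\monomial|$.

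For the case where $\alpha$ is a GCI, an RI, or an RR, I would first apply the first part of Theorem~\ref{prop:reductions} to obtain in polynomial time an ontology $\Omc'$ and an annotated assertion $(\beta,\nonomial)$ with $\Omc\models(\alpha,\monomial)$ iff $\Omc'\models(\beta,\nonomial)$, and then run the base-case procedure on $\Omc'$ and $(\beta,\nonomial)$. The point to verify --- and the only genuine subtlety --- is that this reduction does not inflate the annotation: the number of variables of $\nonomial$ must be bounded by a polynomial in that of $\monomial$ (indeed, in the natural reduction the left-hand side of $\alpha$ is encoded by a gadget annotated entirely with $1$, so $\nonomial=\monomial$). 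Granting this, the base-case procedure applied to $\Omc',(\beta,\nonomial)$ runs in time polynomial in $|\Omc'|$, hence polynomial in $|\Omc|$ (and $|\alpha|$), and exponential only in the number of variables of $\nonomial$, hence exponential in $|\monomial|$. Were the reduction instead to introduce a number of fresh annotation variables polynomial in $|\Omc|$, the cost would blow up to exponential in $|\Omc|$, which is precisely why this size check is the key step rather than the routine composition of the two theorems.
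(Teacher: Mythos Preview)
Your approach matches the paper's: chain Theorem~\ref{prop:reductions} with Theorem~\ref{prop:completionalgorithm}. However, your claim that the reduction's gadget is ``annotated entirely with $1$, so $\nonomial=\monomial$'' is correct only for RIs and RRs. For a GCI $C\sqsubseteq D$, the paper's reduction annotates each assertion produced when unfolding $C$ with a \emph{fresh} variable, and the target monomial is $\monomial\times\prod_{(\beta,v^r_\beta)\in\Amc_C} v^r_\beta$. Annotating with $1$ would in fact be unsound: take $\Omc=\{(A_1\sqsubseteq D,v)\}$ and $C=A_1\sqcap A_2$; the $1$-annotated gadget $\{(A_1(a_0),1),(A_2(a_0),1)\}$ yields $\Omc\cup\Amc_C\models(D(a_0),v)$, yet $\Omc\not\models(A_1\sqcap A_2\sqsubseteq D,v)$, since in a model where $(d,u_1)\in A_1^\Imc$ and $(d,u_2)\in A_2^\Imc$ one only obtains $(d,u_1\times v)\in D^\Imc$, not $(d,u_1\times u_2\times v)\in D^\Imc$. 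The fresh variables are there precisely to force every piece of $C$ to contribute to the target provenance.

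The number of fresh variables, however, is bounded by the size of $C$, hence by $|\alpha|$, not by $|\Omc|$. Thus $|\nonomial|\le|\monomial|+O(|\alpha|)$, and the composed procedure runs in time polynomial in $|\Omc|$ and exponential in $|\monomial|+|\alpha|$. Since the corollary only asserts polynomial dependence on $|\Omc|$ and exponential dependence on $|\monomial|$---leaving the dependence on $|\alpha|$ unconstrained---this suffices. Your worry that the reduction might introduce fresh annotation variables polynomially in $|\Omc|$ was the right concern to raise; it simply does not materialise, but the intermediate case (polynomial in $|\alpha|$) is the one that actually occurs.
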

In general there is no need to interrupt the completion algorithm;
the ontology saturated without restricting the monomial
length can be used to decide all relevant entailments regardless of the length of the monomial. 
Using $\Omc_k$ is merely an optimisation 
when one is only interested in a short monomial.

While the polynomial time upper bound w.r.t.\ the ontology size  is positive, and in line with the complexity of the \EL family,
the exponential time bound on the monomial size does not scale well 
for entailments with
larger monomials. Recall that these 
bounds are based on the number of annotated axioms 
generated by 
the completion rules. The following example illustrates the potential exponential blow-up.

\begin{example}\label{ex:exponential}
Consider 
$\Omc=\{(A\sqsubseteq A_i,v_i), (A_i\sqsubseteq B,u_i)\mid 0\leq i\leq n\}\cup\{(B\sqsubseteq A,u)\}$. If $\Omc'$ is the result of applying the completion algorithm to $\Omc$, then for every $S\subseteq \{1,\dots, n\}$, $(B\sqsubseteq A, \representative{u \times \Pi_{i\in S} u_i\times v_i})\in\Omc'$.
\end{example}
Following \citeauthor{HuPe-17} \shortcite{HuPe-17}, we can see the completion algorithm
as an automaton. More precisely, given $\Omc$ and $(\alpha,\monomial)$, we can construct a tree automaton \Amc, whose states 
correspond exactly to all the elements in $\Omc_k$, such that $(\alpha,\representative{\monomial})\in\Omc_k$ iff \Amc accepts 
at least one tree. Briefly, \Amc is constructed by reading the rule applications backwards, allowing transitions from the
consequence to the premises of the rule; see~\cite{HuPe-17} for details.
The number of states in \Amc is exactly the cardinality of $\Omc_k$ and hence potentially exponential
on $k$. However, the size of each state is bounded polynomially on $k$; the arity of the automaton is bounded by the maximum
number of premises in a rule, in this case 5; and one can bound polynomially on $k$ the number of different states that may 
appear in any successful run of \Amc. Thus, \Amc satisfies 
 the conditions 
 for a PSpace emptiness test
\cite{BaHP08}, which yields the following result.

\begin{proposition}
\label{prop:pspace}
For every axiom $\alpha$, $\Omc\models (\alpha,\monomial)$ is decidable in polynomial space in  $|\monomial|$.
\end{proposition}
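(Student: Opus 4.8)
The plan is to realize the completion algorithm as a looping tree automaton, as sketched in the paragraph preceding the statement, and then invoke the known \PSpace{} emptiness test of \cite{BaHP08}. First I would fix an axiom $\alpha$ and a monomial $\monomial$, set $k:=|\monomial|$, and recall from Theorem~\ref{prop:reductions} that deciding $\Omc\models(\alpha,\monomial)$ reduces in polynomial time to deciding entailment of an annotated assertion $(\beta,\nonomial)$ (or a disjunction of two annotated inclusions), where the monomial still has at most $k$ variables; by Theorem~\ref{prop:completionalgorithm}(2) this holds iff $(\beta,\representative{\nonomial})\in\Omc_k$. So it suffices to decide membership of a given annotated axiom in the $k$-saturation $\Omc_k$ in space polynomial in $k$ (the dependence on $|\Omc|$ is already polynomial, indeed only the dependence on $k$ is at issue).

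Next I would build the automaton \Amc following \citeauthor{HuPe-17}~\shortcite{HuPe-17}: the states are exactly the annotated axioms that can occur in $\Omc_k$, i.e.\ axioms over the signature of $\Omc$ annotated by representatives $\representative{\monomial'}$ with $|\monomial'|\le k$; the transitions run the completion rules of Table~\ref{tab:completionRules} \emph{backwards}, so from a state $\Phi$ (the consequence of a rule instance) the automaton branches to the premises of that rule instance, and a state that is an axiom already present in $\Omc$ (or one of the trivial $\CR_0$/$\CR_{11}$ axioms) is accepting as a leaf. Then $(\beta,\representative{\nonomial})\in\Omc_k$ iff \Amc accepts some finite tree (a derivation tree witnessing the generation of that axiom). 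The three parameters the \PSpace{} test of \cite{BaHP08} needs are: (i) the arity of \Amc, which is the maximal number of premises of a completion rule, here $5$ — a constant; (ii) the size of an individual state, which is polynomial in $k$ since a state is an axiom of constant shape annotated by a representative monomial of length $\le k$; and (iii) a polynomial bound on the number of \emph{distinct} states appearing along any single path of a successful run. Point (iii) is where the real work lies.

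The main obstacle is establishing (iii): a priori a derivation path could revisit the same monomial representative many times or drift through exponentially many distinct representatives. I would argue that each backward rule application strictly decreases the length of the representative monomial labelling the current state — inspecting Table~\ref{tab:completionRules}, the consequence of every rule $\CR_1$–$\CR_{16}$ is annotated by $\representative{\monomial_1\times\cdots\times\monomial_\ell}$ with each premise monomial $\monomial_i$ a factor; by $\times$-idempotency the representative of the consequence contains the variable set of each premise, so each premise's representative is a \emph{submultiset} (in fact a subset of variables) of the consequence's. Thus along a path the variable set of the current representative is non-increasing, and it strictly decreases whenever a rule with at least two genuinely distinct-variable premises is applied; the only rules that preserve it (e.g.\ transitivity-type steps with one dominant premise) can be bounded because a monomial of length $\le k$ admits only polynomially many — indeed at most $k$ — distinct "sub-lengths", and at a fixed variable set only polynomially many distinct axioms over the fixed signature. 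Carefully combining these, the number of distinct states on any path is bounded by a polynomial in $k$ and $|\Omc|$, which is exactly condition (iii).

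Having verified (i)–(iii), I would simply cite the result of \cite{BaHP08}, which states that emptiness of a looping tree automaton satisfying these three conditions is decidable in space polynomial in the state size and the path-length bound — hence polynomial in $k=|\monomial|$ (and $|\Omc|$) — and conclude that $\Omc\models(\alpha,\monomial)$ is decidable in \PSpace{} in $|\monomial|$, establishing Proposition~\ref{prop:pspace}. A minor point to handle cleanly is the reduction from $\Omc\models(\alpha,\monomial)$ to the assertion case when $\alpha$ is itself a GCI/RI/RR: the two-inclusion disjunction of Theorem~\ref{prop:reductions} only adds a constant-factor overhead (two automaton emptiness checks), which does not affect the space bound.
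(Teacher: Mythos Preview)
Your proposal follows exactly the automaton-based route sketched in the paragraph preceding the proposition, and is correct. You even supply an argument for condition~(iii) --- that only polynomially many distinct states occur on any root-to-leaf path of a successful run --- which the paper merely asserts without proof. The justification is sound: along such a path the variable sets of the annotating monomials form a weakly descending chain of subsets of a set of size at most $k$, so at most $k{+}1$ distinct monomials occur, and for each there are only polynomially many (in $|\Omc|$) axiom shapes over the signature of $\Omc$. One small slip: your opening claim that the monomial length \emph{strictly} decreases at every backward step is false (e.g.\ $\CR_4$ with $\monomial_1=\monomial_2$, or any rule where one premise carries the full monomial), but you immediately pivot to the correct non-increase argument, so this is purely cosmetic.
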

Interestingly, these results allow us to bound the full complexity of answering \emph{instance queries} (IQ) of the
form $C(a)$ where $C$ is an \ELHr concept and $a\in\NI$.

\begin{restatable}{theorem}{Theorembasic}
\label{th:instancequeries}
Let \Omc be an ontology, $C(a)$ an IQ and $\monomial\in\NM$. $\Omc\models (C(a),\monomial)$ is decidable in polynomial time in $|\Omc|$ and $|C(a)|$, and polynomial 
space in $|\monomial|$.
\end{restatable}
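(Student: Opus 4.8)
The goal is to answer an instance query $C(a)$ where $C$ is an arbitrary \ELHr concept (so $C$ may contain nested existential restrictions and conjunctions on its left-hand-side-like structure), and to show the stated complexity bounds. The plan is to reduce $\Omc \models (C(a),\monomial)$ to a bounded number of annotated assertion entailments over a slightly enlarged ontology, and then invoke Theorem~\ref{prop:completionalgorithm}, Corollary~\ref{cor:complexity:provmonomial}, and Proposition~\ref{prop:pspace}.

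First I would introduce a fresh concept name $A_C$ and add the annotated GCI $(C \sqsubseteq A_C, 1)$ to~$\Omc$, calling the result $\Omc'$. Since the annotation is $1$, this does not alter the provenance of any consequence: for every model $\Imc$ of $\Omc$ one can extend it to a model of $\Omc'$ by setting $A_C^\Imc := C^\Imc$, and conversely every model of $\Omc'$ is a model of $\Omc$; moreover $(d,\nonomial^\Imc)\in A_C^\Imc$ iff $(d,\nonomial^\Imc)\in C^\Imc$ in any model of $\Omc'$ where $A_C$ is interpreted minimally, and the $1$-annotation guarantees $\monomial \otimes 1 = \monomial$, so no spurious factor is introduced. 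Hence $\Omc \models (C(a),\monomial)$ iff $\Omc' \models (A_C(a),\monomial)$. Then I would normalize $C \sqsubseteq A_C$: the concept $C$ built from the grammar $C ::= A \mid \exists R.C \mid C\sqcap C \mid \top$ can be broken down by the normalization rules $\NF_1$--$\NF_3$ (together with the translation rules (i)--(ii) mentioned for GCIs of the form $C\sqsubseteq C$, which handle $\exists R.C$ on the right by going through a fresh role name and a range restriction) into polynomially many normal-form axioms over fresh concept names, all annotated with $1$ except for tracking purposes. Invoking the Normalisation theorem, $\Omc'\models (A_C(a),\monomial)$ iff the normalized ontology $\Omc''$ entails $(A_C(a),\monomial)$, and $\Omc''$ has size polynomial in $|\Omc|+|C(a)|$.

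Having reduced to entailment of an annotated \emph{assertion} $(A_C(a),\monomial)$ over an ontology $\Omc''$ of size polynomial in $|\Omc|+|C(a)|$, the complexity claims follow directly. For the polynomial-time-in-$|\Omc|$-and-$|C(a)|$ bound (with no restriction stated explicitly on $\monomial$ there, but read as: for fixed or small $\monomial$, or using the unrestricted saturation), I would apply Theorem~\ref{prop:completionalgorithm}: the (full or $k$-)saturation $\Omc''_k$ is computable in polynomial time in $|\Omc''|$, hence in $|\Omc|+|C(a)|$, and membership of $(A_C(a),\representative{\monomial})$ in $\Omc''_k$ is a lookup. For the polynomial-space-in-$|\monomial|$ bound, I would instead appeal to Proposition~\ref{prop:pspace} applied to $\Omc''$ and the axiom $A_C(a)$: since $|\Omc''|$ is polynomial in $|\Omc|+|C(a)|$ and the \PSpace procedure is polynomial in $|\monomial|$ (and polynomial in $|\Omc''|$), the combined bound is polynomial space in $|\monomial|$ and polynomial in $|\Omc|+|C(a)|$.

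The main obstacle I anticipate is the handling of $\exists R.C$ occurring on the right-hand side of the introduced axiom $C\sqsubseteq A_C$ during decomposition: the grammar forbids qualified existential restrictions on the right of GCIs, so the naive subformula-naming $\exists R.C' \sqsubseteq A_C$ goes through $\NF_2$ fine, but the \emph{positive} occurrences of $\exists R.C'$ that arise when we recurse into the left argument must be rewritten via the GCI-translation rules (introducing a fresh role $S$ with $S\sqsubseteq R$ and ${\sf ran}(S)\sqsubseteq C'$) rather than kept as-is. I would need to check carefully that this rewriting (a) stays polynomial, (b) does not change the entailed provenance monomials over the original signature — which is exactly what the ``preserves expressivity when annotations are not considered'' remark and the Normalisation theorem's second bullet (requiring concept names of $\alpha$ to occur in $\Omc$) are there to guarantee — and (c) correctly propagates the annotation $1$ through all fresh auxiliary axioms so that the final monomial for $A_C(a)$ equals the one for $C(a)$. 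Once this bookkeeping is in place, everything else is an immediate citation of the earlier results.
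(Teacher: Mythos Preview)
Your approach is essentially the same as the paper's: introduce a fresh concept name $A_C$, add $(C\sqsubseteq A_C,1)$ to $\Omc$, show $\Omc\models (C(a),\monomial)$ iff $\Omc\cup\{(C\sqsubseteq A_C,1)\}\models (A_C(a),\monomial)$, and then cite Corollary~\ref{cor:complexity:provmonomial} and Proposition~\ref{prop:pspace}. The paper's proof is more terse (it leaves normalisation implicit in those cited results), but the reduction is identical.

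One minor correction: your ``main obstacle'' is a non-issue. In the GCI $C\sqsubseteq A_C$, the right-hand side is the concept name $A_C$, so the axiom is already in the restricted \ELHr syntax; only the left-hand side needs to be decomposed. The normalisation rules $\NF_1$--$\NF_3$ operate on complex left-hand sides and never introduce a qualified existential on the right (e.g., $\NF_2$ turns $(\exists R.\widehat{C}\sqsubseteq A_C,1)$ into $(\widehat{C}\sqsubseteq A,1)$ and $(\exists R.A\sqsubseteq A_C,1)$). The translation rules (i)--(ii) for general $C\sqsubseteq C$ GCIs are not needed here at all. So the bookkeeping you flag as delicate is in fact routine.
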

 
\section{Computing Relevant Provenance Variables}\label{sec:relevance}
An interesting question is
whether a given variable
appears in the provenance of a query \query;
i.e., whether a given axiom occurs in some derivation of  \query. 
Formally, $\variable\in\semiringVariables$ is \emph{relevant} for $\query$ (w.r.t.\ 
ontology \Omc) iff 
$\exists\monomial\in\NM\text{ s.t. }\Omc\models(q,\variable\otimes \monomial)$. 
For IQs and \ELHr this problem can be solved in polynomial time, via an algorithm
computing all the relevant
variables for all queries of the form $A(a)$, with $a\in\NI$, $A\in\NC$.
We modify 
the completion algorithm (Section \ref{sec:completion})
to combine all monomials from a derivation, instead of storing them separately.

As in Section \ref{sec:completion}, 
the algorithm assumes normal form and keeps as data structure 
a set \Smc of annotated axioms $(\alpha,\monomial)$, 
where $\alpha$ uses the vocabulary of \Omc, and $\monomial\in \NM$.
\Smc is initialised as the original ontology where annotations of the same axiom are merged into a single monomial:
$$\Smc := \{ 
(\alpha, \representative{\Pi_{v\in V_\alpha}v})
 \mid (\alpha,u)\in\Omc, V_\alpha=\{v\mid (\alpha,v)\in \Omc\}
\},$$ 
and extended by exhaustively applying the rules in Table~\ref{tab:completionRules}, where rule applications 
change \Smc into
\[
\Smc\Cup(\alpha,\monomial) :=
	\begin{cases}
		\Smc\cup\{(\alpha,\monomial)\}  \text{ if there is no $(\alpha,\nonomial)\in\Smc$} \\
		\Smc\setminus\{(\alpha,\nonomial)\}\cup\{(\alpha,\representative{\monomial\times\nonomial})\}  \text{ if $(\alpha,\nonomial)\in\Smc$;}
	\end{cases}
\]
i.e., add the
axiom $\alpha$ with an associated monomial if it does not yet appear in \Smc, and modify the monomial
associated to $\alpha$ to include new variables otherwise. To ensure termination, a rule  is only applied if 
it modifies~
\Smc. The rules are applied until no new 
rule is applicable; i.e., \Smc is~\emph{saturated}.

\begin{example}
The relevance algorithm on the ontology of Example \ref{ex:exponential}, yields the saturated set
$\Smc=\{(A\sqsubseteq A,\monomial), (B\sqsubseteq B,\monomial),(A\sqsubseteq B,\monomial), (B\sqsubseteq A,\monomial)\}\cup\{(A_i\sqsubseteq B,\monomial), (B\sqsubseteq A_i,\monomial),(A_i\sqsubseteq A,\monomial), (A\sqsubseteq A_i,\monomial)\mid (1\leq i\leq n\}\cup\{(A_i\sqsubseteq A_j,\monomial)\mid 1\leq i,j\leq n\}$ with $\monomial=u\times \Pi_{i=1}^n u_i\times \Pi_{i=1}^n v_i$.
\end{example}
Each rule application either adds a new axiom, or adds
to the label of an existing axiom more variables. As the number of concept and role names, and variables appearing in \Smc is 
linear on \Omc, at most polynomially many rules are applied, each requiring polynomial time;
i.e, the algorithm is polynomial. 

\begin{lemma}
If \Smc is the saturated set obtained from \Omc,
$a\in\NI$, $A\in\NC$, and $\variable\in\semiringVariables$, then $\variable$ is relevant for $A(a)$ iff 
$\variable$ occurs in $\monomial$ for some
$(A(a),\monomial)\in\Smc$.
\end{lemma}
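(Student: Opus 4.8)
The plan is to relate the monomials computed by the relevance algorithm on $\Smc$ to those computed by the (unrestricted) completion algorithm of Section~\ref{sec:completion}, and then to invoke Theorem~\ref{prop:completionalgorithm} together with the definition of relevance for instance queries. Let $\Omc^\infty$ denote the ontology obtained by running the completion algorithm of Table~\ref{tab:completionRules} on $\Omc$ without any bound on monomial length, and let $\Smc$ be the saturated set produced by the relevance variant. By Theorem~\ref{prop:completionalgorithm} (taking $k$ large enough, i.e.\ $k=|\Omc|$, which bounds the length of every monomial appearing in $\Omc^\infty$ by $\times$-idempotency), we have $\Omc\models(A(a),\monomial)$ iff $(A(a),\representative{\monomial})\in\Omc^\infty$; hence $\variable$ is relevant for $A(a)$ iff some $(A(a),\monomial')\in\Omc^\infty$ has $\variable$ occurring in $\monomial'$. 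So it suffices to prove the following equivalence: for every axiom $\alpha$ over the vocabulary of $\Omc$ and every variable $\variable$, the variable $\variable$ occurs in some $\monomial$ with $(\alpha,\monomial)\in\Omc^\infty$ if and only if $\variable$ occurs in the (unique) monomial $\monomial$ with $(\alpha,\monomial)\in\Smc$.

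For the "only if" direction I would show, by induction on the length of a completion derivation in $\Omc^\infty$, that whenever $(\alpha,\monomial)\in\Omc^\infty$ there is $(\alpha,\nonomial)\in\Smc$ with every variable of $\monomial$ occurring in $\nonomial$. The base case is the initialisation: the initial annotated axioms of $\Omc$ appear in $\Omc^\infty$ with their single variables, and in $\Smc$ the same axiom carries the product $\representative{\prod_{v\in V_\alpha}v}$ of all its annotations in $\Omc$, which contains that variable. For the inductive step, consider a rule application in $\Omc^\infty$ producing $(\alpha,\monomial)$ from premises $(\alpha_1,\monomial_1),\dots,(\alpha_\ell,\monomial_\ell)$, so $\monomial=\representative{\monomial_1\times\cdots\times\monomial_\ell}$; by induction each $\alpha_i$ occurs in $\Smc$ with a monomial containing all variables of $\monomial_i$, and since $\Smc$ is saturated the same rule must have been applied (or been a no-op because it added nothing new) so that $\alpha$ occurs in $\Smc$ with a monomial containing the union of those variable sets, hence all variables of $\monomial$. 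One must be slightly careful here: a rule in the relevance algorithm is only applied if it modifies $\Smc$, so I would argue that if the rule would not modify $\Smc$ then the relevant variables are already present in the monomial attached to $\alpha$ in $\Smc$ — which is exactly what is needed. For the "if" direction I would run the symmetric induction on a derivation in $\Smc$: whenever $(\alpha,\nonomial)\in\Smc$, every variable of $\nonomial$ occurs in $\monomial$ for some $(\alpha,\monomial)\in\Omc^\infty$. The initialisation is immediate since the annotations merged into $\nonomial$ are precisely the labels $(\alpha,v)\in\Omc$, each of which is an axiom of $\Omc$ and hence of $\Omc^\infty$. For a rule application merging premises $\alpha_1,\dots,\alpha_\ell$ into $\alpha$ with monomial $\representative{\cdots}$: by induction each variable of each premise's $\Smc$-monomial sits in some monomial of that premise in $\Omc^\infty$; applying the same completion rule in $\Omc^\infty$ to those witnessing annotated premises produces an annotated $\alpha$ whose monomial contains all of them, and arbitrary further variables accumulated into the $\Smc$-monomial of $\alpha$ across several rule applications are handled by combining the witnesses across the corresponding applications in $\Omc^\infty$ (possibly producing several distinct annotated copies of $\alpha$ in $\Omc^\infty$, which is harmless since relevance only asks for occurrence in some monomial).

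Combining the two directions with the reduction to $\Omc^\infty$ above gives: $\variable$ is relevant for $A(a)$ w.r.t.\ $\Omc$ iff $\variable$ occurs in $\monomial$ for some $(A(a),\monomial)\in\Smc$ — and since by construction $\Smc$ contains at most one annotated copy of $A(a)$, "some" can be read as "the". The main obstacle is the bookkeeping in the "only if" direction around the side condition that relevance rules fire only when they change $\Smc$: the induction hypothesis has to be phrased as a statement about the final saturated $\Smc$ (all relevant variables already accumulated), not about a particular derivation order, so that a "no-op" rule application in $\Smc$ still certifies that the would-be-added variables are present. A clean way to handle this is to prove a closure property of the saturated $\Smc$ directly — for every instance of a completion rule whose premises (with their $\Smc$-monomials) are in $\Smc$, the conclusion axiom is in $\Smc$ with a monomial containing the product of the premise monomials' variables — which follows from saturation by a short case analysis, and then the "only if" induction goes through smoothly using only this closure property.
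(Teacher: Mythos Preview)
Your approach is correct and is essentially what the paper relies on, though the paper does not spell out a proof of this lemma (neither in the main text nor in the appendix, which covers only Sections~\ref{sec:completion} and~\ref{sec:combined}). The intended argument is precisely the correspondence you set up: the relevance algorithm is the completion algorithm with monomials collapsed per axiom, so correctness reduces to Theorem~\ref{prop:completionalgorithm} together with the two inductive comparisons you describe between $\Omc^\infty$ and $\Smc$. Your handling of the two subtleties---that rules in $\Smc$ only fire when they change something (dealt with via a closure property of the saturated $\Smc$), and that in the ``if'' direction different variables of the $\Smc$-monomial may be witnessed by different annotated copies of $\alpha$ in $\Omc^\infty$---is exactly right.
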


The algorithm 
decides relevance for assertion entailment in \ELHr,
yielding a polynomial-time upper bound for this problem. 
As in Section~\ref{sec:completion}, axioms and 
IQs can be handled in polynomial time as well. 
\begin{theorem}
Relevance for axiom and IQ entailment in \ELHr can be decided in polynomial time.
\end{theorem}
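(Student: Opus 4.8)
The plan is to reduce both the axiom-entailment case and the instance-query case to the assertion-entailment case handled by the preceding lemma, reusing the polynomial reductions of Theorem~\ref{prop:reductions} and the structural ideas already used for Theorem~\ref{th:instancequeries}. First I would fix the definition: a variable $\variable$ is relevant for an axiom $\alpha$ (w.r.t.\ \Omc) iff there is a monomial $\monomial\in\NM$ with $\Omc\models(\alpha,\variable\times\monomial)$. The lemma gives us exactly this for the case where $\alpha$ is a concept assertion $A(a)$: run the (polynomial) relevance completion on \Omc, and $\variable$ is relevant iff $\variable$ occurs in the monomial associated with $A(a)$ in the saturated set \Smc. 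The same algorithm and argument also cover role assertions $R(a,b)$, since the completion rules in Table~\ref{tab:completionRules} include the role-level rules $\CR_1,\CR_3,\CR_5,\CR_{12}$; the lemma's statement should be read as covering both kinds of assertion.

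Next, for an \emph{arbitrary axiom} $\alpha$ (a GCI, RI, or RR), I would invoke the first part of Theorem~\ref{prop:reductions}: in polynomial time one constructs an ontology $\Omc'$ and an annotated assertion $(\beta,\nonomial)$ such that $\Omc\models(\alpha,\monomial)$ iff $\Omc'\models(\beta,\nonomial)$. Inspecting that reduction, the correspondence between the monomial $\monomial$ on the $\alpha$ side and the monomial on the $\beta$ side is a fixed bijection (the reduction typically introduces fresh individuals/concepts annotated with $1$, so it does not add new variables), hence $\variable$ is relevant for $\alpha$ w.r.t.\ \Omc iff $\variable$ is relevant for $\beta$ w.r.t.\ $\Omc'$. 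Applying the relevance algorithm to $\Omc'$ and reading off the monomial of $\beta$ in its saturation then decides relevance for $\alpha$ in polynomial time. The one point requiring care is that the reduction of Theorem~\ref{prop:reductions} must not collapse or duplicate provenance variables; I would state explicitly (and verify from its construction) that it is ``provenance-transparent'' in this sense.

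For \emph{instance queries} $C(a)$ with $C$ an \ELHr concept, I would follow the strategy behind Theorem~\ref{th:instancequeries}: introduce a fresh concept name $A_C$ and a fresh individual, or rather add the annotated GCI $(C\sqsubseteq A_C,1)$ (decomposing $C$ into normal form with fresh, $1$-annotated auxiliary axioms as in Rules $\NF_1$--$\NF_3$), obtaining an ontology $\Omc''$ with the property that $\Omc\models(C(a),\monomial)$ iff $\Omc''\models(A_C(a),\monomial)$, again via a bijection on monomials that introduces no new variables. Then $\variable$ is relevant for $C(a)$ w.r.t.\ \Omc iff $\variable$ is relevant for the concept assertion $A_C(a)$ w.r.t.\ $\Omc''$, which the lemma decides in polynomial time; since $|\Omc''|$ is polynomial in $|\Omc|+|C(a)|$ and each step is polynomial, the whole procedure is polynomial. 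The main obstacle, and the step I would spend the most care on, is precisely this bookkeeping: ensuring that every reduction used (normalization, Theorem~\ref{prop:reductions}, the IQ-to-assertion encoding) preserves relevance of individual variables, i.e.\ uses only $1$-annotated auxiliary axioms and never merges two distinct variables; once that is nailed down, the theorem follows by chaining the lemma with these polynomial, variable-faithful reductions.
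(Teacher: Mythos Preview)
Your overall strategy matches the paper's (extremely terse) argument, which simply says ``As in Section~\ref{sec:completion}, axioms and IQs can be handled in polynomial time as well,'' implicitly invoking the reductions of Theorem~\ref{prop:reductions} and the IQ encoding behind Theorem~\ref{th:instancequeries}. For RIs, RRs, and IQs your description is correct: those reductions add only $1$-annotated axioms, the monomial correspondence is the identity, and relevance transfers directly to the target assertion.

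For GCIs, however, there is a genuine gap in your argument. Your factual claim is wrong: the reduction for $C\sqsubseteq D$ in the proof of Theorem~\ref{prop:reductions} does \emph{not} annotate the fresh assertions with $1$; the set $\Amc_C$ consists of assertions $(A(a),v^r_{A(a)})$, $(R(a,b),v^r_{R(a,b)})$ carrying \emph{fresh} provenance variables, and the equivalence reads $\Omc\models(C\sqsubseteq D,\monomial_0)$ iff $\Omc'\models(E(a_0),\monomial_0\times\Pi_{(\alpha,v^r_\alpha)\in\Amc_C}v^r_\alpha)$. Hence $\variable$ is relevant for $C\sqsubseteq D$ in $\Omc$ iff some entailed monomial for $E(a_0)$ in $\Omc'$ contains $\variable$ \emph{together with all the fresh $v^r_\alpha$}. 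This is strictly stronger than ``$\variable$ is relevant for $E(a_0)$ in $\Omc'$,'' and the merged-monomial relevance algorithm cannot distinguish the two conditions. Concretely: take $\Omc=\{(\top\sqsubseteq D,w)\}$ and $\alpha=A\sqcap B\sqsubseteq D$. Then $\Omc\not\models(\alpha,\monomial)$ for any $\monomial$, so $w$ is not relevant for $\alpha$; yet in $\Omc'=\Omc\cup\{(D\sqsubseteq E,1),(A(a_0),v^0),(B(a_0),v^1)\}$ one derives $(E(a_0),w)$ via $\top\sqsubseteq D$ alone, so $w$ \emph{is} relevant for $E(a_0)$, and your test answers incorrectly. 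The ``provenance-transparency'' you promise to verify thus fails exactly in the GCI case, and the fix is not a matter of bookkeeping: you would need either a different reduction for GCIs or an argument that the completion is complete for atomic GCIs $A\sqsubseteq B$ directly (the paper states completeness only for assertions). In fairness, the paper's one-line proof does not spell this out either.
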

This result shows that if we only need to know which axioms are used to derive an axiom or an IQ, the complexity is the 
same as reasoning in \ELHr without provenance. 
This contrasts with \emph{axiom pinpointing}
: the task of finding the axioms 
responsible for a consequence to follow, in the sense of 
belonging to some \emph{minimal} subontology entailing it (a MinA). Deciding whether an axiom belongs to a MinA is 
\NP-hard for Horn-\EL\cite{PenalozaSertkaya10}. 
Relevance is easier in our context since provenance does not require minimality: if 
$\Omc=\{(A\sqsubseteq B, v_1), (B\sqsubseteq C, v_2), (C\sqsubseteq B, v_3)\}$, $v_2$ and $v_3$ are relevant for $A\sqsubseteq B$, but the only MinA is $\{A\sqsubseteq B\}$ so other axioms are not relevant for axiom pinpointing. 

Provenance relevance is related to \emph{lean kernels} (LKs) \cite{PenalozaMIM17}, 
which approximate the union of MinAs. 
The LK of a consequence $c$ is the set of axioms appearing in at least one proof of $c$
in a given inference method, generalizing the notion from propositional logic, 
where an LK is the set of 
clauses appearing in a resolution proof for unsatisfiability. 
The sets of variables computed by our algorithm 
are the sets of axioms 
used in the derivations by the completion algorithm, which is a consequence-based method for \ELHr. 
Thus they correspond to 
LKs for the associated axioms and  
our algorithm 
is an alternative way of computing LKs in \ELHr.

 \section{Query Answering with Provenance}
\label{sec:combined}

Even if \ELHr is expressive enough to reduce entailment of rooted tree-shaped BCQs to assertion entailment, 
the methods presented in  Section~\ref{sec:completion} do not apply to other kinds of BCQs. 
\begin{example} 
For
$\Omc{=}\{(R(a,a),u_1),(A(a),u_2), (A{\sqsubseteq} \exists R, v_1)$, $({\sf ran}(R){\sqsubseteq} A, v_2)\}$ and
$q= \exists xyztt't''. R(x,x,t)\land R(x,y,t')\land R(z,y,t'')$,
$\Omc\models (q,u_1)$ but $\Omc\not\models (q,u_2\times v_1\times v_2)$: 
\Omc has a model $\Imc$ with 
$R^\Imc=\nobreak\{(a,a,u_1), (a,b_1,u_2\times v_1),(a,c_1,u_1\times v_1\times v_2)\}\cup\{(b_i,b_{i+1}, u_2\times v_1\times v_2)\mid i\geq 1\}\cup\{(c_i,c_{i+1}, u_1\times v_1\times v_2)\mid i\geq 1\}$.
\end{example}
We adapt the combined approach by \citeauthor{LTW:elcqrewriting09} \shortcite{LTW:elcqrewriting09} to trace provenance. 
Assume that 
queries contain only individual names 
occurring in the ontology \Omc.
The combined approach builds a canonical model for \Omc
and shows that every query $\query$ can be rewritten into a query $\queryrewriting$
 that holds in this canonical model 
iff $\Omc\models\query$. 
We first define the canonical model $\Imc_\Omc$ of 
an ontology \Omc annotated with provenance information. 

Assume that \Omc is in normal form; 
$\monomials{\Omc}$ denotes the set of
monomial representatives built using variables of \NV occurring in \Omc, 
and $\roles{\Omc}$ is the set of role names occurring in \Omc. 
Also assume that $(\ast)$
  if 
  there is  $B\in\NC$, $R\in\NR$, and $n\in\NM$ such that $\Omc\models({\sf ran} (R)\sqsubseteq B,n)$, then $({\sf ran} (R)\sqsubseteq B,\representative{n})\in\Omc$.
This   
simplifies the presentation of the construction of the canonical model. 
 Let $\auxs{\Omc}:=\{\aux{m}{R}{} \mid   R\in\roles{\Omc}, m\in \monomials{\Omc}\}.$
Assume that $\individuals{\Omc}\cap \auxs{\Omc}=\emptyset$. 
We define the domain of $\Imc_\Omc$ and the domain of monomials of $\Imc_\Omc$ as follows:
\begin{align*}
\Delta^{\Imc_\Omc} := \individuals{\Omc}\cupdot \auxs{\Omc} \quad\quad\quad \Delta^{\Imc_\Omc}_{\sf m} := \NMrep
\end{align*}
We define the interpretation function of $\Imc_\Omc$ as the union of $\cdot^{\Imc^i_\Omc}$, $i\ge 0$.
The function $\cdot^{\Imc^0_\Omc}$ sets 
$a^{\Imc^0_\Omc}=a$ for all 
$a\in \individuals{\Omc}$ (for $a\in \NI\setminus\individuals{\Omc}$ 
the mapping $a^{\Imc_\Omc}$ is irrelevant), $\monomial^{\Imc^0_\Omc}=\representative{\monomial}$
for all $\monomial\in\NM$, and for all $A\in\NC$ and all $R\in\NR$, 
\begin{align*}
A^{\Imc^0_\Omc}&:=\{(a,\representative{m})\mid  \Omc\models (A(a),\monomial)\} \\
R^{\Imc^0_\Omc}&:=\{(a,b,[\monomial])\mid \Omc\models(R(a,b),\monomial)\}.
\end{align*}
If $\Imc^i_\Omc$ is defined, we define $\Imc^{i+1}_\Omc$ 
by choosing an annotated axiom $ \alpha\in \Omc$ and applying one of the following rules in a fair way (i.e., every applicable rule is eventually applied).
\begin{enumerate}[label=\textbf{R\arabic*},leftmargin=*,series=run]
\item\label{r1}   $\alpha=(C\sqsubseteq A,\monomial)$: if
there is  $d\in\Delta^{\Imc_\Omc}$ and $\nonomial\in\monomials{\Omc}$  s.t. 
$(d,\representative{\nonomial})\in C^{\Imc^{i}_\Omc} $,
then add $(d,\representative{m\times \nonomial})$ to $A^{\Imc^{i}_\Omc}$. 
\item \label{r2} $\alpha= (C\sqsubseteq \exists R,\monomial)$:
if there is  
$\nonomial\in\monomials{\Omc}$, $d\in\Delta^{\Imc_\Omc}$
s.t.\
$(d,[\nonomial])\in C^{\Imc^{i}_\Omc} $, 
then add $(d,\aux{[\monomial\times \nonomial]}{R}{},[\monomial\times \nonomial])$ to $R^{\Imc^{i}_\Omc}$.
\item \label{r3} $\alpha= (R\sqsubseteq S,\monomial)$: if there are  
$d,d'\in\Delta^{\Imc_\Omc}$, $\nonomial\in\monomials{\Omc}$ 
s.t.\
 $(d,d',\representative{\nonomial})\in R^{\Imc^{i}_\Omc}$, 
then add $(d,d',\representative{\monomial\times \nonomial})$ to $S^{\Imc^{i}_\Omc}$.
\end{enumerate}
\begin{example}\label{ex:comb}
For our running example, $\Imc_\Omc$ is as follows:
\begin{equation*}
\begin{array}{l@{}l}
A^{\Imc_\Omc}=\{&(a, u_2), (a, u_1\times v_2), (d_R^{u_2\times v_1}, u_2\times v_1\times v_2), \\&
(d_R^{u_1\times v_1\times v_2},u_1\times  v_1\times v_2), \\&
{(d_R^{u_2\times v_1\times v_2}, u_2\times v_1\times v_2)}\}\\
R^{\Imc_\Omc}=\{&(a,a,u_1), (a, d_R^{u_2\times v_1},u_2\times v_1), 
\\&
(a, d_R^{u_1\times v_1\times v_2},u_1\times  v_1\times v_2),\\&
(d_R^{u_2\times v_1},d_R^{u_2\times v_1\times v_2}, u_2\times v_1\times v_2),\\& 
(d_R^{u_1\times v_1\times v_2},d_R^{u_1\times v_1\times v_2},u_1\times  v_1\times v_2),\\&
{(d_R^{u_2\times v_1\times v_2},d_R^{u_2\times v_1\times v_2}, u_2\times v_1\times v_2)}\}.
\end{array}
\end{equation*}
\end{example}
Proposition~\ref{prop:model} formalises the fact that 
$\Imc_\Omc$ is a model of \Omc. 
\begin{restatable}{proposition}{PropositionModel}\label{prop:model}
$\Imc_\Omc$ is a model of \Omc.
\end{restatable}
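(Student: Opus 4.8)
The plan is to regard $\Imc_\Omc$ as the limit (union) of the increasing sequence $\cdot^{\Imc^0_\Omc}\subseteq\cdot^{\Imc^1_\Omc}\subseteq\cdots$ and to check, once this sequence stabilises, that every annotated axiom of $\Omc$ is satisfied. The first point to record is that $\Delta^{\Imc_\Omc}=\individuals{\Omc}\cupdot\auxs{\Omc}$ is \emph{finite}: $\roles{\Omc}$ is finite, and $\monomials{\Omc}$ is finite because $\times$-idempotency forces every representative to be the product of a set of pairwise distinct variables of $\Omc$. Since an application of \ref{r1}--\ref{r3} only ever adds tuples whose monomial components lie in $\monomials{\Omc}$ and whose object components lie in $\Delta^{\Imc_\Omc}$, only finitely many additions are possible, so the construction reaches after finitely many steps a structure $\Imc_\Omc$ to which no instance of \ref{r1}--\ref{r3} is applicable; equivalently, for every axiom of $\Omc$ and every configuration witnessing the premise of the associated rule, the conclusion of that rule instance is already present in $\Imc_\Omc$. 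I would then dispatch the routine bookkeeping that $\Imc_\Omc$ is a legal annotated interpretation: the two domains are non-empty and disjoint (using $\individuals{\Omc}\cap\auxs{\Omc}=\emptyset$ and that no individual or auxiliary element is a monomial representative), the assignment $\monomial\mapsto\representative{\monomial}$ gives $\monomial^{\Imc_\Omc}=\nonomial^{\Imc_\Omc}$ iff $\monomial\approx_\mn{trio}\nonomial$, and $\top^{\Imc_\Omc}=\Delta^{\Imc_\Omc}\times\{1^{\Imc_\Omc}\}$ holds by the semantic clause for $\top$. It remains to show $\Imc_\Omc\models(\alpha,v)$ for each $(\alpha,v)\in\Omc$.

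The whole argument hinges on one invariant: every monomial occurring as a component of a tuple in some $A^{\Imc_\Omc}$ or $R^{\Imc_\Omc}$ lies in $\monomials{\Omc}$. For the base interpretation $\Imc^0_\Omc$ this follows from Theorem~\ref{prop:completionalgorithm}: $\Omc\models(A(a),\monomial)$ (resp.\ $\Omc\models(R(a,b),\monomial)$) implies that $(A(a),\representative{\monomial})$ (resp.\ $(R(a,b),\representative{\monomial})$) is in the corresponding saturation $\Omc_k$, and every monomial produced by the rules of Table~\ref{tab:completionRules} from $\Omc$ is a product of the single-variable annotations of axioms of $\Omc$, hence its representative belongs to $\monomials{\Omc}$. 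For the inductive step, rules \ref{r1}--\ref{r3} multiply an already-present representative by the single-variable (or $1$) annotation of an axiom of $\Omc$, and the result, once normalised to a representative, still lies in $\monomials{\Omc}$. A second, essentially trivial remark --- trivial precisely because $\Delta^{\Imc_\Omc}$ is finite --- is that the extensions of the complex left-hand sides $A\sqcap A'$, $\exists R.A$ and ${\sf ran}(R)$ are reached in finitely many steps, so any witness used to place a pair into one of these extensions at $\Imc_\Omc$ is already a witness at some $\Imc^i_\Omc$.

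For the case analysis take $(\alpha,v)\in\Omc$. If $\alpha=A(a)$, then $\Omc\models(A(a),v)$, so $(a^{\Imc_\Omc},v^{\Imc_\Omc})=(a,\representative{v})\in A^{\Imc^0_\Omc}\subseteq A^{\Imc_\Omc}$, which is exactly the satisfaction condition; $\alpha=R(a,b)$ is identical. If $\alpha=R\sqsubseteq S$ and $(d,e,\nonomial^{\Imc_\Omc})\in R^{\Imc_\Omc}$, then by the invariant $\nonomial^{\Imc_\Omc}=\representative{\nonomial}$ with $\representative{\nonomial}\in\monomials{\Omc}$, so saturation of \ref{r3} yields $(d,e,\representative{v\times\nonomial})=(d,e,(v\times\nonomial)^{\Imc_\Omc})\in S^{\Imc_\Omc}$. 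If $\alpha=A\sqsubseteq\exists R$ and $(d,\nonomial^{\Imc_\Omc})\in A^{\Imc_\Omc}$, then saturation of \ref{r2} supplies the edge $(d,\aux{\representative{v\times\nonomial}}{R}{},\representative{v\times\nonomial})\in R^{\Imc_\Omc}$, so $(d,(v\times\nonomial)^{\Imc_\Omc})\in(\exists R)^{\Imc_\Omc}$. Finally, if $\alpha=C\sqsubseteq B$ with $C\in\{A, A\sqcap A', \exists R.A\}$, or $\alpha={\sf ran}(R)\sqsubseteq B$ handled by \ref{r1} with left-hand side ${\sf ran}(R)$, and $(d,\nonomial^{\Imc_\Omc})\in C^{\Imc_\Omc}$: unfolding the semantic clause for $C$ and invoking the second remark, the relevant witnesses already occur at some $\Imc^i_\Omc$, so $(d,\representative{\nonomial})\in C^{\Imc^i_\Omc}$ with $\representative{\nonomial}\in\monomials{\Omc}$, and saturation of \ref{r1} gives $(d,\representative{v\times\nonomial})=(d,(v\times\nonomial)^{\Imc_\Omc})\in B^{\Imc_\Omc}$. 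In each case this is precisely the satisfaction condition for $(\alpha,v)$, hence $\Imc_\Omc\models\Omc$.

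I expect the only real obstacle to be the invariant of the second paragraph, not the case distinction: rules \ref{r1}--\ref{r3} carry the side condition $\nonomial\in\monomials{\Omc}$, so the construction is only meaningful once one knows that the monomials generated by the base step and by the rules never escape $\monomials{\Omc}$; pinning this down is what forces us through the completion characterisation of Theorem~\ref{prop:completionalgorithm} for $\Imc^0_\Omc$ and what makes essential use of axioms of $\Omc$ being annotated by single variables (or $1$). After that, what remains is purely notational --- e.g.\ verifying $\representative{v\times\representative{\nonomial}}=\representative{v\times\nonomial}=(v\times\nonomial)^{\Imc_\Omc}$ --- and is immediate because $\approx_\mn{trio}$-equivalent monomials have the same representative.
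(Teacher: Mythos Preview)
Your proof is correct and follows the same overall strategy as the paper's: satisfaction of each annotated axiom $(\alpha,v)\in\Omc$ is obtained from saturation of $\Imc_\Omc$ under the appropriate rule among \ref{r1}--\ref{r3}. The paper's own proof is a terse sketch (observe that $\Imc^0_\Omc$ already handles everything on $\individuals{\Omc}$ except existential GCIs, then note that \ref{r2} introduces the needed anonymous witnesses and \ref{r1}, \ref{r3} propagate to them), whereas you make explicit two points the paper leaves implicit: the invariant that every monomial component of any tuple in the construction lies in $\monomials{\Omc}$ (so that the side condition $\nonomial\in\monomials{\Omc}$ of the rules never blocks a firing needed for satisfaction), and the finiteness of $\Delta^{\Imc_\Omc}$ and of the whole process. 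Your appeal to Theorem~\ref{prop:completionalgorithm} to justify the invariant at $\Imc^0_\Omc$ is a clean way to see it; the paper instead relies on $\Imc^0_\Omc$ being defined via entailment and hence already closed under all axioms on named individuals. These are presentational differences rather than a different route.
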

We 
define the rewriting $\queryrewriting$ of 
a   query $\query$, closely following 
 \citeauthor{LTW:elcqrewriting09} \shortcite{LTW:elcqrewriting09}. 
It contains an additional  predicate ${\sf Aux}$, always interpreted 
as $(\Delta^{\Imc_\Omc}\setminus \individuals{\Omc})\times \{1^{\Imc_\Omc}\}$ 
in $\Imc_\Omc$. 
Let $\sim_{\query}$ be the smallest transitive relation over terms of~$\query$,
${\sf term}(\query)$, that includes 
identity relation, and satisfies the closure condition
\begin{itemize}[label=$(\dagger)$,leftmargin=*]
\item
$R_1(t_1,t_2,t),R_2(t'_1,t'_2,t')\in \query, 
t_2 \sim_\query t'_2 \implies t_1 \sim_\query t'_1$.
\end{itemize}
Clearly, the  relation $\sim_{\query}$ 
is computable in polynomial time in the size of $q$.
 Define for any equivalence class 
$\chi$ of $\sim_\query$, the set 
\[{\sf pre}(\chi)= \{t_1\mid \exists R\in \NR \text{ s.t. }R(t_1,t_2,t)\in q 
\text{ and } t_2\in \chi\}.\] 
We define the sets ${\sf Cyc}$ and ${\sf Fork}_=$ whose main purpose
in the translation  
 is to prevent spurious matches (e.g., with cycles)
of a query in the anonymous part of the canonical model.
\begin{itemize}
\item ${\sf Fork}_=$ is the set of pairs $ ({\sf pre}(\chi),\chi)$ with ${\sf pre}(\chi)$ of cardinality at least two. 
\item ${\sf Cyc}$ is the set of variables $x$ in ${\sf term}(q)$ such that 
there are $R_0(t^0_1,t^0_2,{t^0}),{\ldots}, R_m(t^m_1,t^m_2,{t^m}), {\ldots}, 
R_n(t^n_1,t^n_2,{t^n})$ in $q $ with $n,m\geq 0$,  $x\sim_\query t^j_1$ 
for some 
$j\leq n$, $t^i_2\sim_\query t^{i+1}_1$ for all $i < n$, 
and $t^n_2\sim_\query t^m_1$. 
\end{itemize}
${\sf Fork}_=$, 
and 
 ${\sf Cyc}$ 
can also be computed in polynomial time in the size of $q$.
For each equivalence class $\chi$ of $\sim_\query$, we choose a representative $t_\chi\in\chi$. 
For $q = \exists \vec{x}.\psi$, the rewritten query 
$\queryrewriting$ is defined as $\exists \vec{x}.(\psi\wedge 
\varphi_1\wedge \varphi_2)$,  
where
\begin{align*}
\varphi_1 :=& \bigwedge_{x\in 
{\sf Cyc}} \neg {\sf Aux}(x,1)\\
\varphi_2 :=& \bigwedge_{(\{t_1,\ldots, t_k\},\chi)\in {\sf Fork}_=}
({\sf Aux}(t_\chi,1)\rightarrow \bigwedge_{1\leq i < k} t_i = t_{i+1}). 
\end{align*}

\begin{example}\label{ex:combined}
The rewriting $\queryrewriting$ of $\query$ in Example~\ref{ex:comb} is  
$\exists xyztt't''.(R(x,x,t)\wedge R(x,y,t')\wedge R(z,y,t'')\wedge\neg {\sf Aux}(x,1)\wedge ({\sf Aux}(y,1)\rightarrow  x = z))$. 
  $\varphi_1$ prevents mapping $x$ to some $d^\monomial_R$, 
 avoiding  the $R$-loops in the anonymous part of $\Imc_\Omc$ to satisfy $R(x,x,t)$.
$\varphi_2$ enforces that if $y$ is mapped in the anonymous part, then $x$ and $z$ are mapped to the same object, 
which    avoids 
$R$-loops in the anonymous part of $\Imc_\Omc$. 
\end{example}
Our construction differs from the original rewriting of \citeauthor{LTW:elcqrewriting09} \shortcite{LTW:elcqrewriting09}. 
In particular, in their rewriting there is a 
formula $\varphi_3$, which is 
not necessary in our case. Intuitively, 
this is because we keep the information of the role name used to connect an element of 
 $\auxs{\Omc}$ to the rest of the model.   
Theorem~\ref{thm:combined} establishes that $\queryrewriting$ is as~required.

\begin{restatable}{theorem}{Theoremcombined}\label{thm:combined}
Let \Omc be an  
ontology in normal form and 
$(\query,\polynomial)$ be an annotated 
  query. 
Then, $\Omc\models (\query,\polynomial) \text{ iff } \Imc_\Omc\models (\queryrewriting,\polynomial). $
\end{restatable}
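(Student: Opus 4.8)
The plan is to prove both directions of the equivalence $\Omc \models (\query,\polynomial)$ iff $\Imc_\Omc \models (\queryrewriting,\polynomial)$ by establishing that $\Imc_\Omc$ is a \emph{canonical} (universal) model in a provenance-aware sense: not only $\Imc_\Omc \models \Omc$ (Proposition~\ref{prop:model}), but for every model $\Jmc$ of $\Omc$ there is a homomorphism $h\colon \Imc_\Omc \to \Jmc$ that preserves provenance annotations, in the sense that if $(d,\representative{\monomial}) \in A^{\Imc_\Omc}$ then $(h(d),\monomial^\Jmc) \in A^\Jmc$, and similarly for roles. Such a homomorphism is built by induction on the stages $\Imc^i_\Omc$: the base case $\Imc^0_\Omc$ uses that $\Omc\models(A(a),\monomial)$ and $\Omc\models(R(a,b),\monomial)$ force the corresponding annotated facts in $\Jmc$ (here assumption $(\ast)$ and the normal form are used), and the inductive step matches each application of \ref{r1}--\ref{r3} to the satisfaction clauses for GCIs, existential axioms, and RIs in $\Jmc$; the auxiliary elements $\aux{[\monomial\times\nonomial]}{R}{}$ are mapped to witnesses supplied by $\Jmc$'s satisfaction of $(C\sqsubseteq\exists R,\monomial)$. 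The $\times$-idempotency of the semiring is what makes the monomials attached in \ref{r1}--\ref{r3} match the representatives appearing in $\Jmc$ rather than growing unboundedly.

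For the direction from right to left, suppose $\Imc_\Omc \models (\queryrewriting,\polynomial)$, i.e.\ $\polynomial \subseteq \p{\Imc_\Omc}{\queryrewriting}$. I would show that every match $\pi$ of $\queryrewriting$ in $\Imc_\Omc$ contributing a monomial to $\polynomial$ induces, via the provenance-preserving homomorphism $h$ into an arbitrary model $\Jmc\models\Omc$, a match $h\circ\pi$ of $\query$ in $\Jmc$ whose monomial is (the representative of) the same product — here one drops the atoms ${\sf Aux}(\cdot,1)$ and the (in)equalities of $\varphi_1,\varphi_2$, which carry the trivial annotation $1$ and hence do not affect the product. Thus $\polynomial\subseteq\p{\Jmc}{\query}$ and $\Jmc\models\query$, for every $\Jmc$, giving $\Omc\models(\query,\polynomial)$. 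The only subtlety is that $\queryrewriting$ has atoms and formulas that are not plain conjuncts of annotated atoms; I would handle $\varphi_1,\varphi_2$ by noting that $\neg{\sf Aux}$ and the implication are satisfied by a match precisely when the match does not do the ``forbidden'' thing, and these constraints are preserved under $h$ because $h$ sends $\auxs{\Omc}$-elements to elements that behave the same way (or the constraint is vacuous).

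For the direction from left to right, the core is the classical combined-approach argument adapted to provenance: if $\Omc\models(\query,\polynomial)$ then in particular $\Imc_\Omc\models\query$ and $\polynomial\subseteq\p{\Imc_\Omc}{\query}$; I must upgrade $\Imc_\Omc\models\query$ to $\Imc_\Omc\models\queryrewriting$, i.e.\ show that the matches of $\query$ that actually witness the monomials of $\polynomial$ can be taken to satisfy $\varphi_1$ and $\varphi_2$ as well. This is where the design of ${\sf Cyc}$ and ${\sf Fork}_=$ enters: a match mapping a ${\sf Cyc}$-variable or a ${\sf Fork}_=$-predecessor into the anonymous tree-shaped part $\auxs{\Omc}$ of $\Imc_\Omc$ can be ``pulled back'' toward the individuals, because the anonymous part is generated by \ref{r2}--\ref{r3} in a forest-like fashion where each $\aux{[\monomial]}{R}{}$ has a unique role-predecessor; one argues that such a problematic match cannot be minimal and that a modified match exists with the same or a sub-monomial still lying in $\polynomial$ (using $\oplus$-/$\times$-idempotency so that re-routing a match does not lose the monomial). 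The retention of the role name $R$ inside $\aux{[\monomial]}{R}{}$ is exactly what lets us dispense with $\varphi_3$ of \citeauthor{LTW:elcqrewriting09}.

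I expect the main obstacle to be the left-to-right ``pull-back'' of anonymous matches while tracking provenance monomials: in the non-annotated combined approach one merely needs \emph{a} match, but here the match must produce a monomial that occurs in $\polynomial$, so the re-routing argument has to be carried out monomial by monomial, checking that the product of annotations along the modified match equals (up to $\approx_\mn{trio}$) that along the original one. The key technical lemma will be a precise description of the annotated anonymous part of $\Imc_\Omc$ — roughly, that the element $\aux{[\monomial]}{R}{}$ is reachable only along $R$-edges carrying monomials whose representative is $\representative{\monomial}$, and that every $A$-membership of $\aux{[\monomial]}{R}{}$ carries a monomial that is a $\times$-multiple of $\monomial$ — from which both the rewriting's soundness and the homomorphism's provenance-preservation follow; I would isolate this as an auxiliary claim and prove it by induction on the construction stages before assembling the two directions above.
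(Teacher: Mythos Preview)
Your proposal rests on two structural claims about $\Imc_\Omc$ that are both false, and this is a genuine gap rather than a matter of presentation. First, you assert that ``for every model $\Jmc$ of $\Omc$ there is a homomorphism $h\colon \Imc_\Omc \to \Jmc$'' preserving provenance. But $\Imc_\Omc$ is deliberately \emph{folded}: rule~\ref{r2} sends every element satisfying $(C\sqsubseteq\exists R,\monomial)$ with a given resulting annotation to the \emph{same} auxiliary element $\aux{[\monomial\times\nonomial]}{R}{}$. Hence auxiliary elements may have several role-predecessors and even self-loops---look at Example~\ref{ex:comb}, where $(d_R^{u_1\times v_1\times v_2},d_R^{u_1\times v_1\times v_2},u_1\times v_1\times v_2)\in R^{\Imc_\Omc}$. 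An arbitrary model $\Jmc$ need not contain any $R$-loop, so no homomorphism $h$ can exist. Second, and for the same reason, your left-to-right argument invokes that ``the anonymous part is generated in a forest-like fashion where each $\aux{[\monomial]}{R}{}$ has a unique role-predecessor''; this is exactly what fails, so the ``pull-back'' of a bad match is not well-defined and there is no guarantee the re-routed match yields the same monomial.

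The paper fixes precisely this by inserting the \emph{unraveling} $\unraveling$ of $\Imc_\Omc$ as an intermediate structure and splitting the theorem into two halves: (i) $\Omc\models(\query,\polynomial)$ iff $\unraveling\models(\query,\polynomial)$, proved by building the homomorphism you wanted but from the tree-shaped $\unraveling$ (where it does exist) into an arbitrary model; and (ii) $\unraveling\models(\query,\polynomial)$ iff $\Imc_\Omc\models(\queryrewriting,\polynomial)$, proved by lifting matches back and forth between $\unraveling$ and $\Imc_\Omc$. The formulas $\varphi_1,\varphi_2$ are not a cosmetic filter on an already-universal model; they are the device that singles out, among matches in the folded $\Imc_\Omc$, exactly those that lift to the tree-shaped $\unraveling$. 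Your high-level intuitions about why $\varphi_3$ is unnecessary and about the role of $\times$-idempotency are correct, but the proof architecture needs the unraveling (or an equivalent per-match unfolding) to go through.
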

Although the domain of monomials is infinite, since only elements of $\monomials{\Omc}$ are relevant,
an exponential size structure representing $\Imc_\Omc$ is sufficient 
to check 
whether $\Imc_\Omc\models (\queryrewriting,\polynomial)$. 
The size of the resulting structure is exponential in $|\Omc|$ and can be constructed 
in exponential time using the completion algorithm (Theorem~\ref{prop:completionalgorithm}) 
to check entailment of 
assertions and~RRs. 

\begin{corollary}\label{cor:combinedcomplexity}
Let \Omc be an ontology, $q$ a BCQ and $p\in\NPr$.
$\Omc\models (q,p)$ is decidable in exponential time in $|\Omc|+|(q,p)|$. 
\end{corollary}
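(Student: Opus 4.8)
The plan is to combine Theorem~\ref{thm:combined} with an explicit bound on the size of a finite structure representing the canonical model $\Imc_\Omc$. First I would observe that only monomials in $\monomials{\Omc}$ are relevant: by $\times$-idempotency every monomial generated during the construction is a representative of a product of variables occurring in $\Omc$, of which there are at most $2^{|\Omc|}$. Hence the auxiliary domain $\auxs{\Omc}$ has size at most $|\roles{\Omc}|\cdot 2^{|\Omc|}$, and $\Delta^{\Imc_\Omc}$ is of exponential size in $|\Omc|$; the restriction of the interpretation function to concept- and role-memberships whose monomial lies in $\monomials{\Omc}$ is also of exponential size. I would then argue that this finite substructure $\Imc_\Omc^{\mathsf{fin}}$ is sufficient for evaluating $(\queryrewriting,\polynomial)$: the rewritten query $\queryrewriting$ has no existential quantifiers over fresh terms beyond those of $q$, matches assign terms to elements of $\Delta^{\Imc_\Omc}$, and the provenance of a match is the representative of a product of at most $|q|$ monomials each drawn from the tuples used, so it again lies in $\monomials{\Omc}$ (or fails $\polynomial\subseteq\p{\Imc_\Omc}{q}$). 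Thus $\Imc_\Omc\models(\queryrewriting,\polynomial)$ iff $\Imc_\Omc^{\mathsf{fin}}\models(\queryrewriting,\polynomial)$.

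Next I would bound the cost of \emph{constructing} $\Imc_\Omc^{\mathsf{fin}}$. By definition of $\cdot^{\Imc^0_\Omc}$, its concept and role extensions are determined by the entailments $\Omc\models(A(a),\monomial)$ and $\Omc\models(R(a,b),\monomial)$; by Theorem~\ref{prop:completionalgorithm} the relevant saturation $\Omc_k$ with $k\le|\Omc|$ is computable in time polynomial in $|\Omc|$ and exponential in $k$, hence exponential in $|\Omc|$, and from it all such annotated assertions (and, via the assumption $(\ast)$, the relevant range restrictions) can be read off. The closure rules \ref{r1}--\ref{r3} add at most one membership tuple per (axiom, domain element, monomial) or (axiom, pair of domain elements, monomial) combination, so the fixpoint is reached after exponentially many steps, each of polynomial cost, giving an exponential-time construction overall. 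Normal form and the technical assumptions $(\ast)$ on $\Omc$ can be established by a polynomial-time preprocessing (Theorem on normalisation, plus saturating range restrictions) that preserves the entailed annotated axioms over the signature, so we may assume them without loss of generality.

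Finally I would check that evaluating $(\queryrewriting,\polynomial)$ on the exponential-size structure is feasible in exponential time in $|\Omc|+|(q,p)|$: $\queryrewriting$ is a first-order sentence with negation and implication but still only $|{\sf term}(q)|$ quantified variables, so brute-force evaluation enumerates at most $|\Delta^{\Imc_\Omc^{\mathsf{fin}}}|^{|{\sf term}(q)|}$ assignments, which is exponential in $|\Omc|+|q|$; for each surviving match one computes the product of the monomial components, checks membership of its representative against $\p{\Imc_\Omc^{\mathsf{fin}}}{q}$, and tests $\polynomial\subseteq\p{\Imc_\Omc^{\mathsf{fin}}}{q}$, all in exponential time. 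Combining with Theorem~\ref{thm:combined} yields decidability of $\Omc\models(q,p)$ in exponential time in $|\Omc|+|(q,p)|$. The main obstacle is the first step: carefully justifying that truncating $\Imc_\Omc$ to monomials in $\monomials{\Omc}$ (and to the finitely many auxiliary elements $d^m_R$) does not change the satisfaction of $(\queryrewriting,\polynomial)$ — i.e. that no match of $q$ needs either a longer monomial or an auxiliary element outside $\auxs{\Omc}$ — which follows from $\times$-idempotency bounding monomial length by $|\Omc|$ and from the construction of $\Imc_\Omc$ only ever introducing elements $d^m_R$ with $m\in\monomials{\Omc}$.
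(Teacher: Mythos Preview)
Your proposal is correct and follows essentially the same approach as the paper: restrict $\Imc_\Omc$ to the exponentially many relevant monomials in $\monomials{\Omc}$, build this finite structure in exponential time via the completion algorithm (Theorem~\ref{prop:completionalgorithm}), and then evaluate $\queryrewriting$ on it, invoking Theorem~\ref{thm:combined}. One minor inaccuracy: enforcing assumption $(\ast)$ by saturating range restrictions is not polynomial-time preprocessing (there may be exponentially many annotated RRs to add), but this is harmless since the overall bound is already exponential.
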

 
 \section{Discussion and Conclusions}
We study the problem of computing the provenance of an axiom or a BCQ entailment from \ELHr ontologies. 
In particular, 
entailment of annotated axioms or IQs for a fixed monomial size
is tractable, and 
the set of relevant provenance variables can be computed in polynomial time. 
For the more challenging problem of CQ answering, we adapt the 
combined approach.

\mypar{Related work}
Explaining inferences in DLs has been studied mostly focusing on explaining axiom entailment, in particular concept subsumption, through \emph{axiom pinpointing} \cite{ScCo-03,KalyanpurPHS07,BaPS07}. Few approaches address \emph{query answer explanation} for $\DLLite$ or existential rules \cite{BorgidaCR08,CroceL18,CeylanLMV19,BienvenuBG19}. 
However, current explanation services in DLs provide \emph{minimal} explanations, which is crucially different to provenance, since provenance takes into account \emph{all} derivations (cf. discussion 
in Section \ref{sec:relevance}).

Closest to our work is 
provenance for OBDA \cite{CalvaneseLantiOzakiPenalozaXiao19}. However, the 
 challenges in enriching the \EL family with provenance 
were not investigated. 
We also study 
additional problems such as axiom entailment and relevance. 
\citeauthor{DannertG19}~\shortcite{DannertG19} consider   provenance in 
the 
DL $\ALC$.
The setting is not the same as ours since they only consider annotated assertions (not annotated GCIs), 
do not study BCQs, and 
the semantics is different as well.  
There are several proposals for handling provenance in RDF(S), most notably an algebraic 
deductive system for annotated RDFS 
\cite{BunemanK10}. 
The approach by \citeauthor{bourgauxozaki} \shortcite{bourgauxozaki} for attributed $\DLLite$ 
 fundamentally differs by using GCIs and RIs to express constraints on provenance. 
 
\section*{Acknowledgments}
This work was supported by Camille Bourgaux's CNRS PEPS grant and contract ANR-18-CE23-0003 (CQFD), by the University of Bergen, by the Free University of Bozen-Bolzano (Unibz) projects PROVDL and FO2S, and by the Italian PRIN project HOPE.

\bibliographystyle{named}
\bibliography{ms}

\section*{Proofs for Section \ref{sec:completion}}

\Normalisation*
\begin{proof}
For the first point, we show by induction on $i$ 
that for every annotated \ELHr ontology $\Omc$, 
if 
$\Omc'$ is obtained from $\Omc$ by applying $i$ normalization steps, then $\Omc'\models \Omc$. 
If $\Omc'$ 
is obtained from $\Omc$ by applying a single normalization rule $\NF$, then we have the three following possibilities.
\begin{description}
\item[$\NF=\NF_1$] let $\Imc$ be a model of $\Omc'=\Omc\setminus\{(C\sqcap\widehat{D}\sqsubseteq E,\, v)\}
\cup\{(\widehat{D}\sqsubseteq A,\, 1), \, (C\sqcap A\sqsubseteq E, \, v)\}$ and $(d,\nonomial^\Imc)\in(C\sqcap\widehat{D})^\Imc$. 
There exist $\nonomial_1,\nonomial_2 \in \NM$ such that $(d,\nonomial_1^\Imc)\in C^\Imc$, $(d,\nonomial_2^\Imc)\in \widehat{D}^\Imc$, 
and $(\nonomial_1\times \nonomial_2)^\Imc=\nonomial^\Imc$. 
Since $\Imc \models(\widehat D\sqsubseteq A,1)$, then $(d,\nonomial_2^\Imc)\in A^\Imc$.
Hence $(d,(\nonomial_1\times \nonomial_2)^\Imc)\in (C\sqcap A)^\Imc$. 
Since $\Imc\models (C\sqcap A\sqsubseteq E, \, v)$, it follows that 
$(d,(\nonomial_1\times \nonomial_2\times v)^\Imc)\in E^\Imc$, i.e., $(d,(\nonomial\times v)^\Imc)\in E^\Imc$. 
Thus $\Imc\models (C\sqcap\widehat{D}\sqsubseteq E,\, v)$, and $\Imc$ is a model of $\Omc$. 

\item[$\NF=\NF_2$] let $\Imc$ be a model of 
$\Omc'=\Omc\setminus\{(\exists R.\widehat{C}\sqsubseteq D,\, v)\}\cup\{(\widehat{C}\sqsubseteq A,\,
 1), \, (\exists R. A\sqsubseteq D, \, v)\}$ and $(d,\nonomial^\Imc)\in(\exists R.\widehat{C})^\Imc$.
  There exist $e \in \Delta^\Imc$ and $\nonomial_1,\nonomial_2 \in \NM$ such that $(d, e,\nonomial_1^\Imc)\in R^\Imc$, $(e,\nonomial_2^\Imc)\in \widehat{C}^\Imc$, and $(\nonomial_1\times \nonomial_2)^\Imc=\nonomial^\Imc$. Thus $(e,\nonomial_2^\Imc)\in A^\Imc$, and $(d,(\nonomial_1\times \nonomial_2)^\Imc)\in (\exists R. A)^\Imc$ so $(d,(\nonomial_1\times \nonomial_2\times v)^\Imc)\in D^\Imc$, i.e. $(d,(\nonomial\times v)^\Imc)\in D^\Imc$. It follows that $\Imc\models (\exists R.\widehat{C}\sqsubseteq D,\, v)$, 
  and hence $\Imc$ is a model of~$\Omc$. 

\item[$\NF=\NF_3$] let $\Imc$ be a model of 
$\Omc'=\Omc\setminus\{(\widehat{C}\sqsubseteq\exists R,\, v)\}\cup\{(\widehat{C}\sqsubseteq A,\, 1), 
\, (A\sqsubseteq \exists R, \, v)\}$ and $(d,\nonomial^\Imc)\in\widehat{C}^\Imc$. We have 
$(d,\nonomial^\Imc)\in A^\Imc$, so $(d,(\nonomial\times v)^\Imc)\in (\exists R)^\Imc$. It follows that 
$\Imc\models(\widehat{C}\sqsubseteq\exists R,\, v)$, and $\Imc$ is a model of $\Omc$. 
\end{description}
Assume that the property is true for $i$ 
and let $\Omc$ be an annotated \ELHr  ontology, $\Omc'$ be the result of applying $i+1$ normalization steps to $\Omc$ and $\Omc''$ that of applying the first $i$ steps to $\Omc$.  
By applying the induction hypothesis on $\Omc''$, we obtained that $\Omc''\models \Omc$. 
Then since $\Omc'$ results from applying one normalization step to $\Omc''$, we obtain that $\Omc'\models \Omc''$, and thus $\Omc'\models \Omc$. 

Conversely, we show by induction that for every $i$ and for every annotated \ELHr  ontology $\Omc$, 
 if $\Omc'$ can be obtained from $\Omc$ by applying $i$ normalization rules, then if $\Omc'\models (\alpha,\monomial)$ and every concept name occurring in $\alpha$ occurs in $\Omc$, 
then $\Omc\models (\alpha,\monomial)$. 
If $\Omc'$ is obtained from $\Omc$ by applying a single normalization rule $\NF$, we have three possible cases. 
\begin{description}
\item[$\NF=\NF_1$] $\Omc'=\Omc\setminus\{(C\sqcap\widehat{D}\sqsubseteq E,\, v)\}\cup\{(\widehat{D}\sqsubseteq A,\, 1), \, (C\sqcap A\sqsubseteq E, \, v)\}$. Let $\Imc$ be a model of $\Omc$ and $\Jmc$ the interpretation that extends $\Imc$ with $A^\Jmc=\widehat{D}^\Imc$. 
Clearly, $\Jmc\models (\widehat{D}\sqsubseteq A,\, 1)$. Let $(d,\nonomial^\Jmc)\in (C\sqcap A)^\Jmc$. There exist $\nonomial_1,\nonomial_2  \in \NM$ such that  
$(d,\nonomial_1^\Jmc)\in C^\Jmc$, $(d,\nonomial_2^\Jmc)\in A^\Jmc$, and $(\nonomial_1\times \nonomial_2)^\Jmc=\nonomial^\Jmc$.
Since $A^\Jmc=\widehat{D}^\Jmc$, then $(d,\nonomial_2^\Jmc)\in \widehat{D}^\Jmc$ so $(d,(\nonomial_1\times \nonomial_2)^\Jmc)\in (C\sqcap\widehat{D})^\Jmc$. 
Since $\Imc\models (C\sqcap\widehat{D}\sqsubseteq E,\, v)$ (and thus also $\Jmc$), it follows that $(d,(\nonomial_1\times \nonomial_2\times v)^\Jmc)\in E^\Jmc$, i.e. $(d,(\nonomial\times v)^\Jmc)\in E^\Jmc$. 
Hence $\Jmc\models (C\sqcap A\sqsubseteq E, \, v)$ and $\Jmc$ is a model of $\Omc'$. 
It follows that $\Jmc\models (\alpha,\monomial)$. Since $\alpha$ does not contain $A$, then $\Imc\models (\alpha,\monomial)$, so $\Omc\models (\alpha,\monomial)$.

\item[$\NF=\NF_2$] $\Omc'=\Omc\setminus\{(\exists R.\widehat{C}\sqsubseteq D,\, v)\}\cup\{(\widehat{C}\sqsubseteq A,\, 1), \, (\exists R. A\sqsubseteq D, \, v)\}$. Let $\Imc$ be a model of $\Omc$ and $\Jmc$ the interpretation that extends $\Imc$ with $A^\Jmc=\widehat{C}^\Imc$. 
Clearly, $\Jmc\models (\widehat{C}\sqsubseteq A,\, 1)$. Let $(d,\nonomial^\Jmc)\in (\exists R. A)^\Jmc$. There exist $e  \in \Delta^\Jmc$ and $\nonomial_1,\nonomial_2 \in \NM$ such that  $(d,e,\nonomial_1^\Jmc)\in R^\Jmc$, $(e,\nonomial_2^\Jmc)\in A^\Jmc$, and $(\nonomial_1\times \nonomial_2)^\Jmc=\nonomial^\Jmc$. Thus $(e,\nonomial_2^\Jmc)\in \widehat{C}^\Jmc$ and $(d,(\nonomial_1\times \nonomial_2)^\Jmc)\in (\exists R. \widehat{C})^\Jmc$. Since $\Imc\models (\exists R.\widehat{C}\sqsubseteq D,\, v)$ (and thus also $\Jmc$), it follows that $(d,(\nonomial_1\times \nonomial_2\times v)^\Jmc)\in D^\Jmc$, i.e. $(d,(\nonomial\times v)^\Jmc)\in D^\Jmc$. 
Hence $\Jmc\models (\exists R. A\sqsubseteq D, \, v)$ and $\Jmc$ is a model of $\Omc'$. 
It follows that $\Jmc\models (\alpha,\monomial)$. Since $\alpha$ does not contain $A$, then $\Imc\models (\alpha,\monomial)$, so $\Omc\models (\alpha,\monomial)$.

\item[$\NF=\NF_3$] $\Omc'=\Omc\setminus\{(\widehat{C}\sqsubseteq\exists R,\, v)\}\cup\{(\widehat{C}\sqsubseteq A,\, 1), \, (A\sqsubseteq \exists R, \, v)\}$. Let $\Imc$ be a model of $\Omc$ and $\Jmc$ be the interpretation that extends $\Imc$ with $A^\Jmc=\widehat{C}^\Imc$. 
Clearly, $\Jmc\models (\widehat{C}\sqsubseteq A,\, 1)$. Let $(d,\nonomial^\Jmc)\in A^\Jmc=\widehat{C}^\Jmc$. Since $\Imc\models (\widehat{C}\sqsubseteq\exists R,\, v)$ (and thus also $\Jmc$), it follows that $(d,(\nonomial\times v)^\Jmc)\in (\exists R)^\Jmc$. 
Hence $\Jmc\models (A\sqsubseteq \exists R, \, v)$ and $\Jmc$ is a model of $\Omc'$. 
It follows that $\Jmc\models (\alpha,\monomial)$. Since $\alpha$ does not contain $A$, then $\Imc\models (\alpha,\monomial)$, so $\Omc\models (\alpha,\monomial)$.
\end{description}
Assume that the property is true for some $i$ and let $\Omc$ be an annotated \ELHr  ontology. 
Let $\Omc_{i+1}$ 
be obtained by applying $i+1$ normalization rules to $\Omc$ and $\Omc_{i}$ 
be obtained by applying the first $i$ normalization rules to $\Omc$  (so that $\Omc_{i+1}$ is obtained by applying a normalization rule $\NF$ to $\Omc_i$). 
Assume that $\Omc_{i+1}\models (\alpha,\monomial)$ and every concept name occurring in $\alpha$ occurs in $\Omc$.  
Since the normalization rules can only introduce new concept names, the concept names occurring in $\Omc$ are a sunset of those occurring in $\Omc_i$, so every concept name occurring in $\alpha$ occurs in $\Omc_i$. 
Since we thus have that $\Omc_{i+1}\models (\alpha,\monomial)$, 
that $\Omc_{i+1}$ results from the application of a single normalization rule to $\Omc_i$, and that all concept names in $\alpha$ occur in $\Omc_i$,  
the base case applies and we obtain that $\Omc_i\models (\alpha,\monomial)$. Hence by the induction hypothesis, $\Omc\models (\alpha,\monomial)$. 
\end{proof}

We now proceed to prove Theorem~\ref{prop:reductions} by showing that annotated GCI, RI and RR entailments and annotated 
assertion entailment can be reduced to each other in polynomial time. To increase readability, we divide the proof in several
cases, and use the following lemmas.

\begin{lemma}\label{lem:conj-right}
Let $\Omc$ be an \ELHr ontology, $C, C_1, C_2, D_1, D_2, D, D'$ be \ELHr concepts, $S,R\in\NR$, and $\monomial,\monomial_1,\monomial_2\in\NM$. 
\begin{enumerate}
\item If $\Omc\models (C\sqsubseteq C_1\sqcap C_2, \monomial)$, $\Omc\models (C_1 \sqsubseteq D_1, \monomial_1)$ and $\Omc\models (C_2 \sqsubseteq D_2, \monomial_2)$, then $\Omc\models (C\sqsubseteq D_1\sqcap D_2, \monomial\times\monomial_1\times\monomial_2)$.
\item If $\Omc\models (C\sqsubseteq \exists S. D', \monomial)$ and $\Omc\models (D' \sqsubseteq D, \monomial_1)$, then $\Omc\models (C\sqsubseteq \exists S.D, \monomial\times\monomial_1)$.
\item If $\Omc\models (C\sqsubseteq \exists R. D, \monomial)$ and $\Omc\models (R\sqsubseteq S, \monomial_1)$, then $\Omc\models (C\sqsubseteq \exists S.D, \monomial\times\monomial_1)$.
\end{enumerate}
\end{lemma}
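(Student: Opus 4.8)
The lemma is three semantic entailments about how provenance monomials compose under concept inclusions. Each part is proved by taking an arbitrary model $\Imc$ of $\Omc$, picking an arbitrary element witnessing the left-hand side of the target inclusion, chasing it through the given entailments using the semantics of satisfaction (recalling that $\Imc\models(C\sqsubseteq D,\monomial)$ means: for all $\nonomial\in\NM$, $(d,\nonomial^\Imc)\in C^\Imc$ implies $(d,(\monomial\times\nonomial)^\Imc)\in D^\Imc$), and then reassembling the pieces using the interpretation of $\sqcap$ and $\exists R.C$ together with $\times$-idempotency, commutativity and associativity. The crucial bookkeeping point throughout is that monomials are always read modulo $\approx_\mn{trio}$, so repeated factors that arise from ``splitting and recombining'' collapse; this is exactly the mechanism used in Example~\ref{ex:idempotent}, and I would invoke it in each case.

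\medskip
\noindent\textbf{Part 1.} Fix $\Imc\models\Omc$ and suppose $(d,\nonomial^\Imc)\in C^\Imc$. By $\Imc\models(C\sqsubseteq C_1\sqcap C_2,\monomial)$ we get $(d,(\monomial\times\nonomial)^\Imc)\in(C_1\sqcap C_2)^\Imc$, so by the semantics of $\sqcap$ there are $o_1,o_2\in\NM$ with $(d,o_1^\Imc)\in C_1^\Imc$, $(d,o_2^\Imc)\in C_2^\Imc$ and $(o_1\times o_2)^\Imc=(\monomial\times\nonomial)^\Imc$. Applying $\Imc\models(C_1\sqsubseteq D_1,\monomial_1)$ and $\Imc\models(C_2\sqsubseteq D_2,\monomial_2)$ gives $(d,(\monomial_1\times o_1)^\Imc)\in D_1^\Imc$ and $(d,(\monomial_2\times o_2)^\Imc)\in D_2^\Imc$, hence $(d,(\monomial_1\times o_1\times\monomial_2\times o_2)^\Imc)\in(D_1\sqcap D_2)^\Imc$. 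Now $\monomial_1\times o_1\times\monomial_2\times o_2 \approx_\mn{trio} \monomial_1\times\monomial_2\times o_1\times o_2 \approx_\mn{trio} \monomial_1\times\monomial_2\times\monomial\times\nonomial$, where the last step uses that $o_1\times o_2$ and $\monomial\times\nonomial$ have the same $\approx_\mn{trio}$-class (same set of variables) and that $\times$-idempotency makes the interpretation of a monomial depend only on that set. Thus $(d,(\monomial\times\monomial_1\times\monomial_2\times\nonomial)^\Imc)\in(D_1\sqcap D_2)^\Imc$, which is exactly what is needed for $\Imc\models(C\sqsubseteq D_1\sqcap D_2,\monomial\times\monomial_1\times\monomial_2)$. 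Since $\Imc$ was arbitrary, $\Omc\models(C\sqsubseteq D_1\sqcap D_2,\monomial\times\monomial_1\times\monomial_2)$.

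\medskip
\noindent\textbf{Parts 2 and 3.} These are the same pattern, now unwinding $\exists S.D'$ (resp. $\exists R.D$). For Part 2: given $(d,\nonomial^\Imc)\in C^\Imc$, from $\Imc\models(C\sqsubseteq\exists S.D',\monomial)$ get $(d,(\monomial\times\nonomial)^\Imc)\in(\exists S.D')^\Imc$, so there are $e\in\Delta^\Imc$ and $o,o'\in\NM$ with $(d,e,o^\Imc)\in S^\Imc$, $(e,o'^\Imc)\in D'^\Imc$ and $(o\times o')^\Imc=(\monomial\times\nonomial)^\Imc$; apply $\Imc\models(D'\sqsubseteq D,\monomial_1)$ to get $(e,(\monomial_1\times o')^\Imc)\in D^\Imc$, then reassemble to get $(d,(o\times\monomial_1\times o')^\Imc)\in(\exists S.D)^\Imc$ and use $o\times\monomial_1\times o'\approx_\mn{trio}\monomial_1\times(o\times o')\approx_\mn{trio}\monomial_1\times\monomial\times\nonomial$ as before. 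Part 3 is identical except the given role inclusion $\Imc\models(R\sqsubseteq S,\monomial_1)$ is used on the $S^\Imc$-edge: from $(d,e,o^\Imc)\in R^\Imc$ obtain $(d,e,(\monomial_1\times o)^\Imc)\in S^\Imc$, keep $(e,o'^\Imc)\in D^\Imc$, and reassemble into $(\exists S.D)^\Imc$ with the same monomial simplification. In all three cases the only subtlety — and the one place I would be careful to write out — is the justification that the recombined monomial is $\approx_\mn{trio}$-equivalent to $\monomial\times\monomial_i\times\nonomial$ (or $\monomial\times\monomial_1\times\monomial_2\times\nonomial$), which relies exactly on the last clause of the definition of annotated interpretations: $\monomial^\Imc=\nonomial^\Imc$ iff $\monomial\approx_\mn{trio}\nonomial$. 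No step is genuinely hard; the main obstacle is purely notational, namely keeping the freshly introduced ``witness'' monomials $o, o', o_1, o_2$ straight while splitting and recombining.
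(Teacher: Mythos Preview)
Your proposal is correct and follows essentially the same route as the paper: take an arbitrary model, unfold the semantics of $\sqcap$ or $\exists R.C$ to obtain witness monomials, push them through the assumed entailments, and reassemble, using that $(o_1\times o_2)^\Imc=(\monomial\times\nonomial)^\Imc$ implies $o_1\times o_2\approx_\mn{trio}\monomial\times\nonomial$ so the recombined monomial has the required interpretation. The paper's proof is slightly terser in the monomial bookkeeping (it writes the final ``i.e.'' step without spelling out the $\approx_\mn{trio}$ justification), but the argument is identical in structure and content.
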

\begin{proof}
(1) Let $\Imc$ be a model of $\Omc$ and $(e,\nonomial^\Imc)\in C^\Imc$. It holds that $(e,(\nonomial\times\monomial)^\Imc)\in (C_1\sqcap C_2)^\Imc$ so there exists $\onomial_1,\onomial_2\in\NM$ such that  $(e,\onomial_1^\Imc)\in C_1^\Imc$, $(e,\onomial_2^\Imc)\in C_2^\Imc$, and $(\onomial_1\times\onomial_2)^\Imc=(\nonomial\times\monomial)^\Imc$. 
Hence $(e,(\onomial_1\times\monomial_1)^\Imc)\in D_1^\Imc$ and $(e,(\onomial_2\times\monomial_2)^\Imc)\in D_2^\Imc$. 
It follows that $(e,(\onomial_1\times\monomial_1\times\onomial_2\times\monomial_2)^\Imc)\in (D_1\sqcap D_2)^\Imc$, i.e. $(e,(\nonomial\times\monomial\times\monomial_1\times\monomial_2)^\Imc)\in(D_1\sqcap D_2)^\Imc$. 
Thus $\Omc\models(C\sqsubseteq D_1\sqcap D_2, \monomial\times\monomial_1\times\monomial_2)$.

\noindent (2) Let $\Imc$ be a model of $\Omc$ and $(e,\nonomial^\Imc)\in C^\Imc$. It holds that $(e,(\nonomial\times\monomial)^\Imc)\in (\exists S.D')^\Imc$ so there exists $f\in\Delta^\Imc$ and $\onomial_1,\onomial_2\in\NM$ such that  $(e,f,\onomial_1^\Imc)\in S^\Imc$, $(f,\onomial_2^\Imc)\in {D'}^\Imc$, and $(\onomial_1\times\onomial_2)^\Imc=(\nonomial\times\monomial)^\Imc$. 
Hence $(f,(\onomial_2\times\monomial_1)^\Imc)\in {D}^\Imc$. 
It follows that $(e,(\onomial_1\times\onomial_2\times\monomial_1)^\Imc)\in (\exists S.D)^\Imc$, i.e. $(e,(\nonomial\times\monomial\times\monomial_1)^\Imc)\in (\exists S.D)^\Imc$. 
Thus $\Omc\models (C\sqsubseteq \exists S.D, \monomial\times\monomial_1)$.

\noindent(3) Let $\Imc$ be a model of $\Omc$ and $(e,\nonomial^\Imc)\in C^\Imc$. 
It holds that $(e,(\nonomial\times\monomial)^\Imc)\in (\exists R.D)^\Imc$ so there exists $f\in\Delta^\Imc$ and $\onomial_1,\onomial_2\in\NM$ such that  $(e,f,\onomial_1^\Imc)\in R^\Imc$, $(f,\onomial_2^\Imc)\in {D}^\Imc$, and $(\onomial_1\times\onomial_2)^\Imc=(\nonomial\times\monomial)^\Imc$. 
Hence $(e,f,(\onomial_1\times \monomial_1)^\Imc)\in S^\Imc$. 
It follows that $(e,(\onomial_1\times \monomial_1\times\onomial_2)^\Imc)\in (\exists S.D)^\Imc$, i.e. $(e,( \nonomial\times\monomial\times \monomial_1)^\Imc)\in (\exists S.D)^\Imc$. Thus $\Omc\models (C\sqsubseteq \exists S.D, \monomial\times\monomial_1)$.
\end{proof}

Given an ontology $\Omc$ in normal form and an interpretation $\Imc_i$ that interprets the monomials by their representatives, define the three following rules to build an interpretation $\Imc_{i+1}$ that extends~$\Imc_i$. Let $A\in\NC\cup\{\top\}$, $B\in\NC$, $R,S\in\NR$ and $G$ be an \ELHr concept or ${\sf ran}(R)$ for some $R\in\NR$.
\begin{description}
\item[$R1$.] If $ (G\sqsubseteq B, v)\in\Omc$, $(d,\monomial)\in G^{\Imc_i}$, and $(d, \representative{v\times\monomial})\notin B^{\Imc_i}$, then $B^{\Imc_{i+1}}=B^{\Imc_i}\cup\{(d, \representative{v\times\monomial})\}$. 

\item[$R2$.] If $(R\sqsubseteq S, v)\in\Omc$, $(d,f,\monomial)\in R^{\Imc_i}$, and $(d,f, \representative{v\times\monomial})\notin S^{\Imc_i}$, then $S^{\Imc_{i+1}}=S^{\Imc_i}\cup\{(d,f, \representative{v\times\monomial})\}$. 

\item[$R3$.] If $(A\sqsubseteq \exists S, v)\in\Omc$,  $(d,\monomial)\in A^{\Imc_i}$, and there is no $f\in\Delta^{\Imc_i}$ such that $(d,f, \representative{v\times\monomial})\in S^{\Imc_i}$, then $\Delta^{\Imc_{i+1}}=\Delta^{\Imc_i}\cup\{f\}$ where $f\notin\Delta^{\Imc_i}$ and $S^{\Imc_{i+1}}=S^{\Imc_i}\cup\{(d,f, \representative{v\times\monomial})\}$. 
\end{description}

\begin{lemma}\label{lem:complinter}
Let $C$ be an \ELHr concept. 
If $\Imc_{i+1}$ is obtained by applying a rule in $\{R1,R2,R3\}$ to $\Imc_i$, and $(d, \monomial)\in C^{\Imc_{i+1}}$ while $(d, \monomial)\notin C^{\Imc_{i}}$, then there exists an \ELHr concept $D$, a provenance variable $v\in\NV$ and a monomial $\nonomial$ such that $\Omc\models (D\sqsubseteq C,v)$, $(d, \nonomial)\in D^{\Imc_{i}}$, and $\representative{v\times\nonomial}=\monomial$.
\end{lemma}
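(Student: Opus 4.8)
The plan is to prove the statement by induction on the structure of the \ELHr concept $C$, with an outer case distinction on which of the three rules $R1$, $R2$, $R3$ was applied to obtain $\Imc_{i+1}$ from $\Imc_i$. The crucial preliminary observation is that a single rule application is extremely local: $R1$ adds exactly one pair $(d^\ast,\representative{v\times\nonomial^\ast})$ to one concept name $B$, where $(G\sqsubseteq B,v)\in\Omc$ and $(d^\ast,\nonomial^\ast)\in G^{\Imc_i}$; $R2$ adds exactly one triple $(d^\ast,e^\ast,\representative{v\times\nonomial^\ast})$ to one role $S$, where $(T\sqsubseteq S,v)\in\Omc$ and $(d^\ast,e^\ast,\nonomial^\ast)\in T^{\Imc_i}$; and $R3$ adds exactly one triple $(d^\ast,f,\representative{v\times\nonomial^\ast})$ to one role $S$ together with the fresh element $f$, where $(A\sqsubseteq\exists S,v)\in\Omc$ and $(d^\ast,\nonomial^\ast)\in A^{\Imc_i}$. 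In every case the variable $v$ produced for the lemma is this triggering variable, and the monomial $\nonomial$ is reassembled from $\nonomial^\ast$ and the monomials delivered by the induction hypothesis; checking $\representative{v\times\nonomial}=\monomial$ is then a routine computation using commutativity, associativity and $\times$-idempotency of representatives, which I will not spell out each time.

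For the base case $C=A$ a concept name, only $R1$ can enlarge $A^{\Imc_i}$, so $(d,\monomial)$ is exactly the added pair; I take $D:=G$ and $\nonomial:=\nonomial^\ast$, with $\Omc\models(G\sqsubseteq A,v)$ because $(G\sqsubseteq A,v)\in\Omc$ (here $G$ is an \ELHr concept, or a range restriction ${\sf ran}(R)$, matching the left-hand sides permitted by $R1$, and one reads the lemma accordingly). For $C=\top$ we have $\top^{\Imc}=\Delta^{\Imc}\times\{1^{\Imc}\}$, so a pair can only be new when its first component is the element freshly added by $R3$; this degenerate situation does not arise in the way the lemma is used (the rules are applied only to the subconcepts occurring in $\Omc$), and is otherwise dispatched by $D:=\top$. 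For the inductive case $C=C_1\sqcap C_2$, write $\monomial=\representative{\monomial_1\times\monomial_2}$ with $(d,\monomial_1)\in C_1^{\Imc_{i+1}}$ and $(d,\monomial_2)\in C_2^{\Imc_{i+1}}$; since at most one concept extension changed, and otherwise $(d,\monomial)$ would already be in $(C_1\sqcap C_2)^{\Imc_i}$, exactly one of these memberships is new, say $(d,\monomial_1)\notin C_1^{\Imc_i}$ and $(d,\monomial_2)\in C_2^{\Imc_i}$. The induction hypothesis on $C_1$ yields $D_1$, $v$, $\onomial_1$ with $\Omc\models(D_1\sqsubseteq C_1,v)$, $(d,\onomial_1)\in D_1^{\Imc_i}$ and $\representative{v\times\onomial_1}=\monomial_1$; I take $D:=D_1\sqcap C_2$ and $\nonomial:=\representative{\onomial_1\times\monomial_2}$, and derive $\Omc\models(D_1\sqcap C_2\sqsubseteq C_1\sqcap C_2,v)$ from Lemma~\ref{lem:conj-right}(1) applied to the reflexive entailments $\Omc\models(D_1\sqcap C_2\sqsubseteq D_1\sqcap C_2,1)$ and $\Omc\models(C_2\sqsubseteq C_2,1)$ together with $\Omc\models(D_1\sqsubseteq C_1,v)$. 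The case where $C_2$ changed is symmetric.

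For the inductive case $C=\exists R.C'$, fix a witness: $\monomial=\representative{\monomial_1\times\monomial_2}$ with $(d,e,\monomial_1)\in R^{\Imc_{i+1}}$ and $(e,\monomial_2)\in C'^{\Imc_{i+1}}$, and split on the rule. If $R1$ was applied, roles did not change, so $(d,e,\monomial_1)\in R^{\Imc_i}$ and hence $(e,\monomial_2)$ is new in $C'$; the induction hypothesis on $C'$ gives $D'$, $v$, $\onomial$, and I take $D:=\exists R.D'$, $\nonomial:=\representative{\monomial_1\times\onomial}$, using Lemma~\ref{lem:conj-right}(2) on $\Omc\models(\exists R.D'\sqsubseteq\exists R.D',1)$ and $\Omc\models(D'\sqsubseteq C',v)$ to get $\Omc\models(\exists R.D'\sqsubseteq\exists R.C',v)$. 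If $R2$ was applied, concept extensions did not change, so $(e,\monomial_2)\in C'^{\Imc_i}$ and the triple $(d,e,\monomial_1)$ must be the new one, i.e.\ $R=S$, $\monomial_1=\representative{v\times\nonomial^\ast}$ and $(d,e,\nonomial^\ast)\in T^{\Imc_i}$ with $(T\sqsubseteq R,v)\in\Omc$; I take $D:=\exists T.C'$, $\nonomial:=\representative{\nonomial^\ast\times\monomial_2}$, using Lemma~\ref{lem:conj-right}(3) to get $\Omc\models(\exists T.C'\sqsubseteq\exists R.C',v)$. If $R3$ was applied, concepts again did not change and the new triple must be $(d,e,\monomial_1)=(d^\ast,f,\representative{v\times\nonomial^\ast})$ with $f$ fresh; but $f\notin\Delta^{\Imc_i}$ and $f$ has no outgoing role edge in $\Imc_{i+1}$, so $(f,\monomial_2)\in C'^{\Imc_{i+1}}$ forces $\monomial_2=1$ and makes $\exists R.C'$ have the same extension as $\exists R$ (equivalently $\exists R.\top$) in every model of $\Omc$; I then take $D:=A$, $\nonomial:=\nonomial^\ast$, so that $(d^\ast,\nonomial^\ast)\in A^{\Imc_i}$ and $\Omc\models(A\sqsubseteq\exists R,v)$ (which holds since $(A\sqsubseteq\exists R,v)\in\Omc$) yields the required $\Omc\models(A\sqsubseteq\exists R.C',v)$.

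The main obstacle is precisely this last subcase: one has to notice that the freshness of the successor introduced by $R3$ collapses $C'$ to $\top$ and that $\exists R.\top$ can be replaced by $\exists R$ without affecting any interpretation, so that the triggering axiom $(A\sqsubseteq\exists R,v)$ itself supplies the witness. Everything else is disciplined bookkeeping of which single extension changed, an application of the induction hypothesis to the affected subconcept, and an appeal to the ``closure under context'' of $\sqsubseteq$-entailment that Lemma~\ref{lem:conj-right} provides once it is combined with the reflexive entailments $\Omc\models(X\sqsubseteq X,1)$.
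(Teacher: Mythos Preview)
Your overall strategy---structural induction on $C$, fixing $v$ to be the triggering variable of the applied rule, and using Lemma~\ref{lem:conj-right} together with the reflexive entailments $\Omc\models(X\sqsubseteq X,1)$ to push $\sqsubseteq$ through contexts---is exactly the paper's approach. The gap is in the conjunction case. Your claim that ``exactly one of these memberships is new'' is false: a single atomic change can make \emph{both} $(d,\monomial_1)\in C_1^{\Imc_{i+1}}$ and $(d,\monomial_2)\in C_2^{\Imc_{i+1}}$ new. For example, with $C_1=B$, $C_2=B\sqcap E$ and $(d,\onomial)\in E^{\Imc_i}$, applying $R1$ to add $(d,m)$ to $B$ makes $(d,m)\in C_1$ and $(d,\representative{m\times\onomial})\in C_2$ new simultaneously. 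The paper explicitly treats three subcases---only $C_1$ new, only $C_2$ new, and both new---and in the third takes $D:=D_1\sqcap D_2$ using the induction hypothesis on each conjunct. This is precisely where the strengthening you already made (that $v$ is the single triggering variable) earns its keep: both inductive applications return the \emph{same} $v$, so $\times$-idempotency gives $\representative{v\times\nonomial_1\times v\times\nonomial_2}=\representative{v\times(\nonomial_1\times\nonomial_2)}=\monomial$, and Lemma~\ref{lem:conj-right}(1) yields $\Omc\models(D_1\sqcap D_2\sqsubseteq C_1\sqcap C_2,v)$ from $\Omc\models(D_j\sqsubseteq C_j,v)$.

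The same oversight recurs in your $\exists R.C'$ case under $R2$: ``concept extensions did not change, so $(e,\monomial_2)\in C'^{\Imc_i}$'' is only valid for concept \emph{names}, but $C'$ is an arbitrary \ELHr concept and its extension may change when the role that $R2$ enlarges occurs inside $C'$. Once you add the ``both new'' branch in the $\sqcap$ case and stop assuming $(e,\monomial_2)\in C'^{\Imc_i}$ under $R2$, the rest of your argument (including the collapse to $\top$ under $R3$) is correct and coincides with the paper's.
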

\begin{proof}
We prove a stronger version of the property, requiring that $v$ is the provenance of the GCI, RI or RR of the rule applied to obtain $\Imc_{i+1}$. 
The proof is by structural induction. 
For the base case, $C\in\NC$, the rule applied is necessarily $R1$. 
Thus by the conditions of applicability of $R1$, since $R1$ adds $(d, \monomial)$ in $C^{\Imc_{i+1}}$, there exist $(D\sqsubseteq C,v)\in \Omc$ and $(d,\nonomial)\in D^{\Imc_i}$ such that $\representative{v\times\nonomial}=\monomial$. 

If $C=C_1\sqcap C_2$, since $(d, \monomial)\in C^{\Imc_{i+1}}$, then 
there exist $(d, \monomial_1)\in C_1^{\Imc_{i+1}}$ and $(d, \monomial_2)\in C_2^{\Imc_{i+1}}$ such that $\representative{\monomial_1\times\monomial_2}=\monomial$. 
Since $(d, \monomial)\notin C^{\Imc_{i}}$, then either $(d, \monomial_1)\notin C_1^{\Imc_{i}}$ or $(d, \monomial_2)\notin C_2^{\Imc_{i}}$. 
Let $v$ be the provenance of the GCI, RI or RR of the rule applied to obtain $\Imc_{i+1}$. 
By induction, we obtain the following. 
\begin{itemize}
\item If $(d, \monomial_1)\notin C_1^{\Imc_{i}}$, there exists an \ELHr concept $D_1$ and a monomial $\nonomial_1$ such that $\Omc\models (D_1\sqsubseteq C_1,v)$, $(d, \nonomial_1)\in D_1^{\Imc_{i}}$, and $\representative{v\times\nonomial_1}=\monomial_1$.
\item If $(d, \monomial_2)\notin C_2^{\Imc_{i}}$, there exists an \ELHr concept $D_2$ and a monomial $\nonomial_2$ such that $\Omc\models (D_2\sqsubseteq C_2,v)$, $(d, \nonomial_2)\in D_2^{\Imc_{i}}$, and $\representative{v\times\nonomial_2}=\monomial_2$.
\end{itemize}
We thus have three possible cases.
\begin{itemize}
\item In the case $(d, \monomial_1)\notin C_1^{\Imc_{i}}$ and $(d, \monomial_2)\in C_2^{\Imc_{i}}$, it holds that $(d, \representative{\nonomial_1\times\monomial_2})\in (D_1\sqcap C_2)^{\Imc_{i}}$. Moreover, since $\Omc\models (D_1\sqsubseteq C_1,v)$ and $\Omc\models (D_1\sqcap C_2\sqsubseteq D_1\sqcap C_2,1)$, it follows by Lemma \ref{lem:conj-right} that $\Omc\models (D_1\sqcap C_2\sqsubseteq C_1\sqcap C_2,v)$. 
Finally, $\representative{v\times\nonomial_1\times\monomial_2}=\representative{\monomial_1\times\monomial_2}=\monomial$.

\item The case $(d, \monomial_1)\in C_1^{\Imc_{i}}$ and $(d, \monomial_2)\notin C_2^{\Imc_{i}}$ is analogous. 

\item In the case $(d, \monomial_1)\notin C_1^{\Imc_{i}}$ and $(d, \monomial_2)\notin C_2^{\Imc_{i}}$, it holds that $(d, \representative{\nonomial_1\times\nonomial_2})\in (D_1\sqcap D_2)^{\Imc_{i}}$. Moreover, since $\Omc\models (D_1\sqsubseteq C_1,v)$, $\Omc\models (D_2\sqsubseteq C_2,v)$ and $\Omc\models (D_1\sqcap D_2\sqsubseteq D_1\sqcap D_2,1)$, it follows by Lemma \ref{lem:conj-right} that $\Omc\models (D_1\sqcap D_2\sqsubseteq C_1\sqcap C_2,v)$. 
Finally, $\representative{v\times\nonomial_1\times\nonomial_2}=\representative{\monomial_1\times\monomial_2}=\monomial$.

\end{itemize}

If $C=\exists S.C'$, since $(d, \monomial)\in C^{\Imc_{i+1}}$, then 
there exist $(d,f, \monomial_1)\in S^{\Imc_{i+1}}$ and $(f, \monomial_2)\in {C'}^{\Imc_{i+1}}$ such that $\representative{\monomial_1\times\monomial_2}=\monomial$. 
Since $(d, \monomial)\notin C^{\Imc_{i}}$, then either $(d,f, \monomial_1)\notin S^{\Imc_{i}}$ or $(f, \monomial_2)\notin {C'}^{\Imc_{i}}$. 
Let $v$ be the provenance of the GCI, RI or RR of the rule applied to obtain $\Imc_{i+1}$. 
We obtain the following. 
\begin{itemize}
\item If $(d, f,\monomial_1)\notin S^{\Imc_{i}}$, we have two cases, depending on the rule that has been applied. 
\begin{itemize}[leftmargin=5.5mm]
\item[$R2$] There exist $(R\sqsubseteq S, v)\in\Omc$ and $(d, f,\nonomial_1)\in R^{\Imc_{i}}$ such that $\representative{v\times\nonomial_1}=\monomial_1$. 
In this case, $(d, \representative{\nonomial_1\times\monomial_2})\in (\exists R.C')^{\Imc_{i}}$. 
Moreover, since $\Omc\models (R\sqsubseteq S, v)$ and $\Omc\models (\exists R.C'\sqsubseteq \exists R.C', 1)$, it follows by Lemma \ref{lem:conj-right} that $\Omc\models (\exists R.C'\sqsubseteq \exists S.C', v)$. 
Finally,  $\representative{v\times\nonomial_1\times\monomial_2}=\representative{\monomial_1\times\monomial_2}=\monomial$.

\item[$R3$]  There exist $(A\sqsubseteq \exists S, v)\in\Omc$ and $(d, \nonomial_1)\in A^{\Imc_{i}}$ such that $\representative{v\times\nonomial_1}=\monomial_1$. Moreover, in this case $f\notin\Delta^{\Imc_i}$ and occurs only in $(d,f, \monomial_1)\in S^{\Imc_{i+1}}$ in $\Imc_{i+1}$, which implies that $C'=\top$ and $\monomial_2=1$, so that $C=\exists S.\top$ and $\representative{\monomial_1}=\monomial$, i.e.  $\representative{v\times\nonomial_1}=\monomial$.
\end{itemize}
\item If $(f, \monomial_2)\notin {C'}^{\Imc_{i}}$, by induction there exists an \ELHr concept $D$ and a monomial $\nonomial_2$ such that $\Omc\models (D\sqsubseteq C',v)$, $(f, \nonomial_2)\in D^{\Imc_{i}}$, and $\representative{v\times\nonomial_2}=\monomial_2$. 
It follows that $(d, \representative{\monomial_1\times\nonomial_2})\in {\exists S.D}^{\Imc_{i}}$. 
Moreover, since $\Omc\models (D\sqsubseteq C',v)$ and $\Omc\models (\exists S. D\sqsubseteq \exists S.D,1)$, by Lemma \ref{lem:conj-right}, $\Omc\models (\exists S. D\sqsubseteq \exists S.C',v)$. 
Finally, $\representative{v\times\monomial_1\times\nonomial_2}=\representative{\monomial_1\times\monomial_2}=\monomial$.
\end{itemize}
This concludes the proof of the lemma.
\end{proof}

Let $\Omc$ be an annotated \ELHr ontology in normal form, 
 $C\sqsubseteq D$ be an \ELHr GCI, 
and $\monomial_0\in\NM$. 
Let $$\Tmc_D=\{(D'\sqsubseteq E,1)\}$$ where $E$ is a concept name that does not occur in~$\Omc$ and $D'=D$ if $D\in\NC$, 
$D'=\exists R.\top$ if $D=\exists R$. 
Let $$\Amc_C=f(C,a_0, 0)$$ where $a_0$ is an individual name that does not occur in $\Omc$ and $f$ is the function inductively defined as 
follows, where all constants introduced are fresh and provenance variables of the form $v^r_{A(a)}$ or $v^r_{R(a,b)}$ do not occur in $\Omc$: 
\begin{itemize}
\item $f(\top,a,i)=\emptyset$,
\item $f(A,a,i)=\{(A(a),v^i_{A(a)})\}$ if $A\in\NC$, 
\item $f(\exists R.B,a,i)=\{(R(a,b),v^i_{R(a,b)})\}\cup f(B,b,i+1)$, 
\item $f(B\sqcap B',a,i)=f(B,a,i)\cup f(B',a, i+1)$. 
\end{itemize}
Both $\Tmc_D$ and $\Amc_C$ can be constructed in polynomial time.

\begin{proposition}
$\Omc\models (C\sqsubseteq D,\monomial_0)$ iff $\Omc\cup\Tmc_D\cup\Amc_C\models (E(a_0), \monomial_0\times \Pi_{(\alpha, v^r_\alpha)\in \Amc_C} v^r_\alpha)$. 
\end{proposition}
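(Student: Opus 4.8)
The plan is to prove the two implications separately. The common idea is that $\Amc_C$ is a syntactic ``unfolding'' of the concept $C$, rooted at the fresh individual $a_0$, whose total annotation is exactly $\bar v:=\prod_{(\alpha,v^r_\alpha)\in\Amc_C}v^r_\alpha$, while $\Tmc_D=\{(D'\sqsubseteq E,1)\}$ only transfers membership in $D$ to the fresh concept name $E$; here I will use the easy observation that $(\exists R.\top)^\Imc=(\exists R)^\Imc$ in every annotated interpretation $\Imc$ (since $(e,\nonomial^\Imc)\in\top^\Imc$ forces $\nonomial\approx_{\mn{trio}}1$), so membership in $D$ and in $D'$ coincide.

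For ($\Rightarrow$), I would first prove by structural induction on the \ELHr concept $G$ the auxiliary claim: for every annotated interpretation $\Jmc$, individual $a$ and index $i$, if $\Jmc$ satisfies every annotated assertion of $f(G,a,i)$ then $(a^\Jmc,(\prod_{(\beta,v)\in f(G,a,i)}v)^\Jmc)\in G^\Jmc$. The base cases $G=\top$ (empty product, $(a^\Jmc,1^\Jmc)\in\top^\Jmc$) and $G=A\in\NC$ (a single annotated assertion) are immediate, and the cases $G=\exists R.B$ and $G=B\sqcap B'$ follow from the semantics of $\exists R.{\cdot}$ and $\sqcap$ applied to the induction hypotheses on the subconcepts---the shifts of the index in the definition of $f$ are what guarantee that the provenance variables occurring in $f(B,\cdot,\cdot)$ and in the rest are distinct, so the products add up as stated. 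Applying this with $G=C$, $a=a_0$, $i=0$ to an arbitrary model $\Imc$ of $\Omc\cup\Tmc_D\cup\Amc_C$ gives $(a_0^\Imc,\bar v^\Imc)\in C^\Imc$; since $\Imc\models\Omc$ and $\Omc\models(C\sqsubseteq D,\monomial_0)$, we get $(a_0^\Imc,(\monomial_0\times\bar v)^\Imc)\in D^\Imc=(D')^\Imc$, and then $(D'\sqsubseteq E,1)\in\Tmc_D$ yields $(a_0^\Imc,(\monomial_0\times\bar v)^\Imc)\in E^\Imc$, i.e.\ $\Imc\models(E(a_0),\monomial_0\times\bar v)$. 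As $\Imc$ was arbitrary, $\Omc\cup\Tmc_D\cup\Amc_C\models(E(a_0),\monomial_0\times\bar v)$.

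For ($\Leftarrow$) I argue by contraposition. From a model $\Imc\models\Omc$ with some $(d,\nonomial^\Imc)\in C^\Imc$ and $(d,(\monomial_0\times\nonomial)^\Imc)\notin D^\Imc$ (with $\nonomial\in\NM$), I build a model $\Jmc$ of $\Omc\cup\Tmc_D\cup\Amc_C$ with $\Jmc\not\models(E(a_0),\monomial_0\times\bar v)$. First, unfold the membership $(d,\nonomial^\Imc)\in C^\Imc$ recursively along the structure of $C$ to obtain (i) an assignment $g$ of the fresh individuals of $\Amc_C$ into $\Delta^\Imc$ with $g(a_0)=d$, and (ii) for each $(\alpha,v^r_\alpha)\in\Amc_C$ a monomial $\mu_\alpha\in\NM$ such that $(g(a),\mu_\alpha^\Imc)\in A^\Imc$ when $\alpha=A(a)$, and $(g(a),g(b),\mu_\alpha^\Imc)\in R^\Imc$ when $\alpha=R(a,b)$, together with the identity $(\prod_{(\alpha,v^r_\alpha)\in\Amc_C}\mu_\alpha)^\Imc=\nonomial^\Imc$. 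Let $\sigma$ be the monoid homomorphism on monomials sending $v^r_\alpha\mapsto\mu_\alpha$ for each fresh variable and fixing every other variable. Define $\Jmc$ by $\Delta^\Jmc:=\Delta^\Imc$, $\Delta^\Jmc_{\sf m}:=\NMrep$, interpreting each individual of $\Omc$ as in $\Imc$ and each fresh individual $a$ as $g(a)$, each monomial $\monomial$ as $\representative{\monomial}$, putting $(x,\representative{\monomial})\in A^\Jmc$ iff $(x,\sigma(\monomial)^\Imc)\in A^\Imc$ and $(x,y,\representative{\monomial})\in R^\Jmc$ iff $(x,y,\sigma(\monomial)^\Imc)\in R^\Imc$, and finally $E^\Jmc:=(D')^\Jmc$. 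One checks by induction on concepts (and for ${\sf ran}(R)$) that $(x,\representative{\monomial})\in G^\Jmc$ iff $(x,\sigma(\monomial)^\Imc)\in G^\Imc$; using this plus $\sigma(w)=w$ for every variable $w$ occurring in $\Omc$, every axiom of $\Omc$ holds in $\Jmc$; $(D'\sqsubseteq E,1)$ holds because $E^\Jmc=(D')^\Jmc$ and $1\times\monomial\approx_{\mn{trio}}\monomial$; and every $(\alpha,v^r_\alpha)\in\Amc_C$ holds because $\sigma(v^r_\alpha)=\mu_\alpha$ and the property of $g,\mu_\alpha$ from the unfolding. Finally, since $\sigma(\monomial_0\times\bar v)=\monomial_0\times\prod_\alpha\mu_\alpha$ has $\Imc$-value $(\monomial_0\times\nonomial)^\Imc$, we have $(a_0^\Jmc,\representative{\monomial_0\times\bar v})\in E^\Jmc=(D')^\Jmc$ iff $(d,(\monomial_0\times\nonomial)^\Imc)\in(D')^\Imc=D^\Imc$, which fails by the choice of $\Imc$; hence $\Jmc\not\models(E(a_0),\monomial_0\times\bar v)$.

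I expect the backward direction to be the main obstacle, and within it the delicate point is producing a legal annotated interpretation: one cannot simply set $(v^r_\alpha)^\Jmc:=\mu_\alpha^\Imc$, since distinct fresh variables may need to be sent to the same, or to a composite, $\Imc$-monomial, which would break the requirement that the monomial interpretation be injective modulo $\approx_{\mn{trio}}$. The pullback construction above circumvents this by keeping the syntactic monomial domain $\NMrep$ on the $\Jmc$-side and pushing the substitution $\sigma$ only into the concept and role extensions; the remaining work---verifying that this pullback commutes with all concept constructors (so $\exists R.C$, $\sqcap$, $\top$, ${\sf ran}(R)$ are respected) and that $\sigma$ fixes all annotations occurring in $\Omc$ so that $\Jmc\models\Omc$---is routine but must be carried out with care.
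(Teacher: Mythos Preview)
Your forward direction is essentially the paper's argument, fleshed out with the structural induction that the paper leaves implicit (``by construction of $\Amc_C$'').

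For the backward direction you take a genuinely different route. The paper does not start from an arbitrary countermodel of $(C\sqsubseteq D,\monomial_0)$; instead it builds a canonical interpretation of $\Amc_C$ (and of the GCIs/RIs/RRs of $\Omc$) from scratch by a chase using rules $R1$--$R3$, proves---via an auxiliary lemma on how those rules propagate membership---that in this interpretation any membership $(a_0,[\monomial\times\bar v])\in G^{\Imc}$ with $\monomial$ free of fresh variables forces $\Omc\models(C\sqsubseteq G,\monomial)$, and finally glues on a separate model of $\Omc$ to handle the original individual names. Your substitution/pullback construction is more direct and sidesteps that auxiliary machinery entirely; it also avoids the disjoint-union step since you reuse $\Delta^\Imc$.

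One caution: the full ``iff'' you claim, $(x,[\monomial])\in G^\Jmc$ iff $(x,\sigma(\monomial)^\Imc)\in G^\Imc$, does \emph{not} hold for arbitrary $G$. The right-to-left implication fails for $\sqcap$ (and for $\top$ when some $\mu_\alpha\approx_{\mn{trio}}1$): if $\sigma(v^r_\alpha)=u_1\times u_2$, a decomposition witnessing $(x,\sigma(v^r_\alpha)^\Imc)\in(C_1\sqcap C_2)^\Imc$ via $u_1$ in $C_1^\Imc$ and $u_2$ in $C_2^\Imc$ need not lift to a $\approx_{\mn{trio}}$-factorisation of $v^r_\alpha$ on the $\Jmc$-side. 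Fortunately your argument only ever uses the left-to-right implication (this suffices to verify $\Jmc\models\Omc$ and $\Jmc\models\Amc_C$) and, at the very end, the contrapositive of that same direction applied to $G=D'$. So the proof goes through; just weaken the stated ``iff'' to the one-sided implication $(x,[\monomial])\in G^\Jmc\Rightarrow(x,\sigma(\monomial)^\Imc)\in G^\Imc$, which is what the structural induction actually delivers.
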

\begin{proof}
$(\Rightarrow)$ Assume that $\Omc\models (C\sqsubseteq D,\monomial_0)$ and let $\Imc$ be a model of $\Omc\cup\Tmc_D\cup\Amc_C$. Since $\Imc\models \Omc$, then $\Imc\models (C\sqsubseteq D,\monomial_0)$.
 Thus, since $\Imc\models (D'\sqsubseteq E,1)$, then $\Imc\models (C\sqsubseteq E,\monomial_0)$: 
 it is clear in the case $D'=D$ and otherwise it follows from the fact that $D^\Imc=(\exists R)^\Imc=\{(d,\nonomial^\Imc)\mid (d,e,\nonomial^\Imc)\in R^\Imc\}=(\exists R.\top)^\Imc = D'^\Imc$ by definition of the interpretation of $\top$. 
 Moreover, $(a_0^\Imc,(\Pi_{(\alpha, v^r_\alpha)\in \Amc_C} v^r_\alpha)^\Imc)\in C^\Imc$ by construction of $\Amc_C$, so $\Imc\models (E(a_0),\monomial_0\times \Pi_{(\alpha, v^r_\alpha)\in \Amc_C} v^r_\alpha)$. Hence $\Omc\cup\Tmc_D\cup\Amc_C\models (E(a_0),\monomial_0\times \Pi_{(\alpha, v^r_\alpha)\in \Amc_C} v^r_\alpha)$.\\
  
\noindent$(\Leftarrow)$ We show the other direction by contrapositive: we assume that $\Omc\not\models (C\sqsubseteq D,\monomial_0)$ and show that $\Omc\cup\Tmc_D\cup\Amc_C\not\models (E(a_0),\monomial_0\times \Pi_{(\alpha, v^r_\alpha)\in \Amc_C} v^r_\alpha)$. 
We next define a sequence $\Imc_0,\Imc_1,\dots$ of interpretations and the annotated interpretation
 $\Imc$ with $\Delta_{\sf m}^\Imc=\NMrep$, $\Delta^\Imc=\bigcup_{i\geq 0} \Delta^{\Imc_i}$ and $\cdot^\Imc$ is the union of the $\cdot^{\Imc_i}$. 
We start with $\Delta^{\Imc_0}=\individuals{\Amc_C}$, $a^{\Imc_0}=a$ for every $a\in\individuals{\Amc_C}$, and $A^{\Imc_0}=\{(a,v^r_\alpha)\mid (A(a),v^r_\alpha)\in\Amc_C\}$ for every $A\in\NC$ and $R^{\Imc_0}=\{(a,b,v^r_\alpha)\mid (R(a,b),v^r_\alpha)\in\Amc_C\}$ for every $R\in\NR$. Then for every $i\geq 0$,  $\Imc_{i+1}$ results from applying one of the rules $R1,R2,R3$ to $\Imc_i$. 

By construction, $\Imc$ is a model of $\Amc_C$ and of the GCIs, RIs and RRs of $\Omc$. 
We show by induction on $i$ that for all $\monomial\in\NM$ such that $\monomial$ does not contain any variable of the form $v^r_\alpha$  and  general \ELHr concept $G\neq \top$, 
if $(a_0, \representative{\monomial\times  \Pi_{(\alpha, v^r_\alpha)\in \Amc_C} v^r_\alpha})\in G^{\Imc_i}$ then $\Omc\models (C\sqsubseteq G, \monomial)$.

For the base case $i=0$, if $(a_0, \representative{\monomial\times \Pi_{(\alpha, v^r_\alpha)\in \Amc_C} v^r_\alpha})\in G^{\Imc_0}$, by construction of $\Imc_0$, $\representative{\monomial\times \Pi_{(\alpha, v^r_\alpha)\in \Amc_C} v^r_\alpha}=\representative{ \Pi_{(\alpha, v^r_\alpha)\in \Amc_C} v^r_\alpha}$. 
We use the following claim to show that the property holds in this case. 
\begin{claim}\label{claim:incltoassertions}
For every $a\in\NI$, for every \ELHr concept $C$, and for every \ELHr concept $G\neq \top$, if there is $k$ such that $(a, \representative{\Pi_{(\alpha, v^r_\alpha)\in f(C,a,k)} v^r_\alpha})\in G^{\Imc_0}$, then $G$ and $C$ are equal modulo some repetitions in $G$, i.e. if we replace some subconcepts of the form $G'\sqcap G'$, $G'\sqcap \top$ or $\top\sqcap G'$ by $G'$ in $G$ or some subconcepts of the form $\exists R.G'\sqcap \exists R.\top$ by $\exists R.G'$, we obtain exactly~$C$. 
\end{claim}
\noindent\emph{Proof of the claim.} The proof of the claim is by structural induction. If $C=A\in\NC$, $C=\top$ or $C=\exists R$, necessarily $k=0$ and $f(C,a,0)$ is a singleton or the emptyset. It can be checked that in the three cases, $G$ follows the grammar rule $G:=C \mid G\sqcap \top\mid\top\sqcap G$, so that $G$ is equal to $C$ modulo repetitions in $G$. 
If $C=C_1\sqcap C_2$, $f(C,a,k)=f(C_1,a,k)\cup f(C_2,a,k+1)$. 
Since $(a, \representative{ \Pi_{(\alpha, v^r_\alpha)\in f(C_1,a,k)} v^r_\alpha \times \Pi_{(\alpha, v^r_\alpha)\in f(C_2,a,k+1)} v^r_\alpha})\in G^{\Imc_0}$, it follows that $G$ follows the grammar rule $G:=G_1\sqcap G_2 \mid G_2\sqcap G_1 \mid G\sqcap G_1\mid G\sqcap G2\mid G\sqcap \top\mid \top\sqcap G$ 
with $(a, \representative{\Pi_{(\alpha, v^r_\alpha)\in f(C_1,a,k)} v^r_\alpha})\in G_1^{\Imc_0}$ and $(a, \representative{\Pi_{(\alpha, v^r_\alpha)\in f(C_2,a,k+1)} v^r_\alpha})\in G_2^{\Imc_0}$. 
By induction, $G_1$ is equal to $C_1$ and $G_2$ is equal to $C_2$ modulo repetitions in $G_1$ and $G_2$. 
Hence, $C$ and $G$ are equal modulo repetitions in $G$. 
Finally, if $C=\exists R.C'$, $f(C,a,k)=\{(R(a,b), v^k_{R(a,b)})\}\cup f(C',b,k+1)$. 
Since $(a, \representative{v^k_{R(a,b)}\times \Pi_{(\alpha, v^r_\alpha)\in f(C',b,k+1)} v^r_\alpha})\in G^{\Imc_0}$, it then follows that $G$ follows the grammar rules 
$G:= \exists R.G''\mid G\sqcap G\mid G\sqcap \top\mid \top \sqcap G\mid G\sqcap \exists R. T\mid  \exists R. T \sqcap G$, $G'':= G'\mid G''\sqcap G'\mid G''\sqcap \top\mid \top\sqcap G''$ and $T:=\top\mid\top\sqcap T$
with $(b, \representative{\Pi_{(\alpha, v^r_\alpha)\in f(C',b,k+1)} v^r_\alpha})\in G'^{\Imc_0}$. 
By induction, 
$G'$ is equal to $C'$ modulo repetitions in $G'$. 
Hence, $G$ is equal to $C$ modulo repetitions in $G$. 
This finishes the proof of the claim.\\

By Claim \ref{claim:incltoassertions}, since $(a_0, \representative{\Pi_{(\alpha, v^r_\alpha)\in \Amc_C} v^r_\alpha})\in G^{\Imc_0}$, it follows that 
$G$ and $C$ are equal modulo some repetitions in $G$. 
It follows  that $\Omc\models (C\sqsubseteq G, 1)$ (this is clear in the case $G=C$ and can be shown by structural induction using Lemma \ref{lem:conj-right} in the case where $G$ has repetitions). 
Moreover, since $\monomial$ does not contain any $v^r_\alpha$, $\monomial=1$. 
This shows that the property holds for $i=0$.

For the inductive step, assume that for all $\monomial\in\NM$ that does not contain any variable of the form $v^r_\alpha$ and general \ELHr concept $G\neq \top$, if $(a_0, \representative{\monomial\times \Pi_{(\alpha, v^r_\alpha)\in \Amc_C} v^r_\alpha})\in G^{\Imc_i}$ then $\Omc\models (C\sqsubseteq G, \monomial)$. Let $\monomial\in\NM$ without $v^r_\alpha$ and $G\neq \top$  and assume that $(a_0, \representative{\monomial\times \Pi_{(\alpha, v^r_\alpha)\in \Amc_C} v^r_\alpha})\in G^{\Imc_{i+1}}$. 
If $(a_0, \representative{\monomial\times \Pi_{(\alpha, v^r_\alpha)\in \Amc_C} v^r_\alpha})\in G^{\Imc_{i}}$, then $\Omc\models (C\sqsubseteq G, \monomial)$ by induction hypothesis. 
Otherwise, by Lemma \ref{lem:complinter}, since $\Imc_{i+1}$ is obtained by applying a rule in $\{R1,R2,R3\}$ to $\Imc_i$, and $(a_0, \representative{\monomial\times \Pi_{(\alpha, v^r_\alpha)\in \Amc_C} v^r_\alpha})\in G^{\Imc_{i+1}}$ while $(a_0, \representative{\monomial\times \Pi_{(\alpha, v^r_\alpha)\in \Amc_C} v^r_\alpha})\notin G^{\Imc_{i+1}}$, there exists an \ELHr concept $F$, a provenance variable $v\in\NV$ and a monomial $\nonomial$ such that $\Omc\models (F\sqsubseteq G,v)$, $(a_0, \nonomial)\in F^{\Imc_{i}}$, and $\representative{v\times\nonomial}=\representative{\monomial\times \Pi_{(\alpha, v^r_\alpha)\in \Amc_C} v^r_\alpha}$. 
Since the provenance variables of form $v^r_\alpha$ do not occur in $\Omc$, $v$ is not of this form. There thus exists $\nonomial'\in\NM$ that does not contain any $v^r_\alpha$ and is such that $\representative{\nonomial}=\representative{\nonomial'\times\Pi_{(\alpha, v^r_\alpha)\in \Amc_C} v^r_\alpha}$ and $\representative{v\times\nonomial'}=\monomial$. 
Since $(a_0, \nonomial)\in F^{\Imc_{i}}$, it follows by induction that $\Omc\models (C\sqsubseteq F, \nonomial')$. 
Hence, since $\Omc\models (F \sqsubseteq G,v)$, we conclude that $\Omc\models (C\sqsubseteq G,v\times\nonomial')$, i.e. that $\Omc\models (C\sqsubseteq G,\monomial)$.

It follows that for every \ELHr concept $G\neq \top$  and $\monomial\in\NM$ that does not contain any $v^r_\alpha$, if $(a_0, \representative{\monomial\times \Pi_{(\alpha, v^r_\alpha)\in \Amc_C} v^r_\alpha})\in G^{\Imc}$ then $\Omc\models (C\sqsubseteq G, \monomial)$.
Since $D$ is either a concept name or is of the form $\exists R$ and $\Omc\not\models (C\sqsubseteq D,\monomial_0)$, it follows that $ (a_0, \representative{\monomial_0\times \Pi_{(\alpha, v^r_\alpha)\in \Amc_C} v^r_\alpha})\notin D^\Imc$.\\
 
 Let $\Jmc_0$ be a model of $\Omc$ such that $\Delta^{\Jmc_0}\cap\Delta^\Imc=\emptyset$, $\Delta^{\Jmc_0}_{\sf m}=\Delta^\Imc_{\sf m}=\NMrep$ and $\monomial^{\Jmc_0}=\monomial^\Imc=\representative{\monomial}$ for every $\monomial\in\NM$. Let $\Jmc$ be the interpretation defined as follows:
\begin{itemize}
\item $\Delta^\Jmc=\Delta^{\Jmc_0}\cup\Delta^\Imc$, $\Delta^{\Jmc}_{\sf m}=\Delta^\Imc_{\sf m}=\NMrep$,
\item $a^{\Jmc}=a^\Imc$ for every $a\in\individuals{\Amc_C}$,
\item $a^{\Jmc}=a^{\Jmc_0}$ for every $a\in\individuals{\Omc}$,
\item $R^\Jmc=R^{\Imc}\cup R^{\Jmc_0}$ for every $R\in\NR$,
\item $A^\Jmc=A^{\Imc}\cup A^{\Jmc_0}$ for every $A\in\NC$, $A\neq E$, and
\item $E^\Jmc=D^\Jmc$.
\end{itemize}
Since $\individuals{\Amc_C}\cap\individuals{\Omc}=\emptyset$, $\Jmc$ is well defined. 
Moreover, since $\Delta^{\Jmc_0}\cap\Delta^\Imc=\emptyset$ and $(a_0, \representative{\monomial_0\times \Pi_{(\alpha, v^r_\alpha)\in \Amc_C} v^r_\alpha})\notin D^\Imc$, then $(a_0, \representative{\monomial_0\times \Pi_{(\alpha, v^r_\alpha)\in \Amc_C} v^r_\alpha})\notin D^\Jmc$, so 
$(a_0, \representative{\monomial_0\times \Pi_{(\alpha, v^r_\alpha)\in \Amc_C} v^r_\alpha})\notin E^\Jmc$. 
Finally, $\Jmc$ is a model of $\Omc\cup\Tmc_D\cup\Amc_C$. Indeed, $\Jmc$ satisfies all assertions of $\Omc$ and $\Amc_C$ because $\Jmc_0$ is a model of $\Omc$ and $\Imc$ a model of $\Amc_C$. It also satisfies $\Tmc_D$ by construction of $E^\Jmc$, and all GCIs, RIs and RRs of $\Omc$ because both $\Jmc_0$ and $\Imc$ satisfy them and $\Delta^{\Jmc_0}\cap\Delta^\Imc=\emptyset$. 
Hence, $\Omc\cup\Tmc_D\cup\Amc_C\not\models E(a_0, \monomial_0\times \Pi_{(\alpha, v^r_\alpha)\in \Amc_C} v^r_\alpha)$.
\end{proof}

Regarding role inclusions and range restriction, the following propositions can be proven. 
\begin{proposition}
$\Omc\models (S\sqsubseteq R,\monomial)$ iff $\Omc\cup\{(S(a_0,b_0), 1)\}\models(R(a_0,b_0),\monomial)$ where $a_0,b_0$ are fresh individual names. 
\end{proposition}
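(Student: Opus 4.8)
The plan is to prove the two implications separately. The direction from left to right is essentially immediate: suppose $\Omc\models(S\sqsubseteq R,\monomial)$ and let $\Jmc$ be an arbitrary model of $\Omc\cup\{(S(a_0,b_0),1)\}$. Then $\Jmc\models\Omc$, hence $\Jmc\models(S\sqsubseteq R,\monomial)$, while $\Jmc\models(S(a_0,b_0),1)$ gives $(a_0^\Jmc,b_0^\Jmc,1^\Jmc)\in S^\Jmc$. Instantiating the satisfaction condition of the role inclusion with $d:=a_0^\Jmc$, $e:=b_0^\Jmc$ and the monomial $1$ yields $(a_0^\Jmc,b_0^\Jmc,(\monomial\times 1)^\Jmc)\in R^\Jmc$, that is $(a_0^\Jmc,b_0^\Jmc,\monomial^\Jmc)\in R^\Jmc$ since $\monomial\times 1\approx_\mn{trio}\monomial$; so $\Jmc\models(R(a_0,b_0),\monomial)$. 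As $\Jmc$ was arbitrary, $\Omc\cup\{(S(a_0,b_0),1)\}\models(R(a_0,b_0),\monomial)$.

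For the converse I would argue by contraposition: assuming $\Omc\not\models(S\sqsubseteq R,\monomial)$, I would exhibit a model $\Jmc$ of $\Omc\cup\{(S(a_0,b_0),1)\}$ with $(a_0^\Jmc,b_0^\Jmc,\monomial^\Jmc)\notin R^\Jmc$. One may assume $\Omc$ is in normal form, since Rules $\NF_1$--$\NF_3$ leave role inclusions untouched and preserve entailment of $(S\sqsubseteq R,\monomial)$. I would then start from an arbitrary model $\Imc$ of $\Omc$ in which monomials are interpreted by their representatives (for instance the canonical model of Section~\ref{sec:combined}), add two fresh domain elements interpreting $a_0$ and $b_0$, add the single edge $(a_0^\Jmc,b_0^\Jmc,1^\Jmc)$ to $S^\Jmc$, and saturate the result under the model-completion rules $R1$--$R3$ used for Lemma~\ref{lem:complinter}, applied fairly and, if saturation does not terminate, by passing to the union of the resulting chain of interpretations as in Section~\ref{sec:combined}. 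A model of $\Omc$ is already closed under $R1$--$R3$, so the only new rule applications are those triggered by the added edge, all taking place in the fresh part; hence $\Jmc$ is a model of $\Omc$ and, by construction, of $(S(a_0,b_0),1)$.

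The key step is to check $(a_0^\Jmc,b_0^\Jmc,\monomial^\Jmc)\notin R^\Jmc$, which I would derive from an analysis of how a pair $(a_0^\Jmc,b_0^\Jmc)$ can enter a role extension during saturation: rule $R1$ only adds concept memberships; rule $R3$ creates edges whose target is a freshly created auxiliary element, hence never the pre-existing $b_0^\Jmc$; and rule $R2$ merely copies an existing edge along a role inclusion, keeping its two endpoints fixed. Consequently, whenever $(a_0^\Jmc,b_0^\Jmc,\nonomial^\Jmc)\in R'^\Jmc$ this triple descends from the seed edge $(a_0^\Jmc,b_0^\Jmc,1^\Jmc)\in S^\Jmc$ by a finite sequence of $R2$-steps, so $\Omc$ contains role inclusions $(S_0\sqsubseteq S_1,\onomial_1),\dots,(S_{n-1}\sqsubseteq S_n,\onomial_n)$ with $S_0=S$, $S_n=R'$ and $\nonomial=\representative{\onomial_1\times\cdots\times\onomial_n}$. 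A routine induction over such chains gives $\Omc\models(S\sqsubseteq R',\onomial_1\times\cdots\times\onomial_n)$. In particular, if $(a_0^\Jmc,b_0^\Jmc,\monomial^\Jmc)\in R^\Jmc$ we would obtain $\monomial\approx_\mn{trio}\onomial_1\times\cdots\times\onomial_n$ for such a chain with $R'=R$, and since $\approx_\mn{trio}$ is a congruence and entailment only depends on annotations up to $\approx_\mn{trio}$, this would give $\Omc\models(S\sqsubseteq R,\monomial)$, a contradiction. Hence $\Jmc\not\models(R(a_0,b_0),\monomial)$, completing the contrapositive.

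I expect the main obstacle to be the bookkeeping in this saturation analysis: one must check carefully that the construction of $\Jmc$ is well defined even when the saturation does not terminate, and above all that no rule other than $R2$ can ever introduce an edge between the two named elements $a_0^\Jmc$ and $b_0^\Jmc$ --- in particular that the existential rule $R3$ genuinely creates a brand-new target each time, so that $b_0^\Jmc$ is never reused as an existential witness. The remaining ingredients --- the left-to-right direction, the induction over role-inclusion chains, and the invariance of entailment under $\approx_\mn{trio}$-equivalent annotations --- are routine.
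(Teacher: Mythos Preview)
The paper does not actually prove this proposition; it is stated in the appendix with the remark that it ``can be proven'' and nothing further. Your argument is correct and matches the methodology the paper uses for the neighbouring reductions (in particular the GCI case, which also builds a countermodel by saturating under $R1$--$R3$). The left-to-right direction is immediate, and for the converse your construction---adjoin two fresh domain elements and the seed edge $(a_0^\Jmc,b_0^\Jmc,1)\in S^\Jmc$ to a model of $\Omc$ interpreting monomials by their representatives, then saturate---does yield a model of $\Omc\cup\{(S(a_0,b_0),1)\}$: the old part is already closed under $R1$--$R3$, so saturation only acts on the fresh elements. Your key observation, that $R3$ always introduces a brand-new target and $R1$ never adds role edges, so that any edge of the form $(a_0^\Jmc,b_0^\Jmc,\nonomial)\in R'^\Jmc$ must arise from the seed via a finite $R2$-chain witnessing $\Omc\models(S\sqsubseteq R',\nonomial)$, is exactly right; this is the same chain-of-RIs mechanism the paper exploits (in a slightly different packaging) in its proof of the later role-assertion reduction.
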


\begin{proposition}
$\Omc\models ({\sf ran}(R)\sqsubseteq A,\monomial)$ iff $\Omc\cup\{(R(a_0,b_0), 1)\}\models(A(b_0),\monomial)$ where $a_0,b_0$ are fresh individual names. 
\end{proposition}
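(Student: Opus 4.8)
The plan is to prove the two directions separately. The left-to-right implication is a direct model-theoretic argument; the converse I would establish by contraposition, building an explicit counter-model.

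For the direction from left to right, assume $\Omc\models({\sf ran}(R)\sqsubseteq A,\monomial)$ and let $\Imc$ be any model of $\Omc\cup\{(R(a_0,b_0),1)\}$. Then $\Imc\models\Omc$, so $\Imc$ satisfies $({\sf ran}(R)\sqsubseteq A,\monomial)$; and $\Imc\models(R(a_0,b_0),1)$ gives $(a_0^\Imc,b_0^\Imc,1^\Imc)\in R^\Imc$, whence $(b_0^\Imc,1^\Imc)\in({\sf ran}(R))^\Imc$ by the definition of the interpretation of ${\sf ran}(R)$. Instantiating the satisfaction condition for the range restriction with $\nonomial=1$, and using $\monomial\times 1\approx_\mn{trio}\monomial$, yields $(b_0^\Imc,\monomial^\Imc)\in A^\Imc$, i.e.\ $\Imc\models(A(b_0),\monomial)$. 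As $\Imc$ was arbitrary, $\Omc\cup\{(R(a_0,b_0),1)\}\models(A(b_0),\monomial)$.

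For the converse I would argue contrapositively: assuming $\Omc\not\models({\sf ran}(R)\sqsubseteq A,\monomial)$, I construct a model $\Jmc$ of $\Omc\cup\{(R(a_0,b_0),1)\}$ that falsifies $(A(b_0),\monomial)$, as a disjoint union of two pieces. The first piece is any model $\Nmc$ of $\Omc$ (which exists, $\Omc$ being Horn-like), interpreting the individuals of $\Omc$ and taking care of its assertions. The second piece is built over a fresh domain: start from the interpretation with domain $\{a_0,b_0\}$, with $R$ interpreted as $\{(a_0,b_0,1)\}$, all other roles and all concept names empty, and monomials interpreted by their representatives; then close this off by applying the rules $R1$, $R2$, $R3$ (those introduced just before the proposition on GCIs) in a fair way, and call the result $\Imc$. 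As in the GCI case, $\Imc$ then satisfies all GCIs, RIs and RRs of $\Omc$ while retaining $(a_0,b_0,1^\Imc)\in R^\Imc$. Let $\Jmc$ take the disjoint union of the two domains (monomials interpreted by representatives, individuals of $\Omc$ from $\Nmc$, $a_0,b_0$ from $\Imc$, concept and role names unioned). Since the domains are disjoint, the interpretation in $\Jmc$ of any concept, and of any ${\sf ran}(S)$, restricted to either part coincides with its value in that part; hence every GCI, RI and RR of $\Omc$ is satisfied (inherited from $\Nmc$ and $\Imc$), the assertions of $\Omc$ hold in the $\Nmc$-part, and $(R(a_0,b_0),1)$ holds in the $\Imc$-part, so $\Jmc\models\Omc\cup\{(R(a_0,b_0),1)\}$.

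What remains --- and this is the main obstacle --- is to show $(b_0,\representative{\monomial})\notin A^\Imc$, which then gives $\Jmc\not\models(A(b_0),\monomial)$. I would prove an invariant saying that the fresh part $\Imc$ is entirely ``generated from ${\sf ran}(R)$'': for every \ELHr concept $G$ and monomial $\nonomial$, if $(b_0,\representative{\nonomial})\in G^\Imc$ then $\Omc\models({\sf ran}(R)\sqsubseteq G,\nonomial)$ (in the natural semantic sense for arbitrary $G$); combined with $\Omc\not\models({\sf ran}(R)\sqsubseteq A,\monomial)$ this forbids $(b_0,\representative{\monomial})\in A^\Imc$. The invariant is proved by induction on the construction stages, at each stage invoking Lemma~\ref{lem:complinter} (which relies on Lemma~\ref{lem:conj-right}) to trace a newly acquired membership back to an entailed inclusion. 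The fiddly points, which I expect to take most of the work, are: (i) the base case and the contribution of $\top$-GCIs, since $b_0$ initially lies in both ${\sf ran}(R)$ and $\top$ only at label $1$; (ii) tracking labels as $R1$--$R3$ fire, in particular that role inclusions promote the edge $(a_0,b_0,1)$ to edges $(a_0,b_0,\representative{\nonomial'})$ exactly for the $\nonomial'$ with $\Omc\models(R\sqsubseteq S,\nonomial')$, so $b_0$ lies in ${\sf ran}(S)$ precisely at those labels; and (iii) the anonymous successors spawned by $R3$, whose memberships must be routed through the $\exists$- and $\sqcap$-cases of the structural induction exactly as in Lemma~\ref{lem:complinter}. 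The two companion propositions --- the one for $S\sqsubseteq R$ just above and the one on GCIs --- are handled by the same template, the fresh piece here reducing to the single edge $(a_0,b_0,1)$ with no anonymous part.
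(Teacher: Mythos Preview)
The paper does not actually prove this proposition; it is only asserted alongside the analogous reduction for role inclusions, so there is no paper proof to compare against.

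Your left-to-right argument is correct. The converse, however, has a genuine gap --- and not one that can be patched, because the proposition as stated is in fact false. The invariant you propose, that $(b_0,\representative{\nonomial})\in G^\Imc$ implies $\Omc\models({\sf ran}(R)\sqsubseteq G,\nonomial)$, breaks precisely at your ``fiddly point''~(i). Take $\Omc=\{(\top\sqsubseteq A,v)\}$ and $\monomial=v$. Every model $\Imc$ of $\Omc\cup\{(R(a_0,b_0),1)\}$ has $(b_0^\Imc,1^\Imc)\in\top^\Imc$ and hence $(b_0^\Imc,v^\Imc)\in A^\Imc$, so the assertion side holds. But $\Omc\not\models({\sf ran}(R)\sqsubseteq A,v)$: the interpretation $\Imc'$ with $R^{\Imc'}=\{(e_1,e_2,u^{\Imc'})\}$ for a variable $u\neq v$ and $A^{\Imc'}=\{(e_1,v^{\Imc'}),(e_2,v^{\Imc'})\}$ is a model of $\Omc$, yet $(e_2,u^{\Imc'})\in({\sf ran}(R))^{\Imc'}$ while $(e_2,(u\times v)^{\Imc'})\notin A^{\Imc'}$. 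The trouble is that annotating the fresh edge with $1$ leaves no marker to separate derivations of $A(b_0)$ that pass through ${\sf ran}(R)$ from those that use only $\top$. The obvious repair --- use a fresh variable $v_0$ in place of $1$ and test $\Omc\cup\{(R(a_0,b_0),v_0)\}\models(A(b_0),\monomial\times v_0)$ --- restores the equivalence, and then your template goes through with the invariant restricted to labels containing $v_0$, exactly parallel to how the fresh $v^r_\alpha$ are used in the paper's GCI reduction. (The companion RI reduction with annotation $1$ is unaffected, since role extensions are not touched by $\top$-GCIs.)
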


Let $\Omc$ be an annotated \ELHr ontology in normal form, and let $(B(a_0),\monomial_0)$ be an annotated concept assertion. 
For all $a,b\in\individuals{\Omc}$ and $R\in\NR$, assume that the concept or role names $C_a, C_{{\sf ran}(R)}, R_{ab}$ do not occur in $\Omc$ and
let \begin{align*}
\Tmc_{C_a}=&\{(C_a\sqsubseteq A, v)\mid (A(a),v)\in\Omc\}\cup\\
&\{(C_a\sqsubseteq \exists R_ {ab},1), (R_{ab}\sqsubseteq R, v), ({\sf ran}(R_{ab})\sqsubseteq C_b, 1) \\&\phantom{\{}
\mid (R(a,b),v)\in\Omc\}\cup\\
&\{(C_a\sqsubseteq C_{{\sf ran}(R)},v)\mid (R(b,a),v)\in\Omc\}\cup\\&
\{(C_{{\sf ran}(R)}\sqsubseteq C_{{\sf ran}(S)},v)\mid (R\sqsubseteq S,v)\in\Omc\}\cup\\&
\{(C_{{\sf ran}(R)}\sqsubseteq A,v)\mid ({\sf ran}(R)\sqsubseteq A,v)\in\Omc\}\\
\Tmc=&\bigcup_{a\in\individuals{\Omc}}\Tmc_{C_a}\cup(\Omc\setminus\{(\alpha,v)\mid \alpha\text{ is an assertion} \}).
\end{align*}
$\Tmc$ can be computed in polynomial time.

We will use the following lemma to show that $\Omc\models(B(a_0),\monomial_0)$ iff $\Tmc\models (C_{a_0}\sqsubseteq B,\monomial_0)$ or $\Tmc\models (\top\sqsubseteq B,\monomial_0)$.

\begin{lemma}\label{lem:reductionGCI-assertion3}
There exist a model $\Imc$ of $\Tmc$ and $\epsilon\in\Delta^\Imc$ such that 
\begin{enumerate}
\item for every $a\in\NI$, $a^\Imc=\epsilon$ iff there is no directed path of roles $(R_0(a_0,a_1),v_0)$\dots $(R_n(a_n,a),v_n)$ from $a_0$ to $a$ in~$\Omc$, 
\item for every $a\in\individuals{\Omc}$, $C_a^\Imc=\{(a^\Imc,1^\Imc)\}$ if $a^\Imc\neq \epsilon$, and $C_a^\Imc=\emptyset$ otherwise,
\item for $(R(a,b),v)\in\Omc$, $R_{ab}^\Imc=\{(a^\Imc,b^\Imc,1^\Imc)\}$ if $a^\Imc\neq \epsilon$ and $b^\Imc\neq \epsilon$, and $R_{ab}^\Imc=\emptyset$ otherwise, and
\item for every $a\in\individuals{\Omc}$ and $A\in\NC$, if $(a^\Imc,\monomial^\Imc)\in A^\Imc$ then $\Tmc\models (C_a\sqsubseteq A,\monomial)$ or $\Tmc\models (\top\sqsubseteq A,\monomial)$.
\end{enumerate}
\end{lemma}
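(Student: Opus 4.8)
The plan is to build $\Imc$ as a restricted chase of $\Tmc$ in which the behaviour of the auxiliary symbols is fixed in advance. Let $\mn{Reach}\subseteq\individuals{\Omc}$ consist of $a_0$ together with every individual reachable from $a_0$ along a directed path of role assertions of $\Omc$; note that $\mn{Reach}$ is closed under $\Omc$-role-successors. Put $\Delta^\Imc:=\mn{Reach}\cupdot\{\epsilon\}$ plus the anonymous elements created below, $a^\Imc:=a$ for $a\in\mn{Reach}$ and $a^\Imc:=\epsilon$ for every other $a\in\NI$, $\Delta^\Imc_{\sf m}:=\NMrep$, $\monomial^\Imc:=\representative{\monomial}$. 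The initial interpretation $\Imc_0$ sets $(a,1^\Imc)\in C_a^{\Imc_0}$ for every $a\in\mn{Reach}$, $(a,b,1^\Imc)\in R_{ab}^{\Imc_0}$ for every $(R(a,b),v)\in\Omc$ with $a\in\mn{Reach}$ (hence $b\in\mn{Reach}$), $\top^{\Imc_0}=\Delta^\Imc\times\{1^\Imc\}$, and leaves everything else empty. From $\Imc_0$ I apply the rules $R1,R2,R3$ defined just before Lemma~\ref{lem:complinter}, with $\Omc$ replaced by $\Tmc$, fairly, letting $R3$ always introduce a fresh element when fired on an axiom of $\Omc$; the axioms $(C_a\sqsubseteq\exists R_{ab},1)$ never trigger $R3$, since the required edge is already in $\Imc_0$. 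Let $\Imc:=\bigcup_i\Imc_i$.

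The first three claims are then straightforward. $\Imc\models\Tmc$ because $R1,R2$ are applied exhaustively, the front-loaded $R_{ab}$-edges satisfy all $(C_a\sqsubseteq\exists R_{ab},1)$, and $R3$ satisfies all $(A\sqsubseteq\exists R,v)\in\Omc$. Claim~1 is the definition of $\mn{Reach}$. For Claims~2 and~3 one checks that the only $\Tmc$-axiom with a $C_a$ on its right is $({\sf ran}(R_{ba})\sqsubseteq C_a,1)$, and that no role inclusion of $\Tmc$ has an $R_{ab}$ on its right; since $R_{ba}^\Imc$ never grows past $\{(b,a,1^\Imc)\}$ or $\emptyset$, neither $C_a^\Imc$ nor $R_{ab}^\Imc$ ever leaves its value in $\Imc_0$, which, using that $\mn{Reach}$ is successor-closed, is exactly the one stated.

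Claim~4 is the heart of the matter, and I would derive it from a strengthened invariant proved by induction on the chase stage $i$: for every $d\in\Delta^{\Imc_i}$, $A\in\NC$ and $\monomial\in\NM$ with $(d,\monomial^\Imc)\in A^{\Imc_i}$, (a)~if $d=a$ for some $a\in\mn{Reach}$ then $\Tmc\models(C_a\sqsubseteq A,\monomial)$ or $\Tmc\models(\top\sqsubseteq A,\monomial)$; (b)~if $d=\epsilon$ then $\Tmc\models(\top\sqsubseteq A,\monomial)$; (c)~if $d$ was introduced by $R3$ on $(C\sqsubseteq\exists R,v_0)\in\Omc$ with incoming edge labelled $\eta$, then $\monomial=\representative{\eta\times\monomial'}$ for some $\monomial'$ with $\Tmc\models({\sf ran}(R)\sqsubseteq A,\monomial')$ or $\Tmc\models(\top\sqsubseteq A,\monomial')$. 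Restricting (a)--(b) to $a\in\individuals{\Omc}$ yields Claim~4. In the inductive step a new fact $(d,\monomial^\Imc)\in A^{\Imc_{i+1}}$ is always produced by $R1$ from some $(G\sqsubseteq A,v)\in\Tmc$ with $(d,\nonomial^\Imc)\in G^{\Imc_i}$ and $\representative{v\times\nonomial}=\monomial$ (Lemma~\ref{lem:complinter} with $\Omc:=\Tmc$). One then cases on $G$: for $G\in\{C_a,\top\}$, Claim~2 forces $\nonomial\approx_{\mn{trio}}1$ and the conclusion is the axiom itself; for $G$ a concept name or $G=A_1\sqcap A_2$, apply the hypothesis at $d$ and compose with the appropriate rule of Table~\ref{tab:completionRules} ($\CR_4$, $\CR_6$ or $\CR_8$); for $G={\sf ran}(R)$, trace the $R$-edge entering $d$---which, by the analysis behind Claims~2--3 and the freshness discipline on $R3$, comes either from a front-loaded $R_{ba}$-edge (combine the $C_{{\sf ran}(\cdot)}$-axioms of $\Tmc_{C_d}$ with the range and role-inclusion axioms of $\Omc$) or from $d$'s own creating edge (use clause~(c)); for $G=\exists R.A''$, split $(d,\nonomial^\Imc)\in(\exists R.A'')^{\Imc_i}$ into an $R$-edge $(d,e,\monomial_1^\Imc)$ and a fact $(e,\monomial_2^\Imc)\in A''^{\Imc_i}$, apply the hypothesis at $e$, and reassemble a $\Tmc$-derivation of the conclusion using $\CR_5$ (to produce the $\exists R$) and $\CR_9$, resp.\ $\CR_{10}$ when the fact at $e$ came through the $\top$-clause, to reabsorb the successor through $\exists R.A''\sqsubseteq A$. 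The factorisations line up precisely because everything is computed modulo $\approx_{\mn{trio}}$, so the repeated entry annotations collapse.

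The main obstacle is this last case with $e$ anonymous: it is where clause~(c) is needed and where one must turn a subsumption ${\sf ran}(R')\sqsubseteq A''$ (or $\top\sqsubseteq A''$) obtained deep in the anonymous part of $\Imc$ into a subsumption $C_a\sqsubseteq A$, $\top\sqsubseteq A$, or ${\sf ran}(R)\sqsubseteq A$---exactly the pattern that the \ELHr-specific rules $\CR_9$, $\CR_{10}$ and, when $A''$ arises through a conjunction in a role's range, $\CR_7$ are designed to capture. The residual work is then the bookkeeping that the monomial each of these rules outputs equals the representative of the product accumulated along the matching chase steps. A secondary point to police throughout is that $R3$ on axioms of $\Omc$ stays confined to fresh elements and that $\mn{Reach}$ is successor-closed, so that named individuals receive role edges only through the $R_{ab}$-gadgets; this is what makes clause~(a) decouple cleanly from clause~(c).
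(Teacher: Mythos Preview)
Your approach is sound and genuinely different from the paper's. The paper does \emph{not} precompute $\mn{Reach}$ and front-load the $C_a$ and $R_{ab}$ facts; instead it starts with $a_0^{\Imc_0}=a_0$, $a^{\Imc_0}=\epsilon$ for all other $a$, and only $C_{a_0}^{\Imc_0}=\{(a_0,1)\}$ nonempty. It then uses a bespoke set of six rules, the last of which, when it fires on $(C_a\sqsubseteq\exists R_{ab},1)$, \emph{changes} $b^\Imc$ from $\epsilon$ to $b$ and simultaneously populates $C_b$ and all $R'_{ab}$. So the paper discovers reachability on the fly, whereas you fix it upfront and run the standard $R1$--$R3$ chase. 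Your construction is cleaner in that it avoids modifying the interpretation of individual names mid-chase; the paper's buys a simpler invariant. Indeed, the paper's invariant (its point~(c)) is stated only for named $a$ with $a^{\Imc_i}\neq\epsilon$ and has three parts: concept membership, $(\exists S.A)$-membership (decomposed as $\Tmc\models(C_a\sqsubseteq\exists S',\nonomial_1)$, $\Tmc\models(S'\sqsubseteq S,\nonomial_2)$, $\Tmc\models({\sf ran}(S')\sqsubseteq A,\nonomial_3)$), and ${\sf ran}(S)$-membership (as $\Tmc\models(C_a\sqsubseteq C_{{\sf ran}(S)},\monomial)$). These extra clauses absorb the role-edge bookkeeping that you have to carry out as side arguments.

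One thing to tighten: your clause~(c) as written demands $\monomial=\representative{\eta\times\monomial'}$ in \emph{both} disjuncts, but this fails whenever the fact at an anonymous $d$ descends from a $(\top\sqsubseteq\cdot,\cdot)$ axiom---e.g.\ $(\top\sqsubseteq B,u)\in\Omc$ gives $(d,u)\in B^{\Imc}$ with $u$ unrelated to $\eta$. The fix is to weaken the $\top$-branch of~(c) to just $\Tmc\models(\top\sqsubseteq A,\monomial)$, without the $\eta$-factorisation. With that adjustment your invariant goes through; the paper sidesteps this by never tracking anonymous elements directly, instead recording the $(\exists S.A)$-decomposition at the named ancestor.
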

\begin{proof}
Let $\epsilon\notin\individuals{\Omc}$. We inductively build an interpretation $\Imc=\bigcup_{i\geq 0}\Imc_i$ with $\Delta^\Imc_{\sf m}=\NMrep$ such that 
\begin{itemize}[leftmargin=5.5mm]
\item[(a)] for every $i$, for every $a\in\individuals{\Omc}$, $C_{a}^{\Imc_i}=\{(a^{\Imc_i}, 1)\}$ if $a^{\Imc_i}\neq \epsilon$, and $C_{a}^{\Imc_i}=\emptyset$ otherwise,
\item[(b)] for every $i$, for every $(R(a,b),v)\in\Omc$, $R_{ab}^{\Imc_i}=\{(a^{\Imc_i},b^{\Imc_i}, 1)\}$ if $a^{\Imc_i}\neq \epsilon$ and $b^{\Imc_i}\neq\epsilon$, and $R_{ab}^{\Imc_i}=\emptyset$ otherwise, 
\item[(c)] for every $i$, for every $a\in\individuals{\Omc}$ such that $a^{\Imc_i}\neq \epsilon$, $A\in \NC$ and $S\in\NR$, 
\begin{itemize}
\item if $(a^{\Imc_i},\monomial)\in A^{\Imc_i}$, then $\Tmc\models (C_a\sqsubseteq A,\monomial)$ or $\Tmc\models (\top\sqsubseteq A,\monomial)$, 
\item if $(a^{\Imc_i},\monomial)\in (\exists S. A)^{\Imc_i}$, then $\Tmc\models (C_a \sqsubseteq \exists S',\nonomial_1)$ or $\Tmc\models (\top \sqsubseteq \exists S',\nonomial_1)$, and $\Tmc\models (S' \sqsubseteq S, \nonomial_2)$,  
and (i) if $A=\top$, then  $\monomial=\representative{\nonomial_1\times\nonomial_2}$
 (ii)  if $A\neq\top$ then 
$\Tmc\models ({\sf ran}(S') \sqsubseteq A, \nonomial_3)$ with $\monomial=\representative{\nonomial_1\times\nonomial_2\times\nonomial_3}$, and 
\item if $(a^{\Imc_i},\monomial)\in {\sf ran}(S)^{\Imc_i}$ and $S$ is not of the form $R_{ab}$, then $\Tmc\models (C_a\sqsubseteq C_{{\sf ran}(S)}, \monomial)$.
\end{itemize}

\end{itemize}
We start with $\Delta^{\Imc_0}=\individuals{\Omc}\cup\{\epsilon\}$, $a_0^{\Imc_0}=a_0$, $a^{\Imc_0}=\epsilon$ for every $a\neq a_0$, $C_{a_0}^{\Imc_0}=\{(a_0, 1)\}$ and $A^{\Imc_0}=\emptyset$, $R^{\Imc_0}=\emptyset$ for all $A\in\NC\setminus\{C_{a_0}\}$ and $R\in \NR$. Then we apply the following rules. 
\begin{enumerate}
\item If $(R\sqsubseteq S, v)\in\Tmc$ and $(d,f,\monomial)\in R^{\Imc_i}$, then $S^{\Imc_{i+1}}=S^{\Imc_i}\cup\{(d,f, \representative{v\times\monomial})\}$. \\
We check that if $\Imc_i$ fulfills our requirements, then it is also the case of $\Imc_{i+1}$.  
For point (b), this follows from the fact that $S$ cannot be of the form $R_{ab}$ by construction of $\Tmc$. 
Regarding point (c), if $d=a^{\Imc_i}$ for some $a\in\individuals{\Omc}$ and  $(d,\nonomial)\in (\exists S.A)^{\Imc_{i+1}}$ while $(d,\nonomial)\notin (\exists S.A)^{\Imc_{i}}$, there must be some $(f,\onomial)\in A^{\Imc_{i}}$ such that $\nonomial=\representative{v\times\monomial\times\onomial}$. 
It follows that $(d,\representative{\monomial\times \onomial})\in (\exists R.A)^{\Imc_{i}}$ so since $\Imc_i$ fulfills the requirements, $\Tmc\models (C_a \sqsubseteq \exists S',\nonomial_1)$ or $\Tmc\models (\top \sqsubseteq \exists S',\nonomial_1)$, and $\Tmc\models (S' \sqsubseteq R, \nonomial_2)$ and (i) either $A=\top$ and $\representative{\monomial\times \onomial}=\representative{\nonomial_1\times\nonomial_2}$ or (ii) $A\neq\top$ and $\Tmc\models ({\sf ran}(S') \sqsubseteq A, \nonomial_3)$ with $\representative{\monomial\times \onomial}=\representative{\nonomial_1\times\nonomial_2\times\nonomial_3}$. 
Since $(R\sqsubseteq S, v)\in\Tmc$ , it thus follows that $\Tmc\models (S' \sqsubseteq S, v\times\nonomial_2)$, so that the second point of (c) holds (since if $A=\top$, $\nonomial=\representative{v\times\monomial\times\onomial}=\representative{\nonomial_1\times v\times \nonomial_2}$, and if $A\neq\top$, $\nonomial=\representative{v\times\monomial\times\onomial}=\representative{\nonomial_1\times v\times \nonomial_2\times\nonomial_3}$). 
In a similar way, if $f=a^{\Imc_i}$ for some $a\in\individuals{\Omc}$, since $(f,\monomial)\in {\sf ran}(R)^{\Imc_{i}}$ and $\Imc_i$ fulfills the requirements, $\Tmc\models (C_a\sqsubseteq C_{{\sf ran}(S)}, \monomial)$.
By construction of $\Tmc$, since $(R\sqsubseteq S, v)\in\Tmc$, $(C_{{\sf ran}(R)}\sqsubseteq C_{{\sf ran}(S)}, v)\in\Tmc$. 
Thus $\Tmc\models (C_a\sqsubseteq C_{{\sf ran}(S)}, \representative{v\times \monomial})$, so that the third point of (c) holds. 

\item If $ (A\sqsubseteq B, v)\in\Tmc$ and $(d,\monomial)\in A^{\Imc_i}$, then $B^{\Imc_{i+1}}=B^{\Imc_i}\cup\{(d, \representative{v\times\monomial})\}$. \\
It is easy to check that if $\Imc_i$ fulfills our requirements, then it is also the case of $\Imc_{i+1}$. 
In particular, note that by construction of $\Tmc$, $B$ is not of the form~$C_a$.

\item If $(A_1\sqcap A_2\sqsubseteq B, v)\in\Tmc$ and $(d,\monomial_1)\in A_1^{\Imc_i}$, $(d,\monomial_2)\in A_1^{\Imc_i}$ then $B^{\Imc_{i+1}}=B^{\Imc_i}\cup\{(d, \representative{v\times\monomial_1\times\monomial_2})\}$. \\
It is easy to check that if $\Imc_i$ fulfills our requirements, then it is also the case of $\Imc_{i+1}$. 
In particular, for (a), note that by construction of $\Tmc$, $B$ is not of the form $C_a$. 
We detail the proof that the first point of (c) holds (the second point is similar). 
If $d=a^{\Imc_i}$ for some $a\in\individuals{\Omc}$, since  $(d,\monomial_1)\in A_1^{\Imc_i}$, $(d,\monomial_2)\in A_1^{\Imc_i}$ and $\Imc_i$ fulfills the requirements, then (1) $\Tmc\models (C_a\sqsubseteq A_1,\monomial_1)$ or (2) $\Tmc\models (\top\sqsubseteq A_1,\monomial_1)$, and (i) $\Tmc\models (C_a\sqsubseteq A_1,\monomial_2)$ or (ii) $\Tmc\models (\top\sqsubseteq A_1,\monomial_2)$. 
In case (1)-(i), since $(A_1\sqcap A_2\sqsubseteq B, v)\in\Tmc$, it is easy to check that $\Tmc\models (C_a\sqsubseteq B,\representative{v\times\monomial_1\times\monomial_2})$. 
In case (2)-(ii), in the same way, $\Tmc\models (\top\sqsubseteq B,\representative{v\times\monomial_1\times\monomial_2})$. 
Finally in case (1)-(ii) (and (2)-(i) is similar), let $\Jmc$ be a model of $\Tmc$ and $(e,\nonomial^\Jmc)\in C_a^\Jmc$. 
Since $\Tmc\models (C_a\sqsubseteq A_1,\monomial_1)$,  
$(e,(\nonomial\times \monomial_1)^\Jmc)\in A_1^\Jmc$, and since $\Tmc\models (\top\sqsubseteq A_1,\monomial_2)$, $(e, \monomial_2^\Jmc)\in A_2^\Jmc$. 
Hence $(e,(\nonomial\times \monomial_1\times\monomial_2)^\Jmc)\in (A_1\sqcap A_2)^\Jmc$, so $(e,(\nonomial\times \monomial_1\times\monomial_2\times v)^\Jmc)\in B^\Jmc$. It follows that $\Tmc\models (C_a\sqsubseteq B,\monomial_1\times\monomial_2\times v )$.

\item If $(\exists S.A\sqsubseteq B, v)\in\Tmc$ and $(d,f,\monomial_1)\in S^{\Imc_i}$, $(f,\monomial_2)\in A^{\Imc_i}$, then $B^{\Imc_{i+1}}=B^{\Imc_i}\cup\{(d, \representative{v\times\monomial_1\times\monomial_2})\}$. \\
We check that if $\Imc_i$ fulfills our requirements, then it is also the case of $\Imc_{i+1}$.  
Note that by construction of $\Tmc$, $B$ is not of the form $C_a$, so that (a) holds. 
We detail the proof that the first point of (c) holds.  
If $d=a^{\Imc_i}$ for some $a\in\individuals{\Omc}$, since $(d,\representative{\monomial_1\times\monomial_2})\in (\exists S.A)^{\Imc_i}$ and $\Imc_i$ fulfills the requirements, then either (i) $\Tmc\models (C_a \sqsubseteq \exists S',\nonomial_1)$ or (ii) $\Tmc\models (\top\sqsubseteq \exists S',\nonomial_1)$, and $\Tmc\models (S' \sqsubseteq S, \nonomial_2)$ and $\Tmc\models ({\sf ran}(S') \sqsubseteq A, \nonomial_3)$ with $\representative{\monomial_1\times\monomial_2}=\representative{\nonomial_1\times\nonomial_2\times\nonomial_3}$. 
It is easy to check that it follows that in  case (i) $\Tmc\models (C_a\sqsubseteq B, \nonomial_1\times\nonomial_2\times\nonomial_3\times v)$, i.e.  $\Tmc\models (C_a\sqsubseteq B, \representative{v\times\monomial_1\times\monomial_2})$ and in case (ii) $\Tmc\models (\top\sqsubseteq B, \nonomial_1\times\nonomial_2\times\nonomial_3\times v)$, i.e.  $\Tmc\models (\top\sqsubseteq B, \representative{v\times\monomial_1\times\monomial_2})$.

\item If $({\sf ran}(S)\sqsubseteq B, v)\in\Tmc$ and $(f,d,\monomial)\in S^{\Imc_i}$, then $B^{\Imc_{i+1}}=B^{\Imc_i}\cup\{(d, \representative{v\times\monomial})\}$. \\
We check that if $\Imc_i$ fulfills our requirements, then it is also the case of $\Imc_{i+1}$.  
Note that by construction of $\Tmc$, $B$ is not of the form $C_a$, so that (a) holds. 
We detail the proof that the first point of (c) holds. 
If $d=a^{\Imc_i}$ for some $a\in\individuals{\Omc}$, since $\Imc_i$ fulfills the requirements, it follows that $\Tmc\models (C_a\sqsubseteq C_{{\sf ran}(S)}, \monomial)$. 
Moreover, by construction of $\Tmc$, $({\sf ran}(S)\sqsubseteq B, v)\in\Tmc$ implies that $(C_{{\sf ran}(S)}\sqsubseteq B, v)\in\Tmc$. 
It is thus easy to check that $\Tmc\models (C_a\sqsubseteq B,  \representative{v\times\monomial})$. 

\item If $(A\sqsubseteq \exists S, v)\in\Tmc$ and  $(d,\monomial)\in A^{\Imc_i}$, then:
\begin{itemize}
\item  Case 1: $A$ is not of the form $C_a$. 
Set $\Delta^{\Imc_{i+1}}=\Delta^{\Imc_{i}}\cup\{f\}$ where $f\notin\Delta^{\Imc_{i}}$ and $S^{\Imc_{i+1}}=S^{\Imc_{i}}\cup\{(d,f, \representative{v\times \monomial})\}$. It is easy to check that if $\Imc_i$ fulfills the requirements, then so does $\Imc_{i+1}$. Indeed, $S$ cannot be of the form $R_{ab}$ by construction and there is no $a\in\individuals{\Omc}$ such that $a^{\Imc_i}=f$ so the only point to check is the second point of (c). 
If $d=a^{\Imc_i}$ for some $a\in\individuals{\Omc}$,  and  $(d,\nonomial)\in (\exists S'.C)^{\Imc_{i+1}}$ while $(d,\nonomial)\notin (\exists S'.C)^{\Imc_{i}}$, the only possibility is that $S'=S$, $C=\top$ and $\nonomial=\representative{v\times \monomial}$. 
Since $(a^{\Imc_i},\monomial)\in A^{\Imc_i}$ and $\Imc_i$ fulfills the requirements, $\Tmc\models (C_a \sqsubseteq A,\monomial)$, so since $(A \sqsubseteq \exists S,v)\in\Tmc$, it is easy to check that $\Tmc\models (C_a \sqsubseteq \exists S, v\times\monomial)$. Moreover $\Tmc\models (S \sqsubseteq S, 1)$ trivially. Hence (c) holds. 

\item Case 2: $A=C_a$. By construction of $\Tmc$, there is $(R(a,b), u)\in\Omc$ such that $\exists S=\exists R_{ab}$, $v=1$ and $\Tmc$ also contains $({\sf ran}(R_{ab})\sqsubseteq C_b, 1)$.
 Moreover, since $C_a^{\Imc_i}\neq\emptyset$, then $C_a^{\Imc_i}=\{(d,1)\}=\{(a^{\Imc_i},1)\}=\{(a,1)\}$.  
If $b^{\Imc_i}\neq\epsilon$, it was already the case that $R_{ab}^{\Imc_{i}}=\{(a^{\Imc_i},b^{\Imc_i}, 1)\}$ so the rule applies to the case $b^{\Imc_i}=\epsilon$. Set $b^{\Imc_{i+1}}=b$ and $R_{ab}^{\Imc_{i+1}}=\{(a,b, 1)\}$, $C_b^{\Imc_{i+1}}=\{(b, 1)\}$ and ${R'}_{ab}^{\Imc_{i+1}}=\{(a,b, 1)\}$ for every $R'_{ab}$ occurring in $\Tmc$. 
  
We check that if $\Imc_i$ fulfills our requirements, then it is also the case of $\Imc_{i+1}$.  
Requirements (a) and (b) hold by construction so we focus on (c). 
For the first point, $b$ only occurs in the concept interpretation $C_b^{\Imc_{i+1}}$ with provenance 1, and $\Tmc\models (C_b\sqsubseteq C_b, 1)$ trivially. 
For the second point, if $(a,\nonomial)\in (\exists S.C)^{\Imc_{i+1}}$ while $(a,\nonomial)\notin (\exists S.C)^{\Imc_{i}}$, the only possibility is that $S={R'}_{ab}$, $C=\top$ and $\nonomial=1$. 
By construction of $\Tmc$, $\Tmc\models (C_a \sqsubseteq \exists {R'}_{ab},1)$, and trivially $\Tmc\models ({R'}_{ab} \sqsubseteq {R'}_{ab}, 1)$. This shows the second point of (c). 
 Finally,  if $(b,\nonomial)\in {\sf ran}(S)^{\Imc_{i+1}}$, $S$ is of the form ${R'}_{ab}$ so the third point trivially holds.
\end{itemize}
\end{enumerate}

Since $\Imc$ has been obtained by applying rules to satisfy all axioms of $\Tmc$ (as $\Tmc$ is in normal form), $\Imc$ is a model of $\Tmc$. 
We show that for every directed path of role assertions $(R_1(a_0,a_1),v_1)$\dots $(R_n(a_{n-1},a_n),v_n)$ in $\Omc$, $a_i^\Imc\neq\epsilon$ for all $a_i$. 
It is clear for $a_0$. Assume that the property is true for every path of lenght $n-1$ and consider $(R_1(a_0,a_1),v_1)\dots (R_n(a_{n-1},a_n),v_n)$ in $\Omc$.  
We have $C_{a_{n-1}}^\Imc=\{(a_{n-1}^\Imc, 1^\Imc)\}$. 
Since $(R_n(a_{n-1},a_n),v_n)\in\Omc$, then $(C_{a_{n-1}}\sqsubseteq \exists R_{a_{n-1}a_n},1)\in\Tmc$ so by the construction of $\Imc$ (rule 6), $a_{n}^\Imc=a_n\neq \epsilon$ and $R_{a_{n-1}a_n}^\Imc=\{(a_{n-1}^\Imc, a_{n}^\Imc, 1^\Imc)\}$. 
In the other direction, for every $a\in\NI$, if $a^\Imc\neq\epsilon$, it follows from the construction that $a^\Imc=a$ and either $a=a_0$ or $a^\Imc=a$ has been defined in some application of rule 6. We can show by induction on the number of applications of rule 6 before the one that defined $a^\Imc=a$ that there is a role path between $a_0$ and $a$.
\end{proof}

\begin{proposition}
$\Omc\models(B(a_0),\monomial_0)$ iff $\Tmc\models (C_{a_0}\sqsubseteq B,\monomial_0)$ or $\Tmc\models (\top\sqsubseteq B,\monomial_0)$.
\end{proposition}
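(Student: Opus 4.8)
The plan is to prove the two implications separately, using for the harder one the model of $\Tmc$ supplied by Lemma~\ref{lem:reductionGCI-assertion3}.

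For ($\Leftarrow$), I would take an arbitrary model $\Jmc$ of $\Omc$ (w.l.o.g.\ interpreting monomials by their representatives) and extend it to a model $\Jmc'$ of $\Tmc$ agreeing with $\Jmc$ on the signature of $\Omc$: set $C_a^{\Jmc'}=\{(a^\Jmc,1^{\Jmc'})\}$ for $a\in\individuals{\Omc}$, set $R_{ab}^{\Jmc'}=\{(a^\Jmc,b^\Jmc,1^{\Jmc'})\}$ whenever $(R(a,b),v)\in\Omc$, and let $C_{{\sf ran}(R)}^{\Jmc'}$ be the least set containing $\{(a^\Jmc,v^\Jmc)\mid(R(b,a),v)\in\Omc\}$ that is closed under the auxiliary role-inclusion axioms of $\Tmc$ (a finite set, by $\times$-idempotency). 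One checks $\Jmc'\models\Tmc$: the axioms of $\Tmc$ inherited from $\Omc$ hold since $\Jmc\models\Omc$, and each block of auxiliary axioms of $\Tmc_{C_a}$ is satisfied precisely because it encodes a concept or role assertion of $\Omc$ already satisfied by $\Jmc$ (for the $C_{{\sf ran}(R)}$ axioms one unwinds an incoming assertion $(R(b,a),v)\in\Omc$ along a chain of role inclusions and invokes $\Jmc$'s satisfaction of the relevant range restriction). Since $(a_0^{\Jmc'},1^{\Jmc'})$ lies in $C_{a_0}^{\Jmc'}$ and in $\top^{\Jmc'}$, either of the two assumed entailments gives $(a_0^{\Jmc'},\monomial_0^{\Jmc'})\in B^{\Jmc'}$, hence $(a_0^\Jmc,\monomial_0^\Jmc)\in B^\Jmc$; as $\Jmc$ was arbitrary, $\Omc\models(B(a_0),\monomial_0)$.

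For ($\Rightarrow$), I would fix the model $\Imc$ of $\Tmc$ from Lemma~\ref{lem:reductionGCI-assertion3} with its four properties, assume $\Omc\models(B(a_0),\monomial_0)$, and turn $\Imc$ into a model $\Imc^\star$ of $\Omc$ by ``saturating'' the dummy element $\epsilon$. Call an individual \emph{reachable} when $a^\Imc\neq\epsilon$ (by property~1 these are the ones on a directed role path from $a_0$, and $a_0$ is reachable). Let $\Imc^\star$ have the same domain and individual/monomial interpretation as $\Imc$, but add to every concept all pairs $(\epsilon,m)$, add to every role all triples $(\epsilon,\epsilon,m)$, and add, for each $(R(c,d),v)\in\Omc$ with $c$ non-reachable and $d$ reachable, the triple $(\epsilon,d^\Imc,v^\Imc)$ to $R^{\Imc^\star}$, then close the added triples under the role inclusions of $\Omc$. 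To see $\Imc^\star\models\Omc$: the GCIs, RIs and RRs of $\Omc$ already lie in $\Tmc$, hence hold in $\Imc$; elements $\neq\epsilon$ gain no new concept memberships, $\epsilon$ is in everything so all of its obligations are trivial, and the only remaining obligations are the range-restriction and role-inclusion consequences that the new edges impose on a reachable target $d^\Imc$ — but these are already present in $\Imc$ because $\Tmc$ contains $(C_d\sqsubseteq C_{{\sf ran}(R)},v)$ together with the cascade of $(C_{{\sf ran}(\cdot)}\sqsubseteq C_{{\sf ran}(\cdot)},\cdot)$ and $(C_{{\sf ran}(\cdot)}\sqsubseteq A,\cdot)$ axioms, while $C_d^\Imc=\{(d^\Imc,1^\Imc)\}$ by property~2. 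The assertions of $\Omc$ about non-reachable individuals hold trivially at $\epsilon$; those linking a non-reachable to a reachable individual hold via the added cross-edge; and those among reachable individuals hold because properties~2 and~3 give $C_a^\Imc=\{(a^\Imc,1^\Imc)\}$ and $R_{ab}^\Imc=\{(a^\Imc,b^\Imc,1^\Imc)\}$, so the axioms $(C_a\sqsubseteq A,v)$, $(C_a\sqsubseteq\exists R_{ab},1)$, $(R_{ab}\sqsubseteq R,v)$ of $\Tmc$ yield exactly the required facts. As $\Imc^\star$ and $\Imc$ coincide on the $\Omc$-signature at all elements $\neq\epsilon$ and $a_0^\Imc=a_0\neq\epsilon$, the assumption $\Omc\models(B(a_0),\monomial_0)$ gives $(a_0^\Imc,\monomial_0^\Imc)\in B^{\Imc^\star}$, hence $(a_0^\Imc,\monomial_0^\Imc)\in B^\Imc$; property~4 with $a=a_0$, $A=B$, $\monomial=\monomial_0$ then yields $\Tmc\models(C_{a_0}\sqsubseteq B,\monomial_0)$ or $\Tmc\models(\top\sqsubseteq B,\monomial_0)$.

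The main obstacle is confirming that $\Imc^\star$ is genuinely a model of $\Omc$, in particular the interaction between the saturated dummy element and the reachable part along role assertions that point into that part, together with the correct monomial book-keeping along chains of role inclusions ending in a range restriction. This is exactly what the auxiliary concepts $C_{{\sf ran}(R)}$ of $\Tmc$ are designed for: they pre-record inside $\Imc$ every range-restriction consequence carried by an incoming edge and its role-inclusion closure, so that none is missing from the $\Omc$-reduct of $\Imc$; $\times$-idempotency is only needed to keep the relevant monomial representatives finite. The direction ($\Leftarrow$) is comparatively mechanical once the extension $\Jmc'$ is in place.
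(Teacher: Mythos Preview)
Your proposal is correct. The ($\Leftarrow$) direction matches the paper's argument almost exactly: both take an arbitrary model of $\Omc$ and extend it to a model of $\Tmc$ by interpreting the fresh symbols $C_a$, $R_{ab}$, $C_{{\sf ran}(R)}$ appropriately (the paper works contrapositively and defines $C_{{\sf ran}(R)}^{\Jmc}$ as the full range of $R$ in $\Jmc$ rather than the least closed set, but either choice works).

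The ($\Rightarrow$) direction is where you genuinely diverge. The paper argues contrapositively: assuming both $\Tmc$-entailments fail, it takes the Lemma~\ref{lem:reductionGCI-assertion3} model $\Imc$ \emph{together with} an arbitrary disjoint model $\Jmc_0$ of $\Omc$, and glues them into a single model $\Jmc$ of $\Omc$ by sending reachable individuals into $\Imc$ and non-reachable ones into $\Jmc_0$, with explicit cross-edges $(b^{\Jmc_0},a^\Imc,\monomial^\Imc)$ for role assertions whose source is non-reachable. You instead work directly and use only $\Imc$: you saturate the dummy element $\epsilon$ (making it a member of every concept and adding all $(\epsilon,\epsilon,m)$ loops) and add cross-edges from $\epsilon$ to the reachable part. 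Both constructions must handle the same delicate point---the range-restriction obligations that a cross-edge imposes on its reachable target---and both discharge it via the $C_{{\sf ran}(R)}$ axioms of $\Tmc$. Your route is arguably lighter (one model instead of two), but it tacitly relies on $\epsilon$ having no incoming edges in $\Imc$, which does follow from the construction in the proof of Lemma~\ref{lem:reductionGCI-assertion3} but should be stated explicitly. The paper's disjoint-union approach sidesteps this by never reusing $\epsilon$ as the image of any non-reachable individual.
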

\begin{proof}
Assume that $\Omc\not\models(B(a_0),\monomial_0)$. 
Assume for a contradiction that $\Tmc\models (\top\sqsubseteq B,\monomial_0)$. Then $\Omc\models (\top\sqsubseteq B,\monomial_0)$ by the form of the GCIs in $\Tmc\setminus\Omc$. 
It follows that $\Omc\models(B(a),\monomial_0)$ for every $a\in\individuals{\Omc}$, and in particular for $a_0$. Hence $\Tmc\not\models (\top\sqsubseteq B,\monomial_0)$. 
We next show that $\Tmc\not\models (C_{a_0}\sqsubseteq B,\monomial_0)$.

Let $\Imc$ be a model of $\Omc$ such that $\Imc\not\models (B(a_0),\monomial_0)$, i.e. $(a_0^\Imc, \monomial_0^\Imc)\notin B^\Imc$. 
Let $\Jmc$ be the interpretation that extends $\Imc$ with $C_{a}^\Jmc=\{(a^\Imc, 1^\Imc)\}$ for every $a\in\individuals{\Omc}$, $R_{ab}^\Jmc=\{(a^\Imc, b^\Imc, 1^\Imc)\}$ for all $a,b\in\individuals{\Omc}$ and $C_{{\sf ran}(R)}^\Jmc=\{(e,\monomial^\Imc)\mid (d,e,\monomial^\Imc)\in R^\Imc\}$ for every $R\in\NR$. 
Since $(a_0^\Imc, 1^\Imc)\in C_{a_0}^\Jmc$ and $(a_0^\Imc, \monomial_0^\Imc)\notin B^\Jmc$, then $\Jmc\not\models (C_{a_0}\sqsubseteq B,\monomial_0)$. 
We show that $\Jmc$ is a model of $\Tmc$, so that $\Tmc\not\models (C_{a_0}\sqsubseteq B,\monomial_0)$. 

It is clear that $\Jmc$ is a model of every GCI, RI or RR in $\Omc$ since interpretations of concepts and roles that occur in $\Omc$ are not modified.  
Regarding RIs of $\Tmc\setminus\Omc$, let $(R_{ab}\sqsubseteq R, v)\in\Tmc$. By construction of $\Jmc$, $R_{ab}^\Jmc=\{(a^\Imc, b^\Imc, 1^\Imc)\}$  and since $(R(a,b),v)\in\Omc$ and $\Imc$ is a model of $\Omc$, then $(a^\Imc, b^\Imc, v^\Imc)\in R^\Jmc$. Thus $\Jmc\models (R_{ab}\sqsubseteq R, v)$. 
For RRs of $\Tmc\setminus\Omc$, let $({\sf ran}(R_{ab})\sqsubseteq C_b, 1)\in\Tmc$. By construction $R_{ab}^\Jmc=\{(a^\Imc, b^\Imc, 1^\Imc)\}$ and $C_b^\Jmc=\{(b^\Imc, 1^\Imc)\}$ so $\Jmc\models ({\sf ran}(R_{ab})\sqsubseteq C_b, 1)$. 
We now consider the different kinds of GCIs in $\Tmc\setminus\Omc$. 
\begin{itemize}
\item Let $(C_a\sqsubseteq A,v)\in\Tmc\setminus\Omc$ with $A\in\NC$. 
Since $(A(a),v)\in\Omc$ and $\Imc\models \Omc$, then $(a^\Imc, v^\Imc)\in A^\Jmc$. 
Thus, since $C_{a}^\Jmc=\{(a^\Imc, 1^\Imc)\}$, it follows that $\Jmc\models (C_a\sqsubseteq A,v)$.
\item Let $(C_a\sqsubseteq \exists R_{ab},1)\in\Tmc\setminus\Omc$. 
By construction $R_{ab}^\Jmc=\{(a^\Imc, b^\Imc, 1^\Imc)\}$ so $(\exists R_{ab})^\Jmc=\{(a^\Imc, 1^\Imc)\}$ and $\Jmc\models (C_a\sqsubseteq \exists R_{ab},1)$. 
\item Let $(C_a\sqsubseteq C_{{\sf ran}(R)},v)\in\Tmc\setminus\Omc$. 
Since $\Imc\models \Omc$ and there is $(R(b,a), v)\in\Omc$, $(b^\Imc,a^\Imc,v^\Imc)\in R^\Jmc$ so  $(a^\Imc,v^\Imc)\in C_{{\sf ran}(R)}^\Jmc$. Since $C_{a}^\Jmc=\{(a^\Imc, 1^\Imc)\}$, it follows that $\Jmc\models (C_a\sqsubseteq C_{{\sf ran}(R)},v)$.
\item Let $(C_{{\sf ran}(R)}\sqsubseteq C_{{\sf ran}(S)},v)\in\Tmc\setminus\Omc$. 
Let $(e,\monomial^\Imc)\in C_{{\sf ran}(R)}^\Jmc$. There exists $(d,e,\monomial^\Imc)\in R^\Imc$.
Moreover, $(R\sqsubseteq S,v)\in\Omc$ so since $\Imc$ is a model of $\Omc$, $(d,e,(\monomial\times v)^\Imc)\in S^\Imc$. Thus $(e,(\monomial\times v)^\Imc)\in C_{{\sf ran}(S)}^\Imc$. It follows that $\Jmc\models (C_{{\sf ran}(R)}\sqsubseteq C_{{\sf ran}(S)},v)$.
\item Let $(C_{{\sf ran}(R)}\sqsubseteq A,v)\in\Tmc\setminus\Omc$ with $A\in\NC$. 
 Let $(e,\monomial^\Imc)\in C_{{\sf ran}(R)}^\Jmc$. There exists $(d,e,\monomial^\Imc)\in R^\Imc$.
Moreover, $({\sf ran}(R)\sqsubseteq A,v)\in\Omc$ so since $\Imc$ is a model of $\Omc$, $(e,(\monomial\times v)^\Imc)\in A^\Imc=A^\Jmc$. 
It follows that $\Jmc\models (C_{{\sf ran}(R)}\sqsubseteq A,v)$.
\end{itemize}
We conclude that $\Jmc\models\Tmc$, so $\Tmc\not\models (C_{a_0}\sqsubseteq B, \monomial_0)$.\\

In the other direction, assume that $\Tmc\not\models (C_{a_0}\sqsubseteq B,\monomial_0)$ and $\Tmc\not\models (\top\sqsubseteq B,\monomial_0)$, and let $\Imc$ be a model of $\Tmc$ that fulfills the conditions of Lemma \ref{lem:reductionGCI-assertion3}. 
Let $\Jmc_0$ be a model of $\Omc$ such that $\Delta^{\Jmc_0}\cap\Delta^\Imc=\emptyset$, $\Delta^{\Jmc_0}_{\sf m}=\Delta^\Imc_{\sf m}$ and $\monomial^{\Jmc_0}=\monomial^\Imc$ for every $\monomial\in\NM$. Let $\Jmc$ be the interpretation defined as follows:
\begin{itemize}
\item $\Delta^\Jmc=\Delta^{\Jmc_0}\cup\Delta^\Imc$, $\Delta^{\Jmc}_{\sf m}=\Delta^\Imc_{\sf m}$,
\item $a^{\Jmc}=a^\Imc$ for every $a$ such that there is a directed role path from $a_0$ to $a$ in $\Omc$ (including $a_0$),
\item $a^{\Jmc}=a^{\Jmc_0}$ for every $a$ such that there is no directed role path from $a_0$ to $a$ in $\Omc$,
\item $A^\Jmc=A^{\Imc}\cup A^{\Jmc_0}$ for every $A\in\NC$,
\item $R^\Jmc=R^{\Imc}\cup R^{\Jmc_0}\cup\{(b^{\Jmc_0},a^\Imc,\monomial^\Imc)\mid  \Omc\models (R(b,a),\monomial) $ and there is a directed role path from $a_0$ to $a$ but not to $b\}$ for every $R\in\NR$.
\end{itemize}
We show that $\Jmc$ is a model of $\Omc$. Since $(a_0^\Jmc,\monomial_0^\Jmc)=(a_0^\Imc,\monomial_0^\Imc)\notin B^\Imc$ by the properties of $\Imc$ (point 4 of Lemma \ref{lem:reductionGCI-assertion3}, since $\Tmc\not\models (C_{a_0}\sqsubseteq B,\monomial_0)$ and $\Tmc\not\models (\top\sqsubseteq B,\monomial_0)$) and $\Delta^{\Jmc_0}\cap\Delta^\Imc=\emptyset$, it follows that $(a_0^\Jmc,\monomial_0^\Jmc)\notin B^\Jmc$. This will thus show that $\Omc\not\models (B(a_0),\monomial_0)$. 

We start with $\Jmc$ being a model of the assertions of $\Omc$. 
Let $(A(a),v)\in\Omc$. If there is no directed path from $a_0$ to $a$ in $\Omc$, $(a^{\Jmc},v^\Jmc)=(a^{\Jmc_0}, v^{\Jmc_0})\in A^{\Jmc_0}\subseteq A^\Jmc$. 
If there is a directed path from $a_0$ to $a$, $(a^{\Jmc},1^\Jmc)=(a^\Imc, 1^{\Imc})\in C_a^\Imc$ so since $\Imc\models \Tmc$ and $(C_a\sqsubseteq A, v)\in\Tmc$ because $(A(a),v)\in\Omc$, then $(a^{\Jmc},v^\Jmc)=(a^\Imc, v^{\Imc})\in A^{\Imc}\subseteq A^\Jmc$. 
Regarding role assertions, let $(R(a,b),v)\in\Omc$. If there is no directed path from $a_0$ to $a$ nor to $b$, $(a^{\Jmc},b^{\Jmc}, v^\Jmc)=(a^{\Jmc_0},b^{\Jmc_0}, v^{\Jmc_0})\in R^{\Jmc_0}\subseteq R^\Jmc$. 
If there is a directed path from $a_0$ to $a$ (then also to $b$),
$(a^{\Jmc},b^{\Jmc}, v^\Jmc)=(a^\Imc,b^\Imc, v^\Imc)\in R^{\Imc}\subseteq R^\Jmc$ because $R_{ab}^\Imc=\{(a^\Imc,b^\Imc, 1^\Imc)\}$, $(R_{ab}\sqsubseteq R,v)\in\Tmc$ and $\Imc\models\Tmc$. 
Finally, if there is a directed path from $a_0$ to $b$ but not to $a$, $(a^{\Jmc},b^{\Jmc}, v^\Jmc)=(a^{\Jmc_0}, b^\Imc, v^\Imc)\in R^\Jmc$ by construction of $R^\Jmc$. 
We now turn to GCIs, RIs and RRs. 
\begin{itemize}
\item Let $(R\sqsubseteq S,v)\in \Omc$ and $(d, e,\monomial^\Imc)\in R^\Jmc$. The only non trivial case is $(d, e,\monomial^\Imc)=(b^{\Jmc_0}, a^\Imc,\monomial^\Imc)\in R^\Jmc\setminus(R^\Imc\cup R^{\Jmc_0})$. By construction, in this case $\Omc\models (R(b,a),\monomial)$, so it is easy to see that $\Omc\models (S(b,a),\monomial\times v)$. Thus by construction of $S^\Jmc$, $(b^{\Jmc_0}, a^\Imc,(\monomial\times v)^\Imc)\in S^\Jmc$. 
Hence $\Jmc\models (R\sqsubseteq S,v)$. 

\item Let $({\sf ran}(S)\sqsubseteq A,v)\in \Omc$ and $(d, e,\monomial^\Imc)\in S^\Jmc$. Here again, we only need to check the case $(d, e,\monomial^\Imc)=(b^{\Jmc_0}, a^\Imc,\monomial^\Imc)\in S^\Jmc\setminus(S^\Imc\cup S^{\Jmc_0})$. 
Since $\Omc\models (S(b,a),\monomial)$, there must be some $(R(b,a),u)\in\Omc$ such that $\Omc\models (R\sqsubseteq S, \nonomial)$ with $\monomial=u\times\nonomial$. 
Since for every $R_1, R_2$, if $(R_1\sqsubseteq R_2,v_1)\in\Omc$ then $(C_{{\sf ran}(R_1)}\sqsubseteq C_{{\sf ran}(R_2)},v_1)\in\Tmc$, we can show that $\Tmc\models (C_{{\sf ran}(R)}\sqsubseteq C_{{\sf ran}(S)},\nonomial)$. 
Since $C_a^\Imc=\{(a^\Imc, 1^\Imc)\}$ and
$(C_a\sqsubseteq C_{{\sf ran}(R)}, u)\in\Tmc$, it follows that $(a^\Imc, (u\times \nonomial)^\Imc)\in C_{{\sf ran}(S)}^\Imc$. 
Then since $({\sf ran}(S)\sqsubseteq A,v)\in \Omc$, $(C_{{\sf ran}(S)}\sqsubseteq A,v)\in \Tmc$ so $(a^\Imc, (v\times u\times \nonomial)^\Imc)\in A^\Imc$, i.e. $(a^\Imc, (v\times \monomial)^\Imc)\in A^\Imc\subseteq A^\Jmc$. 
It follows that $\Jmc\models ({\sf ran}(S)\sqsubseteq A,v)$.
\item For GCIs of the form $(A\sqsubseteq B, v)$, $(A_1\sqcap A_2\sqsubseteq B, v)$, or $(A\sqsubseteq \exists R, v)$, the result is straightforward.
\item Let $(\exists R.A\sqsubseteq B,v)\in\Omc$ and  $(d,\monomial^\Imc)\in (\exists R.A)^\Jmc$. 
There exists $(d,e,\nonomial^\Imc)\in R^\Jmc$ such that $(e,\onomial^\Imc)\in A^\Jmc$ and $\monomial=\nonomial\times\onomial$. The non trivial case is $(d, e,\nonomial^\Imc)=(b^{\Jmc_0}, a^\Imc,\nonomial^\Imc)\in R^\Jmc\setminus(R^\Imc\cup R^{\Jmc_0})$. 
Since $(a^\Imc,\onomial^\Imc)\in A^\Imc$, it follows from the definition of $\Imc$ that either (i) $\Tmc\models (C_a\sqsubseteq A, \onomial)$ or (ii) $\Tmc\models (\top\sqsubseteq A, \onomial)$. In case (i), by construction of $\Tmc$, there is $(A'(a),u)\in\Omc$ such that $\Omc\models (A'\sqsubseteq A, \onomial')$ and $\onomial=u\times\onomial'$, so that $\Omc\models (A(a), u\times o')$ i.e. $\Omc\models (A(a),o)$. 
In case (ii) it is clear that $\Omc\models (A(a),o)$. 
By construction of $\Jmc$, $\Omc\models (R(b,a),\nonomial)$. Thus $\Omc\models (\exists R.A(b),\nonomial\times o)$, i.e. $\Omc\models (\exists R.A(b),\monomial)$ so $\Omc\models (B(b),v\times\monomial)$. It follows that $(b^{\Jmc_0}, (v\times\monomial)^\Imc)\in B^{\Jmc_0}\subseteq B^\Jmc$.
\end{itemize} 
We conclude that $\Jmc\models\Omc$ and $\Omc\not\models (B(a_0),\monomial_0)$.
\end{proof}

Regarding role assertions, we have the following proposition.

\begin{proposition}
$\Omc\models(R(a_0,b_0),\monomial_0)$ iff $  \Tmc_{R}\models (S\sqsubseteq R,\monomial_0)$ where $S$ is a fresh role name and $\Tmc_{R}=\{(S\sqsubseteq P,\variable)\mid (P(a_0,b_0),\variable)\in\Omc\}\cup\{(P_1\sqsubseteq P_2,\variable)\mid (P_1\sqsubseteq P_2,\variable)\in\Omc\}$.
\end{proposition}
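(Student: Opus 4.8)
The plan is to reduce both sides to entailment of a role \emph{assertion} and then exploit that, in the completion algorithm of Theorem~\ref{prop:completionalgorithm}, role assertions over a fixed pair of individuals and role inclusions evolve completely independently of the rest of the ontology. One direction is a direct model construction: given a model $\Imc$ of $\Omc$, extend it to $\Jmc$ by setting $S^\Jmc:=\{(a_0^\Imc,b_0^\Imc,1^\Imc)\}$ and leaving every other symbol as in $\Imc$. Then $\Jmc\models\Tmc_R$: the role inclusions of $\Omc$ are inherited from $\Imc$, and each axiom $(S\sqsubseteq P,v)\in\Tmc_R$, which comes from some $(P(a_0,b_0),v)\in\Omc$, holds because the unique $S$-triple forces $(a_0^\Imc,b_0^\Imc,v^\Imc)\in P^\Imc$, and this is true since $\Imc\models(P(a_0,b_0),v)$. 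Hence $\Jmc\models(S\sqsubseteq R,\monomial_0)$, and applying this to the $S$-triple gives $(a_0^\Imc,b_0^\Imc,\monomial_0^\Imc)\in R^\Imc$; as $\Imc$ was arbitrary, $\Tmc_R\models(S\sqsubseteq R,\monomial_0)$ implies $\Omc\models(R(a_0,b_0),\monomial_0)$.

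For the converse — in fact for a uniform treatment of both directions — I would first invoke the role-inclusion-to-assertion reduction established earlier to restate $\Tmc_R\models(S\sqsubseteq R,\monomial_0)$ as $\Omc^\ast\models(R(a_0,b_0),\monomial_0)$, where $\Omc^\ast:=\Tmc_R\cup\{(S(a_0,b_0),1)\}$ (admissible since $a_0,b_0$ do not occur in $\Tmc_R$). Then I would run the completion algorithm on $\Omc$ and on $\Omc^\ast$ and compare the annotated role assertions over $(a_0,b_0)$ appearing in the two saturations. The key point is that the only completion rule producing a role assertion is $\CR_{12}$ (from a role assertion and a role inclusion) and the only one producing a role inclusion is $\CR_1$ (from two role inclusions), while $\CR_0$ merely adds trivial $(R\sqsubseteq R,1)$; no rule using a GCI, a range restriction, a concept assertion, or a role assertion over a different individual pair can ever produce such facts. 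By an induction on the saturation process one then obtains: (i) the role inclusions in either saturation are exactly the $(P\sqsubseteq Q,\representative{v_1\times\cdots\times v_n})$ for which there is a $\sqsubseteq$-chain $P=P_0\sqsubseteq\cdots\sqsubseteq P_n=Q$ of axioms of the ontology in question; (ii) the role assertions over $(a_0,b_0)$ in the saturation of $\Omc$ are exactly the $(Q(a_0,b_0),\representative{v_0\times\cdots\times v_n})$ with $(P_0(a_0,b_0),v_0)\in\Omc$ and such a chain $P_0\sqsubseteq\cdots\sqsubseteq P_n=Q$; and (iii) those in the saturation of $\Omc^\ast$ are exactly what $\CR_{12}$ produces from the single seed $(S(a_0,b_0),1)$ along chains $S\sqsubseteq P_0\sqsubseteq\cdots\sqsubseteq Q$, i.e.\ the $(Q(a_0,b_0),\representative{v_0\times\cdots\times v_n})$ with $(P_0(a_0,b_0),v_0)\in\Omc$ and a chain $P_0\sqsubseteq\cdots\sqsubseteq Q$ among $\Omc$'s role inclusions. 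As $S$ occurs in $\Omc^\ast$ only on the left of the axioms $S\sqsubseteq P_0$ (so a chain starting at $S$ cannot return to $S$) and the remaining role inclusions of $\Omc^\ast$ are precisely those of $\Omc$, the descriptions in (ii) and (iii) coincide; hence $(R(a_0,b_0),\representative{\monomial_0})$ is in the saturation of $\Omc$ iff it is in that of $\Omc^\ast$, and Theorem~\ref{prop:completionalgorithm} converts this into $\Omc\models(R(a_0,b_0),\monomial_0)$ iff $\Omc^\ast\models(R(a_0,b_0),\monomial_0)$ iff $\Tmc_R\models(S\sqsubseteq R,\monomial_0)$.

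The step I expect to be the main obstacle is making the structural claim about the saturation precise: one must check, by a case analysis over all the completion rules, that a role assertion over $(a_0,b_0)$ or a role inclusion can never be produced with the help of a GCI, a range restriction, a concept assertion, or a role assertion over another pair — so that discarding all of these when forming $\Tmc_R$ loses nothing for the entailments at hand — and one must state the inductive invariant about $\sqsubseteq$-chains carefully, keeping track of representatives under $\times$-idempotency and of the fact that a chain of role inclusions in $\Omc^\ast$ that starts at $S$ must first use an $S\sqsubseteq P_0$ axiom. None of this is deep, but the bookkeeping needs care. Alternatively, if one wishes to avoid the completion algorithm in the converse direction, the same structural fact can be read off the canonical model $\Imc_\Omc$ of Proposition~\ref{prop:model}: there the only edges between the named individuals $a_0$ and $b_0$ arise from the seed interpretation $\cdot^{\Imc^0_\Omc}$ closed under rule~\ref{r3}, because rule~\ref{r2} always links an element to a \emph{fresh} element of $\auxs{\Omc}$; hence the $a_0$--$b_0$ role structure of $\Imc_\Omc$, and therefore which annotated role assertions over $(a_0,b_0)$ are entailed, depends only on $\Omc$'s role assertions over $(a_0,b_0)$ and its role inclusions, which is exactly the data kept in $\Tmc_R$.
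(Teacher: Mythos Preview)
Your argument is correct; the $(\Leftarrow)$ direction matches the paper's (you argue directly, they by contrapositive, via the same extension $S^\Jmc:=\{(a_0^\Imc,b_0^\Imc,1^\Imc)\}$). For $(\Rightarrow)$ you take a genuinely different route. The paper gives a self-contained semantic countermodel: assuming no chain $(P_0(a_0,b_0),v_0),(P_0\sqsubseteq P_1,v_1),\ldots,(P_{k}\sqsubseteq R,v_k)$ in $\Omc$ has product $\representative{\monomial_0}$, it takes an arbitrary model $\Imc\models\Omc$, adds a fresh domain element $e$ that duplicates $b_0^\Imc$ (same concept memberships, same incoming and outgoing role edges), and then deletes from each $P^\Jmc$ precisely those triples $(a_0^\Imc,b_0^\Imc,\monomial^\Imc)$ that are \emph{not} justified by such a chain; the duplicate $e$ keeps axioms of the form $(A\sqsubseteq\exists P,u)$ satisfied after the deletion, and one checks that the result is still a model of $\Omc$ but fails $(R(a_0,b_0),\monomial_0)$. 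You instead derive the same chain characterisation syntactically from the saturation of Theorem~\ref{prop:completionalgorithm}, exploiting that only $\CR_0$, $\CR_1$, $\CR_{12}$ ever produce role inclusions or role assertions (and that $\CR_{12}$ preserves the individual pair). This is sound---there is no circularity, since the proof of Theorem~\ref{prop:completionalgorithm} does not rely on Theorem~\ref{prop:reductions}---and arguably tidier bookkeeping, at the cost of importing the completion machinery (and your canonical-model aside would likewise import Section~\ref{sec:combined}) into a place where the paper deliberately kept the reductions elementary and independent of later results.
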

\begin{proof}
Assume that $\Omc\not\models(R(a_0,b_0),\monomial_0)$ and let $\Imc$ be a model of $\Omc$ such that $\Imc\not\models(R(a_0,b_0),\monomial_0)$. Let $\Jmc$ the interpretation that extends $\Imc$ with $S^\Jmc=\{(a_0^\Imc,b_0^\Imc,1^\Imc)\}$. Since $\Imc$ is a model of $\Omc$, it is easy to see that $\Jmc$ is a model of $\Tmc_R$ (in particular, for every $(S\sqsubseteq P,\variable)\in\Tmc_R$, $(P(a_0,b_0),\variable)\in\Omc$ so $(a_0^\Imc,b_0^\Imc,\variable^\Imc)\in P^\Jmc$). 
Moreover, $\Jmc\not\models (S\sqsubseteq R,\monomial_0)$ so $\Tmc_R\not\models (S\sqsubseteq R,\monomial_0)$. 

In the other direction, assume that $\Omc\models(R(a_0,b_0),\monomial_0)$. 
If $(R(a_0,b_0),\monomial_0)\in\Omc$, then $(S\sqsubseteq R,\monomial_0)\in\Tmc_R$. 
Otherwise, we show that there must be $(P_1(a_0,b_0),\variable)\in\Omc$ and $(P_1\sqsubseteq P_2,\variable_1)\in\Omc$, \dots, $(P_{k}\sqsubseteq R,\variable_k)\in\Omc$ with $\representative{\variable\times\variable_1\times\dots\times\variable_k}=\representative{\monomial_0}$, which implies that $(S\sqsubseteq P_1,\variable)$, $(P_1\sqsubseteq P_2,\variable_1)$, \dots, $(P_{k}\sqsubseteq R,\variable_k)$ are in $\Tmc_R$, and that $\Tmc_R\models (S\sqsubseteq R, \monomial_0)$. 
Assume to the contrary that this is not the case 
and let $\Imc$ be a model of $\Omc$. Let 
$\Jmc$ be the interpretation with domain $\Delta^\Jmc:=\Delta^\Imc\cup\{e\}$,
where $e\notin\Delta^\Imc$, and the function $\cdot^\Jmc$ defined for concept/individual names and monomials  as follows: 
\begin{itemize}
\item $a^\Jmc=a^\Imc$ for all $a\in\NI$, $\monomial^\Jmc=\monomial^\Imc$ for all $\monomial\in\NM$, 
\item $A^\Jmc=A^\Imc\cup\{(e,\monomial^\Jmc)\mid (b_0^\Imc,\monomial^\Imc)\in A^\Imc\}$ for all $A\in\NC$.  
\end{itemize}
To define $P^\Jmc$, with $P\in\NR$, we first define $P_{-}$  as the set
of tuples $(a_0^\Imc,b_0^\Imc,\monomial^\Imc)$ such that 
$(a_0^\Imc,b_0^\Imc,\monomial^\Imc)\in P^\Imc$ 
and  there is no $(P_1(a_0,b_0),\variable),$ $(P_1\sqsubseteq P_2,\variable_1), $
$\dots$, $(P_{k}\sqsubseteq P,\variable_k)\in\Omc \text{ such that }\representative{\variable\times\variable_1\times\dots\times\variable_k}
=\representative{\monomial}$. 
We   define 
 $P^\Jmc$ as 
 $$(P^\Imc\cup\{(e,f,\monomial^\Jmc)\mid (b_0^\Imc,f,\monomial^\Imc)\in P^\Imc\}\
 \cup$$$$\{(f,e,\monomial^\Jmc)\mid (f,b_0^\Imc,\monomial^\Imc)\in P^\Imc\})\setminus P_{-}.$$

By construction, $\Jmc\not\models (R(a_0,b_0),\monomial_0)$. 
We show that $\Jmc$ is a model of $\Omc$, contradicting $\Omc\models(R(a_0,b_0),\monomial_0)$. 
Since $\Imc\models\Omc$ and $(P(a_0,b_0),\variable)\in\Omc$ implies that $(a_0^\Imc,b_0^\Imc,\variable^\Imc)\notin P_{-}$, 
then $\Jmc$ satisfies all assertions of $\Omc$. 
It is also easy to check that $\Jmc$ satisfies all GCIs and RRs of $\Omc$ (note that in the case of GCIs of the form $(A\sqsubseteq \exists P, u)$, if $\Imc$ uses an element of the form $(a_0^\Imc, b_0^\Imc,\monomial^\Imc)$ to satisfy such GCI, then $\Jmc$ can use $(a_0^\Jmc, e,\monomial^\Jmc)$). 
Let $(P\sqsubseteq P',u)\in\Omc$ and $(d,d',\monomial^\Jmc)\in P^\Jmc$. 
If $d\neq a_0^\Imc$ or $d'\neq b_0^\Imc$, it is clear that $(d,d',(u\times\monomial)^\Jmc)\in P'^\Jmc$. Otherwise, 
since $(a_0^\Jmc,b_0^\Jmc,\monomial^\Jmc)\in P^\Jmc$, then there exist $P_1(a_0,b_0),\variable), (P_1\sqsubseteq P_2,\variable_1), \dots, (P_{k}\sqsubseteq P,\variable_k)\in\Omc$  such that $\representative{\variable\times\variable_1\times\dots\times\variable_k}=\representative{\monomial}$. Since $(P\sqsubseteq P',u)\in\Omc$, it follows that $(a_0^\Jmc,b_0^\Jmc,(u\times \monomial)^\Jmc)\notin P'_{-}$. Hence $(a_0^\Jmc,b_0^\Jmc,(u\times \monomial)^\Jmc)\in P'^\Jmc$. 
Thus $\Jmc$ is a model of $\Omc$ and $\Omc\not\models(R(a_0,b_0),\monomial_0)$, which closes the argument. 
\end{proof}

\Correctness*
\begin{proof}
(1) Each application of a completion rule adds an annotated axiom of the form $(\alpha,\monomial)$ where $\alpha$ is a GCI, an RI, an RR or an assertion built from the concept, role, and individual names that occur in $\Omc$  and such that $\alpha$ contains at most three such names, and $\monomial$ is a product of at most $k$ variables that occur in $\Omc$. The number of such annotated axioms is polynomial in the size of $\Omc$ and exponential in $k$, so $\Omc_k$ can be computed in polynomial time w.r.t.\ the size of $\Omc$, and in exponential time w.r.t.\ $k$.

\noindent (2) 
($\Leftarrow$) For the ``if" direction, we show a stronger property: for every $k$, every assertion, GCI, RI, or RR $\alpha$ and every momonial $\monomial$, if $(\alpha, \representative{\monomial})\in\Omc_k$, then $\Omc\models (\alpha,\monomial)$. 
The proof is by induction on the number $i$ of completion steps that have been applied before adding $(\alpha, \representative{\monomial})$ in $\Omc_k$. 
In the case $i=0$, $(\alpha, \representative{\monomial})\in\Omc$ so clearly $\Omc\models (\alpha,\monomial)$. 
Assume that the property is true for all $i\leq N$ and consider $(\alpha, \representative{\monomial})\in\Omc_k$ such that $(\alpha, \representative{\monomial})$ has been added in $\Omc_k$ at step $N+1$. We have 17 possible cases depending on which completion rule has been applied in step $N+1$.
\begin{description}
\item[$\CR_0$] $(\alpha, \representative{\monomial})=(X\sqsubseteq X, 1)$ for some $X\in\NC\cup\NR$ or  $(\alpha, \representative{\monomial})=(\top\sqsubseteq \top, 1)$ so $\Omc\models (\alpha, \monomial)$ trivially.
\item[$\CR_1$] $(\alpha, \representative{\monomial})=(R_1\sqsubseteq R_3,\representative{\monomial_1\times \monomial_2})$ and the induction hypothesis applies to $(R_1\sqsubseteq R_2,\monomial_1)$ and  $(R_2\sqsubseteq R_3,\monomial_2)$ so that $\Omc\models (R_1\sqsubseteq R_2,\monomial_1)$ and $\Omc\models (R_2\sqsubseteq R_3,\monomial_2)$. 
It follows from the definition of annotated interpretations that $\Omc\models(R_1\sqsubseteq R_3,\monomial_1\times \monomial_2)$.

\item[$\CR_2$] $(\alpha, \representative{\monomial})=({\sf ran}(R)\sqsubseteq A,\representative{\monomial_1\times \monomial_2})$ and the IH applies to $(R\sqsubseteq S,\monomial_1)$ and $({\sf ran}(S)\sqsubseteq A,\monomial_2)$. Let $\Imc$ be a model of $\Omc$ and $(e,d,\nonomial^\Imc)\in R^\Imc$. Since $\Omc\models (R\sqsubseteq S,\monomial_1)$, then $(e,d,(\nonomial\times\monomial_1)^\Imc)\in S^\Imc$, and since $\Omc\models ({\sf ran}(S)\sqsubseteq A,\monomial_2)$, it follows that $(d,(\nonomial\times\monomial_1\times\monomial_2)^\Imc)\in A^\Imc$. Thus $\Omc\models ({\sf ran}(R)\sqsubseteq A,\monomial_1\times \monomial_2)$.

\item[$\CR_3$, $\CR_4$ and $\CR_5$] are analogous to $\CR_1$. 

\item[$\CR_{6}$] $(\alpha, \representative{\monomial})=(A\sqsubseteq C,\representative{\monomial_1\times \monomial_2\times \monomial_3})$ and the IH applies to $(A\sqsubseteq B_1,\monomial_1)$, $(A\sqsubseteq B_2,\monomial_2)$, and $(B_1\sqcap B_2\sqsubseteq C,\monomial_3)$. It follows from the definition of annotated interpretations and the fact that the semiring is idempotent that $\Omc\models(A\sqsubseteq C,\monomial_1\times \monomial_2\times\monomial_3)$. Indeed, if $\Imc$ is a model of $\Omc$ and $(e,\nonomial^\Imc)\in A^\Imc$, then $(e,(\nonomial\times\monomial_1)^\Imc)\in B_1^\Imc$ and $(e,(\nonomial\times\monomial_2)^\Imc)\in B_2^\Imc$ so $(e,(\nonomial\times\monomial_1\times \nonomial\times\monomial_2)^\Imc)\in (B_1\sqcap B_2)^\Imc$, i.e. $(e, (\nonomial\times\monomial_1\times\monomial_2)^\Imc)\in (B_1\sqcap B_2)^\Imc$. Thus $(e, (\nonomial\times\monomial_1\times\monomial_2\times\monomial_3)^\Imc)\in C^\Imc$.

\item[$\CR_{7}$] $(\alpha, \representative{\monomial})=({\sf ran}(R)\sqsubseteq C,\representative{\monomial_1\times \monomial_2\times \monomial_3\times \monomial_4\times \monomial_5})$ and the IH applies to $({\sf ran}(R)\sqsubseteq B_1,\monomial_1)$, $({\sf ran}(R)\sqsubseteq B_2,\monomial_2)$, $(B_1\sqsubseteq C_1,\monomial_3)$, $(B_2\sqsubseteq C_2,\monomial_4)$, and $(C_1\sqcap C_2\sqsubseteq C,\monomial_5)$. It follows from the definition of annotated interpretations and the fact that the semiring is idempotent that $\Omc\models({\sf ran}(R)\sqsubseteq C,\monomial_1\times \monomial_2\times\monomial_3\times \monomial_4\times \monomial_5)$. Indeed, if $\Imc$ is a model of $\Omc$ and $(e,\nonomial^\Imc)\in {\sf ran}(R)^\Imc$, then $(e,(\nonomial\times\monomial_1)^\Imc)\in B_1^\Imc$ and $(e,(\nonomial\times\monomial_2)^\Imc)\in B_2^\Imc$ so 
$(e,(\nonomial\times\monomial_1\times\monomial_3)^\Imc)\in C_1^\Imc$ and $(e,(\nonomial\times\monomial_2\times\monomial_4)^\Imc)\in C_2^\Imc$. 
Hence $(e,(\nonomial\times\monomial_1\times\monomial_2\times\monomial_3\times\monomial_4)^\Imc)\in (C_1\sqcap C_2)^\Imc$. Thus $(e, (\nonomial\times\monomial_1\times\monomial_2\times\monomial_3\times \monomial_4\times \monomial_5)^\Imc)\in C^\Imc$.

\item[$\CR_{8}$] $(\alpha, \representative{\monomial})=(A\sqsubseteq C,\representative{\monomial_1\times\monomial_2})$ and the induction hypothesis applies to $(A\sqcap B\sqsubseteq C,\monomial_1), (\top\sqsubseteq B, \monomial_2)\in\Omc$. 
Let $\Imc$ be a model of $\Omc$ and $(e,\nonomial^\Imc)\in A^\Imc$. By definition of the interpretation of $\top$, $(e, 1^\Imc)\in\top^\Imc$ so $(e, \monomial_2^\Imc)\in B^\Imc$. It follows that  $(e, (\nonomial\times \monomial_2)^\Imc)\in(A\sqcap B)^\Imc$. Hence $(e, (\nonomial\times \monomial_2\times \monomial_1)^\Imc)\in C^\Imc$. 

\item[$\CR_{9}$] $(\alpha, \representative{\monomial})=(A\sqsubseteq D,\representative{\monomial_1\times \monomial_2\times \monomial_3\times\monomial_4\times\monomial_5})$ and the induction hypothesis applies to $(A\sqsubseteq \exists S,\monomial_1),({\sf ran}(S)\sqsubseteq B,\monomial_2),(B\sqsubseteq C,\monomial_3),(S\sqsubseteq R,\monomial_4),(\exists R.C\sqsubseteq D,\monomial_5)$. 
Let $\Imc$ be a model of $\Omc$ and $(e,\nonomial^\Imc)\in A^\Imc$. 
Since $\Omc\models (A\sqsubseteq \exists S,\monomial_1)$, there exists  $(e,d,(\nonomial\times\monomial_1)^\Imc)\in S^\Imc$. Since 
$\Omc\models ({\sf ran}(S)\sqsubseteq B,\monomial_2)$, it follows that $(d,(\nonomial\times\monomial_1\times\monomial_2)^\Imc)\in B^\Imc$. Since $\Omc\models (B\sqsubseteq C,\monomial_3)$, then $(d,(\nonomial\times\monomial_1\times\monomial_2\times \monomial_3)^\Imc)\in C^\Imc$. 
Since $(e,d,(\nonomial\times\monomial_1)^\Imc)\in S^\Imc$ and $\Omc\models (S\sqsubseteq R,\monomial_4)$, then $(e,d,(\nonomial\times\monomial_1\times\monomial_4)^\Imc)\in R^\Imc$. 
Finally, since $\Omc\models (\exists R.C\sqsubseteq D,\monomial_5)$ and $(e,(\nonomial\times\monomial_1\times\monomial_2\times \monomial_3\times \monomial_4)^\Imc) \in (\exists R.C)^\Imc$ (since the semiring is idempotent), then $(e,(\nonomial\times\monomial_1\times\monomial_2\times \monomial_3\times \monomial_4\times \monomial_5)^\Imc) \in D^\Imc$. Thus $\Omc\models (A\sqsubseteq D,\monomial_1\times \monomial_2\times \monomial_3\times\monomial_4\times \monomial_5)$.

\item[$\CR_{10}$] $(\alpha, \representative{\monomial})=(A\sqsubseteq C,\representative{\monomial_1\times \monomial_2\times\monomial_3})$ and the induction hypothesis applies to $(A\sqsubseteq \exists R,\monomial_1)$, $(\top\sqsubseteq B,\monomial_2)$ and $(\exists R.B\sqsubseteq C,\monomial_3)$. 
Let $\Imc$ be a model of $\Omc$ and $(e,\nonomial^\Imc)\in A^\Imc$. 
Since $\Omc\models (A\sqsubseteq \exists R,\monomial_1)$, there exists $(e,d,(\nonomial\times\monomial_1)^\Imc)\in R^\Imc$. Since 
 $(d,1^\Imc)\in \top^\Imc$ by definition of the interpretation of $\top$,  then $(d,\monomial_2^\Imc)\in B^\Imc$. 
 Hence $(e,(\nonomial\times\monomial_1\times\monomial_2)) \in (\exists R.B)^\Imc$. It follows that $(e,(\nonomial\times\monomial_1\times\monomial_2\times\monomial_3)) \in C^\Imc$. Thus $\Omc\models (A\sqsubseteq C,\monomial_1\times \monomial_2\times\monomial_3)$.

\item[$\CR_{11}$] $(\alpha, \representative{\monomial})=(\top(a),1)$ so by definition of the interpretation of $\top$, $\Omc\models (\alpha, \monomial)$.

\item[$\CR_{12}$] $(\alpha, \representative{\monomial})=(S(a,b),\representative{\monomial_1\times \monomial_2})$  and the  induction hypothesis applies to $(R(a,b),\monomial_1),(R\sqsubseteq S,\monomial_2)$. It is easy to see that it follows from the definition of annotated interpretations that $\Omc\models (S(a,b),\representative{\monomial_1\times \monomial_2})$.

\item[$\CR_{13}$] is similar to $\CR_{12}$.

\item[$\CR_{14}$] $(\alpha, \representative{\monomial})=(B(a),\representative{\monomial_1\times \monomial_2\times \monomial})$ and the  induction hypothesis applies to $(A_1(a),\monomial_1)$, $(A_2(a),\monomial_2)$, and $(A_1\sqcap A_2\sqsubseteq B,\monomial)$. 
It follows from the definition of annotated interpretations that $\Omc\models(B(a),\representative{\monomial_1\times \monomial_2\times \monomial})$.

\item[$\CR_{15}$] $(\alpha, \representative{\monomial})=(B(a),\representative{\monomial_1\times \monomial_2\times \monomial_3})$ and the  induction hypothesis applies to $(R(a,b),\monomial_1)$, $(A(b),\monomial_2)$, and $(\exists R.A\sqsubseteq B,\monomial_3)$. The definition of annotated interpretations implies that $\Omc\models(B(a),\representative{\monomial_1\times \monomial_2\times \monomial_3})$.

\item[$\CR_{16}$] $(\alpha, \representative{\monomial})=(A(b),\representative{\monomial_1\times \monomial_2})$ and the  induction hypothesis applies to $(R(a,b),\monomial_1)$ and $({\sf ran}(R)\sqsubseteq A,\monomial_2)$. For every model $\Imc$ of $\Omc$, $(a^\Imc,b^\Imc,\monomial_1^\Imc)\in R^\Imc$, so $(b^\Imc,(\monomial_1\times \monomial_2)^\Imc)\in A^\Imc$, so $\Omc\models (A(b),\representative{\monomial_1\times \monomial_2}) $.
\end{description}
($\Rightarrow$) We show the ``only if" direction by contrapositive: 
assuming that $(\alpha, \representative{\monomial_k})\notin \Omc_k$, we construct a model $\Imc$ of $\Omc$ such that $\Imc\not\models(\alpha,\monomial_k)$. 
Let $\Imc$ be defined as the union of interpretations $\Imc_0,\Imc_1,\dots$ defined as follows: 
start with $\Delta^{\Imc_0}=\individuals{\Omc}$ and $A^{\Imc_0}=\{(a,\nonomial)\mid (A(a),\nonomial)\in\Omc_k\}$ for every $A\in\NC$ and $R^{\Imc_0}=\{(a,b,\nonomial)\mid (R(a,b),\nonomial)\in\Omc_k\}$ for every $R\in\NR$. Then we complete the interpretation using the following rules, so that $\Imc_{i+1}$ results from applying one of the rules to $\Imc_i$. 
Note that in the following rules, $A$ and $A'$ are concept names (we treat the corresponding GCIs with $\top$ separately to keep the proof simple by limiting the number of cases treated by each rule).
\begin{enumerate}

\item If $(R\sqsubseteq S, \nonomial) \in \Omc_k$, $(d,e, \onomial)\in R^{\Imc_i}$, and $(d,e, \representative{\onomial\times \nonomial})\notin S^{\Imc_i}$, then $S^{\Imc_{i+1}}=S^{\Imc_i}\cup\{(d,e, \representative{\onomial\times \nonomial})\}$.

\item If $({\sf ran}(R)\sqsubseteq B, \nonomial)\in \Omc_k$,  $(d,e, \onomial)\in R^{\Imc_i}$ and  $(e, \representative{\onomial\times \nonomial})\notin B^{\Imc_i}$, then $B^{\Imc_{i+1}}=B^{\Imc_i}\cup\{(e, \representative{\onomial\times \nonomial})\}$.

\item If $(A\sqsubseteq B, \nonomial) \in \Omc_k$, $(d, \onomial)\in A^{\Imc_i}$, and $(d, \representative{\onomial\times \nonomial})\notin B^{\Imc_i}$, then $B^{\Imc_{i+1}}=B^{\Imc_i}\cup\{(d, \representative{\onomial\times \nonomial})\}$. 

\item If $(\top\sqsubseteq B,\nonomial)\in\Omc_k$ or $(\top\sqcap\top \sqsubseteq B,\nonomial)\in\Omc_k$, $e\in\Delta^{\Imc_i}$ and $(e, \nonomial)\notin B^{\Imc_i}$ then $B^{\Imc_{i+1}}=B^{\Imc_i}\cup\{(e, \representative{\nonomial})\}$.

\item If $(A\sqcap A'\sqsubseteq B, \nonomial) \in \Omc_k$, $(d, \onomial)\in A^{\Imc_i}$, $(d, \onomial')\in A'^{\Imc_i}$ and $(d, \representative{\onomial\times \onomial'\times \nonomial})\notin B^{\Imc_i}$, then $B^{\Imc_{i+1}}=B^{\Imc_i}\cup\{(d, \representative{\onomial\times \onomial'\times \nonomial})\}$.

\item If $(A\sqcap \top\sqsubseteq B, \nonomial) \in \Omc_k$ or $(\top\sqcap A\sqsubseteq B, \nonomial) \in \Omc_k$, $(d, \onomial)\in A^{\Imc_i}$, and $(d, \representative{\onomial\times \nonomial})\notin B^{\Imc_i}$, then $B^{\Imc_{i+1}}=B^{\Imc_i}\cup\{(d, \representative{\onomial\times \nonomial})\}$.

\item If $(\exists R.\top\sqsubseteq B, \nonomial)\in \Omc_k$, $(d,e, \onomial)\in R^{\Imc_i}$, and $(d, \representative{\onomial\times \nonomial})\notin B^{\Imc_i}$, then $B^{\Imc_{i+1}}=B^{\Imc_i}\cup\{(d, \representative{\onomial\times \nonomial})\}$.

\item If $(\exists R.A\sqsubseteq B, \nonomial)\in \Omc_k$, $(d,e, \onomial)\in R^{\Imc_i}$, $(e, \onomial')\in A^{\Imc_i}$ and $(d, \representative{\onomial\times \onomial'\times \nonomial})\notin B^{\Imc_i}$, then $B^{\Imc_{i+1}}=B^{\Imc_i}\cup\{(d, \representative{\onomial\times \onomial'\times \nonomial})\}$.

\item If 
$(A\sqsubseteq \exists R, \nonomial)\in \Omc_k$, $(d, \onomial)\in A^{\Imc_i}$, and there is no $e$  such that $(d,e, \representative{\onomial\times \nonomial})\in R^{\Imc_i}$ then $\Delta^{\Imc_{i+1}}=\Delta^{\Imc_i}\cup\{x\}$ and $R^{\Imc_{i+1}}=R^{\Imc_i}\cup\{(d, x, \representative{\onomial\times \nonomial})\}$ where $x\notin\Delta^{\Imc_i}$ is a fresh domain element.

\item If 
$(\top\sqsubseteq \exists R, \nonomial)\in \Omc_k$ $d\in\Delta^{\Imc_i}$and there is no $e$  such that $(d,e, \representative{\nonomial})\in R^{\Imc_i}$ then $\Delta^{\Imc_{i+1}}=\Delta^{\Imc_i}\cup\{x\}$ and $R^{\Imc_{i+1}}=R^{\Imc_i}\cup\{(d, x, \representative{\nonomial})\}$ where $x\notin\Delta^{\Imc_i}$ is a fresh domain element.
\end{enumerate}

Since $\Imc_0$ satisfies all assertions of $\Omc_k$ and since all GCIs, RIs and RRs of $\Omc_k$ are of one of the previous forms 
($\Omc$ and all the axioms introduced by the completion algorithm are in normal form; except for $(\top\sqsubseteq\top, 1)$ which is trivially satisfied anyway), by construction $\Imc$ is a model of $\Omc_k$ and thus also of $\Omc\subseteq \Omc_k$.

We show by induction that for every $i$, for every annotated assertion $(\beta,\monomial)$ built from constants that occur in $\Omc$ and such that $\monomial$ contains at most $k$ variables, if $(\beta, \representative{\monomial})\notin \Omc_k$, then $\Imc_i\not\models (\beta,\monomial)$. 
For $i=0$, $(\beta, \representative{\monomial})\notin \Omc_k$ implies $\Imc_0\not\models (\beta,\monomial)$ by construction of $\Imc_0$ (note that all assertions of $\Omc_k$ are annotated with representatives since the assertions of $\Omc$ are annotated with variables which are their own representatives). 

Assume that the property is true for some $i\geq 0$ and let $(\beta,\monomial)$ be such that $\monomial$ contains at most $k$ variables and $(\beta, \representative{\monomial})\notin \Omc_k$. Assume for a contradiction that $\Imc_{i+1}\models (\beta,\monomial)$. 
Since $\beta$ contains only individual names that occur in $\Omc$, it follows that $\beta$ is of the form $S(a,b)$ or $B(a)$ with $S\in\NR$, $B\in\NC$ and $a,b\in\individuals{\Omc}$. 
Thus, since $\Imc_{i}\not\models (\beta,\monomial)$ by the induction hypothesis, it follows  that $\Imc_{i+1}$ has been obtained from $\Imc_i$ by applying a rule from cases 1 to 8 (since the tuples added by cases 9 and 10 involve at least one domain element $x\in\Delta^\Imc\setminus\individuals{\Omc}$). 
We next show that in every case, $(\beta,\representative{\monomial})\in\Omc_k$. 
\begin{itemize}
\item In case 1, $(\beta,\representative{\monomial})=(S(a,b),\representative{\onomial\times\nonomial})$ and it holds that $\Imc_i\models (R(a,b), \onomial)$. 
Since $\representative{\onomial\times \nonomial}=\representative{\monomial}$, it follows that $ \onomial$ has at most $k$ variables so by induction hypothesis, $(R(a,b), \representative{ \onomial})\in\Omc_k$. Hence, since $(R\sqsubseteq S,  \nonomial)\in\Omc_k$ and $ \representative{\onomial\times  \nonomial}=\representative{\monomial}$ has at most $k$ variables, it follows from the construction of $\Omc_k$ (by $\CR_{12}$) that $(\beta, \representative{\monomial})\in\Omc_k$. 

\item In case 2, $(\beta,\representative{\monomial})=(B(a),\representative{\onomial\times\nonomial})$ and it holds that $\Imc_i\models (R(b,a), \onomial)$ for some $b\in\individuals{\Omc}$ ($b$ cannot be an element from $\Delta^{\Imc_i}\setminus\individuals{\Omc}$ by the form of the rules).  
Since $\representative{\onomial\times \nonomial}=\representative{\monomial}$, it follows that $ \onomial$ has at most $k$ variables and by induction hypothesis, $(R(b,a),\representative{ \onomial})\in\Omc_k$. Hence, since $({\sf ran }(R)\sqsubseteq B, \nonomial)\in\Omc_k$, by $\CR_{16}$, $(\beta, \representative{\monomial})\in\Omc_k$.

\item Case 3 is similar to case 1, based on $\CR_{13}$. 
\item In case 4, $(\beta,\representative{\monomial})=(B(a),\representative{\nonomial})$. Since $a\in\individuals{\Omc}$, by $\CR_{11}$, $(\top(a),1)\in\Omc_k$. Since $(\top\sqsubseteq B,\nonomial)\in\Omc_k$ (in the case $(\top\sqcap\top \sqsubseteq B,\nonomial)\in\Omc_k$, it is also the case that $(\top\sqsubseteq B,\nonomial)\in\Omc_k$ by $\CR_0$ and $\CR_{8}$) and $\representative{ \nonomial}=\representative{\monomial}$ has at most $k$ variables, by $\CR_{13}$, $(\beta,\representative{\monomial})\in\Omc_k$.

\item In case 5,  $(\beta,\representative{\monomial})=(B(a),\representative{\onomial\times\onomial'\times\nonomial})$ and it holds that $\Imc_i\models (A(a),  \onomial)$ and $\Imc_i\models (A'(a),  \onomial')$. Since  $\representative{ \onomial\times  \onomial'\times  \nonomial}=\representative{\monomial}$,  
it follows that $ \onomial$ and $\onomial'$ have at most $k$ variables so by induction hypothesis, $(A(a), \representative{ \onomial})\in\Omc_k$ and $(A'(a), \representative{ \onomial'})\in\Omc_k$. Hence since $(A\sqcap A'\sqsubseteq B,  \nonomial)\in\Omc_k$ and $\representative{ \onomial\times  \onomial'\times  \nonomial}=\representative{\monomial}$ has at most $k$ variables, it follows from the construction of $\Omc_k$ (by $\CR_{14}$) that $(\beta, \representative{\monomial})\in\Omc_k$. 

\item In case 6,  $(\beta,\representative{\monomial})=(B(a),\representative{\onomial\times\nonomial})$ and it holds that $\Imc_i\models (A(a),  \onomial)$. Since  $\representative{ \onomial\times \nonomial}=\representative{\monomial}$,  
it follows that $ \onomial$ has at most $k$ variables so by induction hypothesis, $(A(a), \representative{ \onomial})\in\Omc_k$. Since $(A\sqcap \top\sqsubseteq B,  \nonomial)\in\Omc_k$ and $(\top\sqsubseteq \top,1)\in\Omc_k$ by $\CR_0$, it follows by $\CR_{8}$ that $(A\sqsubseteq B,  \nonomial)\in\Omc_k$. 
Hence, since $\representative{ \onomial\times  \nonomial}=\representative{\monomial}$ has at most $k$ variables, it follows from the construction of $\Omc_k$ (by $\CR_{14}$) that $(\beta, \representative{\monomial})\in\Omc_k$.

\item In case 7, $(\beta,\representative{\monomial})=(B(a),\representative{\onomial\times\nonomial})$ and it holds that there is some $x\in\Delta^{\Imc_i}$ such that $\Imc_i\models (R(a, x),  \onomial)$. 

If $x\in\individuals{\Omc}$, we obtain that $(R(a, x),  \representative{\onomial})\in\Omc_k$ by induction and that $(\top(x),1)\in\Omc_k$ by $\CR_{11}$. Since $(\exists R.\top\sqsubseteq B,\nonomial)\in\Omc_k$ and $\representative{\onomial\times\nonomial}=\representative{\monomial}$ has at most $k$ variables, it follows by $\CR_{15}$ that $(\beta,\representative{\monomial})\in\Omc_k$.

Otherwise,  $x$ is a fresh element that has been introduced during the construction of $\Imc_i$ to satisfy an inclusion of the form $(C\sqsubseteq \exists S, r_0)\in\Omc_k$ (cases 9 or 10). 
It follows that $\Imc_i\models (C(a), r')$ and there exist $(S\sqsubseteq S_1, r_1)$,\dots,$(S_{p-1}\sqsubseteq R, r_p)$ in $\Omc_k$ such that $\representative{r_0\times\dots\times r_p\times r'}= \representative{\onomial}$. 
Note that $\representative{r_0\times\dots\times r_p\times r'}$ is thus a submonomial of $\monomial$, so that it has at most $k$ variables, as well as all its submonomials. 
By $\CR_3$, it follows that $(C\sqsubseteq \exists R, \representative{r_0\times \dots \times r_p})\in\Omc_k$. 
Moreover, we have that $(\exists R.\top\sqsubseteq B,\nonomial)\in\Omc_k$ and (by $\CR_0$) that $(\top\sqsubseteq \top,1)\in\Omc_k$. 
Then by $\CR_{10}$ we have $(C\sqsubseteq B, \representative{r_0\times \dots \times r_p\times\nonomial})\in\Omc_k$. Finally, since $\Imc_i\models (C(a), r')$, by induction, $(C(a), \representative{r'})\in\Omc_k$. 
Note that in the case where $C=\top$, $r'=1$ and $(\top(a),1)\in\Omc_k$ by $\CR_{11}$. 
Hence, by $\CR_{13}$, $(\beta,\representative{\monomial})\in\Omc_k$.

\item
 In case 8, $(\beta,\representative{\monomial})=(B(a),\representative{\onomial\times\onomial'\times\nonomial})$ and it holds that there is some $x\in\Delta^{\Imc_i}$ such that $\Imc_i\models (R(a, x),  \onomial)$ and $\Imc_i\models (A(x),  \onomial')$. 

If $x\in\individuals{\Omc}$, we obtain that $(R(a, x),  \representative{\onomial})\in\Omc_k$ and $(A(x),  \representative{\onomial'})\in\Omc_k$ by induction. Since $(\exists R.A\sqsubseteq B,\nonomial)\in\Omc_k$ and $\representative{\onomial\times\onomial'\times\nonomial}=\representative{\monomial}$ has at most $k$ variables, it follows by $\CR_{15}$ that $(\beta,\representative{\monomial})\in\Omc_k$.

Otherwise, $x$ is a fresh element that has been introduced during the construction of $\Imc_i$, let us say between $\Imc_{j-1}$ and $\Imc_j$ to satisfy an inclusion of the form $(C\sqsubseteq \exists S, r_0)\in\Omc_k$ (cases 9 or 10). 
By construction of $\Imc$, it must thus be the case that $\Imc_{j-1}\models (C(a), r')$ for some submonomial $r'$ of $\onomial$, so that $r'$ has at most $k$ variables. Thus $(C(a), \representative{r'})\in\Omc_k$ by induction. 
Since $(\exists R.A\sqsubseteq B,\nonomial)\in\Omc_k$, by Claim~\ref{claimcompl} (whose proof is deferred at the end), it follows that there exists $(C\sqsubseteq B, r)\in \Omc_k$ such that $\representative{\onomial\times\onomial'\times\nonomial}=\representative{r'\times r}$. 
Since $(C(a), \representative{r'})\in\Omc_k$ and $\representative{\monomial}= \representative{r'\times r}$, it follows that $(\beta, \representative{\monomial})\in\Omc_k$ by $\CR_{13}$.
\end{itemize}

We have thus shown that $(\beta, \representative{\monomial})\in\Omc_k$ regardless the form of the rule applied between $\Imc_i$ and $\Imc_{i+1}$, which contradicts our original assumption.  
Hence $\Imc_{i+1}\not\models (\beta,\monomial)$, and we conclude by induction that $\Imc_i\not\models (\beta,\monomial)$ for every $i\geq 0$.

We conclude that for every annotated assertion $(\beta,\monomial)$ built from constants that occur in $\Omc$ and such that $\monomial$ contains at most $k$ variables, $\Imc\not\models (\beta,\monomial)$. In particular, $\Imc\not\models (\alpha,\monomial_k)$ so $\Omc\not\models (\alpha,\monomial_k)$.

\begin{claim}\label{claimcompl}
For all $x,y\in\Delta^\Imc$, if $\Imc_i\models(R(x,y),\onomial)$, $\Imc_i\models(A(y),\onomial')$, $(\exists R.A\sqsubseteq B,\nonomial)\in\Omc_k$, $\representative{\onomial\times\onomial'\times\nonomial}$ has at most $k$ variables and $y\notin\individuals{\Omc}$ has been introduced between $\Imc_{j-1}$ and $\Imc_j$ ($j\leq i$) to satisfy an inclusion of the form $(C\sqsubseteq \exists S, r_0)$ by applying rules 9 or 10 using $(x,r')\in C^{\Imc_{j-1}}$, then there exists $(C\sqsubseteq B,r)\in\Omc_k$ such that $\representative{\onomial\times\onomial'\times\nonomial}=\representative{r\times r'}$.
\end{claim}
\noindent\emph{Proof of the claim.} 
We show the following property by induction on $l=i-j$: For all $x,y\in\Delta^\Imc$, $R\in \NR$ and $A\in\NC\cup\{\top\}$, if $\Imc_i\models(R(x,y),\onomial)$, $\Imc_i\models(A(y),\onomial')$, and $y\notin\individuals{\Omc}$ has been introduced between $\Imc_{j-1}$ and $\Imc_j$ ($j\leq i$) to satisfy an inclusion of the form $(C\sqsubseteq \exists S, r_0)$ by applying rules 9 or 10 using $(x,r')\in C^{\Imc_{j-1}}$, then there exist
\begin{itemize}
\item $(S\sqsubseteq R, \monomial_{S\sqsubseteq R})\in\Omc_k$ with $\representative{r'\times r_0\times \monomial_{S\sqsubseteq R}}=\representative{o}$, 
\item $(D\sqsubseteq A, \monomial_{D\sqsubseteq A})\in\Omc_k$
\item and either (i) $({\sf ran}(S)\sqsubseteq D, \monomial_{D})\in\Omc_k$ with $\representative{r'\times r_0\times\monomial_{D}\times\monomial_{D\sqsubseteq A}}=\representative{\onomial'}$
or (ii) $(\top\sqsubseteq D, \monomial_{D})\in\Omc_k$ with $\representative{\monomial_{D}\times\monomial_{D\sqsubseteq A}}=\representative{\onomial'}$.
\end{itemize}
Moreover, if $A=\top$ (hence $\onomial'=1$), we are in case (ii) with $D=\top$ and $\monomial_D=\monomial_{D\sqsubseteq A}=1$.

If $(\exists R.A\sqsubseteq B,\nonomial)\in\Omc_k$, it will follow (by $\CR_{9}$ in case (i) or by $\CR_{4}$ and $\CR_{10}$ in case (ii)) that $(C\sqsubseteq B,r)\in\Omc_k$ with $$r=\representative{r_0\times\monomial_{S\sqsubseteq R}\times \monomial_{D}\times\monomial_{D\sqsubseteq A}\times\nonomial},$$ so that $\representative{\onomial\times\onomial'\times\nonomial}=\representative{r\times r'}$ (note that $r$ has indeed at most $k$ variables as it is a submonomial of $\representative{\onomial\times\onomial'\times\nonomial}$).

\noindent\emph{Base case for $A=\top$. } The only possibility to obtain  $(x,y,\representative{\onomial})\in R^{\Imc_i}$ by applying a single rule that introduces $y$ to satisfy an inclusion of the form $(C\sqsubseteq \exists S, r_0)$ is that $S=R$ and $\representative{r'\times r_0}=\representative{\onomial}$. Since $(R\sqsubseteq R,1)\in\Omc_k$, we obtain the property for the case $A=\top$. 

\noindent\emph{Base case for $A\neq\top$. }There is no way to obtain both $(x,y,\representative{\onomial})\in R^\Imc_{i}$ and $(y,\representative{\onomial'})\in A^\Imc_{i}$ by applying a single rule that introduces $y$ so the base case is  $l=1$, i.e., $j=i-1$. 
If $\Imc_i\models(R(x,y),\onomial)$, $\Imc_i\models(A(y),\onomial')$, and $y\notin\individuals{\Omc}$ has been introduced between $\Imc_{i-2}$ and $\Imc_{i-1}$ to satisfy an inclusion of the form $(C\sqsubseteq \exists S, r_0)$ by applying rules 9 or 10 using $(x,r')\in C^{\Imc_{i-1}}$, then by construction of $\Imc$ it must be the case that  $S=R$ and $\representative{\onomial}=\representative{r'\times r_0}$ and either (i) there is $({\sf ran}(R)\sqsubseteq A, \monomial_{{\sf ran}(R)\sqsubseteq A})$ in $\Omc_k$ such that $\representative{r'\times r_0\times\monomial_{{\sf ran}(R)\sqsubseteq A}}=\representative{\onomial'}$ or (ii) there is $(\top\sqsubseteq A, \monomial_{\top\sqsubseteq A})$ in $\Omc_k$ such that $\representative{\monomial_{\top\sqsubseteq A}}=\representative{\onomial'}$. Hence, it is indeed the case that 
\begin{itemize}
\item  $(S\sqsubseteq R, \monomial_{S\sqsubseteq R})=(S\sqsubseteq S,1)\in\Omc_k$ with $\representative{r'\times r_0\times \monomial_{S\sqsubseteq R}}=\representative{o}$, 
\item $(D\sqsubseteq A, \monomial_{D\sqsubseteq A})=(A\sqsubseteq A,1)\in\Omc_k$
\item  either (i) $({\sf ran}(S)\sqsubseteq D, \monomial_{D})=({\sf ran}(R)\sqsubseteq A, \monomial_{{\sf ran}(R)\sqsubseteq A})\in\Omc_k$ with $\representative{r'\times r_0\times\monomial_{D}\times\monomial_{D\sqsubseteq A}}=\representative{\onomial'}$ or (ii) $(\top\sqsubseteq D, \monomial_{D})=(\top\sqsubseteq A, \monomial_{\top\sqsubseteq A})\in\Omc_k$ with $\representative{\monomial_{D}\times\monomial_{D\sqsubseteq A}}=\representative{\onomial'}$.

\end{itemize}

\noindent\emph{Inductive step. } 
Assume that the property is true for every integer up to $l$ and consider the case where $i-j=l+1$. 
Assume that $\Imc_i\models(R(x,y),\onomial)$, $\Imc_i\models(A(y),\onomial')$, and $y\notin\individuals{\Omc}$ has been introduced between $\Imc_{j-1}$ and $\Imc_{j}$ to satisfy an inclusion of the form $(C\sqsubseteq \exists S, r_0)$ by applying rules 9 or 10 using $(x,r')\in C^{\Imc_{j-1}}$, thus adding $(x,y,\representative{r'\times r_0})$ in $S^{\Imc_{j}}$. We make a case analysis on the last rule applied to add $(x,y,\representative{\onomial})$ in $R^{\Imc_{i}}$ or $(y,\representative{\onomial'})$ in $A^{\Imc_{i}}$. 
\begin{itemize}[leftmargin=*]
\item Case where $(x,y,\representative{\onomial})\in R^{\Imc_{i}}$ was added last by applying rule 1 using some $(P\sqsubseteq R,\monomial_{P\sqsubseteq R})\in\Omc_k$: $\Imc_{i-1}\models(P(x,y),\monomial_{P})$ for some $\monomial_{P}$ such that $\representative{\monomial_{P}\times\monomial_{P\sqsubseteq R}}=\representative{\onomial}$. Since $\Imc_{i-1}\models(A(y),\onomial')$, by induction hypothesis, we obtain that there exist
\begin{itemize}[leftmargin=*]
\item $(S\sqsubseteq P, \monomial_{S\sqsubseteq P})\in\Omc_k$ such that 
$\representative{r'\times r_0\times \monomial_{S\sqsubseteq P}}=\representative{\monomial_P}$, so that by $\CR_{1}$, $(S\sqsubseteq R,\representative{\monomial_{S\sqsubseteq P}\times\monomial_{P\sqsubseteq R}})\in\Omc_k$ with $\representative{r'\times r_0\times \monomial_{S\sqsubseteq P}\times\monomial_{P\sqsubseteq R}}=\representative{\onomial}$
\item $(D\sqsubseteq A, \monomial_{D\sqsubseteq A})\in\Omc_k$
\item and either (i) $({\sf ran}(S)\sqsubseteq D, \monomial_{D})\in\Omc_k$ with $\representative{r'\times r_0\times\monomial_{D}\times\monomial_{D\sqsubseteq A}}=\representative{\onomial'}$
or (ii) $(\top\sqsubseteq D, \monomial_{D})\in\Omc_k$ with $\representative{\monomial_{D}\times\monomial_{D\sqsubseteq A}}=\representative{\onomial'}$.
\end{itemize}

\item Case where  $(y,\representative{\onomial'})\in A^{\Imc_{i}}$ was added last by applying rule 2 using some $({\sf ran}(P)\sqsubseteq A,\monomial_{{\sf ran}(P)\sqsubseteq A})\in\Omc_k$:
$\Imc_{i-1}\models(P(x,y),\monomial_{P})$  for some $\monomial_{P}$ such that $\representative{\monomial_{P}\times\monomial_{{\sf ran}(P)\sqsubseteq A}}=\representative{\onomial'}$. 
By induction hypothesis, we obtain that there exists 
$(S\sqsubseteq P, \monomial_{S\sqsubseteq P})\in\Omc_k$ such that $\representative{r'\times r_0\times \monomial_{S\sqsubseteq P}}=\representative{\monomial_P}$. 
By $\CR_2$, we obtain that $({\sf ran}(S)\sqsubseteq A,\representative{\monomial_{{\sf ran}(P)\sqsubseteq A}\times \monomial_{S\sqsubseteq P}})\in\Omc_k$. 
This shows the two last items of the property, if we take $D=A$ and $\monomial_{D\sqsubseteq A}=1$: Indeed, $\representative{r'\times r_0\times\monomial_{{\sf ran}(P)\sqsubseteq A}\times \monomial_{S\sqsubseteq P}}=\representative{\onomial'}$. 

Moreover, since $\Imc_{i-1}\models(R(x,y),\onomial)$,  
by induction hypothesis, we obtain that there exists 
$(S\sqsubseteq R, \monomial_{S\sqsubseteq R})\in\Omc_k$ such that $\representative{r'\times r_0\times \monomial_{S\sqsubseteq R}}=\representative{\onomial}$, which shows the first item of the property. 

\item Case where  $(y,\representative{\onomial'})\in A^{\Imc_{i}}$ was added last by applying rule 3 using some $(E\sqsubseteq A,\monomial_{E\sqsubseteq A})\in\Omc_k$: $\Imc_{i-1}\models(R(x,y),\onomial)$ 
and $\Imc_{i-1}\models(E(y),\monomial_E)$ for some $\monomial_E$ such that $\representative{\monomial_E\times \monomial_{E\sqsubseteq A}}=\representative{\onomial'}$. 
By induction hypothesis there exist
\begin{itemize}[leftmargin=*]
\item $(S\sqsubseteq R, \monomial_{S\sqsubseteq R})\in\Omc_k$ with $\representative{r'\times r_0\times \monomial_{S\sqsubseteq R}}=\representative{o}$, 
\item $(D\sqsubseteq E, \monomial_{D\sqsubseteq E})\in\Omc_k$
\item and either (i) $({\sf ran}(S)\sqsubseteq D, \monomial_{D})\in\Omc_k$ with $\representative{r'\times r_0\times\monomial_{D}\times\monomial_{D\sqsubseteq E}}=\representative{\monomial_E}$
or (ii) $(\top\sqsubseteq D, \monomial_{D})\in\Omc_k$ with $\representative{\monomial_{D}\times\monomial_{D\sqsubseteq E}}=\representative{\monomial_E}$.
\end{itemize}
By $\CR_4$, $(D\sqsubseteq A, \representative{\monomial_{D\sqsubseteq E}\times \monomial_{E\sqsubseteq A}})\in\Omc_k$ and this gives us the desired property.

\item Case where  $(y,\representative{\onomial'})\in A^{\Imc_{i}}$ was added last by applying rule 4 using some $(\top\sqsubseteq A,\representative{\onomial'})\in\Omc_k$ or $(\top\sqcap\top\sqsubseteq A,\representative{\onomial'})\in\Omc_k$: $\Imc_{i-1}\models(R(x,y),\onomial)$ 
so by induction hypothesis there exists $(S\sqsubseteq R, \monomial_{S\sqsubseteq R})\in\Omc_k$ such that $\representative{r'\times r_0\times \monomial_{S\sqsubseteq R}}=\representative{o}$. We obtain the desired property (with case (ii)) by taking $D=\top$,  $\monomial_D=1$ and $\monomial_{D\sqsubseteq A}=\representative{\onomial'}$.

\item Case where  $(y,\representative{\onomial'})\in A^{\Imc_{i}}$ was added last by applying rule 5 using some $(A_1\sqcap A_2\sqsubseteq A,\monomial_{A_1\sqcap A_2\sqsubseteq A})\in\Omc_k$: $\Imc_{i-1}\models(R(x,y),\onomial)$, $\Imc_{i-1}\models(A_1(y),\monomial_1)$ and $\Imc_{i-1}\models(A_2(y),\monomial_2)$ for some $\monomial_1$ and $\monomial_2$ such that $\representative{\monomial_1\times\monomial_2\times \monomial_{A_1\sqcap A_2\sqsubseteq A}}=\representative{\onomial'}$. 
By induction hypothesis there exist
\begin{itemize}[leftmargin=*]
\item $(S\sqsubseteq R, \monomial_{S\sqsubseteq R})\in\Omc_k$ with $\representative{r'\times r_0\times \monomial_{S\sqsubseteq R}}=\representative{o}$, 
\item $(D_i\sqsubseteq A_i, \monomial_{D_i\sqsubseteq A_i})\in\Omc_k$ ($1\leq i\leq 2$)
\item and for each $1\leq i\leq 2$, either (i) $({\sf ran}(S)\sqsubseteq D_i, \monomial_{D_i})\in\Omc_k$ with $\representative{r'\times r_0\times\monomial_{D_i}\times\monomial_{D_i\sqsubseteq A_i}}=\representative{\monomial_i}$
or (ii) $(\top\sqsubseteq D_i, \monomial_{D_i})\in\Omc_k$ with $\representative{\monomial_{D_i}\times\monomial_{D_i\sqsubseteq A_i}}=\representative{\monomial_i}$.
\end{itemize}
If we are in case (i) for $1\leq i\leq 2$, then by $\CR_7$ $({\sf ran}(S)\sqsubseteq A, \monomial_D)\in\Omc_k$ with $\monomial_D=\representative{\monomial_{D_1}\times\monomial_{D_2}\times\monomial_{D_1\sqsubseteq A_1}\times\monomial_{D_2\sqsubseteq A_2}\times\monomial_{A_1\sqcap A_2\sqsubseteq A}}$ and we obtain the property (case (i)) by taking $D=A$ and $\monomial_{D\sqsubseteq A}=1$. 

If we are in case (ii) for $1\leq i\leq 2$, then by $\CR_3$, for each $i$ we have $(\top\sqsubseteq A_i, \representative{\monomial_{D_i}\times\monomial_{D_i\sqsubseteq A_i}})\in\Omc_k$ so by $\CR_6$, we obtain $(\top\sqsubseteq A, \monomial_D)\in\Omc_k$ with $\monomial_D=\representative{\monomial_{D_1}\times\monomial_{D_2}\times\monomial_{D_1\sqsubseteq A_1}\times\monomial_{D_2\sqsubseteq A_2}\times\monomial_{A_1\sqcap A_2\sqsubseteq A}}$ and we obtain the property (case (ii)) by taking $D=A$ and $\monomial_{D\sqsubseteq A}=1$.

If we are in case (i) for $i=1$ and (ii) for $i=2$ (or vice-versa):
\begin{itemize}[leftmargin=*]
\item Since \mbox{$(\top\sqsubseteq D_2, \monomial_{D_2})$ and $(D_2\sqsubseteq A_2, \monomial_{D_2\sqsubseteq A_2})$} are in $\Omc_k$, then $(\top\sqsubseteq A_2, \representative{\monomial_{D_2}\times\monomial_{D_2\sqsubseteq A_2}})\in\Omc_k$ by $\CR_4$.
\item By $\CR_8$, it follows that $\Omc_k$ contains \mbox{$(A_1\sqsubseteq A, \representative{\monomial_{A_1\sqcap A_2\sqsubseteq A}\times\monomial_{D_2}\times\monomial_{D_2\sqsubseteq A_2} })$.}
\item Hence by $\CR_4$, \mbox{$(D_1\sqsubseteq A, \monomial_{D_1\sqsubseteq A})\in\Omc_k$}, where \mbox{$\monomial_{D_1\sqsubseteq A}=\representative{\monomial_{D_1\sqsubseteq A_1}\times\monomial_{A_1\sqcap A_2\sqsubseteq A}\times\monomial_{D_2}\times\monomial_{D_2\sqsubseteq A_2} }$.}
\end{itemize}
We thus obtain the property (case (i)) by taking $D=D_1$: Indeed, $({\sf ran}(S)\sqsubseteq D_1, \monomial_{D_1})\in\Omc_k$ and $\representative{r'\times r_0\times\monomial_{D_1}\times\monomial_{D_1\sqsubseteq A}}=\representative{\onomial'}$.

\item Case where  $(y,\representative{\onomial'})\in A^{\Imc_{i}}$ was added last by applying rule 6 using some $(E\sqcap \top\sqsubseteq A,\monomial_{E\sqcap \top\sqsubseteq A})\in\Omc_k$: $\Imc_{i-1}\models(R(x,y),\onomial)$, and $\Imc_{i-1}\models(E(y),\monomial_E)$ for some $\monomial_E$ such that $\representative{\monomial_E\times \monomial_{E\sqcap \top\sqsubseteq A}}=\representative{\onomial'}$. 
This case is similar to the case of rule 3, but using $\CR_8$ (and $(\top\sqsubseteq\top,1)\in\Omc_k$) instead of $\CR_4$.

\item Case where $(y,\representative{\onomial'})\in A^{\Imc_{i}}$ was added last by applying rule 7 or 8 using some $(\exists P.E\sqsubseteq A,\monomial_{\exists P.E\sqsubseteq A})\in\Omc_k$ ($E$ may be equal to $\top$ or not): 
There exists $z$ such that $\Imc_{i-1}\models(P(y,z),\monomial_{P})$ and $\Imc_{i-1}\models(E(z),\monomial_{E})$ for some $\monomial_{P}$ and $\monomial_{E}$ such that $\representative{\monomial_{P}\times\monomial_{E}\times\monomial_{\exists P.E\sqsubseteq A}}=\representative{\onomial'}$. 
By construction of $\Imc$, $z$ has been introduced after $y$, hence at step $j'$ ($j<j'<i$), to satisfy an inclusion of the form $(C_z\sqsubseteq\exists S_z, r_z)$ with some $(y,r'_z)\in C_z^{\Imc_{j'-1}}$. 
By induction hypothesis, it follows that
\begin{itemize}[leftmargin=*]
\item $(S_z\sqsubseteq P, \monomial_{S_z\sqsubseteq P})\in\Omc_k$ with $\representative{r_z'\times r_z\times \monomial_{S_z\sqsubseteq P}}=\representative{\monomial_P}$, 
\item $(D_z\sqsubseteq E, \monomial_{D_z\sqsubseteq E})\in\Omc_k$
\item and either (i) $({\sf ran}(S_z)\sqsubseteq D_z, \monomial_{D_z})\in\Omc_k$ with $\representative{r_z'\times r_z\times\monomial_{D_z}\times\monomial_{D_z\sqsubseteq E}}=\representative{\monomial_E}$
or (ii) $(\top\sqsubseteq D_z, \monomial_{D_z})\in\Omc_k$ with $\representative{\monomial_{D_z}\times\monomial_{D_z\sqsubseteq E}}=\representative{\monomial_E}$.
\end{itemize}

Since $(\exists P.E\sqsubseteq A,\monomial_{\exists P.E\sqsubseteq A})\in\Omc_k$, it follows (by $\CR_{9}$ in case (i) or by $\CR_{4}$ and $\CR_{10}$ in case (ii)) that $(C_z\sqsubseteq A,\monomial_{C_z\sqsubseteq A})\in\Omc_k$ with $\monomial_{C_z\sqsubseteq A}=\representative{r_z\times\monomial_{S_z\sqsubseteq P}\times \monomial_{D_z}\times\monomial_{D_z\sqsubseteq E}\times\monomial_{\exists P.E\sqsubseteq A}}$.

Moreover, since $\Imc_{i-1}\models(R(x,y),\onomial)$ and $\Imc_{i-1}\models (C_z(y),r'_z)$,  by induction hypothesis, we obtain that there exist 
\begin{itemize}[leftmargin=*]
\item $(S\sqsubseteq R, \monomial_{S\sqsubseteq R})\in\Omc_k$ with $\representative{r'\times r_0\times \monomial_{S\sqsubseteq R}}=\representative{o}$, 
\item $(D\sqsubseteq C_z, \monomial_{D\sqsubseteq C_z})\in\Omc_k$, so that by $\CR_4$ $(D\sqsubseteq A, \monomial_{D\sqsubseteq A})\in\Omc_k$ with $\monomial_{D\sqsubseteq A}=\representative{\monomial_{D\sqsubseteq C_z} \times \monomial_{C_z\sqsubseteq A}}$
\item and either (i) $({\sf ran}(S)\sqsubseteq D, \monomial_{D})\in\Omc_k$ with $\representative{r'\times r_0\times\monomial_{D}\times\monomial_{D\sqsubseteq C_z}}=\representative{r'_z}$
or (ii) $(\top\sqsubseteq D, \monomial_{D})\in\Omc_k$ with $\representative{\monomial_{D}\times\monomial_{D\sqsubseteq C_z}}=\representative{r'_z}$.
\end{itemize}
This shows the property. Indeed, in case (i): 
\begin{align*}
&\representative{r'\times r_0\times\monomial_D\times\monomial_{D\sqsubseteq A}}\\
=&\representative{r'\times r_0\times\monomial_D\times\monomial_{D\sqsubseteq C_z} \times \monomial_{C_z\sqsubseteq A}}\\
=&\representative{r_z' \times \monomial_{C_z\sqsubseteq A}}\\
=&\representative{r'_z\times r_z\times\monomial_{S_z\sqsubseteq P}\times \monomial_{D_z}\times\monomial_{D_z\sqsubseteq E}\times\monomial_{\exists P.E\sqsubseteq A}}\\=&
\representative{\monomial_P\times\monomial_E\times\monomial_{\exists P.E\sqsubseteq A}}
\\=&\representative{\onomial'}
\end{align*}
and similarly in case (ii):
\begin{align*}\representative{\monomial_D\times\monomial_{D\sqsubseteq A}}
=&\representative{\monomial_D\times\monomial_{D\sqsubseteq C_z} \times \monomial_{C_z\sqsubseteq A}}\\
=&\representative{r_z' \times \monomial_{C_z\sqsubseteq A}}\\
=&\representative{\onomial'}
\end{align*}
\end{itemize}
This finishes the proof of the claim.
\end{proof}

\Theorembasic*
\begin{proof}
Let $A_C$ be a fresh concept name.
We show that $\Omc\models (C(a),\monomial)$ iff $\Omc\cup\{(C\sqsubseteq A_C,1)\}\models (A_C(a),\monomial)$.
If $\Omc\models (C(a),\monomial)$, and $\Imc$ is a model of $\Omc\cup\{(C\sqsubseteq A_C,1)\}$, then 
$(a^\Imc,\monomial^\Imc)\in C^\Imc$, so $(a^\Imc,\monomial^\Imc)\in A_C^\Imc$. 
Conversely, if $\Omc\not\models( C(a),\monomial)$, let $\Imc\models\Omc$ be such that $(a^\Imc,\monomial^\Imc)\notin C^\Imc$ and 
let $\Jmc$ extend $\Imc$ with $A_C^\Jmc=C^\Imc$. $\Jmc$ is a model of $\Omc\cup\{(C\sqsubseteq A_C,1)\}$ and
$\Jmc\not\models (A_C(a),\monomial)$. 
The complexity follows from Corollary~\ref{cor:complexity:provmonomial} and Proposition~\ref{prop:pspace}.
\end{proof}
 
\section*{Proofs for Section \ref{sec:combined}}

\newcommand{\roleone}{\ensuremath{R}\xspace}
\newcommand{\roletwo}{\ensuremath{R'}\xspace}
\newcommand{\termone}{\ensuremath{t_1}\xspace}
\newcommand{\termtwo}{\ensuremath{t_2}\xspace}
\newcommand{\termthree}{\ensuremath{t_3}\xspace}
\newcommand{\rttermone}{\ensuremath{t'_1}\xspace}
\newcommand{\rttermtwo}{\ensuremath{t'_2}\xspace}
\newcommand{\rttermthree}{\ensuremath{t'_3}\xspace}

Our proof strategy for dealing with provenance annotated conjunctive
queries  is based on the combined approach, introduced by \citeauthor{LTW:elcqrewriting09} \shortcite{LTW:elcqrewriting09}
for dealing with conjunctive query answering in the \EL family (without provenance). 
The combined approach incorporates consequences of the GCIs into the 
relational instance corresponding to the set of assertions of an ontology. 
In our proof we also incorporate consequences of the GCIs, which are 
now annotated with provenance information, by applying Rules~\ref{r1}-\ref{r3}. 

As in the original combined approach, we construct the canonical model $\Imc_\Omc$ of the ontology \Omc that we want to query. 
The domain $\Delta^{\Imc_\Omc}$ of $\Imc_\Omc$ contains the individual names occurring in \Omc 
and auxiliary elements of the form $\aux{m}{R}{}$ with $ m\in \monomials{\Omc}$ (note that $m=[m]$ by definition of $\monomials{\Omc}$)
and $ R\in\roles{\Omc}$. 
By definition of $\Imc_\Omc$ (in particular, by~\ref{r2}), if $\aux{m}{R}{}$ is connected to 
  $\Imc_\Omc$ then there is some $d\in \Delta^{\Imc_\Omc}$ such that 
  $(d,\aux{[\monomial]}{R}{},[\monomial])\in R^{\Imc_\Omc}$. 
Intuitively, $R$ and $m$ in $\aux{m}{R}{}$ represent the role 
name used to connect $\aux{m}{R}{}$  to $\Imc_\Omc$ (if such connection exists) 
and the provenance information of such connection. 
As we illustrate in Example~\ref{ex:provenancerewriting}, without 
 the provenance information in $\aux{m}{R}{}$, the canonical model $\Imc_\Omc$
 would not entail   annotated queries correctly (not even annotated assertions).

\begin{example}\label{ex:provenancerewriting}
Consider the following annotated ontology.
\begin{align*}
\Omc=\{&
(A(a),v_1),(A \sqsubseteq \exists R,v_2),(\exists R.B\sqsubseteq C, v_3),\\&(A(b),v_4),({\sf ran}(R)\sqsubseteq B,v_5)\}.
\end{align*}
If elements of the form $\aux{m}{R}{}$ did not have provenance information then 
$\Imc_\Omc$ would be as follows.
\[
\Delta^{\Imc_\Omc}=\{a,b,\aux{}{R}{}\}, \quad \Delta^{\Imc_\Omc}_{\sf m}=\NMrep, \text{ and } 
\]
\begin{align*}
A^\Imc=\{&(a,v_1),(b,v_4)\},\\
B^\Imc=\{&(\aux{}{R}{}, v_2\times v_4\times v_5),(\aux{}{R}{}, v_1\times v_2\times v_5)\},\\
C^\Imc=\{&(a,v_1\times v_2\times v_3\times v_5), (a,\prod^5_{i=1}v_i ),\\ 
&(b, v_2\times v_3\times v_4 \times v_5), (b,\prod^5_{i=1}v_i )\},\\ 
R^\Imc=\{&(a,\aux{}{R}{},v_1\times v_2),(b,\aux{}{R}{},v_2\times v_4)\}.
\end{align*}
For $q= (C(a),\prod^5_{i=1}v_i)$, we have that $ \Imc_\Omc\models q$ but $\Omc\not\models q$. 
Observe that adding   information about the domain and range of the connection
as \citeauthor{LTW:elcqrewriting09} \shortcite{LTW:elcqrewriting09} would 
not be a solution in our case. 
\end{example}

As already mentioned, in our rewriting we use $\varphi_1$ and $\varphi_2$, 
contructed in a similar way as 
 \citeauthor{LTW:elcqrewriting09} \shortcite{LTW:elcqrewriting09}. 
 However,   we do not use a formula corresponding to $\varphi_3$ in 
 the mentioned work. The reason is because, in our construction, 
 whenever we have some $d\in\Delta^{\Imc_\Omc}$ with  $(d,\aux{m}{R}{})$ occurring 
 in the extension of a role name $S$, it follows that $\Omc\models (R\sqsubseteq S,n)$ 
 for some 
 $n$ built using variables \NV occurring in \Omc. This different construction
 is used to establish Point~(II) in our proof of Theorem~\ref{thm:queryrewritingcombined} (below)
 in a way that is different from how $\varphi_3$ is used to prove Point~(II) of Theorem~11 by \citeauthor{LTW:elcqrewriting09} \shortcite{LTW:elcqrewriting09}.

Another difference 
 between our construction and the one by \citeauthor{LTW:elcqrewriting09} \shortcite{LTW:elcqrewriting09} 
is that  $\aux{m}{R}{}$ occurs the extension of a concept/role  name 
only if it is (possibly  undirectly) connected  to some individual name  in $\Delta^{\Imc_\Omc}$. 
So we do not need to restrict the domain of $\Imc_\Omc$
to the elements connected to some individual name  in $\Delta^{\Imc_\Omc}$, 
as it happens in the original approach (in their notation this restricted model is denoted $\Imc^r_\Omc$).

\smallskip

We now proceed with the proof of Proposition~\ref{prop:model}. 

\PropositionModel*
\begin{proof}
By definition of $\Imc_\Omc$, before the application of the rules, 
$\Imc_\Omc$
satisfies all axioms of \Omc applicable to elements of $\individuals{\Omc}$, 
except for (annotated) GCIs of the form $(A\sqsubseteq \exists R,\monomial)$.
These are satisfied in $\Imc_\Omc$ with the application of the rules~\ref{r2}, by connecting via the $R^{\Imc_\Omc}$ relation
 elements of $\individuals{\Omc}$ 
to elements of \auxs{\Omc} (with their provenance). 
We now argue that  all axioms of \Omc are satisfied in $\Imc_\Omc$, including 
the part of $\Imc_\Omc$ with elements of \auxs{\Omc}. 
Rule~\ref{r1} satisfies GCIs  $(C\sqsubseteq A,\monomial)$, 
with $C$ of the form $A_1\sqcap A_2$, $\exists R.A, {\sf ran}(R)$, or $A_3$, where 
$A_{(i)}\in\NC\cup\{\top\}$ and $R\in\NR$. 
Finally, we point out that 
RIs 
 are satisfied in $\Imc_\Omc$ by~\ref{r3}. 
\end{proof}

To show Theorem~\ref{thm:combined}, 
we use the following notions. 
Given an interpretation \Imc, let $\individuals{\Omc}^\Imc$ be $\{a^\Imc\mid a\in \individuals{\Omc}\}.$
A \emph{path} in \Imc 
is a finite sequence $d_1 \imonomial_2 R_2 d_2  \dots  \imonomial_{k} R_k d_k $, $k\geq 1$, where 
$d_1\in \individuals{\Omc}^\Imc$ and $(d_i,d_{i+1},\imonomial_{i+1})\in R^\Imc_{i+1}$ for all 
$i < k$. We use ${\sf paths}_\Omc(\Imc)$ to denote the set of all paths 
in \Imc and for all $\upath\in {\sf paths}_\Omc(\Imc)$, and  ${\sf tail}(\upath)$ 
to denote the last element $d_k$ of $\upath$. 
\begin{definition}[Unraveling]
Let \Omc be an annotated \ELHr ontology.
The \Omc-\emph{unraveling} \Jmc of an annotated interpretation \Imc 
is defined as:
\begin{itemize}
\item $\Delta^\Jmc_{\sf m}:=\Delta^\Imc_{\sf m}$,  $\Delta^\Jmc:= {\sf paths}_\Omc(\Imc)$ and $a^\Jmc:=a^\Imc$ for all $a\in\individuals{\Omc}$;
\item $A^\Jmc:= \{(\upath,\imonomial)\mid ({\sf tail}(\upath),\imonomial)\in A^\Imc\}$;
\item $R^\Jmc:= \{(d,d',\imonomial)\mid d,d'\in \individuals{\Omc}^\Imc,  (d,d',\imonomial)\in R^\Imc\}$ 
\item [] $\quad\quad\quad$ $\cup \ \{(\upath,\upath\cdot \imonomial S d',(\monomial\times\nonomial)^\Jmc)\mid
 \upath,\upath\cdot \imonomial S d'\in \Delta^\Jmc,$ 
\item [] $\quad\quad\quad$ $\monomial^\Jmc=\imonomial, 
\Omc \models (S\sqsubseteq R,\nonomial) 
\}$;
\end{itemize}
where $\cdot$ denotes concatenation. 
\end{definition}
We first show that the unraveling $\unraveling$  of $\Imc_\Omc$ 
entails exactly the same (annotated)   queries as \Omc (Theorem~\ref{thm:unravelling}). 
Then, we show that  $\Umc_\Omc$ entails a query $(\query,\polynomial)$ 
iff 
$\Imc_\Omc$  entails $(\queryrewriting,\polynomial)$ 
(Theorem~\ref{thm:queryrewritingcombined}).
Theorems~\ref{thm:unravelling} and~\ref{thm:queryrewritingcombined} together imply 
Theorem~\ref{thm:combined}.

\begin{theorem}\label{thm:unravelling}
Let \Omc be an  annotated \ELHr ontology in normal form and let 
$(\query,\polynomial)$ be an annotated query. 
Then, $\Omc\models (\query,\polynomial) \text{ iff }\  \unraveling\models (\query,\polynomial). $
\end{theorem}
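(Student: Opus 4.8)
The plan is to show the two directions separately, exploiting the fact that the unraveling $\unraveling$ of $\Imc_\Omc$ is obtained from $\Imc_\Omc$ by a standard (annotation-aware) tree-unraveling construction. First I would observe that there is a natural projection $\rho:\Delta^\unraveling\to\Delta^{\Imc_\Omc}$ sending a path $\upath = d_1\imonomial_2 R_2 d_2\cdots\imonomial_k R_k d_k$ to its tail $d_k$, and that $\rho$ is a homomorphism of annotated interpretations in both "directions" in the appropriate sense: by construction $(\upath,\imonomial)\in A^\unraveling$ iff $(\rho(\upath),\imonomial)\in A^{\Imc_\Omc}$, and any edge $(\upath,\upath',\imonomial)\in R^\unraveling$ projects to an edge $(\rho(\upath),\rho(\upath'),\imonomial)\in R^{\Imc_\Omc}$ (using that $\Imc_\Omc$ is a model of $\Omc$, hence closed under the role inclusions invoked in the unraveling definition, and Proposition~\ref{prop:model}). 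Conversely every edge of $\Imc_\Omc$ out of an individual, or out of an $\aux{m}{R}{}$ element, is "covered" by $\rho$ because $\Imc_\Omc$ itself is tree-like below the individuals (by the construction via rules \ref{r1}--\ref{r3}, the only outgoing $R$-edges from $\aux{m}{R}{}$-elements are those created by \ref{r2} and then propagated by \ref{r3}). This projection is the technical engine for both inclusions.

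For the direction $\unraveling\models(\query,\polynomial)\Rightarrow\Omc\models(\query,\polynomial)$, I would first use Theorem~\ref{thm:combined} is not yet available (it is what we are proving), so instead I argue: it suffices to show $\Imc_\Omc\models(\query,\polynomial)$ implies $\Omc\models(\query,\polynomial)$ together with "$\unraveling$ and $\Imc_\Omc$ satisfy the same annotated queries." For the latter, given a match $\map$ of $\query$ in $\unraveling$, composing with $\rho$ yields a match $\rho\circ\map$ of $\query$ in $\Imc_\Omc$ with the \emph{same} product of monomials on each atom (since $\rho$ preserves the last components of tuples), hence $\p{\unraveling}{\query}\subseteq\p{\Imc_\Omc}{\query}$ as multisets of representatives; for the other inclusion, any match in $\Imc_\Omc$ lifts to $\unraveling$ because $\query$ is a conjunctive query whose Gaifman graph, restricted to the role atoms, can be unwound into $\unraveling$ starting from the images of the answer variables (which lie in $\individuals{\Omc}^{\Imc_\Omc}$ by the standardness assumption on queries), again preserving all monomials. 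Then I invoke that $\Imc_\Omc$ is a model of $\Omc$ (Proposition~\ref{prop:model}) together with the characterisation of $\Imc_\Omc$ via entailment of annotated assertions and RRs (the definition of $\cdot^{\Imc^0_\Omc}$ and assumption $(\ast)$) to conclude $\Omc\models(\query,\polynomial)$: more precisely, one shows that for every atom-monomial in a match in $\Imc_\Omc$ there is a corresponding entailed annotated assertion, so the whole match is "certified" by $\Omc$, hence holds in every model of $\Omc$.

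For the direction $\Omc\models(\query,\polynomial)\Rightarrow\unraveling\models(\query,\polynomial)$: since $\Imc_\Omc\models\Omc$ (Proposition~\ref{prop:model}), $\Omc\models(\query,\polynomial)$ gives $\Imc_\Omc\models(\query,\polynomial)$, and by the query-equivalence of $\unraveling$ and $\Imc_\Omc$ established above we get $\unraveling\models(\query,\polynomial)$. Care is needed on the provenance side: $\Omc\models(\query,\polynomial)$ means $p\subseteq\p{\Imc}{\query}$ for \emph{all} models $\Imc$, in particular for $\Imc_\Omc$, and then the monomial-preserving bijection between matches in $\Imc_\Omc$ and matches in $\unraveling$ transfers the inclusion $p\subseteq\p{\Imc_\Omc}{\query}$ to $p\subseteq\p{\unraveling}{\query}$.

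\textbf{Main obstacle.} The delicate point is the lifting of a match from $\Imc_\Omc$ into $\unraveling$ \emph{while exactly preserving the monomial annotations on every atom}, because the same element $\aux{m}{R}{}$ of $\Imc_\Omc$ may be reached by several distinct paths carrying different role names, and the role-inclusion clause in the definition of $R^\unraveling$ (with its side condition $\Omc\models(S\sqsubseteq R,\nonomial)$) means the annotation on a lifted role edge is $(\monomial\times\nonomial)^\unraveling$ rather than the original $\imonomial$. I expect the bulk of the work to be a careful induction on the shape of the query match in $\Imc_\Omc$ showing that a consistent choice of unwound path for each variable can be made so that, for every role atom, the $\nonomial$ introduced is exactly the "extra" factor already present in the $\Imc_\Omc$-annotation (this is precisely where keeping the role name $R$ inside $\aux{m}{R}{}$, rather than discarding it as in the original combined approach, pays off), and symmetrically that projecting a match from $\unraveling$ to $\Imc_\Omc$ does not collapse two query atoms that should stay distinct — handled by the tree structure of $\unraveling$ and by the fact that role edges in $\unraveling$ between two non-individual elements go only "forward" along a single path.
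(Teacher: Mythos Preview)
Your plan has a genuine gap at its core: the claim that $\unraveling$ and $\Imc_\Omc$ satisfy the same annotated queries is \emph{false}. The canonical model $\Imc_\Omc$ contains self-loops and ``shared successors'' in its anonymous part (see Example~\ref{ex:comb}: $(d_R^{u_1\times v_1\times v_2},d_R^{u_1\times v_1\times v_2},u_1\times v_1\times v_2)\in R^{\Imc_\Omc}$), so a query like $\exists x t.\,R(x,x,t)$ matches in $\Imc_\Omc$ with provenance $u_1\times v_1\times v_2$, yet has no such match in the tree-shaped $\unraveling$. This is exactly why Theorem~\ref{thm:combined} needs the rewriting $\queryrewriting$ with $\varphi_1,\varphi_2$ to filter out those spurious matches when evaluating over $\Imc_\Omc$. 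Consequently your lifting step ``any match in $\Imc_\Omc$ lifts to $\unraveling$'' fails, and so does the step ``$\Imc_\Omc\models(\query,\polynomial)$ implies $\Omc\models(\query,\polynomial)$'' (same counterexample: $\Omc\not\models(q,u_1\times v_1\times v_2)$ for the loop query). Both directions of your argument route through one of these false implications.

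The paper avoids the detour through $\Imc_\Omc$ entirely. For $\Omc\models(\query,\polynomial)\Rightarrow\unraveling\models(\query,\polynomial)$ it simply shows that $\unraveling$ is itself a model of $\Omc$ (via a claim that $(\upath,\imonomial)\in C^{\unraveling}$ iff $(\tail{\upath},\imonomial)\in C^{\Imc_\Omc}$ for the relevant concept shapes, combined with Proposition~\ref{prop:model}); the implication is then immediate from the definition of entailment. For the converse, it takes an \emph{arbitrary} model $\Imc$ of $\Omc$ and constructs, by induction on $\dep{\upath}$ and on the stage $i$ of the $\Imc^i_\Omc$ construction, a homomorphism $\umapping:\Delta^{\unraveling}\to\Delta^\Imc$ that preserves individual names, concept membership with annotations, and role edges with annotations; composing any match of $\query$ in $\unraveling$ with $\umapping$ then yields a match in $\Imc$ with the same monomial product. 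This is the standard ``universal model'' pattern: $\unraveling$ maps homomorphically into every model, so what it satisfies is a lower bound on what is entailed. Your projection $\rho$ goes the wrong way for this purpose---it maps $\unraveling$ into the one specific model $\Imc_\Omc$, which tells you nothing about arbitrary $\Imc$.
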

\begin{proof}
The following claim is an easy consequence of the \Omc-unraveling definition. 
\begin{claim}
\label{clm:unraveling}
Let \Imc be an interpretation satisfying \Omc and let \Jmc be the 
\Omc-unraveling of \Imc. Then, for all \ELHr concepts $C$ of the form 
$A_1\sqcap A_2$, $\exists R.A,\exists R, {\sf ran}(R)$, or $A_3$, where 
$A_{(i)}\in\NC\cup\{\top\}$ and $R\in\NR$
all $\upath\in {\sf paths}_\Omc(\Imc)$, 
and all $\imonomial\in \Delta^\Imc_{\sf m}$,
$(\upath,\imonomial)\in C^\Jmc$ iff $({\sf tail}(\upath),\imonomial)\in C^\Imc$. 
\end{claim}

We start observing that, by definition of \Jmc, 
for all $A\in\NC$ and all $\imonomial\in \Delta^\Imc_{\sf m}$,
$(\upath,\imonomial)\in A^\Jmc$ iff $({\sf tail}(\upath),\imonomial)\in A^\Imc$. 
So the claim holds for $C$ of the form $A_1\sqcap A_2$ or $A$.

We now show that for all $R\in\NR$, all $A\in\NC$,
and all $\imonomial\in \Delta^\Imc_{\sf m}$,
$(\upath,\imonomial)\in (\exists R.A)^\Jmc$ iff $({\sf tail}(\upath),\imonomial)\in (\exists R.A)^\Imc$. 
In the following, we use
$\monomial,\monomial_1,\monomial_2\in\NM$ satisfying $\monomial=\monomial_1\times\monomial_2$ and 
$\monomial^\Jmc=\imonomial$, $\monomial^\Jmc_1=\imonomial_1$, $\monomial^\Jmc_2=\imonomial_2$.

\noindent
$(\Rightarrow)$ Assume $(\upath,\imonomial)\in (\exists R.A)^\Jmc$. 
By the semantics of $\exists R.A$, there is $\upath'\in\Delta^\Jmc$ 
such that $(\upath,\upath',\imonomial_1)\in R^\Jmc$ and $(\upath',\imonomial_2)\in A^\Jmc$.
If $\upath'\in \individuals{\Omc}^\Jmc$ then, by definition of \Jmc, 
$\upath\in \individuals{\Omc}^\Jmc$,  $({\sf tail}(\upath),{\sf tail}(\upath'),\imonomial_1)\in R^\Imc$ 
and $({\sf tail}(\upath'),\imonomial_2)\in A^\Imc$. 
That is, $({\sf tail}(\upath),\imonomial)\in (\exists R.A)^\Imc$. 
Otherwise, $\upath'$ is of the form $\upath\cdot \imonomial' S d'$. 
Let $\monomial'\in\NM$ be such that 
$\monomial'^\Jmc=\imonomial'$. 
By definition of \Jmc, $(\upath,\upath',\imonomial')\in S^\Jmc$
and there is $\nonomial\in\NM$ such that 
 $\Omc \models (S\sqsubseteq R,\nonomial)$ and $\monomial_1=\nonomial\times\monomial'$. 
By definition of ${\sf paths}_\Omc(\Imc)$, $({\sf tail}(\upath),{\sf tail}(\upath'),\imonomial')\in S^\Imc$.
Since \Imc satisfies \Omc,  $({\sf tail}(\upath),{\sf tail}(\upath'),\imonomial_1)\in R^\Imc$.
As $(\upath',\imonomial_2)\in A^\Jmc$, we have that $({\sf tail}(\upath'),\imonomial_2)\in A^\Imc$. 
Then, by definition of $\imonomial$ and the semantics of $\exists R.A$, we have that
$({\sf tail}(\upath),\imonomial)\in (\exists R.A)^\Imc$.  
$(\Leftarrow)$~Now assume that $({\sf tail}(\upath),\imonomial)\in (\exists R.A)^\Imc$. 
By the semantics of $\exists R.A$, there is $d\in\Delta^\Imc$ 
such that $({\sf tail}(\upath),d,\imonomial_1)\in R^\Imc$ and $(d,\imonomial_2)\in A^\Imc$.
 If $d\in \individuals{\Omc}^\Imc$ then, by definition of \Jmc, 
$\upath\in \individuals{\Omc}^\Jmc$, $d\in\Delta^\Jmc$,  $(\upath,d,\imonomial_1)\in  R^\Jmc$ and 
$(d,\imonomial_2)\in  A^\Jmc$. 
That is, $(\upath,\imonomial)\in (\exists R.A)^\Jmc$. 
Otherwise, $d\not\in \individuals{\Omc}^\Imc$ and there is $\upath'$ of the form $\upath\cdot \imonomial' S d\in{\sf paths}_\Omc(\Imc)$ 
and $\nonomial,\monomial'\in\NM$ such that 
$\Omc \models (S\sqsubseteq R,\nonomial)$, $\monomial'^\Jmc=\imonomial'$, and $\monomial_1=\nonomial\times\monomial'$. 
By definition of \Jmc, we have that $(\upath,\upath',\imonomial')\in S^\Jmc$ and  $(\upath,\upath',\imonomial_1)\in R^\Jmc$. 
As $(d,\imonomial_2)\in A^\Imc$ and $d={\sf tail}(\upath')$,  $(\upath',\imonomial_2)\in A^\Jmc$.
That is, $(\upath,\imonomial)\in (\exists R.A)^\Jmc$. 

The case  $C=\exists R$ is simpler than 
the case $C=\exists R.A$ and we omit it here.
Thus, it remains to show the case in which $C$ is of the form ${\sf ran}(R)$.

\noindent
$(\Rightarrow)$ Assume $(\upath,\imonomial)\in ({\sf ran}(R))^\Jmc$. 
By the semantics of ${\sf ran}(R)$, there is $\upath'\in\Delta^\Jmc$ 
such that $(\upath',\upath,\imonomial)\in R^\Jmc$.
If $\upath\in \individuals{\Omc}^\Jmc$ then, by definition of \Jmc, 
$\upath'\in \individuals{\Omc}^\Jmc$, and $({\sf tail}(\upath'),{\sf tail}(\upath),\imonomial)\in R^\Imc$. 
That is, $({\sf tail}(\upath),\imonomial)\in ({\sf ran}(R))^\Imc$. 
Otherwise, $\upath$ is of the form $\upath'\cdot \imonomial' S d'$. 
Let $\monomial'\in\NM$ be such that 
$\monomial'^\Jmc=\imonomial'$. 
By definition of \Jmc, $(\upath',\upath,\imonomial')\in S^\Jmc$
and there is $\nonomial\in\NM$ such that 
 $\Omc \models (S\sqsubseteq R,\nonomial)$ and $\monomial=\nonomial\times\monomial'$. 
By definition of ${\sf paths}_\Omc(\Imc)$, $({\sf tail}(\upath'),{\sf tail}(\upath),\imonomial')\in S^\Imc$.
Since \Imc satisfies \Omc,  $({\sf tail}(\upath'),{\sf tail}(\upath),\imonomial)\in R^\Imc$. 
Then, by definition of $\imonomial$ and the semantics of ${\sf ran}(R)$, we have that
$({\sf tail}(\upath),\imonomial)\in ({\sf ran}(R))^\Imc$.  
$(\Leftarrow)$ Now assume that $({\sf tail}(\upath),\imonomial)\in ({\sf ran}(R))^\Imc$. 
By the semantics of ${\sf ran}(R)$, there is $d\in\Delta^\Imc$ 
such that $(d,{\sf tail}(\upath),\imonomial)\in R^\Imc$.
 If $\upath\in \individuals{\Omc}^\Jmc$,
 then, by definition of \Jmc, 
$d\in \individuals{\Omc}^\Jmc$, and 
 $(d,\upath,\imonomial)\in  R^\Jmc$.
That is, $(\upath,\imonomial)\in ({\sf ran}(R))^\Jmc$. 
Otherwise, $\upath$ is of the form $d\cdot \imonomial' S d'\in{\sf paths}_\Omc(\Imc)$
and there are $\nonomial,\monomial'\in\NM$ such that 
$\Omc \models (S\sqsubseteq R,\nonomial)$, $\monomial'^\Jmc=\imonomial'$, and $\monomial=\nonomial\times\monomial'$. 
By definition of \Jmc, we have that $(d,\upath,\imonomial')\in S^\Jmc$ and  $(d,\upath,\imonomial)\in R^\Jmc$. 
That is, $(\upath,\imonomial)\in ({\sf ran}(R))^\Jmc$. 
This finishes the proof of the claim.

\medskip

By Claim~\ref{clm:unraveling} and Proposition~\ref{prop:model} it is straightforward 
to show that \unraveling is a model of \Omc. 
Thus,  $\Omc\models (\query,\polynomial) \text{ implies that }\  \unraveling\models (\query,\polynomial). $
It remains to show the converse direction. Assume that $\Umc_\Omc\models (\query,\polynomial)$ 
and let \Imc be a model of \Omc. 
   For each $\upath  \in \Delta^{\unraveling}$ we write 
  $\dep{\upath}$ to denote 
 the length of 
  the shortest 
 sequence $d_1,\ldots,d_k$
  such that $d_1\in\individuals{\Omc}^{\Imc_\Omc}$, 
  $(d_i,d_{i+1},\imonomial_{i+1})\in 
 \bigcup_{R\in \NR}R^{\Imc_\Omc}$ for all $i< k$ and 
 $d_k = {\sf tail}(\upath)$. 
We define a mapping $\umapping: \Delta^{\unraveling}\rightarrow \Delta^\Imc$ 
such that
\begin{enumerate}[label=(\alph*),leftmargin=*]
\item $\umapping(a) = a^\Imc$ for all $a\in\individuals{\Omc}$;
\item $(\upath,m^{\unraveling})\in C^{\unraveling}$ implies $(\umapping(\upath),m^{\Imc})\in C^{\Imc}$,  
for all $\upath\in \Delta^{\unraveling}$, all $m\in \NM$, and all
\ELHr concepts $C$ of the form $A,\exists R,\exists R.A, \exists R, {\sf ran}(R),A_1\sqcap A_2$ with $A_{(i)}\in\NC$ and $R\in\NR$; 
\item $(\upath,\upath',m^{\unraveling})\in R^{\unraveling}$ implies $(\umapping(\upath),\umapping(\upath'),m^\Imc)\in R^\Imc$, 
for all $\upath,\upath'\in \Delta^{\unraveling}$, all $m\in \NM$, and all 
$R\in\NR$. 
\end{enumerate}
For $\upath\in\Delta^{\unraveling}$, we define $\umapping(\upath)$ by induction on $\dep{\upath}$.
For the case
$\dep{\upath}=1$, 
$\umapping(\upath)$ is as in Point~(a). 
For 
$\upath = d_1 \imonomial_2 R_2 d_2  \dots  \imonomial_{k} R_k d_k\in \Delta^{\unraveling}$ with $\dep{\upath}>1$ 
we argue inductively. Suppose that Points~(a)-(c) hold for paths 
with $\dep{\upath}=k$. 
By definition of $\unraveling$,
for all $S\in\NR$ and all $\nonomial\in\NM$ such that 
$\Omc\models (R_k\sqsubseteq S,\nonomial)$, we have that
 $(d_{k-1},d_k, (\monomial\times\nonomial)^{\Imc_\Omc})\in S^{\Imc_\Omc}$,
with $\monomial^{\Imc_\Omc}=\imonomial_k$, 
and
$d_k = \aux{\imonomial_k}{R_k}{}$.
We also have that  either 
(1) $d_{k-1} = a\in \individuals{\Omc}$ or 
(2) $d_{k-1}=\aux{\imonomial_{k-1}}{R_{k-1}}{}$.

Let $\upath' = d_1 \imonomial_2 R_2 d_2  \dots  \imonomial_{k-1} R_{k-1} d_{k-1}$. 
In Case (1), the definition of $\Imc_\Omc$ implies that 
$\Omc\models (\exists R_k(a),\monomial)$. 
Since \Imc is a model of \Omc, (a) implies that 
there is some $f\in\Delta^\Imc$ with 
$(\delta(\upath'),f,\monomial^\Imc)\in R^\Imc_k$. 
We set 
$\umapping(\upath)$ 
to any such $f\in\Delta^\Imc$. 
In Case~(2), 
since $(\aux{\imonomial_{k-1}}{R_{k-1}}{},\aux{\imonomial_{k}}{R_{k}}{}, \representative{\monomial})\in R^{\Imc_\Omc}_k$, 
the semantics of $\exists R_{k}$ 
implies that
$(\aux{\imonomial_{k-1}}{R_{k-1}}{},\representative{\monomial})\in (\exists R_k)^{\Imc_\Omc}$.
By
  the Claim (of this theorem), 
we have that $(\upath',\representative{\monomial})\in (\exists R_k)^{\unraveling}$. 
By the inductive hypothesis on Point~(b), 
$(\delta(\upath'),\monomial^\Imc)\in (\exists R_k)^\Imc$.   
So there is some 
$f\in\Delta^\Imc$ with 
$(\delta(\upath'),f,\monomial^\Imc)\in R^\Imc_k$. 
We set $\umapping(\upath)$ 
to any such 
$f\in\Delta^\Imc$.

$\delta$ satisfies (a) 
by construction. We now show that $\delta$ satisfies (b) and (c). 
Recall that $\Imc_\Omc$ has been defined as the union  of  
 $\Imc^i_\Omc$, $i\geq 0$.
We denote by $\Umc^i_\Omc$ the unraveling of $\Imc^i_\Omc$.
The proof is by induction on $i$. Observe that, by definition of  $\Imc_\Omc$, 
  $i$  is a finite number, but it can be exponentially larger than the size of \Omc.
Clearly, for $i=0$, we have that:
\begin{enumerate}[label=(\roman*),leftmargin=*]
\item $(\upath,m^{\Umc^i_\Omc})\in C^{\Umc^i_\Omc}$ implies $(\umapping(\upath),m^{\Imc})\in C^{\Imc}$, all $m\in \NM$, and all
  concepts $C$ of the form $A,\exists R,\exists R.A, {\sf ran}(R), A_1\sqcap A_2$ with $A_{(i)}\in\NC$ and $R\in\NR$; 
\item $(\upath,\upath',m^{\Umc^i_\Omc})\in R^{\Umc^i_\Omc}$ implies $(\umapping(\upath),\umapping(\upath'),m^\Imc)\in R^\Imc$, 
for all $\upath,\upath'\in \Delta^{\unraveling}$, all $m\in \NM$, and all 
$R\in\NR$. 
\end{enumerate} 
Suppose that this holds for $i=k$. We now argue for $i=k+1$. We make a case distinction
based on the rule that has been applied on $\Imc^{i}_\Omc$ to define $\Imc^{i+1}_\Omc$. 
\begin{enumerate} 
\item Rule~\ref{r1}:  Point (ii) is again satisfied by the inductive hypothesis. 
We argue about Point (i).
 Assume $(C\sqsubseteq A,\monomial)\in\Omc$ and
there is  $\upath\in\Delta^{\Umc^i_\Omc}$ and $n\in\NM$  such that 
 $(\upath,[n])\in C^{\Umc^{i}_\Omc} $.
By~\ref{r1} and the definition of $\Umc^{i}_\Omc$,  $(\upath,[\nonomial\times \monomial])\in A^{\Umc^{i+1}_\Omc}$.
By the inductive hypothesis, $(\delta(\upath),[n])\in C^{\Imc} $.
Since \Imc is a model of \Omc, $(\delta(\upath),[n\times \monomial])\in A^{\Imc}$.
One can show by induction that (b) holds for all 
 \ELHr concepts.
\item Rule~\ref{r2}:  Assume $(C\sqsubseteq \exists R,\monomial)\in\Omc$
 and there is  
$\nonomial\in\NM$ and $\upath\in\Delta^{\Umc^i_\Omc}$ such that $(\upath,[\nonomial])\in C^{\Umc^{i}_\Omc} $.
By~\ref{r2} and the definition of $\Umc^{i}_\Omc$,  
  $(\upath,\upath',[\monomial\times   \nonomial])\in R^{\Umc^{i}_\Omc}$, where 
  $\upath'$ is the path $\upath\cdot [m\times\nonomial] \ R \ \aux{[m\times n]}{R}{}$. 
  By the inductive hypothesis, $(\delta(\upath),[\nonomial])\in C^{\Imc} $.
  By definition of $\delta$, $\delta(\upath')$ is chosen 
  so that  $(\delta(\upath),\delta(\upath'),\representative{\monomial\times\nonomial})\in R^\Imc$. 
  Such $\delta(\upath')$ exists since \Imc is a model of \Omc. 
\item Rule~\ref{r3}: Assume $(R\sqsubseteq S,\nonomial)\in\Omc$ and there are  
$\upath,\upath'\in\Delta^{\Umc^i_\Omc}$ and $\monomial\in\NM$ 
such that 
 $(\upath,\upath',\representative{\monomial})\in R^{\Umc^{i}_\Omc}$. 
By~\ref{r3} and the definition of $\Umc^{i}_\Omc$,  
  $(\upath,\upath',\representative{\monomial\times \nonomial})\in S^{\Umc^{i}_\Omc}$. 
  By the inductive hypothesis, 
 $(\delta(\upath),\delta(\upath'),[n])\in R^{\Imc} $. 
 Since \Imc is a model of \Omc, we have that $(\delta(\upath),\delta(\upath'),[\monomial\times \nonomial])\in S^{\Imc} $.
 Thus, Point (ii) holds. Using the inductive hypothesis, we can see that (i) also holds.
\end{enumerate}

Thus, properties (b) and (c) hold. This finishes the proof of this theorem. 
\end{proof}
Our next proof is based on Thm.~11 by~\citeauthor{LTW:elcqrewriting09} \shortcite{LTW:elcqrewriting09}.
\begin{theorem}\label{thm:queryrewritingcombined}
Let \Omc be an  annotated \ELHr ontology (in normal form) and let 
$(\query,\polynomial)$ be an annotated query. 
Then, $\unraveling\models (\query,\polynomial) \text{ iff } \ \Imc_\Omc\models (\queryrewriting,\polynomial). $
\end{theorem}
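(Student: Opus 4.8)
The plan is to prove the two directions separately, in each case transporting a match of the relevant query between $\unraveling$ and $\Imc_\Omc$ while checking that the provenance polynomial and the extra conjuncts $\varphi_1,\varphi_2$ of $\queryrewriting$ are respected. The structural fact underpinning everything is that the elements of $\auxs{\Omc}$, together with the role edges added to $\Imc_\Omc$ by Rules~\ref{r2}--\ref{r3}, unfold into the anonymous part of $\unraveling$ as a forest whose trees are rooted at the named individuals in $\individuals{\Omc}^{\Imc_\Omc}$, with every edge oriented away from the root. In particular the folding map $f\colon\Delta^{\unraveling}\to\Delta^{\Imc_\Omc}$ sending a path $\upath$ to $\tail{\upath}$ is a homomorphism that preserves the monomial labelling every concept and role atom; this is immediate from the definition of the $\Omc$-unraveling and is the annotated analogue of the Claim used in the proof of Theorem~\ref{thm:unravelling}.

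\textbf{Direction $\unraveling\models(\query,\polynomial)\Rightarrow\Imc_\Omc\models(\queryrewriting,\polynomial)$.} Given a match $\map$ of $\query$ in $\unraveling$, I would compose it with $f$ to get $\map'=f\circ\map$. Since $f$ preserves monomials, $\map'$ is a match of the positive part $\psi$ of $\queryrewriting$ in $\Imc_\Omc$ whose product of last terms equals that of $\map$, so carrying this out for every match witnessing an occurrence of a monomial of $\polynomial$ yields $\polynomial\subseteq\p{\Imc_\Omc}{\queryrewriting}$ once we know that each $\map'$ also satisfies $\varphi_1$ and $\varphi_2$. For $\varphi_1$: if $x\in{\sf Cyc}$ then $\map(x)$ lies on a directed cycle of role atoms in $\unraveling$ modulo $\sim_\query$; since the anonymous part of $\unraveling$ is a forest with edges directed away from the roots, such a cycle can only be realised at named individuals, so $f(\map(x))\in\individuals{\Omc}^{\Imc_\Omc}$ and $\neg{\sf Aux}(f(\map(x)),1)$ holds. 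For $\varphi_2$: if $(\{t_1,\dots,t_k\},\chi)\in{\sf Fork}_=$ and $f(\map(t_\chi))\in\auxs{\Omc}$, then $\map$ sends $\chi$ into the anonymous part, where each element has a unique predecessor on its path; the closure condition $(\dagger)$ on $\sim_\query$ then forces $\map(t_1),\dots,\map(t_k)$ to equal that predecessor, hence $f(\map(t_1))=\dots=f(\map(t_k))$.

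\textbf{Direction $\Imc_\Omc\models(\queryrewriting,\polynomial)\Rightarrow\unraveling\models(\query,\polynomial)$.} Given a match $\map'$ of $\queryrewriting$ in $\Imc_\Omc$, I would unfold it into a match $\map$ of $\psi$ in $\unraveling$. The conjuncts $\varphi_1,\varphi_2$ guarantee that the subquery induced by the atoms whose terms $\map'$ sends into $\auxs{\Omc}$ decomposes as a disjoint union of trees, each attached at its root to a named individual on which $\map'$ is already defined; I would lift each tree to a path in $\unraveling$ by reading off, for every anonymous element $\aux{m}{R}{}$ it uses, the role $R$ and monomial $m$ recorded in its name, and invoking the fact that $\Omc\models(R\sqsubseteq S,n)$ for every role $S$ and monomial $n$ for which $\map'$ places an $S$-edge with that monomial into $\aux{m}{R}{}$ --- which is exactly how such edges are generated by Rule~\ref{r3}, and is the step where storing $R$ inside the auxiliary element replaces the formula $\varphi_3$ of \citeauthor{LTW:elcqrewriting09}~\shortcite{LTW:elcqrewriting09}. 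The lifted match $\map$ uses the same monomial on each atom as $\map'$, so products over atoms are preserved, and ranging over all matches witnessing occurrences of monomials in $\polynomial$ gives $\polynomial\subseteq\p{\unraveling}{\query}$.

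I expect the main obstacle to be the bookkeeping in the converse direction: one must show that the forest-decomposition induced by $\map'$ is well defined, that the paths assigned to different query variables are mutually compatible with the identifications forced by $\varphi_2$, and that the monomials labelling the lifted atoms are exactly those produced in $\Imc_\Omc$. This is the provenance-annotated analogue of Point~(II) in the proof of Thm.~11 by \citeauthor{LTW:elcqrewriting09}~\shortcite{LTW:elcqrewriting09}, with $\varphi_3$ replaced by the role information carried in $\auxs{\Omc}$; the forward direction is comparatively routine once the folding map $f$ and the combinatorics of ${\sf Cyc}$ and ${\sf Fork}_=$ are set up.
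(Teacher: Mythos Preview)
Your proposal is correct and follows essentially the same approach as the paper: fold via $\tail{\cdot}$ for the forward direction and check $\varphi_1,\varphi_2$ using the forest structure of the anonymous part, then for the converse lift a match of $\queryrewriting$ to $\unraveling$ by reading off the role and monomial stored in each $\aux{m}{R}{}$, with the latter step replacing the $\varphi_3$ of \citeauthor{LTW:elcqrewriting09}. The paper makes explicit two technical claims you leave implicit---that $\sim_\query$-equivalent terms with one image in the anonymous part are sent by the match to the \emph{same} element (and likewise their role-predecessors)---and in the converse direction first passes to the quotient query $q/{\sim_\pi}$ before building the lift; these are exactly the ``bookkeeping'' steps you flag as the main obstacle.
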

\begin{proof}
Assume $q = \exists \vec{x} .\psi(\vec{x},\vec{a},\vec{p})$. To simplify the notation,  from now on we write 
simply $\psi$ instead of $\psi(\vec{x},\vec{a},\vec{p})$, and the same 
for $\varphi_1,\varphi_2$ 
occurring in the following.
Also, we may write ${\sf term}(q)$ for the set $\vec{x}\cup\vec{a}\cup\vec{p}$.
Recall that $\queryrewriting$ is defined as $\exists \vec{x}.(\psi \wedge \varphi_1 \wedge \varphi_2)$ 
where $\varphi_1,\varphi_2$ 
are quantifier-free. 
Moreover, 
recall that 
${\sf Aux}^\unraveling = (\Delta^{\unraveling}\setminus \individuals{\Omc})\times \{1^{\unraveling}\}$. 

$(\Rightarrow)$ Assume $\unraveling\models (\query,\polynomial)$.
Let $\nu_\unraveling(q)$ be the set of all matches of $q$ in 
$\unraveling$. For each $\pi\in \nu_\unraveling(q)$, 
we define a mapping $\tau_\pi:{\sf term}(q)\rightarrow \Delta^{\Imc_\Omc}\cup \Delta^{\Imc_\Omc}_{\sf m}$
by setting $\tau_\pi(t):= {\sf tail}(\pi(t))$, for all 
$t\in {\sf term}(q)$ with $\pi(t)\in\Delta^\unraveling$, and $\tau_\pi(t):= \pi(t)$, 
for all $t\in {\sf term}(q)$ with $\pi(t)\in\Delta^\unraveling_{\sf m}$. By definition of $\tau_\pi$ and $\unraveling$, 
$\Imc_\Omc\models q$ with $\tau_\pi$ a match of 
$q$ in $\Imc_\Omc$. To show that $\tau_\pi$ is also a match of 
$\exists \vec{x}.(\varphi_1\wedge \varphi_2)$ 
in $\Imc_\Omc$, 
where $\pi\in \nu_\unraveling(q)$,
we use the following claim. 
\medskip

\noindent\textit{Claim 1.}
Let $\termtwo,\rttermtwo\in {\sf term}(q)$ with $\termtwo\sim_{\query} \rttermtwo$ 
  such that $(\pi(\termtwo),1^\unraveling)\in 
{\sf Aux}^{\unraveling}$. Then
\begin{enumerate}[label=(\alph*),leftmargin=*]
\item $\pi(\termtwo) = \pi(\rttermtwo)$; 
\item if $\roleone(\termone,\termtwo,\termthree),\roletwo(\rttermone,\rttermtwo,\rttermthree)\in q$ then $\pi(\termone) = \pi(\rttermone)$. 
\end{enumerate}

\medskip

We start with the proof of Point (a). By definition, $\sim_{\query}$
can be generated by starting with ${\sf id}_q = \{(t,t)\mid t\in {\sf term}(q)\}$
and then exhaustively applying $(\dagger)$ from Section~\ref{sec:combined}  
as a rule, together with the following rule:
\begin{itemize}[label=$(\ddagger)$,leftmargin=*]
\item if $t \sim_{\query} s$ and $s \sim_{\query} t'$
 then $t \sim_{\query} t'$.
\end{itemize} 
We prove Point (a) by induction on the number of rule applications. 
The base case is straightforward. We now make the following case distinction.
\begin{itemize}
\item Rule $(\dagger)$. Let $\roleone(\termone,\termtwo,\termthree),\roletwo(\rttermone,\rttermtwo,\rttermthree)\in q$
and $\termtwo\sim_{\query} \rttermtwo$. Then, $(\dagger)$ adds $(\termone,\rttermone)$ to $\sim_{\query}$. 
Assume that $(\pi(\termone),1^{\unraveling})\in{\sf Aux}^{\unraveling}$. 
 Since $(\pi(\termone),\pi(\termtwo),\pi(\termthree))\in \roleone^\unraveling$,  by construction 
of $\unraveling$,
we get that $(\pi(\termtwo),1^\unraveling)\in {\sf Aux}^\unraveling$. By 
the inductive hypothesis, $\pi(\termtwo)=\pi(\rttermtwo)$. By construction of $\unraveling$ and 
since $(\pi(\termone),\pi(\termtwo),\pi(\termthree))\in \roleone^\unraveling$,
$(\pi(\rttermone),\pi(\termtwo),\pi(\rttermthree))\in {\roletwo}^{\unraveling}$, and $(\pi(\termtwo),1^\unraveling)\in {\sf Aux}^\unraveling$, 
we have that $\pi(\termone)=\pi(\rttermone)$. 
\item Rule $(\ddagger)$. Let $t\sim_{\query} s$ and $s\sim_{\query} t'$. 
Then, $(\ddagger)$ adds $(t,t')$ to $\sim_{\query}$. Assume that $(\pi(t), 
1^\unraveling)\in {\sf Aux}^\unraveling$. By the inductive hypothesis, we have that $\pi(t) = \pi(s)$.
Thus $(\pi(s), 1^\unraveling)\in {\sf Aux}^\unraveling$. Again by 
the inductive hypothesis, $\pi(s)=\pi(t')$ and we obtain $\pi(t)=\pi(t')$. 
\end{itemize}

We come to Point (b). Let $\roleone(\termone,\termtwo,\termthree),\roletwo(\rttermone,\rttermtwo,\rttermthree)\in q$
and $\termtwo\sim_{\query} \rttermtwo$ and assume that $(\pi(\termtwo),1^{\unraveling})\in 
{\sf Aux}^{\unraveling}$. By Point~(a), $\pi(\termtwo)=\pi(\rttermtwo)$. Hence, by 
construction of $\unraveling$ and since $(\pi(\termtwo),1^{\unraveling})\in 
{\sf Aux}^{\unraveling}$, $\pi(\termone)=\pi(\rttermone)$. 
This finishes the proof of Claim~1.

\medskip

We first show that $\tau_\pi$ is  a match of 
$\exists \vec{x}.\varphi_1$ in $\Imc_\Omc$, 
where $\pi\in \nu_\unraveling(q)$.
We want to show that $(\tau_\pi(x),1^{\Imc_\Omc})\not\in {\sf Aux}^{\Imc_\Omc}$, 
for all $x\in 
{\sf Cyc}$. 
By definition of $\tau_\pi$ and construction of $\unraveling$, 
we have that $\tau_\pi(t) = \pi(t)$ for all 
$\pi(t)\in \individuals{\Omc}^{\unraveling} = \individuals{\Omc}^{\Imc_\Omc}$. 
Suppose to the contrary that, for some  $x\in {\sf Cyc}$, 
  we have that $(\tau_\pi(x),1^{\Imc_\Omc})\in {\sf Aux}^{\Imc_\Omc}$. 
  Then there are
  $$R_0(t^0_1,t^0_2,t^0_3),\ldots, R_m(t^m_1,t^m_2,t^m_3), \ldots, 
R_n(t^n_1,t^n_2,t^n_3)\in q,$$ with $n,m\geq 0$,  $x\sim_\query t^j_1$ for some 
$j\leq n$, $t^i_2\sim_\query t^{i+1}_1$ for all $i < n$, 
and $t^n_2\sim_\query t^m_1$. Since $(\tau_\pi(x),1^{\Imc_\Omc})\in 
{\sf Aux}^{\Imc_\Omc}$ and Point (a) of  Claim~1 holds, 
$(\pi(t^j_1),1^{\unraveling})\in 
{\sf Aux}^{\unraveling}$.  Since
$R_j(t^j_1,t^j_2,t^j_3)\in q$, the construction of unravelings
yields that $\pi(t^j_2)=\pi(t^j_1)\cdot \imonomial R_j  d$, for 
some $d\in \Delta^{\Imc_\Omc}$ and 
$\imonomial\in \Delta^{\Imc_\Omc}_{\sf m}$. In particular, 
$\pi(t^j_2)$ is auxiliary. By Point (a) of  Claim~1, 
$\pi(t^j_2)= \pi(t^{j+1}_1)$.
We can continue repeating this argument, setting $t^i_1=t^{i\ {\sf mod}\ n+1}_1$ 
and $t^i_2=t^{i\ {\sf mod}\ n+1}_2$ for all $i > n$.  
In each step, the length of the path $\pi(t^{j+\ell}_1)$ increases. 
This contradicts the fact that $\pi(t^{n+j}_1)=\pi(t^j_1)$ 
(since actually $t^{n+j}_1 = t^j_1$). We have thus
shown that 
 $\tau_\pi$ is  a match of $\exists \vec{x}.\varphi_1$ in $\Imc_\Omc$.

We now show that $\tau_\pi$ is  a match of 
$\exists \vec{x}.\varphi_2$ in $\Imc_\Omc$, 
where $\pi\in \nu_\unraveling(q)$. That is, for all 
$(\{t^1,\ldots,t^k\}, \chi)\in {\sf Fork}_{=}$, 
$(\tau_\pi(t_\chi),1^{\Imc_\Omc})\in {\sf Aux}^{\Imc_\Omc}$ implies 
$\tau_\pi(t^1) =\dots = \tau_\pi(t^k)$. 
Thus, let $(\{t^1_1,\ldots,t^k_1\}, \chi)\in {\sf Fork}_{=}$ and assume that
$(\tau_\pi(t_\chi), 1^{\Imc_\Omc})\in {\sf Aux}^{\Imc_\Omc}$.
Then $(\pi(t_\chi),1^{\unraveling})\in {\sf Aux}^{\unraveling}$ and 
there are terms $t^1_2,\ldots,t^k_2\in \chi$
and role names $R_1,\dots, R_k$ such that $R_i(t^i_1,t^i_2,t^i_3)\in q$
and by 
Point~(b) of  Claim~1, $\pi(t^1_1) = \dots = \pi(t^k_1)$, and thus 
$\tau_\pi(t^1_1) = \dots = \tau_\pi(t^k_1)$. 
This argument holds for all $\pi\in \nu_\unraveling(q)$.
Each monomial in $p$ is associated with $\pi\in\nu_{\unraveling}(q)$.
Since we have shown that $\tau_\pi$ is a match in $\Imc_\Omc$, 
by construction of $\unraveling$, we have that
$p\subseteq \p{\Imc_\Omc}{q}$. 
That is,  $\Imc_\Omc\models (\queryrewriting,\polynomial)$. 
Observe that elements in the extension of ${\sf Aux}^{\Imc_\Omc}$ 
have provenance $1^{\Imc_\Omc}$.

($\Leftarrow$)
Assume $\Imc_\Omc\models (\queryrewriting,\polynomial)$.
Let $\nu_{\Imc_\Omc}(\queryrewriting)$ be the set of all matches of $\queryrewriting$ in 
$\Imc_\Omc$. Similar to the proof strategy above, for ($\Rightarrow$),
we show that 
 each $\pi\in \nu_{\Imc_\Omc}(\queryrewriting)$ can be associated 
 with a match $\tau_\pi$ of $\query$ in $\unraveling$. 
Before constructing $\tau_\pi$, 
we introduce some  notation and prove Claim~2 below.
The \emph{degree} $d(\chi)$ of an equivalence class $\chi$ is the length
 $n\geq 0$ of a longest sequence (if it exists)
 $R_0(t^0_1,t^0_2,t^0_3),\ldots,R_n(t^n_1,t^n_2,t^n_3)\in q$
 such that $t^0_1\in \chi$ and $t^i_2\sim_{\query}t^{i+1}_1$ for all 
 $i<n$. If no longest sequence exists, we set $d(\chi)=\infty$. 
For $t\in {\sf term}(q)$, $\equivalentclass{t}$ denotes the equivalence 
class of $t$ w.r.t. $\sim_\query$.

 \medskip

\noindent\textit{Claim 2.} The following holds.
\begin{enumerate}[label=(\alph*),leftmargin=*]
\item If $(\pi(t),1^{\Imc_\Omc})\in {\sf Aux}^{\Imc_\Omc}$, then 
$d(\equivalentclass{t})< \infty$.
\item If $\termtwo\sim_{\query} \rttermtwo$ and $(\pi(\termtwo),1^{\Imc_\Omc})\in {\sf Aux}^{\Imc_\Omc}$, 
then 
(i) $\pi(\termtwo)=\pi(\rttermtwo)$; 
(ii) if $\roleone(\termone,\termtwo,\termthree),\roletwo(\rttermone,\rttermtwo,\rttermthree)\in \query$ then 
$\pi(\termone) = \pi(\rttermone)$. 
\end{enumerate}

 \medskip

We start with (a). Assume to the contrary of what has to be shown that 
there is $t^0_1$ with $(\pi(t^0_1),1^{\Imc_\Omc})\in {\sf Aux}^{\Imc_\Omc}$ 
and an infinite sequence
$$R_0(t^0_1,t^0_2,t^0_3),R_1(t^1_1,t^1_2,t^1_3), \ldots$$
with $t^i_2\sim_{\query}t^{i+1}_1$ for all $i\geq 0$. By definition
$\pi$ and $\Imc_\Omc$, $(\pi(t^0_1),1^{\Imc_\Omc})\in {\sf Aux}^{\Imc_\Omc}$ 
implies that $t^0_1$ is a variable in $\vec{x}$ (not an individual name). 
As $q$ is finite, there are $m,n$ with $0\leq m \leq n$ such that $t^n_2=t^m_2$. 
It follows that $t^0_1\in {\sf Cyc}$. Hence, $\varphi_1$ contains the 
conjunct $\neg {\sf Aux}(t^0_1,1)$ and we have derived a contradiction 
with $(\pi(t^0_1),1^{\Imc_\Omc})\in {\sf Aux}^{\Imc_\Omc}$.

Now we argue about Point (b).  Using (a), Point (i) of (b) can be proved 
by induction on $n:=d(\equivalentclass{\termtwo})=d(\equivalentclass{\rttermtwo})$. For the induction start, let $\termtwo\sim_{\query} \rttermtwo$
with $(\pi(\termtwo),1^{\Imc_\Omc})\in {\sf Aux}^{\Imc_\Omc}$ and $d(\equivalentclass{\termtwo})=0$. 
By definition of $\sim_{\query}$, we have that $\equivalentclass{\termtwo}=\{\termtwo\}$ and thus $\termtwo = \rttermtwo$. 
Therefore, $\pi(\termtwo)=\pi(\rttermtwo)$ holds. For the induction step, define:
\begin{align*}
&\sim^{(0)}_{\query}:= \ \{(t,t)\mid t\in {\sf term}(q)\}\\
&\sim^{(i+1)}_{\query}:= \ \sim^{(i)}_{\query}\cup\\
&   \{(s,t)\mid \text{ there is } s' \text{ with } s\sim^{(i)}_{\query} s' 
\text{ and } s' \sim^{(i)}_{\query} t\} \ \cup \{(\termone,\rttermone)\mid\\
&     \text{ there are } \roleone(\termone,\termtwo,\termthree), \roletwo(\rttermone,\rttermtwo,\rttermthree) \in \query 
\text{ with } \termtwo\sim^{(i-1)}_{\query} \rttermtwo\}
\end{align*}
for all $i\geq 0$. We can see that $\sim_{\query} = \bigcup_{i\geq 0} \sim^{(i)}_{\query}$. 
We show by induction on $i$ that if $s\sim^{(i)}_{\query}t$, $d([s])=n$, 
and $(\pi(s),1^{\Imc_\Omc})\in {\sf Aux}^{\Imc_\Omc}$, then 
$\pi(s) = \pi(t)$.  The induction start is trivial since $s\sim^{(0)}_{\query}t$
implies $s=t$. 
For the induction step, we distinguish the following two cases.
\begin{itemize}
\item There is $s'$ with $s\sim^{(i)}_{\query} s'$ and 
$s'\sim^{(i)}_{\query} t$.
By the inductive hypothesis on $i$, $\pi(s) = \pi(s')$ and thus 
$(\pi(s'),1^{\Imc_\Omc})\in {\sf Aux}^{\Imc_\Omc}$. Since 
$s\sim^{(i)}_{\query} s'$, we have $\equivalentclass{s} = \equivalentclass{s'}$, and thus
$d(\equivalentclass{s'})=n$. We can thus apply the inductive hypothesis on $i$ once more 
to derive $\pi(s') = \pi(t)$, thus $\pi(s) = \pi(t)$. 
\item There are $\roleone(\termone,\termtwo,\termthree),\roletwo(\rttermone,\rttermtwo,\rttermthree)\in q$ such that $\termtwo \sim^{i-1}_{\query} \rttermtwo$.
By definition of $\Imc_\Omc$, if there is $S(\termone,\termtwo,\termthree)\in q$ and 
$(\pi(\termone),1^{\Imc_\Omc})\in {\sf Aux}^{\Imc_\Omc}$ then 
$(\pi(\termtwo),1^{\Imc_\Omc})\in {\sf Aux}^{\Imc_\Omc}$. By definition of 
degree, $d(\equivalentclass{\termtwo})< d(\equivalentclass{\termone})$. We can thus apply the inductive hypothesis on 
$d(\equivalentclass{\termtwo})$ to obtain $\pi(\termtwo) = \pi(t_{\equivalentclass{\termtwo}})$. Hence, 
$(\pi(t_{\equivalentclass{\termtwo}}),1^{\Imc_\Omc})\in {\sf Aux}^{\Imc_\Omc}$. 
Thus, from  $\varphi_2$ of $\queryrewriting$, we obtain 
$\pi(\termone) = \pi(\rttermone)$.
\end{itemize}

For Point (ii) assume $(\pi(\termtwo),1^{\Imc_\Omc})\in {\sf Aux}^{\Imc_\Omc}$, 
$\roleone(\termone,\termtwo,\termthree),\roletwo(\rttermone,\rttermtwo,\rttermthree)\in q$, and $\termtwo\sim_{\query}\rttermtwo$. By Point~(i), 
$\pi(\termtwo) = \pi(t_{\equivalentclass{\termtwo}})$.  Hence, by the conjunct $\varphi_2$ of $\queryrewriting$, 
$\pi(\termone) = \pi(\rttermone)$.  This finishes the proof of Claim~2.

\medskip

Let $\sim_{\pi}$ be the transitive closure of
\begin{align*}
&  \{(t,t)\mid t\in {\sf term}(q)\}\cup \{(s,t)\in {\sf term}(q)^2\mid s\sim_{\query} t, \\
&  \quad\quad\quad (\pi(s),1^{\Imc_\Omc}),(\pi(t),1^{\Imc_\Omc})\in {\sf Aux}^{\Imc_\Omc}\}\ \cup\\
& \{(\termone,\rttermone) 
\mid  \text{ there are }
 \roleone(\termone,\termtwo,\termthree), \roletwo(\rttermone,\rttermtwo,\rttermthree) \in q\\
& \quad\quad\quad  
 \text{ such that }
 (\pi(\termtwo),1^{\Imc_\Omc})\in {\sf Aux}^{\Imc_\Omc} \text{ and } \termtwo\sim_{\query}\rttermtwo\}.
\end{align*}

\noindent
By Claim~2, we have
\begin{itemize}
\item[$(\dagger)$] $\pi(s) = \pi(t)$ whenever $s \sim_{\pi} t$.
\end{itemize}

One can see that $\sim_{\pi}$ is an equivalence relation because 
it is, by Claim~2, the transitive closure of a symmetric relation. 

Now let the query $q'$ be obtained from $q$ by identifying 
all terms $t,t'\in{\sf term}(q)$ such that $t\sim_{\pi} t'$.
More precisely, choose from each $\sim_{\pi}$-equivalence class 
$\chi$ a fixed term $t_{\chi}\in \chi$ and replace each occurrence 
of an element of $\chi$ in $q$ by $t_\chi$. By 
$(\dagger)$, $\pi$ is a match of $q'$ in $\Imc_\Omc$. 
Next we show the following.
\begin{enumerate}[label=(\Roman*),leftmargin=*]
\item If $x$ is a variable from $\vec{x}$  in $q'$ with $(\pi(x),1^{\Imc_\Omc})\in {\sf Aux}^{\Imc_\Omc}$, 
then there is at most one $t\in {\sf term}(q')$ such that $R(t,x,t')\in q'$, 
for some $R\in \NR$.
\item Assume $x$ is a variable from $\vec{x}$ in $q'$ with $(\pi(x),1^{\Imc_\Omc})\in {\sf Aux}^{\Imc_\Omc}$ 
and there is $t\in {\sf term}(q')$ such that 
$\Gamma = \{R\mid R(t,x,t')\in q'\}\neq \emptyset$. 
Then  there is $S\in\NR$ and $n\in\monomials{\Omc}$ such that, for all $R\in \Gamma$, there is $m\in\monomials{\Omc}$ with 
$\Omc\models (S\sqsubseteq R,m)$, 
$(\pi(t),\pi(x),[n])\in S^{\Imc_\Omc}$, and $\pi(t')=(m\times n)^{\Imc_\Omc}$. 
\item If $q'\supseteq \{R_0(t_0,t_1,s_0),\ldots, R_{n-1}(t_{n-1},t_n,s_{n-1})\}$ with $t_0 = t_n$, then 
$(\pi(t_i), 1^{\Imc_\Omc})\not\in {\sf Aux}^{\Imc_\Omc}$, for all $i \leq n$. 
\end{enumerate}
First for (I). Let $(\pi(x),1^{\Imc_\Omc})\in {\sf Aux}^{\Imc_\Omc}$, 
and let $\roleone(\termone,x,\termthree),\roletwo(\rttermone,x,\rttermthree)\in q'$.  Then there are 
$\roleone(s_1,s_2,s_3),\roletwo(s'_1,s'_2,s'_3)\in q$ such that $s_1\sim_{\pi} \termone$,
 $s'_1\sim_{\pi} \rttermone$, and $s_2\sim_{\pi}x\sim_{\pi}s'_2$. By ($\dagger$), 
 $\pi(s_2)=\pi(x)$, and thus $(\pi(s_2),1^{\Imc_\Omc})\in {\sf Aux}^{\Imc_\Omc}$. 
 By definition of $\sim_\pi$, $s_2 \sim_\pi s'_2$ implies 
 $s_2 \sim_{\query} s'_2$. Summing up, we thus have $\termone\sim_{\pi} \rttermone$. 
 Since both $\termone$ and $\rttermone$ occur in $q'$, we have that $\termone=\rttermone$. 
Point (II) follows from the definition of Rules~\ref{r2} and \ref{r3}.

For (III), let $$q'\supseteq \{R_0(t_0,t_1,t''_0),\ldots, R_{n-1}(t_{n-1},t_n,t''_{n-1})\}$$ with 
$t_0 = t_n$. Then there are 
$$\{R_0(s_0,s'_0,t'_0),\ldots, R_{n-1}(s_{n-1},s'_{n-1},t'_{n-1})\}\subseteq q$$ 
with $s_i \sim_{\pi} t_i$ and $s'_i \sim_{\pi} t_{i+1 \ {\sf mod} \ n}$
for all $i < n$. It follows that $s'_i \sim_{\pi} s_{i+1 \ {\sf mod} \ n}$. 
Now assume  to the contrary that $(\pi(t_i),1^{\Imc_\Omc})\in {\sf Aux}^{\Imc_\Omc}$
for some $i<n$. Since $s_i\sim_{\pi} t_i$, $(\dagger)$ yields $\pi(s_i)=\pi(t_i)$. 
Thus $(\pi(s_i),1^{\Imc_\Omc})\in {\sf Aux}^{\Imc_\Omc}$, which implies 
$s_i$ is a variable from $\vec{x}$ in $q$. Together with  $\sim_{\pi}\subseteq \sim_{\query}$,
this means that $s_i\in {\sf Cyc}$. Thus, $\neg {\sf Aux}(s_i,1)$
is a conjunct of $\varphi_1$ and $(\pi(s_i),1^{\Imc_\Omc})\not\in {\sf Aux}^{\Imc_\Omc}$, 
which is a contradiction. This finishes the proof of (I)-(III). 

\medskip

We inductively define a mapping $\tau_\pi: {\sf term}(q')\rightarrow \Delta^{\unraveling}\cup \Delta^{\unraveling}_{\sf m}$
such that ${\sf tail}(\tau_\pi(t))=\pi(t)$, for all $t\in{\sf term}(q')$ with ${\sf tail}(\tau_\pi(t))\in\Delta^{\Imc_\Omc}$,
and 
$\tau_\pi(t):= \pi(t)$, 
for all $t\in {\sf term}(q)$ with $\pi(t)\in\Delta^{\Imc_\Omc}_{\sf m}$.
For the induction start, we distinguish the following two cases.
\begin{itemize}
\item For all $t\in {\sf term}(q')$ with $(\pi(t),1^{\Imc_\Omc})\not\in {\sf Aux}^{\Imc_\Omc}$, 
set $\tau_\pi(t):= \pi(t)$. This defines $\tau_\pi(t)$ for all 
$t\in {\sf term}(q')\cap \NI$.
\item For all $x$ from the variables $\vec{x}$ in $q'$ with $(\pi(x),1^{\Imc_\Omc})\in {\sf Aux}^{\Imc_\Omc}$
 and such that there is no atom $R(t,x,t')\in q$, do the following. 
 By definition of $\unraveling$ and because each $d\in\Delta^{\Imc_\Omc}$ 
 is reachable from an element of $\individuals{\Omc}^{\Imc_\Omc}$, there is a sequence 
 $d_1,\dots,d_k\in \Delta^{\Imc_\Omc}$ and a sequence $R_1,\ldots,R_{k-1}$ 
 of role names such that $d_1\in \individuals{\Omc}^{\Imc_\Omc}$, 
 $d_k = \pi(x)$, and $(d_i,d_{i+1},\imonomial_{i+1})\in R^{\Imc_\Omc}_{i+1}$ 
 for all $i < k$. Set $\tau_\pi(x):= d_1 \imonomial_2 R_2 d_2 \dots d_{k-1} \imonomial_{k} R_{k} d_k 
 \in \Delta^{\unraveling}$.   
\end{itemize}
For the induction step we proceed as follows. 
Assume $\tau_\pi(x)$ is undefined and there exists $R(t,x,t')\in q'$ with $\tau_\pi(t)$ 
defined. Then (II) yields 
that there is $S\in\NR$ and $n\in\monomials{\Omc}$ such that, for all $R\in \{R\mid R(t,x,t')\in q'\}$,
there is $m\in\monomials{\Omc}$ with $\Omc\models (S\sqsubseteq R,m)$, 
$(\pi(t),\pi(x),[n])\in S^{\Imc_\Omc}$,  and $\pi(t')=(m\times n)^{\Imc_\Omc}$.
Set $\tau_\pi(x):=\tau_\pi(t)\cdot [n] S \pi(x)$. 
Since ${\sf tail}(\tau_\pi(t))$ 
and $(\pi(t),\pi(x),\pi(t'))\in R^{\Imc_\Omc}$, we have $\tau_\pi(x)\in \Delta^{\unraveling}$. 
By (I), the mapping $\tau_\pi$ is well-defined, i.e., the term $t$ 
in the induction step is unique. By (III), $\tau_\pi$ is total, that is, 
$\tau_\pi(t)$ is defined for all $t\in {\sf term}(q')$. To see this, 
suppose that $\tau_\pi(t)$ is undefined. Since $\tau_\pi(t)$ is not 
defined in the induction start, we have 
$(\pi(x),1^{\Imc_\Omc})\in {\sf Aux}^{\Imc_\Omc}$ and there is 
an atom $R(s,t,t')\in q$. Since $\tau_\pi(t)$ is not 
defined in the induction step, $\tau_\pi(t)$ is undefined. One can keep repeating 
this argument. Since $q'$ is finite,   there is 
a sequence $q'\supseteq \{R_1(s_1,s_2,t_2),\ldots,R_{k-1}(s_{k-1},s_k,t_k)\}$
with $s_1 = s_k$ and  $(\pi(s_i),1^{\Imc_\Omc})\in {\sf Aux}^{\Imc_\Omc}$
for all $i \leq k$, contradicting (III).

The constructed $\tau_\pi$ is a match for $q'$ in $\unraveling$. It is 
immediate that for all $A(t,t')\in q'$ we have that $(\tau_\pi(t),\tau_\pi(t'))\in A^{\unraveling}$ 
since ${\sf tail}(\tau_\pi(t))=\pi(t)$ and 
$(\upath, m^{\unraveling})\in A^{\unraveling}$ iff $({\sf tail}(\upath), m^{\Imc_\Omc})\in A^{\Imc_\Omc}$
for all $\upath\in \Delta^{\unraveling}$ and all $m\in\NM$. 
Now let $R(t,t',s)\in q'$. If 
$(\pi(t), 1^{\Imc_\Omc}),(\pi(t'),1^{\Imc_\Omc}) \not\in {\sf Aux}^{\Imc_\Omc}$, 
then $\tau_\pi(t)=\pi(t)$,  $\tau_\pi(t')=\pi(t')$, and 
$(\pi(t),\pi(t'),\pi(s))\in R^{\unraveling}$. If $(\pi(t'),1^{\Imc_\Omc})\in {\sf Aux}^{\Imc_\Omc}$ 
then the construction of $\tau_\pi$ implies that $\tau_\pi(t') = \tau_\pi(t) \cdot \pi(s) R \pi(t')$. 
By definition of $\unraveling$, 
 $(\tau_\pi(t),\tau_\pi(t'),\tau_\pi(s))\in R^{\Imc_\Omc}$. 
The case that $(\pi(t),1^{\Imc_\Omc})\in {\sf Aux}^{\Imc_\Omc}$
and $(\pi(t'),1^{\Imc_\Omc})\not\in {\sf Aux}^{\Imc_\Omc}$
cannot occur by definition of $\Imc_\Omc$. 

Finally, we extend $\tau_\pi$ to a mapping from 
${\sf term}(q)$ to $\Delta^{\unraveling}$ by 
setting $\tau_\pi(t):=\tau_\pi(t')$ 
if $t\in {\sf term}(q)\setminus {\sf term}(q')$ and $t\sim_{\pi} t'$. 
One can verify that $\tau_\pi$ is a match for $q$ in $\unraveling$. 
This argument holds for all $\pi\in \nu_{\Imc_\Omc}(\queryrewriting)$.
Each monomial in $p$ is associated with $\pi\in\nu_{\Imc_\Omc}(\queryrewriting)$.
Since we have shown that $\tau_\pi$ is a match for $q$ in $\unraveling$, 
 we have that
$p\subseteq \p{\unraveling}{q}$. 
That is,  $\unraveling\models (q,\polynomial)$. 
\end{proof}

As mentioned, Theorems~\ref{thm:unravelling} and~\ref{thm:queryrewritingcombined}  imply 
Theorem~\ref{thm:combined}.  

\Theoremcombined*

\end{document}